\RequirePackage{fixltx2e}
\documentclass[aip,jmp,reprint,amsmath,amssymb,onecolumn,nofootinbib]{revtex4-1}
\pdfoutput=1
\pdfminorversion=5
\pdfcompresslevel=9
\pdfobjcompresslevel=3
\usepackage[utf8]{inputenc}
\usepackage{graphicx}
\usepackage[colorlinks=true,urlcolor=blue,anchorcolor=blue,citecolor=blue,filecolor=blue,linkcolor=blue,menucolor=blue,linktocpage=true,unicode=true,bookmarksopen=true,pdfa=true]{hyperref}

\usepackage{cleveref}
\crefformat{equation}{(#2#1#3)}
\crefmultiformat{equation}{(#2#1#3)}{, (#2#1#3)}{, (#2#1#3)}{, (#2#1#3)}
\crefrangeformat{equation}{(#3#1#4)--(#5\crefstripprefix{#1}{#2}#6)}
\renewcommand{\eqref}{\cref}

\usepackage{float}
\usepackage{longtable}
\usepackage[labelformat=simple]{subcaption}

\makeatletter
\def\switch@longtable{%
  \class@info{Patching longtable package}%
  \let\longtable\longtable@new
  \let\endlongtable\endlongtable@new
  \let\LT@start\LT@start@new
  \let\LT@end@hd@ft\LT@end@hd@ft@new
  \let\LT@array\LT@array@new
  \renewenvironment{longtable*}{%
   \onecolumngrid@push
   \longtable
  }{%
   \endlongtable
   \onecolumngrid@pop
  }%
}%
\makeatother

\usepackage{mathtools}
\mathtoolsset{mathic} 
\allowdisplaybreaks

\let\originalleft\left
\let\originalright\right
\renewcommand{\left}{\mathopen{}\mathclose\bgroup\originalleft}
\renewcommand{\right}{\aftergroup\egroup\originalright}

\RequirePackage{lmodern}
\RequirePackage[T1]{fontenc}
\DeclareFontShape{OMX}{cmex}{m}{n}{
  <-7.5> cmex7
  <7.5-8.5> cmex8
  <8.5-9.5> cmex9
  <9.5-> cmex10
}{}
\SetSymbolFont{largesymbols}{normal}{OMX}{cmex}{m}{n}
\SetSymbolFont{largesymbols}{bold}  {OMX}{cmex}{m}{n}

\DeclareSymbolFontAlphabet{\mathbb}{AMSb}
\DeclareMathAlphabet{\mathsfi}{OT1}{cmss}{m}{sl}
\DeclareMathAlphabet{\mathbfi}{OML}{cmm}{b}{it}
\DeclareSymbolFont{rsfs}{U}{rsfs}{m}{n}
\DeclareSymbolFontAlphabet{\mathscr}{rsfs}


\renewcommand{\vec}[1]{{\ifnum9<1#1\mathbf{#1}\else\ifcat\noexpand#1\relax\boldsymbol{#1}\else\mathbfi{#1}\fi\fi}}

\newcommand{\mathe}{\mathrm{e}}
\newcommand{\mathi}{\mathrm{i}}
\newcommand{\total}{\mathop{}\!\mathrm{d}}

\newenvironment{equations}[1][]{\subequations\ifx\relax#1\relax\else\label{#1}\fi\align\ignorespaces}{\endalign\ignorespacesafterend\endsubequations}

\makeatletter

\def\@splitequation#1{\begin{equation}\begin{split}#1\end{split}\end{equation}}
\def\splitequation{\collect@body\@splitequation}

\DeclareRobustCommand{\abs}{\bgroup\@ifstar\@@abs\@abs}
\newcommand*{\@@abs}[2][]{{#1\lvert{#2}#1\rvert}\egroup}
\newcommand*{\@abs}[1]{{\left\lvert{#1}\right\rvert}\egroup}
\DeclareRobustCommand{\norm}{\bgroup\@ifstar\@@norm\@norm}
\newcommand*{\@@norm}[2][]{{#1\lVert{#2}#1\rVert}\egroup}
\newcommand*{\@norm}[1]{{\left\lVert{#1}\right\rVert}\egroup}

\DeclareRobustCommand{\bra}{\bgroup\@ifstar\@@bra\@bra}
\newcommand*{\@@bra}[2][]{{#1\langle{#2}#1\vert}\egroup}
\newcommand*{\@bra}[1]{{\left\langle{#1}\right\vert}\egroup}
\DeclareRobustCommand{\ket}{\bgroup\@ifstar\@@ket\@ket}
\newcommand*{\@@ket}[2][]{{#1\vert{#2}#1\rangle}\egroup}
\newcommand*{\@ket}[1]{{\left\vert{#1}\right\rangle}\egroup}

\DeclareRobustCommand{\bigo}{\bgroup\@ifstar\@@bigo\@bigo}
\newcommand*{\@@bigo}[2][]{{\mathcal{O}#1({#2}#1)}\egroup}
\newcommand*{\@bigo}[1]{{\mathcal{O}\left({#1}\right)}\egroup}

\makeatother

\newcommand{\eqend}[1]{\,\mathrm{#1}}

\newcommand{\subline}[1]{\smash[b]{\substack{#1}}}

\newcommand{\expect}[1]{{\left\langle{#1}\right\rangle}}
\newcommand{\laplace}{\mathop{}\!\bigtriangleup}
\newcommand{\sgn}{\mathop{\mathrm{sgn}}}
\newcommand{\tr}{\operatorname{tr}}
\newcommand{\op}{\mathcal{O}}

\newcommand{\brst}{\mathop{}\!\mathsf{s}\hskip 0.05em\relax}

\newcommand{\st}{\mathop{}\!\hat{\mathsf{s}}\hskip 0.05em\relax}
\newcommand{\stq}{\mathop{}\!\hat{\mathsf{q}}\hskip 0.05em\relax}

\newcommand{\conv}{\star}

\bibliographystyle{JHEP}
\DeclareRobustCommand{\SkipTocEntry}[5]{}

\frenchspacing
\usepackage{xspace}

\newcommand{\ie}{\textit{i.\,e.}\xspace}
\newcommand{\eg}{\textit{e.\,g.}\xspace}
\newcommand{\etc}{\textit{etc.}\xspace}

\usepackage{amsthm}
\newtheorem{theorem}{Theorem}
\newtheorem{proposition}[theorem]{Proposition}
\newtheorem{corollary}[theorem]{Corollary}
\newtheorem{definition}{Definition}
\newtheorem{lemma}{Lemma}
\newtheorem*{lemma*}{Lemma}

\begin{document}

\title{All-order bounds for correlation functions of gauge-invariant operators in Yang-Mills theory}

\author{Markus B. Fr\"ob}
\email{mfroeb@itp.uni-leipzig.de}
\altaffiliation[Current address: ]{Department of Mathematics, University of York, Heslington, York, YO10 5DD, United Kingdom; Electronic mail: \href{mailto:mbf503@york.ac.uk}{mbf503@york.ac.uk}}

\author{Jan Holland}
\email{holland@itp.uni-leipzig.de}
\altaffiliation[Current address: ]{Springer-Verlag, Tiergartenstraße 17, 69121 Heidelberg, Germany; Electronic mail: \href{mailto:jan.holland@springer.com}{jan.holland@springer.com}}

\author{Stefan Hollands}
\email{hollands@uni-leipzig.de}

\affiliation{Institut f\"ur Theoretische Physik, Universit\"at Leipzig, Br\"uderstra\ss e 16, 04103 Leipzig, Germany}

\begin{abstract}
We give a complete, self-contained, and mathematically rigorous proof that Euclidean Yang-Mills theories are perturbatively renormalisable, in the sense that all correlation functions of arbitrary composite local operators fulfil suitable Ward identities. Our proof treats rigorously both all ultraviolet and infrared problems of the theory and provides, in the end, detailed analytical bounds on the correlation functions of an arbitrary number of composite local operators. These bounds are formulated in terms of certain weighted spanning trees extending between the insertion points of these operators. Our proofs are obtained within the framework of the Wilson-Wegner-Polchinski-Wetterich renormalisation group flow equations, combined with estimation techniques based on tree structures. Compared with previous mathematical treatments of massless theories without local gauge invariance [R.~Guida and Ch.~Kopper, \href{http://arxiv.org/abs/1103.5692}{\texttt{arXiv:1103.5692}}; J.~Holland, S.~Hollands, and Ch.~Kopper, \href{http://dx.doi.org/10.1007/s00220-015-2486-6}{Commun.~Math.~Phys. \textbf{342} (2016) 385}] our constructions require several technical advances; in particular, we need to fully control the BRST invariance of our correlation functions.
\end{abstract}

\date{25 January 2016}
\revised{25 October 2016}

\pacs{11.15.-q, 1.10.Gh, 11.10.-z}

\maketitle

\tableofcontents

\addtocontents{toc}{\SkipTocEntry}
\section*{Glossary of important symbols}

\begin{center}
\begin{longtable}{l|p{0.75\textwidth}}
\hline
$\phi_K$ & basic field, indexed by $K$, $L$, \ldots (distinguishing kind of field, tensor index and Lie algebra index) \\
$\phi_K^\ddag$ & antifield, indexed by $K$, $L$, \ldots (distinguishing kind of field, tensor index and Lie algebra index) \\
$\op_A$ & composite operator, indexed by $A$, $B$, \ldots (distinguishing number of fields/antifields and their kind, tensor index and Lie algebra index) \\
$\brst$ & classical BRST differential; $\brst_0$ is the corresponding free differential \\
$\st$ & classical Slavnov-Taylor differential, identical to $\brst$ when acting on functionals which do not depend on antifields; $\st_0$ is the corresponding free differential \\
$\stq$ & quantum BRST/Slavnov-Taylor differential \\
$\left(\cdot, \cdot\right)$, $\left(\cdot, \cdot\right)_\hbar$ & classical/quantum antibracket \\
$[ \cdot ]$ & engineering dimension \\
$\Lambda$, $\Lambda_0$, $\mu$ & infrared (IR) cutoff, ultraviolet (UV) cutoff, fixed renormalisation scale \\
$C_{KL}^{\Lambda, \Lambda_0}$ & regularised covariance (``propagator'') \\
$\nu^{\Lambda, \Lambda_0}$ & Gaussian measure in field space defined by the regularised covariance $C_{KL}^{\Lambda, \Lambda_0}$ \\
$\left\langle \cdot, \cdot \right\rangle$, $\ast$ & $L^2$ inner product, $L^2$ convolution \\
$\conv$ & convolution in field space (with the Gaussian measure) \\
$L^{\Lambda_0}$ & interaction part of the action, including counterterms (``bare interaction'') \\
$L^{\Lambda, \Lambda_0}$ & generating functional of regularised connected and amputated correlation functions (CACs) of basic fields \\
$L^{\Lambda, \Lambda_0}\left( \bigotimes_{k=1}^s \op_{A_k} \right)$ & generating functional of regularised CACs with insertions of composite operators \\
$L^{\Lambda, \Lambda_0}\left( \int\!\op_A \right)$ & generating functional of regularised CACs with insertion of an integrated composite operator \\
$\mathcal{L}^{\Lambda, \Lambda_0, l}_{\vec{K} \vec{L}^\ddag}\left( \vec{q} \right)$ & expansion coefficient (=CAC) of $L^{\Lambda, \Lambda_0}$ with external basic fields $\phi_{K_1}, \ldots, \phi^\ddag_{L_n}$ depending on external momenta $\vec{q}$, of formal perturbation order $\hbar^l$ \\
$\mathcal{L}^{\Lambda, \Lambda_0, l}_{\vec{K} \vec{L}^\ddag}\left( \bigotimes_{k=1}^s \op_{A_k}; \vec{q} \right)$ & expansion coefficient (=CAC) of $L^{\Lambda, \Lambda_0}\left( \bigotimes_{k=1}^s \op_{A_k} \right)$ with external basic fields $\phi_{K_1}, \ldots, \phi^\ddag_{L_n}$ depending on external momenta $\vec{q}$, of formal perturbation order $\hbar^l$ \\
$\mathcal{L}^{\Lambda, \Lambda_0, l}_{\vec{K} \vec{L}^\ddag}\left( \int\!\op_A; \vec{q} \right)$ & expansion coefficient (=CAC) of $L^{\Lambda, \Lambda_0}\left( \int\!\op_A \right)$ with external basic fields $\phi_{K_1}, \ldots, \phi^\ddag_{L_n}$ depending on external momenta $\vec{q}$, of formal perturbation order $\hbar^l$ \\
$T$, $T^*$ & tree, tree with special vertex \\
$\mathsf{G}^{T/T^*,\vec{w}}_{\vec{K} \vec{L}^\ddag; [v_p]}(\vec{q}; \mu, \Lambda)$ & weight factor associated to the tree $T/T^*$ with external basic fields $\phi_{K_1}, \ldots, \phi^\ddag_{L_n}$ depending on external momenta $\vec{q}$, with dimension $[v_p]$ of the special vertex (absent for a tree $T$ without special vertex), and with $\vec{w}$ momentum derivatives \\
$\abs{\vec{q}}$ & maximum possible sum of the momenta $\vec{q}$: supremum of the modulus over all subsums \\
$\eta\left(\vec{q}\right)$ & exceptionality of the momenta $\vec{q}$: infimum of the modulus over all strict subsums \\
$\bar{\eta}\left(\vec{q}\right)$ & exceptionality of the momenta $\vec{q}$: infimum of the modulus over all subsums \\
$\mathcal{P}$ & polynomial with positive coefficients \\
\hline
\end{longtable}
\end{center}

\section{Introduction}

Quantum field theories with local gauge invariance are at the core of the standard model of particle physics, which describes the known interactions between elementary particles. A major step forward in the understanding of these theories at the quantum level was the invention of dimensional regularisation~\cite{thooft1971,bollinigiambiagi1972,thooftveltman1972,leibbrandt1975,breitenlohnermaison1977}. It manifestly preserves the gauge symmetry in the regularised theory, and thus in principle leads to the correct Ward identities in the renormalised perturbation series, ensuring gauge independence of the S-Matrix~\cite{leezinnjustin1973}. It also is rather efficient from a computational viewpoint. Dimensional regularisation is not free of major subtleties, however. It was noted early on that chiral fermions cannot be unambiguously defined~\cite{thooft1971,chanowitzfurmanhinchliffe1979}, and this issue is closely related to the chiral anomaly~\cite{adler1969,bardeen1969}. A more serious issue, however, is the precise handling of the divergences in the renormalisation process. In the case of massless theories, including all gauge theories without spontaneous symmetry breaking, there are in principle both infrared (IR) and ultraviolet (UV) divergences. Dimensional regularisation treats both of them at the same time -- roughly speaking UV divergences are regulated if one continues the spacetime dimension $n$ to values $<4$, IR divergences are regulated for $n>4$~\cite{marcianosirlin1975}. As a consequence, integrals which are both UV- and IR-divergent are very difficult to handle, in particular since a given Feynman graph can have a very complicated nested structure of loop integrals of this sort. For this reason, the existing treatments remain somewhat formal for massless theories with regard to the IR problems. In fact, there is, to the best of our knowledge, no completely satisfactory systematic treatment of the UV and IR problem for arbitrary graphs in massless theories.

The discovery of a residual fermionic symmetry of the gauge-fixed Lagrangian~\cite{becchietal1975} (called BRST invariance) together with the corresponding BRST formalism then made it possible to study renormalisability and classify possible anomalies independently of the chosen regularisation scheme. Of central importance in this approach are the cohomology classes of the BRST operator that implements this symmetry. The ultimate goal in this approach is to show that correlation functions of gauge invariant operators can be renormalised using finitely many counterterms that are in the kernel of the BRST operator. If this can be done, then one can formally see that the correlation functions are gauge invariant in an appropriate sense, see, \eg, Refs.~[\onlinecite{weinberg_v2},\onlinecite{barnichetal2000}] and references therein for details. A particularly attractive feature of this approach is that it can be implemented iteratively in the expansion of the correlation functions in $\hbar$. Assuming the problem to be solved at a given order, one shows formally that this task can be accomplished also at the next order if a certain cohomology problem associated with the BRST operator has only trivial solutions. If this is the case -- as can in many cases be decided from the structure of the classical theory alone -- then it is shown that the regularisation scheme can be altered in such a way (if necessary) that the task is also accomplished at the next order, and so on~\cite{piguetsorella}. A later extension by Batalin and Vilkovisky~\cite{batalinvilkovisky1981,batalinvilkovisky1983,batalinvilkovisky1984} streamlined the BRST method and extended its range of applicability to a much wider class of gauge theories. In fact, their method is also a significant improvement in the special case of gauge theories of Yang-Mills type, because possible anomalies with regard to the equations of motion at the quantum level are systematically accounted for. The ``Quantum Master Equation''~\cite{weinberg_v2,henneauxteitelboim1992} of Batalin and Vilkovisky is also rather closely related to the Zinn-Justin equation~\cite{zinnjustin1974} for the effective action obtained by a Legendre transformation.

Unfortunately, just as dimensional regularisation, also the BRST method is not without problems. Just as there, the problem is that the derivations are usually only formal in as far as potential IR divergences are concerned. In Abelian gauge theories like quantum electrodynamics, one can regulate the IR divergences independently of the UV problem by introducing a photon mass, and the limit of zero photon mass can be taken for observable quantities such as cross sections~\cite{blochnordsieck1937}, taking into account quantum processes with photons that have an energy below the detection threshold. However, introducing by hand a mass for the gauge field in non-Abelian gauge theories leads to a violation of BRST invariance, thereby destroying the core principle underlying the entire method.

To summarise, while the gauge invariance, UV and IR problems have been treated in a satisfactory way separately, assuming that the respective other issue(s) can somehow be overcome in a suitable way, treating all at the same time is considerably more difficult and has not been done in full generality. In this article, we want to close this gap, considering Yang-Mills theories in Euclidean space in a fully mathematical rigorous way. We introduce momentum-space UV and IR regulators, and show that they can be removed for arbitrary correlation functions (including insertions of arbitrary composite operators) away from exceptional momentum configurations, to all orders in perturbation theory, in such a way that the resulting correlation is fully gauge invariant in the physically correct sense that we will explain. Our strategy for achieving this will be as follows:
\begin{enumerate}
\item (``Analytic part'') First, we need to analyse in sufficient detail and generality correlation functions of arbitrary composite operators in general massless theories. The theory that we have in mind is a gauge-fixed version of Yang-Mills theory, but the precise form of the interaction terms other than appropriate ``power counting'' properties will not matter. As in previous papers~\cite{hollandskopper2012,hollandetal2014,hollandhollands2015a,hollandhollands2015b}, our bounds will be derived using the renormalisation group (RG) flow equation approach to quantum field theory~\cite{polchinski1984,wetterich1993,kopper1998,mueller2003,kopper2007}. Our results establish appropriate behavior of correlation functions for both small and large values of the momenta and spacetime separation of the composite operators.
\item (``Algebraic part'') In the second step, we will then treat the gauge structure of the theory. The gauge fixed theory is set up, as usual, with appropriate ghost fields and auxiliary fields in such a way that the action remains BRST invariant. (Actually, in order to retain full control of the properties of the theory, we will use the BV formalism including also antifields.) BRST invariance of the theory at the quantum level is expressed by certain Ward identities connecting different correlation functions. The renormalisation scheme adopted in our construction in part 1) will violate these identities in general, but we derive suitable ``anomalous'' versions of these identities that retain sufficient information about the gauge structure in the form of an identity for the ``anomaly'' of cohomological nature. This information suffices to see how one can change our scheme in order to satisfy the desired identities without anomaly. To set up our anomalous Ward identities, we have to trace how these behave under the RG flow. At finite values of the UV cutoff only approximate equations are valid, and the task is to show that the approximation becomes exact as the cutoffs are removed.
\end{enumerate}

\subsection{Correlation functions in massless theories}

In this part, we consider an arbitrary massless Euclidean quantum field theory that is superficially renormalisable in the sense that the classical Lagrangian without sources only contains operators of engineering dimension less or equal to four and coupling constants of positive or zero (mass) dimension. The simplest operators $\op$ are the basic fields themselves (in Yang-Mills theory the gauge potential, a Lie algebra-valued one-form $A$), but of course in gauge theories one is primarily concerned with gauge-invariant \emph{composite operators} such as $\tr F^2$ with the field strength $F$. We are interested in bounds on their correlation functions
\begin{equation}
\expect{ \op_{A_1}(x_1) \cdots \op_{A_s}(x_s) } \eqend{,}
\end{equation}
for an arbitrary number $s$ of arbitrary composite operators $\op_{A_i}$. As usual, it is convenient to consider the connected correlation functions $\expect{ \cdot }_\text{c}$, related by
\begin{equation}
\expect{ \op_{A_1}(x_1) \cdots \op_{A_s}(x_s) }_\mathrm{c} \equiv \sum_{P_1, \ldots, P_p} (-1)^{p+1} \prod_{i=1}^p \expect{\prod_{j \in P_i} \op_{A_j}(x_j) } \eqend{,}
\end{equation}
where the sum is over all possible partitions of $\{1, \dots, n\}$ into an arbitrary number $p$ of disjoint nonempty subsets $P_1, \ldots, P_p$. Our first ``analytical'' result is:
\begin{theorem}
\label{thm1}
For an arbitrary massless, superficially renormalisable theory and up to an arbitrary, but fixed perturbation order, the connected correlation functions with composite operator insertions are tempered distributions. They can be written in the form
\begin{equation}
\expect{ \op_{A_1}(x_1) \cdots \op_{A_s}(x_s) }_\mathrm{c} = \sum_{\abs{w} \leq D+1} \mu^{-\abs{w}} \partial^w_{\vec{x}} \mathcal{K}\left( \bigotimes_{k=1}^s \op_{A_k}(x_k) \right) \eqend{,}
\end{equation}
where $w$ is a multiindex not involving the coordinate $x_s$, $D = \sum_{i=1}^s [\op_{A_i}]$ is the sum of the engineering dimensions of the composite operators, $\mu$ is the (fixed) renormalisation scale, and where the kernel $\mathcal{K}$ is a locally integrable function that fulfils the bound
\begin{equation}
\abs{ \mathcal{K}\left( \bigotimes_{k=1}^s \op_{A_k}(x_k) \right) } \leq C \mu^D \prod_{i=1}^{s-1} \Big[ 1 - \ln \inf(1, \mu \abs{x_i-x_s}) \Big]
\end{equation}
for some constant $C$ depending on the perturbation order and the renormalisation conditions.
\end{theorem}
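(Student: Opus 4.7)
The plan is to derive the stated position-space representation from uniform momentum-space bounds on the CACs $\mathcal{L}^{0,\infty,l}_\emptyset(\bigotimes_{k=1}^s \op_{A_k}; \vec q)$ without external basic fields (\ie $\vec K = \vec L^\ddag = \emptyset$). Such bounds will form the technical core of the paper's ``analytic part'' and are to be proved inductively by integrating the Polchinski--Wetterich flow equation for the generating functional $L^{\Lambda,\Lambda_0}(\bigotimes_k \op_{A_k})$ from the UV scale $\Lambda_0$ down to $\Lambda = 0$. The target bound has the schematic form
\begin{equation*}
\abs{ \partial^v_{\vec q}\, \mathcal{L}^{0,\infty,l}_\emptyset(\bigotimes_k \op_{A_k}; \vec q) } \lesssim \sum_{T^*} \mathsf{G}^{T^*,\vec w}(\vec q; \mu, 0)\, \mathcal{P}(\log)
\end{equation*}
away from exceptional momenta, with the tree weights $\mathsf{G}^{T^*,\vec w}$ carrying the engineering-dimension power counting $D - \abs v$ and $\mathcal{P}$ a polynomial in $\log(\abs{\vec q}/\mu)$ and $\log(\mu/\eta(\vec q))$ whose degree depends on the perturbation order $l$. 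The existence of the limits $\Lambda_0 \to \infty$ (UV renormalisability) and $\Lambda \to 0$ (IR finiteness at non-exceptional momenta) will follow from the uniformity of these bounds in the cutoffs.

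Given such bounds, I would Fourier-transform back to position space,
\begin{equation*}
\expect{\op_{A_1}(x_1)\cdots \op_{A_s}(x_s)}_\mathrm{c} = \int \prod_{k=1}^{s-1} \frac{\total^4 q_k}{(2\pi)^4}\, \mathe^{\mathi\sum_{k<s} q_k\cdot(x_k - x_s)}\, \mathcal{L}^{0,\infty,l}_\emptyset(\bigotimes_k \op_{A_k}; \vec q),
\end{equation*}
with $q_s = -\sum_{k<s} q_k$ fixed by translation invariance. Since the integrand grows at worst like $\abs{\vec q}^D$ up to logarithms, this integral defines a tempered distribution on $\mathbb{R}^{4(s-1)}$ but is not absolutely convergent. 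To exhibit its structure as $\partial^w_{\vec x}$ of a locally integrable kernel, I would decompose the integration into a low-momentum region $\abs{\vec q} \leq \mu$ (producing a smooth, uniformly bounded contribution to $\mathcal{K}$ which requires no derivatives) and a high-momentum region $\abs{\vec q} > \mu$. On the latter, monomials $\prod_k (\mathi q_k)^{w_k}$ with $\abs w \leq D+1$ can be rewritten as derivatives $\mu^{-\abs w}\partial^w_{\vec x}$ acting on the exponential and pulled outside the integral; the remaining $\vec q$-integrand decays faster than $\abs{\vec q}^{-1}$ (up to logarithms) and is absolutely integrable, so the resulting $\vec x$-kernel $\mathcal{K}$ is well-defined and locally integrable.

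The product coincidence-point bound $\prod_{i<s}[1 - \ln \inf(1, \mu \abs{x_i - x_s})]$ will then reflect the factorised spanning-tree structure of the momentum-space bound: each edge of the tree connecting $\op_{A_i}$ to $\op_{A_s}$ contributes an independent factor whose Fourier transform in $q_i$ against $\mathe^{\mathi q_i\cdot(x_i-x_s)}$ is controlled by $\mu^{[\op_{A_i}]}\bigl[1 - \ln \inf(1, \mu\abs{x_i - x_s})\bigr]$---a standard estimate that at short distances follows by cutting the $\abs{q_i}$-integral off at $\abs{x_i - x_s}^{-1}$ and absorbing the resulting logarithm into $\mathcal{P}$, and at large distances reduces to a constant by the rapid-decay tail of the momentum-space function. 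I expect the main obstacle to be the interplay with the exceptional-momentum singularity encoded in $\eta(\vec q)$: the tree bound deteriorates where a partial sum of external momenta vanishes, and this singular behaviour must be carefully integrated through in the Fourier inversion by matching the partition of the momentum-integration domain to the spanning-tree structure so that each integration variable is dominated by exactly one tree edge. Once this matching is carried out, the clean product structure of the position-space bound emerges directly from that of the underlying tree weights.
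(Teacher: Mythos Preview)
Your plan has the causal order reversed relative to the paper, and this hides a genuine obstruction. In the paper's framework the $\vec q$ in $\mathcal{L}^{\Lambda,\Lambda_0,l}_{\vec K\vec L^\ddag}(\bigotimes_k\op_{A_k}(x_k);\vec q)$ are the momenta of the \emph{external basic fields}, which are absent in Theorem~\ref{thm1}; the dependence on the insertion points $x_k$ is kept in position space throughout, never Fourier-transformed. Theorem~\ref{thm1} is read off directly from Proposition~\ref{thm_lsb} at $m+n=0$: the decomposition $\mathcal{L}=\sum_{\abs{\vec a}\leq D+[\op_{\vec A}]}\mu^{\cdots}\partial^{\vec a}_{\vec x}\mathcal{K}$ and the logarithmic bound on $\mathcal{K}$ are already established at the level of the regularised flow, not produced by an \emph{a posteriori} Fourier inversion.

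The reason the paper cannot proceed as you propose is that one cannot first prove a pointwise bound on the (insertion-momentum) transform of the CAC by integrating the flow equation: for $s\geq 2$ all boundary conditions sit at $\Lambda=\Lambda_0$, so one integrates \emph{downwards}, and the source term (products of functionals with $<s$ insertions, each scaling like a positive power of $\Lambda$) yields a $\lambda$-integral that diverges as $\Lambda_0\to\infty$. The paper's remedy is to extract the $x$-derivatives \emph{inside} the source term, at finite cutoff. The shift property~\eqref{func_sop_shift} produces a factor $\mathe^{-\mathi x_j(p+k)}$, which is rewritten via the identity
\[
\mathe^{-\mathi z}=(-\mathi)^{\abs a}\partial_x^a\partial_p^a\,\mathcal{E}_{\abs a}(z)
\]
(equation~\eqref{func_sop_genderivs}) using the auxiliary functions $\mathcal{E}_k$ of Lemma~\ref{lemma_expint}. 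The $\partial_p^a$ are integrated by parts against the covariance and the one-insertion functionals, manufacturing the negative powers of $\Lambda$ that make the flow integrable; the $\partial_x^a$ are pulled outside and become the derivatives in the statement. The logarithmic short-distance factor is then not a ``standard Fourier estimate'' but the bound $\abs{\mathcal{E}_k(z)}\leq c\bigl(1+\ln_+\abs{z}^{-1}\bigr)$ of Lemma~\ref{lemma_expint}, combined with the $p$-integration Lemma~\ref{lemma_pint3}. Your proposed splitting of the momentum integral at $\abs{\vec q}=\mu$ and matching to tree edges never enters; the product structure of the bound is inherited directly from the product of $\mathcal{E}_k$ factors, one per insertion point (other than $x_s$), generated iteratively in the induction on $s$.
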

To prove that an object is a tempered distribution, one has to smear it with test functions $f_i \in \mathcal{S}(\mathbb{R}^4)$, and bound the smeared object by some appropriate norm of the $f_i$. According to a theorem by Schwartz~\cite{schwartz}, tempered distributions can also be characterised by the fact that they can be written as a finite number of derivatives acting on a locally integrable function. The bound on the kernel $\mathcal{K}$ directly shows that $\mathcal{K}$ is an integrable function, since it has at most logarithmic singularities, and thus Theorem~\ref{thm1} in particular shows that the correlation functions are distributions, by Schwartz's theorem.

To shorten the notation, in the remaining part of the introduction we set the renormalisation scale $\mu = 1$, since the dependence on $\mu$ can be easily restored using dimensional analysis. Let us introduce test functions $f_i \in \mathcal{S}(\mathbb{R}^4)$ and the Schwartz norms
\begin{equation}
\label{schwartz_norm}
\norm{f}_k \equiv \sup_{x \in \mathbb{R}^4, \abs{w} \leq k} \abs{ (1+x^2)^4 \partial^w_x f(x) } \eqend{,}
\end{equation}
which are used to define ``averaged operators''
\begin{equation}
\op_A(f) \equiv \int \op_A(x) f(x) \total^4 x \eqend{.}
\end{equation}
The above theorem then gives the
\begin{corollary}
\label{thm2}
For an arbitrary massless, superficially renormalisable theory and up to an arbitrary, but fixed perturbation order, the connected correlation functions with composite operator insertions are tempered distributions. Smeared with test functions, they satisfy the bound
\begin{equation}
\expect{ \op_{A_1}(f_1) \cdots \op_{A_s}(f_s) }_\mathrm{c} \leq C \prod_{i=1}^s \norm{f_i}_{D+1}
\end{equation}
for some constant $C$ depending on the perturbation order and the renormalisation conditions, and where $D = \sum_{i=1}^s [\op_{A_i}]$.
\end{corollary}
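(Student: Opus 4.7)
My plan is to derive the corollary directly from Theorem~\ref{thm1} by smearing the representation of the correlation function against the test functions, moving the derivatives onto the $f_i$ by integration by parts, and then controlling the resulting integral of the logarithmically bounded kernel $\mathcal{K}$ against the polynomial decay provided by the Schwartz norms.

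Concretely, I first write
\begin{equation*}
\expect{\op_{A_1}(f_1)\cdots\op_{A_s}(f_s)}_{\mathrm{c}} = \sum_{\abs{w}\leq D+1} \mu^{-\abs{w}} \int \partial^w_{\vec{x}}\mathcal{K}\!\left(\bigotimes_{k=1}^s \op_{A_k}(x_k)\right) \prod_{k=1}^s f_k(x_k)\,\total^4 x_k\eqend{.}
\end{equation*}
Since the multi-index $w$ does not involve the coordinate $x_s$, and since $\mathcal{K}$ is locally integrable with only logarithmic singularities (and therefore a tempered distribution of polynomial growth only through the test functions), integration by parts is justified and moves $\partial^w_{\vec{x}}$ onto the factor $\prod_{k=1}^{s-1} f_k(x_k)$ with a sign $(-1)^{\abs{w}}$. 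Bounding the resulting derivatives of the $f_i$ via the definition of the Schwartz norm~\eqref{schwartz_norm}, $\abs{\partial^{w_i} f_i(x)} \leq \norm{f_i}_{D+1}(1+x^2)^{-4}$, and inserting the kernel bound from Theorem~\ref{thm1}, the problem reduces to showing that, with $\mu = 1$,
\begin{equation*}
J \equiv \int \prod_{i=1}^{s-1}\bigl[1 - \ln\inf(1,\abs{x_i - x_s})\bigr]\,\prod_{k=1}^s (1+x_k^2)^{-4}\,\total^4 x_k
\end{equation*}
is finite, because then every term in the Leibniz expansion of $\partial^w_{\vec{x}}\prod f_k$ is bounded by $\norm{f_1}_{D+1}\cdots\norm{f_s}_{D+1}$ times a finite constant.

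To bound $J$, I change variables $y_i = x_i - x_s$ for $i = 1,\dots,s-1$ and split each $y_i$-integration into the regions $\abs{y_i}<1$ and $\abs{y_i}\geq 1$. In the outer region the logarithm vanishes and the integral over $y_i$ is bounded by $\int (1+x_i^2)^{-4}\total^4 x_i < \infty$ (crudely dropping the $x_s$-shift and using translation invariance of $\total^4 y_i$ after bounding the decay factor). In the inner region the decay factor is bounded by $1$, and $\int_{\abs{y}<1}[1-\ln\abs{y}]\,\total^4 y < \infty$ is a classical computation. Thus the $y_i$-integrals are bounded by a constant uniformly in $x_s$, leaving the outer integral $\int (1+x_s^2)^{-4}\total^4 x_s < \infty$. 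Collecting all constants and the finitely many terms in the sum over $w$ into a new constant $C$, and finally restoring $\mu$ by dimensional analysis, yields the claimed bound.

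The only conceptual step is realising that the combination of local integrability of $\mathcal{K}$ and the rapid polynomial decay encoded in $\norm{\cdot}_{D+1}$ is exactly what is needed; there is no serious obstacle, since the hard analytical work—controlling $\mathcal{K}$ itself—is already done in Theorem~\ref{thm1}, and Schwartz's theorem then upgrades the pointwise bound on $\mathcal{K}$ to a genuine distributional statement.
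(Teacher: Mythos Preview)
Your proof is correct and follows essentially the same approach as the paper. The paper simply packages the integral estimate you perform by hand into Lemma~\ref{lemma_smearing}, which bounds $\int [1-\ln\inf(1,\abs{x})]\,\abs{\partial^w f(x)}\,\total^4 x$ directly by $2^{12}\norm{f}_{\abs{w}}$ via a component-wise splitting; your radial splitting after the change of variables $y_i=x_i-x_s$ achieves the same thing.
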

This bound is quite weak compared to the one above, but shows explicitly that the correlation functions are in fact tempered distributions. If all the positions $x_i$ of the operator insertions are distinct, we also obtain rather sharp bounds on the behaviour when some of the $x_i$ approach each other. These bounds are best expressed in terms of trees, which are set up in the following
\begin{definition}
\label{def_tree_x}
A tree $\tau$ of order $s$ has $s$ numbered vertices connected by lines such that there are no closed loops. Choose a function $z$ with $z(i) \neq i$ such that the vertices $i$ and $z(i)$ are connected by a line. To each line of the tree connecting the vertices $i$ and $z(i)$ we then associate the weight factor
\begin{equation}
\label{tau_weightfactor}
\mathsf{W}_{i,z(i)}\left( x_i, x_{z(i)} \right) = \inf\left( 1, \abs{x_i - x_{z(i)}} \right)^{-[\op_{A_i}]-\epsilon} \eqend{,}
\end{equation}
where $[\op_{A_i}]$ is the engineering dimension of the operator $\op_{A_i}$ and $\epsilon > 0$ an arbitrary positive number, while all other lines are assigned weight factor $1$. The weight factor $\mathsf{W}^\tau(x_1,\ldots,x_s)$ associated to the tree is obtained as the product of the weight factors of all lines. We denote the set of all trees $\tau$ of order $s$ by $\mathscr{T}_s$, counting trees which only differ in the function $z$ as distinct.
\end{definition}
One easily sees that $\mathscr{T}_s$ is a finite set. We then prove
\begin{theorem}
\label{thm3}
For an arbitrary massless, superficially renormalisable theory and up to an arbitrary, but fixed perturbation order, the connected correlation functions with $s$ composite operator insertions at distinct points fulfil the bound
\begin{equation}
\label{thm3_bound_tree}
\abs{ \expect{ \op_{A_1}(x_1) \cdots \op_{A_s}(x_s) }_\mathrm{c} } \leq C \sum_{\tau \in \mathscr{T}_s} \mathsf{W}^\tau(x_1,\ldots,x_s)
\end{equation}
for some constant $C$ depending on the perturbation order and the renormalisation conditions.
\end{theorem}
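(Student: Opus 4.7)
The plan is to derive Theorem~\ref{thm3} from the momentum-space tree bounds on the CACs $\mathcal{L}^{0,\infty,l}_{\emptyset\emptyset^\ddag}(\bigotimes_{k=1}^s \op_{A_k}; \vec{q})$ that constitute the principal analytical output of the paper. At distinct insertion points $x_1, \ldots, x_s$, and at each fixed perturbation order $l$, the connected correlator equals (modulo the overall momentum-conservation $\delta$) the inverse Fourier transform of this CAC in the $s-1$ independent momenta $\vec{q}$ conjugate to the relative positions $x_i - x_s$. Once the UV and IR cutoffs have been removed in the earlier analytic part, the task reduces to bounding this Fourier integral pointwise, at any configuration of distinct $x_i$, by the tree sum on the right-hand side of \eqref{thm3_bound_tree}.

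First I would invoke the momentum-space tree bound, which estimates $\bigl|\mathcal{L}^{0,\infty,l}_{\emptyset\emptyset^\ddag}(\bigotimes_k \op_{A_k}; \vec{q})\bigr|$ by a sum over spanning trees $T$ on $\{1,\ldots,s\}$ of weight factors $\mathsf{G}^T$ that factorise, after a linear change of variables from $\vec{q}$ to $s-1$ edge momenta $\{P_e\}_{e\in T}$ (with $P_e$ the signed partial sum of $\vec{q}$ across edge $e$), into a product of per-edge factors $g_e(P_e)$. The rates of decay of $g_e$ at small and large $|P_e|$ are fixed by power counting, so that the exponent governing $g_e$ matches the engineering dimension $[\op_{A_i}]$ of the child vertex of $e$ in the rooted-tree orientation provided by the function $z$ of Definition~\ref{def_tree_x}.

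Second, the same linear change of variables has unit Jacobian and factorises the Fourier exponential as $\prod_{e\in T} \exp(i P_e \cdot y_e)$ with $y_e = x_i - x_{z(i)}$, so that the position-space estimate reduces, tree by tree, to a product of independent one-variable Fourier integrals $I_e(y_e) = \int \total^4 P_e \, g_e(P_e) \, \mathe^{i P_e \cdot y_e}$. A two-regime analysis of $I_e$ -- rescaling $P_e \to P_e/|y_e|$ when $|y_e|\leq 1$, and integrating by parts against the phase arbitrarily often when $|y_e|>1$ -- then yields $|I_e(y_e)| \leq C' \inf(1,|y_e|)^{-[\op_{A_i}] - \epsilon}$, where $\epsilon>0$ is the small slack needed to guarantee absolute convergence of the rescaled integral. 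Multiplying the edge bounds and summing over $T \in \mathscr{T}_s$ reproduces $\mathsf{W}^\tau$.

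The main obstacle I anticipate is the matching of the vertex-indexed exponent $-[\op_{A_i}]$ in \eqref{tau_weightfactor} with the power-counting data produced by the momentum-space tree bound: the factorisation of $\mathsf{G}^T$ naturally distributes dimensions across edges, whereas $\mathsf{W}^\tau$ charges the dimension of a specific endpoint of each edge, namely the child vertex selected by $z$. A careful bookkeeping along the rooted orientation is therefore required to ensure that the edge factor $g_e$ carries exactly the dimension $[\op_{A_i}]$ of the child vertex $i$ and that the $\epsilon$-slack is distributed as one $\epsilon$ per edge. Secondary technicalities -- absorbing the positive-coefficient polynomial $\mathcal{P}$ from the momentum-space bound, and trading a small fraction of IR/UV decay for this $\epsilon$ -- are routine once the combinatorial matching has been fixed, and can be hidden in the constant $C$.
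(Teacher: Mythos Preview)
Your plan rests on a misreading of the paper's objects. The momenta $\vec{q}$ in $\mathcal{L}^{\Lambda,\Lambda_0,l}_{\vec{K}\vec{L}^\ddag}\bigl(\bigotimes_k \op_{A_k}(x_k);\vec{q}\bigr)$ are the momenta of the \emph{external basic fields} $\phi_{K_i},\phi^\ddag_{L_j}$, not Fourier duals of the insertion points $x_k$; the $x_k$ are already position-space arguments of the functional. For the connected correlator one sets $m+n=0$ (equation~\eqref{relation_l_opconn}), so there are no $\vec{q}$ at all and there is nothing to inverse-Fourier-transform. Correspondingly, the paper never proves a ``momentum-space tree bound'' of the kind you describe, namely one indexed by spanning trees on $\{1,\dots,s\}$ with factorising edge weights in variables conjugate to $x_i-x_s$. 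The trees $T^*\in\mathcal{T}^*_{m+n}$ in Propositions~\ref{thm_l1}--\ref{thm_lsc} have their external vertices labelled by the \emph{basic fields}, not by the composite insertions; when $m+n=0$ the tree $T^*$ degenerates to the bare special vertex and carries no structural information about the $x_k$.

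The paper's route is entirely different: Theorem~\ref{thm3} is the $m+n=0$ specialisation of Proposition~\ref{thm_lsc}, which is proven by flow-equation induction in $s$, $l$ and the number of external legs. The weight factors $\mathsf{W}^\tau$ and the trees $\tau\in\mathscr{T}_s$ are not obtained from a final Fourier inversion but are generated \emph{inside} the induction, at the stage where the source term of~\eqref{l_sop_flow_hierarchy} is estimated. That term couples functionals with insertion subsets $\alpha\cup\beta=\{1,\dots,s\}$ through a $p$-integral, and one uses the shift property~\eqref{func_sop_shift} together with the identity
\[
\mathe^{-\mathi x_1 p}=\frac{\mathi^{|a|}}{\mu^{|a|}x_1^a}\,\mu^{|a|}\partial_p^a\,\mathe^{-\mathi x_1 p}
\]
(or its fractional refinement, Lemma~\ref{lemma_frac}, to reach arbitrary $\epsilon>0$) to trade $[\op_{A_i}]+\epsilon$ powers of $p$ for the factor $\inf(1,\mu|x_i-x_{z(i)}|)^{-[\op_{A_i}]-\epsilon}$. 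The recursive bipartitions $\alpha,\beta$ build the spanning trees $\tau$, and the function $z$ of Definition~\ref{def_tree_x} records which insertion's exponential was so manipulated at each merging step. The exponent matching you worry about is therefore not a bookkeeping issue at the end but is fixed step by step during the flow; without that inductive mechanism there is no factorised momentum-space object to transform.
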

\paragraph*{Remark.} These bounds are presumably nearly optimal regarding the scaling behaviour with respect to the $x_i$, since one expects that in perturbation theory the correlation functions scale in the same way as they do at tree order (thus according to the engineering dimension of the operators appearing) up to logarithmic corrections. These logarithmic corrections can be bounded by an arbitrary small power of the distance between two operator insertions, which are manifest in the terms involving $\epsilon$ that can be arbitrarily small. The tree structures also nicely express in a quantitative manner the experience that correlation functions diverge if a group of points is scaled together. The emergence of trees is of course evident in the lowest order (``tree level'') approximation of the correlation functions. The non-trivial statement is that this behaviour persists at higher loops, up to the logarithmic corrections already mentioned above. As an example, two of the trees contributing to the right-hand side of equation~\eqref{thm3_bound_tree} for a correlation function of four composite operators are shown in Figure~\ref{fig_bound_tree}.
\begin{figure}
\includegraphics[scale=1.0]{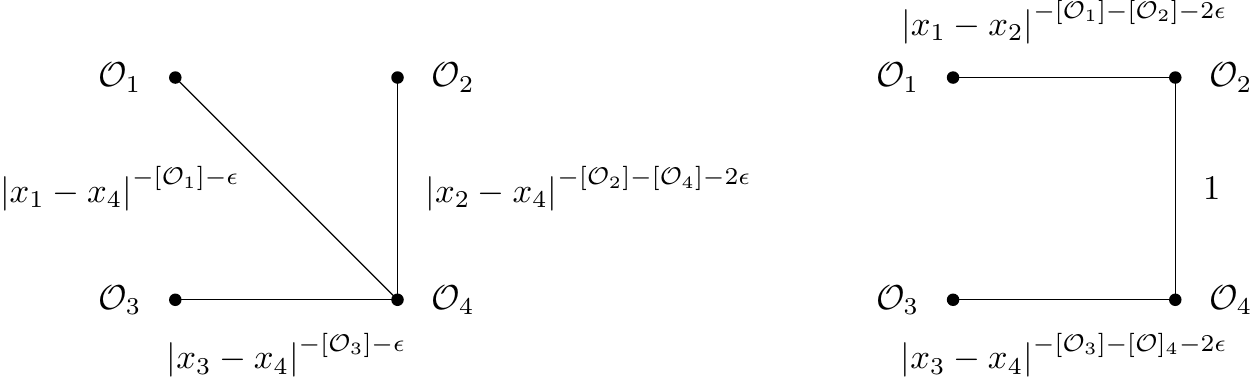}
\caption{Two trees contributing to the bound on a correlation function of four composite operators $\op_i$, together with the corresponding weight factors (assuming that $\abs{x_i - x_j} < 1$). For the first tree, the function $z(i)$ is given by $z(1) = z(2) = z(3) = 4$ and $z(4) = 2$, while for the second tree we have $z(1) = 2$, $z(2) = 1$, $z(3) = 4$ and $z(4) = 3$.}
\label{fig_bound_tree}
\end{figure}

\subsection{Gauge theories}
\label{sec_intro_gauge}

Finally, we also treat gauge theories of the Yang-Mills type. While the bounds given above are still valid, at intermediate steps of the construction gauge invariance is violated, and must be ``restored'' afterwards by an appropriate finite change in the ``renormalisation conditions'' implicit in the definition of the correlation functions. As we explain below, gauge invariance in the quantum field theory context is expressed by a set of Ward(-Takahashi-Slavnov-Taylor) identities~\cite{ward1950,takahashi1957,taylor1971,slavnov1972}. There are various ways to state such identities. We find it most convenient to use a version of the Batalin-Vilkovisky (or field-antifield) formalism~\cite{batalinvilkovisky1981,batalinvilkovisky1983,batalinvilkovisky1984,weinberg_v2}. In this formalism, one enlarges the field space of the original theory by additional dynamical fields, and replaces the original action by an enlarged one which contains gauge-fixing terms and Fadeev-Popov ghosts $c$ and antighosts $\bar{c}$, which are Lie-algebra valued Grassmann fields. The resulting action breaks gauge invariance, but is still invariant under the action of a nilpotent BRST differential $\brst$~\cite{becchietal1975}. One then introduces sources (``antifields'') for the BRST transformation of the fields and generalises the BRST differential $\brst$ to the Slavnov-Taylor differential $\st$ which also acts on the antifields. One shows that classically gauge-invariant observables can be identified with cohomology classes of $\st$, \ie, they are annihilated by $\st$, and two observables are identified if they differ by a $\st$-exact term. (This formalism is explained in more detail in Section~\ref{sec_gauge}.) Therefore, if the correlation functions of composite operators are also invariant under $\st$ in a suitable sense, they are well-defined on the \emph{classes}, and hence in this sense one can say that the theory is gauge invariant.

The main question is then whether the construction of the correlation functions can be made consistent with the symmetry $\st$. That this is indeed so is expressed by the Ward identity asserted in the following theorem:
\begin{theorem}
\label{thm4}
For an arbitrary massless, superficially renormalisable gauge theory and up to an arbitrary, but fixed perturbation order, there exist renormalisation conditions such that the CACs with insertions fulfil the Ward identity
\begin{splitequation}
\label{thm4_ward}
&\sum_{k=1}^s \expect{ \op_{A_1}(x_1) \cdots \left( \stq \op_{A_k} \right)(x_k) \cdots \op_{A_s}(x_s) }_\mathrm{c} \\
&\quad= \hbar \sum_{1 \leq k < l \leq s} \expect{ \op_{A_1}(x_1) \cdots \left( \op_{A_k}(x_k), \op_{A_l}(x_l) \right)_\hbar \cdots \op_{A_s}(x_s) }_\mathrm{c}
\end{splitequation}
if the equivariant classical cohomology of $\st$ at form degree $4$ and ghost number $1$ (and thus dimension $5$ with our conventions), is empty, $H^{1,4}_{\mathrm{E}(4)}(\st\vert\total) = 0$. In these equations, $\stq = \st + \bigo{\hbar}$ is the nilpotent quantum Slavnov-Taylor differential ($\stq^2 = 0$), which differs in higher orders in $\hbar$ from the classical Slavnov-Taylor differential $\st$. The contact terms are given by an associative quantum antibracket $\left( \op_{A_k}(x_k), \op_{A_l}(x_l) \right)_\hbar = \left( \op_{A_k}(x_k), \op_{A_l}(x_l) \right) + \bigo{\hbar}$ with the classical antibracket $( \cdot, \cdot )$, and they are supported on the diagonal $x_k = x_l$. In the given form, these Ward identities are valid for bosonic operators, while for fermionic operators additional minus signs appear. The proper minus signs can be obtained by introducing an auxiliary constant fermion $\epsilon_k$ for each fermionic operator $\op_{A_k}$, replacing $\op_{A_k} \to \epsilon_k \op_{A_k}$ and then taking derivatives with respect to the $\epsilon_k$.

The quantum antibracket fulfils the symmetry conditions
\begin{equation}
\label{bvq_symm}
\left( \op_A, \op_B \right)_\hbar = \pm \left( \op_B, \op_A \right)_\hbar
\end{equation}
with the minus sign if at least one of the operators is fermionic, and the graded Jacobi identity
\begin{equation}
\label{bvq_jacobi}
\left( \op_{A_1}, \left( \op_{A_2}, \op_{A_3} \right)_\hbar \right)_\hbar \pm \left( \op_{A_2}, \left( \op_{A_3}, \op_{A_1} \right)_\hbar \right)_\hbar \pm \left( \op_{A_3}, \left( \op_{A_1}, \op_{A_2} \right)_\hbar \right)_\hbar = 0
\end{equation}
with the appropriate minus signs if some of the operators are fermionic, and the quantum differential $\stq$ is compatible with the quantum antibracket $\left( \cdot, \cdot \right)_\hbar$ in the sense that
\begin{equation}
\label{bvq_stq_compat}
\stq \left( \op_A, \op_B \right)_\hbar = \left( \stq \op_A, \op_B \right)_\hbar \pm \left( \op_A, \stq \op_B \right)_\hbar \eqend{,}
\end{equation}
with the minus sign if $\op_A$ is bosonic. Furthermore, if the cohomology of $\st$ is empty at dimension $[\op_{A_k}]+1$ and ghost number one larger than the ghost number of $\op_{A_k}$, one can choose renormalisation conditions such that $\stq \op_{A_k} = \st \op_{A_k}$.
\end{theorem}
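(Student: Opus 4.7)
My plan is to prove the Ward identity inductively in the loop order $\hbar^l$, using the Batalin-Vilkovisky formalism to set up a violated quantum master equation at finite cutoffs and then absorbing the anomaly into a finite redefinition of the bare interaction $L^{\Lambda_0}$. The bare action is organised as the classical BV action (which by construction solves $\st L = 0$) together with antifield couplings and order-by-order counterterms. A standard functional manipulation shows that the regularised Gaussian measure $\nu^{\Lambda,\Lambda_0}$ fails to be BRST-invariant by a controlled amount, yielding a violated master equation
\begin{equation}
\st_0 L^{\Lambda,\Lambda_0} + \frac{1}{2}\left( L^{\Lambda,\Lambda_0}, L^{\Lambda,\Lambda_0} \right)_\hbar = A^{\Lambda,\Lambda_0} \eqend{,}
\end{equation}
together with an analogous identity for correlation functions with composite operator insertions whose deviation from \eqref{thm4_ward} is encoded by an insertion of $A^{\Lambda,\Lambda_0}$. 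Differentiating in $\Lambda$ produces a linear flow equation for $A^{\Lambda,\Lambda_0}$ driven by $L^{\Lambda,\Lambda_0}$, which allows the anomaly at the physical cutoff values to be controlled by the same techniques as in the analytic part.

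The inductive step, assuming the Ward identity has been arranged through order $\hbar^{l-1}$, proceeds as follows. The remaining anomaly $A_l$ at order $\hbar^l$ is shown, using the uniform UV bounds underlying Theorem~\ref{thm1} and the weighted-tree bounds of Theorem~\ref{thm3}, to have, in the $\Lambda_0 \to \infty$ limit, the form of a \emph{local} functional in field space of ghost number $1$, form degree $4$ and dimension bounded by $5$. Applying $\st_0 + \left(L^{\Lambda,\Lambda_0}, \cdot \right)_\hbar$ to the violated master equation and using the graded Jacobi identity of the antibracket yields a Wess-Zumino type consistency condition which, at order $\hbar^l$, reduces to $\st A_l = 0$ modulo a total derivative. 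The hypothesis $H^{1,4}_{\mathrm{E}(4)}(\st\vert\total) = 0$ then provides a local functional $B_l$ with $A_l = \st B_l + \total(\ldots)$, and the substitution $L^{\Lambda_0} \to L^{\Lambda_0} - \hbar^l B_l$ kills the anomaly at order $\hbar^l$ without disturbing the lower orders. The quantum objects $\stq = \st + \bigo{\hbar}$ and $\left(\cdot,\cdot\right)_\hbar = \left(\cdot,\cdot\right) + \bigo{\hbar}$ are read off from the linearised and quadratic parts of the corrected identity in the limit; the algebraic properties \eqref{bvq_symm}, \eqref{bvq_jacobi} and \eqref{bvq_stq_compat} propagate from the graded symmetry and Jacobi identity of the classical antibracket together with the nilpotency of the functional Laplacian implementing the quantum correction to the bracket. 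The final clause $\stq\op_{A_k} = \st\op_{A_k}$ follows from the same cohomological absorption argument applied to a single operator insertion, under the additional cohomological hypothesis at dimension $[\op_{A_k}]+1$ and at one higher ghost number.

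The hard part is not the algebra but the analytic control required to carry out each inductive step uniformly in the cutoffs. Locality of $A_l$ is a statement about polynomial behaviour of bounded degree for its momentum-space coefficients, and its derivation rests on having both (i) UV power-counting bounds uniform as $\Lambda_0 \to \infty$ and (ii) IR bounds, away from exceptional momentum configurations, uniform as $\Lambda \to 0$. Threading these bounds, in the form provided by the weighted spanning-tree estimates of the analytic part, through the flow equation for $A^{\Lambda,\Lambda_0}$ and then through the passage to the cohomological primitive $B_l$, all while preserving the estimates required for the next step of the induction, is the principal technical burden of the proof and is the reason the full strength of Theorems~\ref{thm1} and~\ref{thm3} is needed rather than merely their qualitative content.
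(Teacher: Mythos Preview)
Your overall strategy — regularised BV Ward identity, flow equation for the anomaly, locality from the tree bounds, Wess-Zumino consistency, cohomological removal order by order — matches the paper's. But there is a structural gap: the paper separates three distinct anomalies $\mathsf{A}_0$, $\mathsf{A}_1$, $\mathsf{A}_2$, and your single-anomaly picture blurs this. Only $\mathsf{A}_0$ (an integrated local operator of dimension $5$) is removed via $H^{1,4}_{\mathrm{E}(4)}(\st\vert\total) = 0$ as you describe. The quantum differential and bracket are then \emph{defined} by the remaining anomalies, $\stq \equiv \st + \mathsf{A}_1(\cdot)$ and $(\cdot,\cdot)_\hbar \equiv (\cdot,\cdot) + \mathsf{A}_2$, which arise from \emph{separate} anomalous Ward identities for the generating functionals with one and with two composite insertions; each of these requires its own decomposition $W = L(\mathsf{A}_i) + N$ with $N \to 0$ as $\Lambda_0 \to \infty$, and each needs the analytic bounds at finite $\Lambda_0$ to establish that vanishing. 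The anomaly $\mathsf{A}_2$ cannot be removed in general — all boundary conditions have been fixed by this stage — so the quantum antibracket is genuinely not classical.

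This matters in two places in your write-up. First, your displayed master equation with $\tfrac{1}{2}(L,L)_\hbar$ is circular: $(\cdot,\cdot)_\hbar$ is output, not input; the regularised identity at finite $\Lambda_0$ involves only the regularised \emph{classical} antibracket $(\cdot,\cdot)^{\Lambda_0}$ and the regularised BV Laplacian $\laplace^{\Lambda_0}$. Second, the properties \eqref{bvq_symm}--\eqref{bvq_stq_compat} do not ``propagate from the nilpotency of the functional Laplacian'' (no such object survives the limit); they are derived by applying $\st_0^2 = 0$ to functionals with one, two, and three insertions and using the anomalous Ward identities at each level — for instance $\stq^2 = 0$ is literally $0 = \st_0^2 L(\op) = L(\stq^2 \op)$, while \eqref{bvq_stq_compat} and \eqref{bvq_jacobi} come from the analogous computation with two and three insertions. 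Your treatment of the final clause (removing $\mathsf{A}_1$ on a fixed operator under an extra cohomological hypothesis) is correct in outline.
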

The physical meaning of the theorem is best explained if we neglect, for the moment, the distinction between classical and quantum BRST differential. In the classical case, local $p$-form valued gauge-invariant operators are of the form
\begin{equation}
\label{yangmills_invops}
\op_A = \prod_j q_j(F, \mathcal{D} F, \ldots, \mathcal{D}^l F)
\end{equation}
(where the $q_j$ are invariant polynomials of the Lie algebra, $F \equiv \total A + \mathi g [A,A]$ is the Lie-algebra valued field strength and $\mathcal{D}$ is the gauge-covariant derivative), and such $\op_A$'s are in the kernel of $\st$. In fact, it can be shown~\cite{barnichetal2000} that also the converse is true: any (local) expression in the kernel of $\st$ can be written in terms of such $\op_A$'s up to a term in the image of $\st$. Thus, the cohomology $H^{0,p}(\st)$ is generated by such expressions (at ghost number 0), and therefore the cohomology classes are in a one-to-one correspondence with the classically gauge invariant operators. It immediately follows from~\eqref{thm4_ward} that, for mutually distinct insertion points $x_i \neq x_j$, the correlation functions are well-defined on the classes, \ie, the elements in $\mathrm{Ker} \st / \mathrm{Im} \st$: They do not change if we add an operator in the image of $\st$ to another one in the kernel of $\st$. Theorem~\ref{thm4} is therefore indeed, in this sense, the statement that the theory is gauge invariant at the quantum level.

According to the theorem, for equation~\eqref{thm4_ward} to hold, we must however understand also the equivariant cohomologies at ghost number $1$. For Yang-Mills theories based on a semisimple Lie group, these have also been calculated quite some time ago~\cite{barnichetal1995a,barnichetal1995b,barnichetal2000}. The relevant cohomology class $H^{1,4}(\st\vert\total)$ of $\st$ modulo exact terms (\ie, modulo $\total$) at form degree $4$ and ghost number $1$ is generated by the unique element (often called the ``gauge anomaly'')
\begin{equation}
\label{cohom_parity}
\mathcal{A} = \total c^a \wedge \left[ d_{abe} A^b \wedge \total A^e - \frac{\mathi}{12} g\, d_{abcd} A^b \wedge A^c \wedge A^d \right] \eqend{.}
\end{equation}
The $\mathfrak{g}$-invariant totally symmetric tensors $d_{abc}$ and $d_{abcd}$ are defined as~\cite{piguetsorella}
\begin{equation}
d_{abc} \equiv \frac{1}{2} \tr \left( t_a t_b t_c + t_a t_c t_b \right) \eqend{,} \qquad d_{abcd} \equiv d_{abe} f_{cde} + d_{ace} f_{bde} + d_{ade} f_{bce} \eqend{,}
\end{equation}
where the trace is in some non-real representation of the Lie algebra $\mathfrak{g}$, and $\{t_a\}$ represent a Lie algebra basis in the given representation. For Lie algebras possessing only real or pseudo-real representations, we have $d_{abc} = 0 = d_{abcd}$ (see, \eg, Ref.~[\onlinecite{srednicki}]), and thus the above expression~\eqref{cohom_parity} vanishes. Consequently, for those Lie algebras the entire cohomology is trivial, which happens, \eg, for $\mathfrak{su}(2)$, $\mathfrak{so}(2n+1)$ or $\mathfrak{so}(4n)$ for $n \in \mathbb{N}$. The \emph{equivariant} cohomology consists of all elements that are invariant under the group $G$ leaving the classical action invariant. In our case, this is the Euclidean group $\mathrm{E}(4)$, including parity inversion (which entails the replacement $\epsilon^{\mu\nu\rho\sigma} \to -\epsilon^{\mu\nu\rho\sigma}$, with all other terms unchanged), and time reversal: $\mathrm{E}(4) = \mathrm{O}(4) \rtimes \mathbb{R}^4$. However, the anomaly~\eqref{cohom_parity} is odd under parity~\cite{bardeen1969,weinberg_v2,piguetsorella}, and therefore the equivariant cohomology $H^{1,4}_{\mathrm{E}(4)}(\st\vert\total)$ is \emph{always} empty, no matter what our choice of $\mathfrak{g}$ is. Said differently, the relevant cohomology is not the full $H^{1,4}(\st\vert\total)$ but only its subsector which is invariant under any global symmetries that are maintained by the regularisation. In the case of pure Yang-Mills theory -- but not, for instance, in the presence of chiral fermions in non-real representations -- this includes invariance under parity inversion. The above Ward identities~\eqref{thm4_ward} are thus fulfilled. Taking $\hbar$ into account does not change the basic story: the quantum observables are the operators which are in the cohomology of the quantum Slavnov-Taylor differential $\stq$, and equation~\eqref{thm4_ward} implies gauge invariance in the sense we have described.

\subsection{Organisation of the paper and comparison with other approaches}

The proof of the above theorems is based on the renormalisation group flow equation approach to quantum field theory~\cite{polchinski1984,wetterich1993,kopper1998,mueller2003,kopper2007}, which has proven to be a very powerful and mathematically rigorous framework to study properties of QFTs, bypassing completely the analysis of individual Feynman diagrams and complicated forest formulas. It provides extremely simple and short proofs of perturbative renormalisability for massive~\cite{kelleretal1992,koppersmirnov1993} and massless~\cite{kellerkopper1994,guidakopper2011,guidakopper2015} scalar field theories, including insertions of composite operators, Zimmermann identities, Lowenstein rules and large momentum bounds~\cite{kellerkopper1992,kellerkopper1993,koppermeunier2002}, as well as quantum electrodynamics~\cite{kellerkopper1991} and spontaneously broken SU(2) Yang-Mills-Higgs theory~\cite{koppermueller2000a,koppermueller2000b,mueller2003,koppermueller2009}.

The main technical results in the first part (Section~\ref{sec_bounds}) are bounds uniform in the UV cutoff for the connected, amputated correlation functions (CACs) with an arbitrary number of basic fields (Prop.~\ref{thm_l0}, p.~\pageref{thm_l0}), one insertion of a composite operator and an arbitrary number of basic fields (Prop.~\ref{thm_l1} and~\ref{thm_l1i}, p.~\pageref{thm_l1}) and at least two insertions of composite operators and an arbitrary number of basic fields (Prop.~\ref{thm_lsa}--\ref{thm_lsc}, p.~\pageref{thm_lsa}). Furthermore, for restricted boundary conditions, we prove related bounds (Prop.~\ref{thm_l0_van}--\ref{thm_ls_van}, pp.~\pageref{thm_l0_van}--\pageref{thm_ls_van}), which imply convergence of the CACs as the UV cutoff is sent to infinity. Theorem~\ref{thm1} is then just the special case of Proposition~\ref{thm_lsb} in the unregularised limit when no additional basic fields are present, Corollary~\ref{thm2} follows using Lemma~\ref{lemma_smearing}, and Theorem~\ref{thm3} is the special case of Proposition~\ref{thm_lsc} in the unregularised limit when no additional basic fields are present. Section~\ref{sec_gauge} fixes our notation for classical gauge theories and the flow equation framework is shortly reviewed in Section~\ref{sec_flow}. The bounds on the correlation functions are expressed using a refinement of the trees used for massless scalar field theory~\cite{koppermeunier2002,guidakopper2011,guidakopper2015,hollandetal2014}, which are defined in Section~\ref{sec_trees} where also some of their properties are derived. The mentioned bounds on the CACs are then finally proven in Section~\ref{sec_bounds}.

In the second part (Section~\ref{sec_brst}), we show that gauge invariance can be restored in the quantum theory, order by order in perturbation theory. To this end, we first derive an anomalous Ward identity including extra anomalous terms $\mathsf{A}_0$, $\mathsf{A}_1$ and $\mathsf{A}_2$ in addition to the ones given above, which quantify a possible violation of gauge symmetry for functionals with zero, one and two or more insertions of composite operators, respectively, and then derive the analogue of the Wess-Zumino consistency conditions in the flow equation framework. The solution of these consistency conditions is then used to perform a finite change of renormalisation conditions, after which the appropriate Ward identities are fulfilled. The general results are given by Propositions~\ref{thm_brst} and~\ref{thm_anomward} (p.~\pageref{thm_brst}) for functionals with and without insertions of composite operators and with an arbitrary number of basic fields; Theorem~\ref{thm4} is then just a special case of Proposition~\ref{thm_brst} when no additional basic fields (and antifields) are present, using the relation~\eqref{relation_l_opconn}.

We should also mention that our use of the BV formalism and of cohomological methods, while being very elegant and efficient, is in some sense optional and not mandatory. In fact, it is plausible that a proof of suitable Ward identites could be accomplished directly by analysing in an explicit manner the remaining freedom in the boundary conditions for the flow equations (although the treatment of correlation functions of arbitrary composite operators is presumably very challenging). This approach was adopted successfully in Refs.~[\onlinecite{koppermueller2000a,koppermueller2000b,mueller2003,koppermueller2009}] for the case of spontaneously broken SU(2) Yang-Mills-Higgs theory, where Ward identities for the one-particle-irreducible effective action were demonstrated. The generalisation to SU(2) gauge theories without spontaneous symmetry breaking is currently under investigation~\cite{efremovguidakopper2015}, and we are grateful to these authors for making available to us their forthcoming manuscript and for discussions on many issues related to renormalisation.

Let us briefly discuss the relation between our approach and other mathematically rigorous approaches to gauge theories. A perturbative construction of Yang-Mills gauge theories on Lorentzian curved spacetimes has been given in the ``perturbative Algebraic Quantum Field Theory'' (pAQFT) approach (see Refs.~[\onlinecite{hollandswald2015,fredenhagenrejzner2016,rejzner2016}] for a modern introduction). In this approach, one constructs the gauge (\ie, BRST-) invariant interacting field operators as formal power series taking values in some abstract $\star$-algebra, hence the name of the approach. To compute physical quantities like correlation functions one has to construct a representation of this algebra on a Hilbert space, which is in principle possible once a representation of the underlying free field theory has been chosen. The particular choice depends on the nature of the underlying spacetime, and for flat spacetime one would naturally choose a representation based on the Minkowski vacuum state. Consistency of the construction requires one to derive suitable Ward identities for the renormalised series of the interacting field operators, which in the end have a structure that is somewhat similar to the identities contained in our Theorem~\ref{thm4}. The first steps in this direction were in fact already taken by Refs.~[\onlinecite{duetschboas2002,duetschfredenhagen2003,brenneckeduetsch2008}], who proposed that a ``Master Ward Identity'' should be added to the list of axioms defining pAQFT. While this identity basically reduces to Theorem~\ref{thm4} in the antifield-free case, it could not be proven. A proof of the analogue of Theorem~\ref{thm4} for Yang-Mills theories was then first given in Ref.~[\onlinecite{hollands2008}]. Later, a similar analysis highlighting in particular the similarities with the classical work of Batalin and Vilkovisky was taken up in Refs.~[\onlinecite{rejzner2011,fredenhagenrejzner2013,brunettifredenhagenrejzner2013}]. Concerning our treatment of the BRST symmetry, our approach is closest to the one presented in Ref.~[\onlinecite{hollands2008}]; in particular, the anomalous Ward identity (Proposition~\ref{thm_anomward}) is very similar to the connected version of the anomalous Ward identity presented there. In line with the pAQFT approach, the anomaly $\mathsf{A}_0$ can be removed by a finite change of renormalisation/boundary conditions if the proper equivariant cohomology is empty, as stated in Theorem~\ref{thm4}. The same is true for the anomaly $\mathsf{A}_1$~\cite{hollands2008}, but \emph{not} for the anomaly $\mathsf{A}_2$. While in the pAQFT approach the possibility of removing $\mathsf{A}_2$ is again equivalent to a cohomological problem~\cite{fredenhagenrejzner2013}, this is not the case in the flow equation framework since all boundary conditions have been completely fixed at this stage. However, this is offset by the fact that in the flow equation framework the functionals with at least two insertions of composite operators are uniquely determined by the conditions for the functionals without and with one insertion of a composite operator, which is not the case in the pAQFT approach. One may see it as a matter of taste whether the correlation functions or the antibracket have to be changed to obtain the correct Ward identities, and we expect that the actual correlation functions are identical as distributions (for a suitable allowed choice of renormalisation/boundary conditions).

What is largely lacking in the program of pAQFT is the analysis and characterisation of correlation functions of the interacting composite field operators for massless field theories, such as their decay properties (even in flat spacetime). In particular, it is highly unclear how to deal with the severe infrared problems that occur in such theories on Minkowski space or more general asymptotically flat spacetimes. In this sense, the results obtained to date within pAQFT are considerably weaker than the ones obtained in the present paper.

Our work does not deal with gauge theories on general Riemannian manifolds, but is restricted to $\mathbb{R}^4$. It would clearly be desirable to generalise our analysis to include at least certain classes of Riemannian four-manifolds. In $\mathbb{R}^4$, the main difficulties derive from the subtle interplay between gauge invariance and infrared behaviour. In theories without gauge invariance, as shown in Ref.~[\onlinecite{koppermueller2006}], the method of flow equations can be adapted to rather general classes of non-compact Riemannian manifolds $(M,g)$ whose global geometric properties are controlled by various curvature conditions. These curvature conditions enter the IR behaviour of the heat kernel on $(M,g)$, see, \eg, Refs.~[\onlinecite{liyau1986},\onlinecite{davies1988}], and are hence important for the IR behaviour of the quantum field theory as well. On the other hand, in theories without an infrared problem -- when $(M,g)$ is chosen to be a compact Riemannian manifold without boundary, for instance -- gauge invariance poses no particular problem. In fact, a quite general framework for perturbative quantum field theories on general compact manifolds which also uses renormalisation group flow equations and, for gauge theories, the Batalin-Vilkovisky formalism, was presented in Ref.~[\onlinecite{costello2011}]. Actually, in this approach, the flow equations are only used to demonstrate that counterterms are local (and satisfy the usual power counting behaviour), while the analytic bounds, which in this case only involve the UV problem, are essentially dealt with graph by graph, using an analysis very similar to the classical one by BPHZ in flat space~\cite{bogoliubowparasiuk1957,hepp1966,zimmermann1969}. In fact, the difference from flat space analysis is rather minor, since to leading order the heat kernel basically behaves like the flat-space one, with all subleading orders well characterised via the usual heat kernel expansion. On the other hand, while the BPHZ method has been generalised to massless theories~\cite{lowensteinzimmermann1975a,lowensteinzimmermann1975b,lowenstein1976}, it has not been able to produce complete results on the infrared behaviour of correlation functions of arbitrary composite operators in massless theories even in flat space, by contrast with the flow equation method~\cite{kellerkopper1994,guidakopper2011,guidakopper2015}. In this sense, the approach of Ref.~[\onlinecite{costello2011}] appears to be a step backwards. Concerning the question of gauge invariance, Ref.~[\onlinecite{costello2011}] shows that this can be implemented at the level of the action. In our language, this corresponds to a proof that the anomaly $\mathsf{A}_0$ can be removed by a finite change of renormalisation/boundary conditions. However, the gauge invariance of correlation functions of composite operators, which we also examine, it not considered at all there. Summarising, we find it implausible that the method of Ref.~[\onlinecite{costello2011}] could give results on the IR behaviour -- or even the existence -- of correlation functions on non-compact Riemannian manifolds, at least not without highly non-trivial extensions.

\subsection{Notations}

We use a standard multiindex notation, where $\vec{w} = (w_1,\ldots,w_n)$ and $w_i = (w_i^1,\ldots,w_i^4)$ with $w_i^\alpha \geq 0$,
\begin{equation}
\partial^\vec{w} f(\vec{q}) \equiv \partial^{w_1^1}_{q_1^1} \cdots \partial^{w_n^4}_{q_n^4} f(q_1, \ldots, q_n) \eqend{,}
\end{equation}
\begin{equation}
\abs{\vec{w}} \equiv \sum_{i=1}^n \abs{w_i} \equiv \sum_{i=1}^n \sum_{\alpha=1}^4 w_i^\alpha \eqend{,}
\end{equation}
and
\begin{equation}
\vec{w}! \equiv \prod_{i=1}^n w_i! \equiv \prod_{i=1}^n \prod_{\alpha=1}^4 w_i^\alpha! \eqend{.}
\end{equation}
Furthermore, we write $\vec{w} > 0$ if for all $i$ we have $w_i^\alpha > 0$ for at least one $\alpha \in \{1,2,3,4\}$. To reduce notational clutter, we further stipulate that the vector $\vec{q}$ will always either have $n$ or $m+n$ entries, depending on context, while $\vec{q}_\rho \equiv \{ q_i \vert i \in \rho \}$. We use a condensed $L^2$ inner product notation, defining
\begin{equation}
\left\langle A, B \right\rangle \equiv \int A(x) B(x) \total^4 x
\end{equation}
and
\begin{equation}
A \ast B \equiv \int A(x-y) B(y) \total^4 y \eqend{.}
\end{equation}
We also use the positive part of the logarithm, defined as
\begin{equation}
\ln_+ x \equiv \sup( \ln x, 0 ) = \ln \sup(x,1) \eqend{,}
\end{equation}
which satisfies the useful identity
\begin{equation}
\ln_+ (a x + b y) \leq \ln_+ (a+b) + \ln_+ x + \ln_+ y
\end{equation}
for all $a,b \geq 0$ and $x,y \in \mathbb{R}$. For later use, we further define
\begin{equation}
\label{h_def}
\abs{\vec{q}} \equiv \sup_{Q \subseteq \{q_1,\ldots,q_n\}} \abs{\sum_{q \in Q} q} \eqend{,}
\end{equation}
which measures the maximum possible sum of a set of momenta,
\begin{equation}
\label{eta_i_def}
\eta_{q_i}(\vec{q}) \equiv \inf_{Q \subseteq \{q_1,\ldots,q_{n-1}\}\setminus\{q_i\}} \abs{q_i + \sum_{q \in Q} q}
\end{equation}
which measures the exceptionality of a set of momenta which includes $q_i$,
\begin{equation}
\label{eta_def}
\eta(\vec{q}) \equiv \inf_{q \in \{q_1,\ldots,q_{n-1}\}} \eta_q(q_1,\ldots,q_n) \eqend{,}
\end{equation}
which is the smallest such exceptionality,
\begin{equation}
\label{bareta_i_def}
\bar{\eta}_{q_i}(\vec{q}) \equiv \inf_{Q \subseteq \{q_1,\ldots,q_n\}\setminus\{q_i\}} \abs{q_i + \sum_{q \in Q} q}
\end{equation}
which is the same as $\eta_{q_i}$ except that the sum runs over all momenta including $q_n$, and
\begin{equation}
\label{bareta_def}
\bar{\eta}(\vec{q}) \equiv \inf_{q \in \{q_1,\ldots,q_n\}} \bar{\eta}_q(\vec{q}) \eqend{.}
\end{equation}
If $n = 1$, we define $\eta_q(q) \equiv \abs{q}$, and if $n = 0$ we define $\bar{\eta}() \equiv \mu$, $\eta() \equiv 0$ and $\abs{\cdot} \equiv 0$. With these definitions, the special cases are covered by the general estimates.

Throughout the paper, $\Lambda_0$ denotes the ultraviolet (UV) cutoff, $\Lambda$ the infrared (IR) cutoff and $\mu$ the renormalisation scale, which satisfy $\Lambda_0 \geq \Lambda \geq 0$ and $\Lambda_0 \geq \mu > 0$. Furthermore, $c$, $c'$, \etc, will denote arbitrary positive constants, which may change even within an equation, and $\mathcal{P}$ denotes a polynomial with positive coefficients, which also may change.

\section{The classical gauge theory, BRST transformations and cohomologies}
\label{sec_gauge}

The theory that we want to consider is pure Yang-Mills theory in Euclidean spacetime $\mathbb{R}^4$. It is classically described by the action
\begin{equation}
S = \frac{1}{2} \int \tr F \wedge \star F \eqend{,}
\end{equation}
where the two-form $F \equiv - \mathi/g [\mathcal{D}, \mathcal{D}]$ is the field strength of a connection $\mathcal{D}$ in the flat bundle $\mathbb{R}^4 \times G$, where $G$ is a compact real Lie group, $\star$ is the Hodge dual and $\tr$ denotes contraction with the positive definite Cartan-Killing metric on the real Lie algebra $\mathfrak{g}$ of $G$. Since the bundle is flat, we can decompose the connection $\mathcal{D}$ into the standard flat background connection $\partial$ of $\mathbb{R}^4$ and a real Lie-algebra valued one-form $A \in \mathcal{C}^\infty(\mathbb{R}^4, \Omega^1 \otimes \mathfrak{g})$ as
\begin{equation}
\mathcal{D} = \partial + \mathi g A \eqend{,}
\end{equation}
such that the field strength reads
\begin{equation}
F = \total A + \mathi g [A,A] \eqend{.}
\end{equation}
Here, $\Omega^p = \wedge^p T^*_x \mathbb{R}^4$ is the space of $p$-forms at $x \in \mathbb{R}^4$. At the non-perturbative level, the coupling constant $g$ could be absorbed into a redefinition of $A$, but we keep it as an explicit perturbation parameter to study the corresponding quantum field theory perturbatively. It is well known that straightforward perturbation theory does not work because the action is invariant under local gauge transformations
\begin{equation}
\label{a_gauge_trafo}
A \to A - \mathi \mathcal{D} f
\end{equation}
for any smooth Lie-algebra valued function $f \in \mathcal{C}^\infty(\mathbb{R}^4, \mathfrak{g})$, where the gauge-covariant derivative $\mathcal{D}$ acts via
\begin{equation}
\mathcal{D} f = \total f + \mathi g [A,f]
\end{equation}
on Lie-algebra valued functions, and extends to forms and products by linearity and the Leibniz rule. Therefore, the differential operator appearing in the free (\ie, quadratic in $A$) part of the Yang-Mills action, obtained by setting $g=0$, is not invertible, and no corresponding Gaussian measure exists that could be used to define the path integral even of the free theory. The freedom of making local gauge transformations could be used to set some of the components of $A$ to zero, such that the inverse of the differential operator is well-defined for the remaining components, but this is cumbersome, and either breaks $\mathrm{E}(4)$ covariance, or gauge invariance, or locality. It is much more convenient to use a manifestly covariant quantisation approach, which consists in first enlarging the field space of the theory, quantising the enlarged theory and then, at a final stage, remove the extra degrees of freedom for physical states and observables. We will also follow this well-known scheme.

In this approach, called the ``BRST method''~\cite{becchietal1975}, one adds additional dynamical Grassmann Lie-algebra valued fields $c \in \mathcal{C}^\infty\left( \mathbb{R}^4, \mathfrak{g} \otimes \Lambda \right)$ (where $\Lambda = \Lambda\left( \mathbb{R}^\infty \right)$ is an infinite-dimensional auxiliary Grassmann algebra) and $\bar{c} \in \mathcal{C}^\infty\left( \mathbb{R}^4, \mathfrak{g} \otimes \Lambda \right)$ (called ghost and antighost), and a Lie-algebra valued field $B \in \mathcal{C}^\infty\left( \mathbb{R}^4, \mathfrak{g} \right)$ (called auxiliary, or Nakanishi-Lautrup field~\cite{nakanishi1966,lautrup1967,kugoojima1979}) to the theory, and considers the ``gauge-fixed'' action
\begin{equation}
\label{gaugefixed}
S = \frac{1}{2} \int \tr F \wedge \star F + \int \tr \left[ B \wedge \star \left( \frac{B}{2} + \mathi \xi G[A] \right) \right] - \mathi \int \tr \left[ \mathcal{D} c \wedge \star \frac{\delta \left( G[A] \bar{c} \right)}{\delta A} \right] \eqend{,}
\end{equation}
where $G[A]$ is some local functional (see Definition~\ref{def_local_functional}) of the field $A$, called gauge-fixing functional, and $\xi$ is a real parameter. This functional is conveniently chosen so as to render the differential operator in the free part of the (new) theory invertible (our concrete choice is discussed below). To relate the theory described by this new action to the original theory, one introduces a (classical) BRST transformation $\brst$, which acts on the fields as
\begin{equation}
\label{brst_standard}
\brst A = \mathcal{D} c \eqend{,} \quad \brst c = - \frac{\mathi}{2} g [c,c] \eqend{,} \quad \brst \bar{c} = \xi B \eqend{,} \quad \brst B = 0 \eqend{,}
\end{equation}
and is extended to arbitrary smooth, local functionals of the fields by linearity and a graded Leibniz rule, where $\brst$ anticommutes with Grassmann variables and forms of odd degree, and commutes with everything else; especially, it anticommutes with the exterior differential $\total$. If one assigns ghost number $1$ to the ghost $c$, ghost number $-1$ to the antighost $\bar{c}$ and ghost number $0$ to $A$ and $B$, the BRST transformation $\brst$ raises the ghost number by $1$. A crucial property of $\brst$ is its nilpotency $\brst^2 = 0$, and the fact that $\brst S = 0$, which can be seen most easily by writing $S$ in the form
\begin{equation}
S = \frac{1}{2} \int \tr F \wedge \star F + \brst \int \tr \left[ \bar{c} \wedge \star \left( \frac{1}{2 \xi} B + \mathi G[A] \right) \right] \eqend{.}
\end{equation}
Indeed, since $\brst$ acts on $A$ like a linearised gauge transformation~\eqref{a_gauge_trafo} where the gauge function $f$ is replaced by the ghost field $c$, the first term is invariant under $\brst$ since it is gauge invariant, and the second term is invariant since $\brst^2 = 0$. The action is an example of a functional which is the integral of a ``local functional'' of the fields $\phi = ( A, c, \bar{c}, B )$, which can be thought of as an element of the space $\mathcal{C}^\infty\left( \mathbb{R}^4, \Omega^1 \otimes \mathfrak{g} \oplus (\mathfrak{g} \otimes \Lambda)^2 \oplus \mathfrak{g} \right)$. Since this notion is of key relevance throughout this paper, we formalise it in the following definition, which also sets up some relevant cohomologies.
\begin{definition}
\label{def_local_functional}
A functional $\op(x, \cdot)$
\begin{equation}
\phi \in \mathcal{C}^\infty\left( \mathbb{R}^4, \Omega^1 \otimes \mathfrak{g} \oplus (\mathfrak{g} \otimes \Lambda)^2 \oplus \mathfrak{g} \right) \mapsto \op(x, \phi) \in \Omega \otimes \mathfrak{g} \otimes \Lambda \eqend{,}
\end{equation}
where $\Omega = \bigoplus_{p=0}^4 \Omega^p$ are the exterior forms at $x$, is called ``local'' (and polynomial in the fields) if it is of the form $\op(x,\phi) = f(x, \phi(x), \partial \phi(x), \ldots, \partial^s \phi(x))$ for some $s$, where $f(x, \cdot)$ is smooth in $x$ and polynomial in the other entries. We denote by $\mathcal{F}^{g,p}$ the space of local $p$-form valued functionals with total ghost number $g$, such that
\begin{equation}
\op(x, \phi) \in \Omega \otimes \mathfrak{g} \otimes \Lambda = \bigoplus_{p=0}^4 \bigoplus_g \mathcal{F}^{g,p} \eqend{,}
\end{equation}
and, since $\op$ is polynomial in the fields, it decomposes into pieces of definite form degree and ghost number $\op(x, \phi) = \sum_{g,p} \op^{g,p}(x, \phi)$ with $\op^{g,p}(x, \phi) \in \mathcal{F}^{g,p}$.
\end{definition}
We define the cohomologies
\begin{equation}
\label{cohom_def}
H^{g,p}(\brst) = \frac{\mathrm{Ker}(\brst\colon \mathcal{F}^{g,p} \to \mathcal{F}^{g+1,p})}{\mathrm{Im}(\brst\colon \mathcal{F}^{g-1,p} \to \mathcal{F}^{g,p})} \eqend{.}
\end{equation}
On the spaces $\mathcal{F}^{g,p}$ we have an obvious action of the Euclidean group $\mathrm{E}(4)$, given by
\begin{equation}
(e \op)(\cdot, \phi) = (e^{-1})^* \op(\cdot, e^* \phi)
\end{equation}
for $e \in \mathrm{E}(4)$. The space of functionals invariant under the action of $\mathrm{E}(4)$ is denoted by $\mathcal{F}^{g,p}_{\mathrm{E}(4)}$. Since $\brst$ and $\total$ commute with the action of $\mathrm{E}(4)$, they restrict to the subspace of invariant functionals. The corresponding equivariant cohomologies
\begin{equation}
\label{cohom_def_eq}
H^{g,p}_{\mathrm{E}(4)}(\brst) = \frac{\mathrm{Ker}(\brst\colon \mathcal{F}^{g,p}_{\mathrm{E}(4)} \to \mathcal{F}_{\mathrm{E}(4)}^{g+1,p})}{\mathrm{Im}(\brst\colon \mathcal{F}^{g-1,p}_{\mathrm{E}(4)} \to \mathcal{F}^{g,p}_{\mathrm{E}(4)})}
\end{equation}
can therefore also be defined consistently. Most statements about $H^{g,p}$ have obvious analogues in the equivariant case, so we will usually not dwell on this. The spaces $H^{g,p}$ have a linear structure, and also a graded commutative multiplicative structure $H^{g,p}(\brst) \times H^{g',p'}(\brst) \to H^{g+g',p+p'}(\brst)$, where monomials commute if $g g' + p p' = 0 \operatorname{mod} 2$ and anticommute otherwise. It is clear that gauge-invariant observables of the original Yang-Mills theory [of the form given in equation~\eqref{yangmills_invops}] are annihilated by $\brst$, \ie, they are in the kernel of $\brst$. The following standard theorem in BRST cohomology~\cite{barnichetal2000} characterises completely the cohomology of $\brst$:
\begin{lemma}
$H^{g,p}(\brst)$ is generated (over $\mathcal{C}^\infty(\mathbb{R}^4)$) by elements of the form
\begin{equation}
\label{hgp_elements}
\prod_i p_i(c) \prod_j q_j(F, \mathcal{D} F, \ldots, \mathcal{D}^l F) \eqend{,}
\end{equation}
where $p_i$ and $q_j$ are invariant polynomials of the Lie algebra, $p_i$ is homogeneous of degree $g$ and the tensor structures are suitably contracted with the flat Euclidean metric to yield a $p$-form. In particular, the local and covariant functionals in $H^{0,p}(\brst)$, the cohomology of $\brst$ at ghost number $0$, are in one-to-one correspondence with the classical gauge-invariant observables.
\end{lemma}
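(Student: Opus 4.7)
The plan is to follow the standard jet-space computation of local BRST cohomology developed in Refs.~\cite{barnichetal1995a,barnichetal1995b,barnichetal2000}. I would regard local functionals as polynomials on the infinite jet bundle over $\mathbb{R}^4$ with coordinates $\partial^k A$, $\partial^k c$, $\partial^k \bar{c}$, $\partial^k B$ for $k \geq 0$, on which $\brst$ extends as a graded derivation via the definitions~\eqref{brst_standard} together with the prolongation $[\brst, \partial_\mu] = 0$. Since $\brst \bar{c} = \xi B$ and $\brst B = 0$, the coordinates $\partial^k \bar{c}$ and $\partial^k B$ form pairs of contractible generators, and a standard K\"unneth-type argument reduces the computation of $H^{g,p}(\brst)$ to the corresponding cohomology on functionals depending only on the jets of $A$ and $c$.

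The central step is a triangular change of generators in the $(A,c)$ jet ring. For each $k \geq 0$, the components of $\partial^k A$ split under the Young decomposition into a piece that can be re-expressed as $\mathcal{D}^{k-1} F$ plus lower-order corrections and a totally-symmetric residue $U_{(k)}$ that remains independent; likewise $\partial^k c$ decomposes into $\mathcal{D}^k c$ and a residual generator $V_{(k)}$. Because all lower-order corrections are polynomials of strictly lower differential order in the original coordinates, this substitution is an invertible polynomial change of variables. A direct computation then shows that, modulo polynomials in the covariant generators $\{\mathcal{D}^k F\} \cup \{c\}$, the BRST operator maps $U_{(k)} \mapsto V_{(k+1)}$, so that the pairs $(U_{(k)}, V_{(k+1)})$ are contractible in the filtration by total polynomial degree in the pure-gauge variables. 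The associated spectral sequence collapses at the first page, and every $\brst$-cohomology class admits a representative built from the covariant generators $\mathcal{D}^k F$ and $c$ alone.

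On this covariant subalgebra $\brst$ reduces to $\brst \mathcal{D}^k F = \mathi g [c, \mathcal{D}^k F]$ and $\brst c = -\frac{\mathi}{2} g [c,c]$, \ie the Chevalley-Eilenberg differential of $\mathfrak{g}$ with coefficients in the polynomial ring generated by the covariant curvature derivatives. By the classical Chevalley theorem for compact real $G$, the cohomology $H^\bullet(\mathfrak{g})$ is freely generated as a graded commutative algebra by invariant polynomials in $c$, and combined with the reduction above this proves that $H^{g,p}(\brst)$ is generated exactly by products of the form~\eqref{hgp_elements}. The statement about $H^{0,p}(\brst)$ follows on setting $g=0$: the factors $p_i(c)$ then disappear and only the invariant polynomials $q_j(F, \mathcal{D} F, \ldots, \mathcal{D}^l F)$ remain, reproducing the classical gauge-invariant observables~\eqref{yangmills_invops}. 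I expect the main technical obstacle to be the Young-symmetrizer bookkeeping required to construct the change of variables explicitly and to verify that $\brst$ acts triangularly with respect to the $(U,V)$-filtration, so that the contracting homotopy is well defined and the spectral sequence collapses cleanly; this is the technical core of the argument in Ref.~\cite{barnichetal2000} and carries over verbatim to the Euclidean setting considered here.
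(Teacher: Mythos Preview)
The paper does not actually prove this lemma: it is stated as a ``standard theorem in BRST cohomology'' and proved only by citation to Ref.~\cite{barnichetal2000}. Your proposal is precisely an outline of the proof given in that reference (contractible pairs $(\bar c,B)$, passage to covariant jet coordinates with contractible pure-gauge pairs, reduction to the Chevalley--Eilenberg differential, and identification of the result via Lie-algebra cohomology), so there is nothing to compare against within the paper itself.

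One small gap worth tightening: in the last step you invoke Chevalley's theorem for $H^\bullet(\mathfrak g)$ with trivial coefficients, but what you actually need is $H^\bullet(\mathfrak g, M) \cong H^\bullet(\mathfrak g) \otimes M^{\mathfrak g}$ for the module $M$ generated by the covariant derivatives $\mathcal D^k F$. This follows from complete reducibility for semisimple (or compact real) $\mathfrak g$ and is exactly what produces the factorisation into $\prod_i p_i(c)$ times invariant polynomials $\prod_j q_j(F,\mathcal D F,\ldots)$ in~\eqref{hgp_elements}; you should state it explicitly rather than leaving it implicit in ``combined with the reduction above''.
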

This lemma is the raison d'{\^e}tre for the usefulness of the BRST formalism: Because the cohomology classes are in one-to-one correspondence with gauge-invariant fields (at ghost number 0), one can first quantise the gauge-fixed theory described by equation~\eqref{gaugefixed}, and then in the end pass to BRST equivalence classes. What one needs to show for this is that the correlation functions are well defined on such classes, \ie, that they are BRST invariant. That this is indeed so is the final statement of our paper, Theorem~\ref{thm4}.

Along the way of the long proof of this theorem, one also needs to consider various extension of the BRST formalism, which we now introduce. Besides the BRST differential $\brst\colon \mathcal{F}^{g,p} \to \mathcal{F}^{g+1,p}$, we also have the exterior differential $\total\colon \mathcal{F}^{g,p} \to \mathcal{F}^{g,p+1}$. The differentials and grading have been set up so that not only $\total^2 = 0 = \brst^2$, but even $\total \brst + \brst \total = 0$. Thus, $\mathcal{F}^{p,g}$ forms a bi-complex, and we can form the relative cohomologies $H^{g,p}(\brst\vert\total)$. Classes in this cohomology are in correspondence with form-valued functionals which are in the kernel of $\brst$ modulo exact forms. Concerning these spaces, one has the following theorem~\cite{barnichetal2000,piguetsorella,kugo}, which will be useful to characterise possible ``anomalies'' in the quantisation process:
\begin{lemma}
Each element of $H^{g,p}(\brst\vert\total)$ is given by a linear combination of elements of $H^{g,p}(\brst)$~\eqref{hgp_elements} and the $p$-form, ghost number $g$ part of
\begin{equation}
\prod_i f_i(F) \prod_j q_{r_j}(A + c, F) \eqend{,}
\end{equation}
where the $f_i$ are gauge invariant monomials containing only the field strength $F$ but not its derivatives, the $q_{r_j}$ are the generalised Chern-Simons forms
\begin{equation}
q_r(A + c, F) \equiv \int_0^1 \tr \left[ (A+c) \left( t F + \frac{\mathi}{2} g t (t-1) [A+c,A+c] \right)^{m(r)-1} \right] \total t \eqend{,}
\end{equation}
with the trace in some representation, and $m(r)$ is the order of the $r$-th independent Casimir of $G$.
\end{lemma}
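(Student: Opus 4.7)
The plan is to proceed by the method of descent equations, which reduces the computation of $H^{g,p}(\brst\vert\total)$ to the absolute cohomology $H^{g,p}(\brst)$ (already characterised in the previous lemma) together with a transgression step that produces the Chern-Simons representatives. Given a cocycle $\omega^{g,p}\in\mathcal{F}^{g,p}$ with $\brst\omega^{g,p}+\total\omega^{g+1,p-1}=0$, applying $\brst$ yields $\total(\brst\omega^{g+1,p-1})=0$; the algebraic Poincar\'e lemma on $\mathbb{R}^4$ (valid because $\brst$ and $\total$ anticommute and act fibrewise on the jet bundle) then furnishes $\omega^{g+2,p-2}$ with $\brst\omega^{g+1,p-1}+\total\omega^{g+2,p-2}=0$. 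Iterating produces a descent chain terminating at $\brst\omega^{g+p,0}=0$, so that the bottom element defines a class in $H^{g+p,0}(\brst)$. If the chain is trivial at the bottom, reversing the descent by the same Poincar\'e lemma shows that $\omega^{g,p}$ represents a class already present in $H^{g,p}(\brst)$; the genuinely new classes of $H^{g,p}(\brst\vert\total)$ therefore correspond to nontrivial bottoms.

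To construct the nontrivial lifts I would use the \emph{Russian formula}. Setting $\tilde A \equiv A+c$ and introducing the combined differential $D \equiv \brst+\total$ (nilpotent, since $\brst^2=\total^2=0$ and $\brst\total+\total\brst=0$), a direct computation from the BRST transformations of $A$ and $c$ gives
\begin{equation}
D\tilde A = F - \frac{\mathi}{2} g [\tilde A,\tilde A] \eqend{,} \qquad D F = -\mathi g [\tilde A, F] \eqend{,}
\end{equation}
so that $F$ obeys with respect to $D$ exactly the same algebraic identities that it does with respect to $\total$ (with $A$ replaced by $\tilde A$). Consequently, for any Ad-invariant polynomial $P$ of degree $m(r)$ on $\mathfrak{g}$, $P(F)$ is $D$-closed, and the standard transgression formula produces a primitive
\begin{equation}
q_r(A+c,F) = \int_0^1 \tr\left[ (A+c) \left( t F + \frac{\mathi}{2} g t (t-1) [A+c,A+c] \right)^{m(r)-1} \right] \total t
\end{equation}
with $D\, q_r(A+c,F) \propto P(F)$. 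Decomposing $q_r$ into components of definite bi-degree $(g,p)$ yields precisely the desired descent chain, interpolating between the invariant polynomial $P(F)$ at ghost number $0$, form degree $2m(r)$, and the Chevalley-Eilenberg Lie-algebra cocycle $\tr\, c^{2m(r)-1}$ at ghost number $2m(r)-1$, form degree $0$. Multiplication by any further invariant polynomial $\prod_i f_i(F)$ preserves $D$-closedness because $D f_i(F)=0$ holds identically, thereby producing the full family asserted in the lemma.

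The main obstacle is completeness: showing that no further classes exist in $H^{g,p}(\brst\vert\total)$. I would introduce a filtration on $\mathcal{F}^{g,p}$ by derivative order and polynomial degree in the fields, and study the associated spectral sequence. At the $E_1$ level the differential reduces to the Chevalley-Eilenberg differential on $\mathfrak{g}$-valued multilinear forms in $c$, tensored with the algebra of invariant polynomials generated by the covariant tensors $F, \mathcal{D}F, \mathcal{D}^2 F, \ldots$; a classical theorem of Chevalley identifies the primitive generators of $H^*(\mathfrak{g})$ with the transgressions of the independent Casimirs of $G$, whose lifts under $D$ are precisely the Chern-Simons forms $q_r$. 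Lifting this $E_1$ description back to the full complex by standard spectral-sequence arguments, combined once more with the algebraic Poincar\'e lemma to eliminate $\total$-exact ambiguities along the descent, shows that every representative of $H^{g,p}(\brst\vert\total)$ must be a linear combination of the classes of $H^{g,p}(\brst)$ listed in the previous lemma and of the Chern-Simons classes constructed above. The technical heart of the argument lies in controlling the invariant theory at the $E_1$ page and in verifying that no obstruction arises at higher pages; once this is achieved, extracting the $p$-form, ghost-number-$g$ component from $\prod_i f_i(F)\prod_j q_{r_j}(A+c,F)$ delivers the stated characterisation.
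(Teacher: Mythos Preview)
The paper does not actually prove this lemma: it is stated as a known result and attributed to the literature (Barnich--Brandt--Henneaux, Piguet--Sorella, Kugo). So there is no ``paper's own proof'' against which to compare your argument.

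That said, your sketch is essentially the standard one found in those references. The descent equations, the Russian formula $D\tilde A = F - \frac{\mathi}{2}g[\tilde A,\tilde A]$ with $\tilde A = A+c$ and $D = \brst+\total$, and the transgression construction of the generalised Chern--Simons forms are exactly the ingredients used in the cited works. Your outline of the completeness argument via a filtration and spectral sequence, reducing at the first page to Chevalley--Eilenberg cohomology tensored with invariant polynomials, is also the route taken there (sometimes phrased instead in terms of contracting homotopies for the K\"unneth-type splitting). The one place where your sketch is slightly loose is the claim that ``no obstruction arises at higher pages'': in the full treatment this requires separate handling of the antighost/auxiliary sector $(\bar c, B)$ via a contracting homotopy (the so-called trivial pair), and careful bookkeeping of the invariant polynomials that involve covariant derivatives of $F$, which do \emph{not} participate in the Chern--Simons lifts (hence the restriction in the statement to $f_i(F)$ built from $F$ alone). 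But these are refinements rather than gaps in the strategy.
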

The choice of the gauge fixing functional $G$ in the gauge-fixed action $S$~\eqref{gaugefixed} does not play a role in as far as the cohomologies are concerned -- it neither affects the definition of $\brst$, nor does a change of $G$ alter the fact that $S$ remains BRST-invariant. The choice of $G$ does, however, matter when quantising the theory. A choice which is often made is $G[A] = \star \total \star A$ (called $R_\xi$ gauges), which preserves $\mathrm{E}(4)$ covariance. However, in these gauges the differential operator appearing in the ``free action'' (\ie, the part of $S$ quadratic in the fields) is not positive definite [on the subspace corresponding to the Grassmann even fields $(A,B)$], and for $\xi^2 < 0$ even has imaginary eigenvalues. This is of no relevance in the usual formal perturbation theory, where only formal Gaussian functional integration is carried out regardless of the actual convergence of the integrals, but is becomes problematic in the flow equation framework where one needs a proper positive definite Gaussian measure in order to derive bounds on correlation functions. We can cure this issue by making the field redefinition
\begin{equation}
B \to B - \mathi \xi \star \total \star A \eqend{,}
\end{equation}
leading to the action
\begin{equation}
\label{yangmills_action}
S = \frac{1}{2} \int \tr F \wedge \star F + \frac{1}{2} \int \tr B \wedge \star B + \frac{\xi^2}{2} \int \tr \left( \star \total \star A \wedge \total \star A \right) + \mathi \int \tr \bar{c} \wedge \total \star \mathcal{D} c \eqend{,}
\end{equation}
which is positive definite on the subspace corresponding to $(A,B)$ as long as $\xi^2 > 0$. Since the new action was obtained simply by a field redefinition of $B$, it is still BRST-invariant, and the new BRST transformation is given by inserting the redefinition of $B$ in the old BRST transformation~\eqref{brst_standard}. After the field redefinition, the BRST transformation reads
\begin{equation}
\label{brst_new}
\brst A = \mathcal{D} c \eqend{,} \quad \brst c = - \frac{\mathi}{2} g [c,c] \eqend{,} \quad \brst \bar{c} = \xi B - \mathi \xi^2 \star \total \star A \eqend{,} \quad \brst B = \mathi \xi \star \total \star \mathcal{D} c \eqend{.}
\end{equation}
Since the new $\brst$ corresponds to the old one (denoted by the same symbol by abuse of notation) via a field redefinition, it still satisfies all the properties of the standard BRST transformation, and the above theorems on the BRST cohomologies remain valid also for the new $\brst$, which we will use in the following.

In the perturbatively quantised theory, it is not a priori clear whether BRST invariance can be maintained, and thus whether correlation functions of classically gauge-invariant observables, obtained using the above cohomological construction, are gauge-independent in the quantum theory. The outcome naturally depends on the form of the regulator that is used; in formal perturbation theory using dimensional regularisation~\cite{thooft1971,leibbrandt1975,breitenlohnermaison1977}, BRST invariance can be formally maintained for pure Yang-Mills theory, but including chiral fermions can lead to the well-known Adler-Bardeen axial anomaly~\cite{adler1969,bardeen1969} since dimensional regularisation does not preserve chiral symmetry. In the flow equation framework, a momentum cutoff is used as a regulator which manifestly breaks BRST invariance, and in order to classify possible violations of classical BRST symmetry (anomalies) in the renormalised quantum theory and to derive stringent consistency conditions on possible anomalies, it is very convenient to pass to the Batalin-Vilkovisky or field-antifield formalism~\cite{batalinvilkovisky1981,batalinvilkovisky1983,batalinvilkovisky1984,weinberg_v2} and introduce ``external sources'' for the BRST variations of the basic fields, called ``antifields''. They have Grassmann parity and form degree opposite to that of the corresponding basic fields, and for pure Yang-Mills theory they are Lie-algebra valued. We thus introduce the three-form $A^\ddag \in \mathcal{C}_0^\infty\left( \mathbb{R}^4, \Omega^3 \otimes \mathfrak{g} \otimes \Lambda \right)$ and the four-forms
$c^\ddag, \bar{c}^\ddag \in \mathcal{C}^\infty_0\left( \mathbb{R}^4, \Omega^4 \otimes \mathfrak{g} \right)$ and $B^\ddag \in \mathcal{C}^\infty_0\left( \mathbb{R}^4, \Omega^4 \otimes \mathfrak{g} \otimes \Lambda \right)$, denoted collectively by
\begin{equation}
\phi^\ddag \equiv ( A^\ddag, c^\ddag, \bar{c}^\ddag, B^\ddag ) \in \mathcal{C}^\infty_0\left( \mathbb{R}^4, \Omega^3 \otimes \mathfrak{g} \otimes \Lambda \oplus \left( \Omega^4 \otimes \mathfrak{g} \right)^{\otimes 2} \oplus \Omega^4 \otimes \mathfrak{g} \otimes \Lambda \right) \eqend{,}
\end{equation}
and augment the action to obtain
\begin{splitequation}
\label{action_field_antifield_coupling}
S_\text{total} &= S - \int \tr \left[ \left( \brst A \right) \wedge A^\ddag + \left( \brst c \right) \wedge c^\ddag + \left( \brst \bar{c} \right) \wedge \bar{c}^\ddag + \left( \brst B \right) \wedge B^\ddag \right] \\
&= S + \int \tr \left[ - \left( \mathcal{D} c \right) \wedge A^\ddag + \frac{\mathi}{2} g [c,c] \wedge c^\ddag - \xi \left( B - \mathi \xi \star \total \star A \right) \wedge \bar{c}^\ddag - \mathi \xi \left( \star \total \star \mathcal{D} c \right) \wedge B^\ddag \right] \eqend{.}
\end{splitequation}
Because only the total action will play a role in this paper, we will drop the subscript ``total'' on it to avoid clutter, \ie, we will write again $S$ for $S_\text{total}$ by abuse of notation. We then define the BV- or antibracket $(\cdot,\cdot)$ on arbitrary local functionals of fields and antifields by declaring fields and antifields conjugate to each other and extending it to general functionals by linearity and a graded Leibniz rule; the exact definition is given below in equation~\eqref{bv_def} in component notation. The BRST invariance of the action then results in $(S,S) = 0$, and we define the classical Slavnov-Taylor differential $\st$ acting on a functional of fields and antifields by
\begin{equation}
\st F \equiv (S,F) \eqend{,}
\end{equation}
which satisfies
\begin{equation}
\st^2 = (S, (S, \cdot) ) = 0
\end{equation}
as a consequence of the (graded) Jacobi identity fulfilled by the antibracket and $(S,S) = 0$; linearity and the graded Leibniz rule also extend to $\st$ from $(\cdot,\cdot)$ by definition. Furthermore, on functionals $F$ which do not depend on the antifields the Slavnov-Taylor differential reduces to the BRST differential $\st F = \brst F$.

The relevant cohomologies then also involve the Slavnov-Taylor differential $\st$ instead of the BRST differential $\brst$, and are defined analogously to~\eqref{cohom_def}. Of particular importance for us is the space $H^{1,4}(\st\vert\total)$ because it is related to the anomaly appearing in the quantisation of the theory described by $S$, and the spaces $H^{0/1,p}(\st)$ because they correspond to gauge invariant observables and their anomalies. For semisimple Lie groups, we have the following theorem~\cite{barnichetal2000}:
\begin{lemma}
For ghost numbers $g=0,1$ representatives of $H^{g,p}(\st)$ and $H^{g,p}(\st\vert\total)$ can be chosen to be independent of antifields, \ie,
\begin{equation}
H^{g,p}(\st) \cong H^{g,p}(\brst) \quad\text{ and }\quad H^{g,p}(\st\vert\total) \cong H^{g,p}(\brst\vert\total) \quad\text{ for }\quad g=0,1 \eqend{.}
\end{equation}
In other words, any representative $\op^{g,p}$ of a class in $H^{g,p}(\st\vert\total)$ can be written as a local functional in the kernel of $\brst$ modulo $\total$ of form/ghost degree $p$/$g$ not containing antifields, plus a functional of the form $\st \op^{g-1,p} + \total \op^{g,p-1}$, where $\op^{g-1,p}$ and $\op^{g,p-1}$ are smooth, local functionals of fields and antifields of form/ghost degree $p$/$(g-1)$ and $(p-1)$/$g$, respectively, and similarly for $H^{g,p}(\st)$.
\end{lemma}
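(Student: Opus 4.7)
The plan is to invoke the standard antifield-number filtration of the Batalin-Vilkovisky complex, following the strategy of Barnich, Brandt and Henneaux. Assign antifield number $\mathrm{afn}(\phi) = 0$ to the basic fields and $\mathrm{afn}(\phi^\ddag) \geq 1$ to the antifields in the conventional way. Then the total action decomposes as $S = S_0 + S_1$, with $S_0$ the gauge-fixed classical action (antifield-independent) and $S_1 = -\int \tr[(\brst \phi) \wedge \phi^\ddag]$ linear in antifields. Correspondingly $\st = (S, \cdot\,)$ decomposes by antifield weight as $\st = \delta + \gamma + \cdots$, where $\delta$ is the Koszul-Tate differential (lowering $\mathrm{afn}$ by one and encoding the classical equations of motion), while $\gamma$ preserves $\mathrm{afn}$ and reduces, on antifield-independent functionals, to the classical BRST differential $\brst$.

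The key homological input is the acyclicity of $\delta$ in strictly positive antifield number, which is the standard Koszul-Tate resolution property and follows from the regularity of the Yang-Mills equations of motion together with the BV spectrum of ghosts and antifields. This statement extends to $\delta$ modulo $\total$ in positive antifield number, with the obstruction measured by the characteristic cohomology $H^*_{\mathrm{char}}(\delta\vert\total)$. For semisimple $\mathfrak{g}$, the only non-vanishing characteristic cohomology classes (global symmetries, conserved currents, and higher reducibility parameters) live at total ghost numbers $g \geq 2$, so they do not obstruct the descent at $g = 0, 1$.

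The descent itself proceeds as follows. Given a cocycle $a = \sum_{i=0}^N a_i$ with $\mathrm{afn}(a_i) = i$ satisfying $\st a = 0$ (respectively $\st a + \total b = 0$), the equation reads at the top antifield weight $\gamma a_N = 0$ (respectively $\gamma a_N + \total(\cdots) = 0$). Acyclicity of $\delta$ modulo $\total$ in positive antifield number then produces a functional $\Xi_N$ of antifield number $N+1$ whose $\st$-image has top component equal to $a_N$, so that $a - \st \Xi_N$ is cohomologous to $a$ but has strictly lower top antifield number. Iterating eliminates all antifield-dependent pieces, reducing the representative to antifield number zero, where $\st$ coincides with $\brst$. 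A symmetric application of the descent on the exactness side shows that the natural inclusion map $H^{g,p}(\brst) \to H^{g,p}(\st)$ is also injective, yielding the asserted isomorphism, and analogously for the relative cohomologies modulo $\total$.

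The main obstacle is the verification of the vanishing of the characteristic cohomology in the relevant bidegrees; this is the technically deep ingredient and depends on both the structure of the semisimple Lie algebra and the regularity of the Yang-Mills equations of motion. Rather than re-derive the classification, the proof would invoke the detailed analysis in Ref.~[\onlinecite{barnichetal2000}] as a black box, whose output is precisely the vanishing needed for $g \leq 1$. The same results also clarify why the argument does not straightforwardly extend to higher ghost numbers: at $g \geq 2$ the characteristic cohomology classes re-enter and can genuinely obstruct the reduction to antifield-independent representatives, so the ghost-number restriction to $g = 0, 1$ in the statement is essential.
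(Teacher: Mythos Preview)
The paper does not actually prove this lemma; it is stated as a known result and attributed to Barnich, Brandt and Henneaux~\cite{barnichetal2000} (``For semisimple Lie groups, we have the following theorem~\cite{barnichetal2000}''). Your proposal correctly sketches the standard argument from that reference --- antifield-number filtration, Koszul--Tate acyclicity in positive antifield number, and the vanishing of characteristic cohomology for semisimple $\mathfrak{g}$ at the relevant ghost numbers --- so in substance you are reproducing exactly what the paper is citing rather than deviating from it.

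One minor technical remark: in your descent paragraph the role of $\delta$ and $\gamma$ is slightly garbled. At the top antifield number $N>0$ the cocycle condition gives $\gamma a_N = 0$ (mod $\total$), and it is the \emph{next} equation $\delta a_N + \gamma a_{N-1} = 0$ (mod $\total$) together with $\delta$-acyclicity that allows one to absorb $a_N$ into an $\st$-exact piece; the obstruction to doing so modulo $\total$ is precisely the characteristic cohomology $H^N_{k}(\delta\vert\total)$, which for semisimple $\mathfrak{g}$ vanishes at the antifield numbers relevant for $g=0,1$. With that correction your outline is the standard one, and appropriately defers the hard classification work to~\cite{barnichetal2000}, just as the paper does.
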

An explicit representative of $H^{1,4}(\st\vert\total)$ is given in equation~\eqref{cohom_parity}, and representatives of $H^{g,p}(\brst)$ are given in equation~\eqref{hgp_elements}. Especially, since for semisimple Lie groups there are no invariant polynomials of the Lie algebra of degree $1$, we have
\begin{equation}
\label{cohom_op}
H^{1,p}(\st) = \emptyset \eqend{.}
\end{equation}

In order to avoid a cluttered notation in subsequent sections (to the extent possible), we will denote the fields collectively as $\phi = ( A, B, c, \bar{c} )$, and we denote their various components as $\phi_K$, where $K$ is an index that distinguishes the kind of field, the tensor index (if any) and the Lie algebra index relative to an arbitrarily chosen basis in $\mathfrak{g}$. We likewise denote $\phi^\ddag = ( A^\ddag, B^\ddag, c^\ddag, \bar{c}^\ddag )$ and denote its components as $\phi^\ddag_L$. As usual in quantum field theory, it also useful to assign ``engineering dimensions'' to the basic fields $\phi$ and $\phi^\dagger$. There is a certain amount of freedom to do this in the extended theory with ghosts and antifields. For technical reasons, we find it convenient to assign an engineering dimension $\geq 1$ to all basic fields, setting $[A] = [c] = [\bar{c}] = 1$ and $[B] = 2$, where here and in the following, square brackets indicate the dimension, and we define the dimension of arbitrary monomials of the fields by demanding the dimension to be additive. With this assignment, $\brst$ increases the dimension by one unit. We also assign an engineering dimension to antifields according to the rule
\begin{equation}
\label{antifield_dim}
[\phi^\ddag] = 3 - [\phi] \eqend{,}
\end{equation}
and then also $\st$ increases the dimension by one. As a further notation, we define
\begin{equation}
[\vec{K}] \equiv \sum_{i=1}^m [\phi_{K_i}] \eqend{,} \qquad [\vec{L}^\ddag] \equiv \sum_{j=1}^n [\phi_{L_j}^\ddag]
\end{equation}
for multiindices $\vec{K} = (K_1, \ldots, K_m)$ and $\vec{L}^\ddag = (L_1, \ldots, L_n^\ddag)$. The assignment of the various gradings of all fields, including our assignments of the engineering dimensions in the theory is summarised for convenience in Table~\ref{table_fields}.
\setcounter{table}{0}
\begin{table}
\begin{center}
\begin{tabular}{ccccc|ccccc}
\hline
field     & dim. & form deg. & gh. number & grading & antifield & dim. & form deg. & gh. number & grading \\
\hline
$A$       & $1$  & $1$ &  $0$ & $+1$ & $A^\ddag$       & $2$  & $3$ & $-1$ & $-1$ \\
$c$       & $1$  & $0$ &  $1$ & $-1$ & $c^\ddag$       & $2$  & $4$ & $-2$ & $+1$ \\
$\bar{c}$ & $1$  & $0$ & $-1$ & $-1$ & $\bar{c}^\ddag$ & $2$  & $4$ &  $0$ & $+1$ \\
$B$       & $2$  & $0$ &  $0$ & $+1$ & $B^\ddag$       & $1$  & $4$ & $-1$ & $-1$ \\
\hline
\end{tabular}
\end{center}
\caption{Fields and antifields, their engineering dimension, form degree, ghost number and Grassmann grading.}
\label{table_fields}
\end{table}

In our abstract component notation, the antibracket then reads
\begin{equation}
\label{bv_def}
(F,G) \equiv \int \left[ \frac{\delta_\text{R} F}{\delta \phi_K(x)\vphantom{\delta \phi_K^\ddag}} \frac{\delta_\text{L} G}{\delta \phi_K^\ddag(x)} - \frac{\delta_\text{R} F}{\delta \phi_K^\ddag(x)} \frac{\delta_\text{L} G}{\delta \phi_K(x)\vphantom{\delta \phi_K^\ddag}} \right] \total^4 x \eqend{,}
\end{equation}
with the right (left) derivative $\delta_\text{R}$ ($\delta_\text{L}$) given by
\begin{equation}
\label{deltarl_def}
\delta F = \frac{\delta_\text{R} F}{\delta \phi_K} \delta \phi_K = \delta \phi_K \frac{\delta_\text{L} F}{\delta \phi_K} \eqend{,}
\end{equation}
and equal indices $K$, $L$, \etc, are summed over. The Slavnov-Taylor differential takes the form
\begin{equation}
\label{st_def}
\st F = (S, F) = \int \left[ \frac{\delta_\text{R} S}{\delta \phi_K(x)\vphantom{\delta \phi_K^\ddag}} \frac{\delta_\text{L} F}{\delta \phi_K^\ddag(x)} + \left( \brst \phi_K(x) \right) \frac{\delta_\text{L} F}{\delta \phi_K(x)\vphantom{\delta \phi_K^\ddag}} \right] \total^4 x \eqend{,}
\end{equation}
and one easily checks that
\begin{equation}
\label{classical_brst_action}
\st S = (S,S) = 2 \int \left( \brst \phi_K(x) \right) \frac{\delta_\text{L} S}{\delta \phi_K(x)} \total^4 x = 2 \brst S = 0 \eqend{,}
\end{equation}
since $S$ was BRST-invariant. Splitting the action $S$ into a free part $S_0$ (comprising all terms quadratic in the basic fields and antifields), and an interaction $S_\text{int} \equiv S - S_0$, we also define a corresponding free ST differential
\begin{equation}
\label{st0_def}
\st_0 F \equiv (S_0, F) \eqend{,}
\end{equation}
which will play a role later in the derivation of Ward identities.

\section{The flow equation framework}
\label{sec_flow}

\subsection{Generating functionals of CACs}

The objects of interest in the QFT are (connected) correlation functions, defined as functional derivatives of the logarithm of the generating functional $Z(J)$
\begin{equation}
\expect{ \varphi_{K_1}(x_1) \cdots \varphi_{K_n}(x_n) }_\text{c} = \left. \frac{\hbar^n \delta^n}{\delta J_{K_1}(x_1) \cdots \delta J_{K_n}(x_n)} \ln Z(J) \right\rvert_{J=0} \eqend{,}
\end{equation}
which is given formally by the Euclidean path integral
\begin{equation}
\label{euclidean_pathintegral}
Z(J) = \frac{1}{Z(0)} \int \exp\left[ - \frac{1}{\hbar} S[\varphi] + \frac{1}{\hbar} \left\langle J_K, \varphi_K \right\rangle \right] \,\mathcal{D} \varphi
\end{equation}
over a space of ``functions'' $\varphi = ( \varphi_K )$ on $\mathbb{R}^4$. In our case these fields comprise the gauge-, ghost, and auxiliary field, 
$( \phi_K ) = (A, c, \bar{c}, B)$, as described in the preceding section. Of course it is not at all clear what such a functional integral is supposed to mean. 
To give a mathematically precise meaning to this formal expression, we start from the free (i.e., quadratic) part of the action which we write as
\begin{equation}
\label{free_action}
S_0[\varphi] = \frac{1}{2} \left\langle \varphi_K, \left( C^{0, \infty} \right)^{-1}_{KL} \ast \varphi_L \right\rangle - \left\langle \brst_0 \varphi_K, \phi_K^\ddag \right\rangle \eqend{,}
\end{equation}
where here and in the following we use the summation convention for the indices $K, L, \dots$, and where $\brst_0 $ is the linear part of the full BRST transformation~\eqref{brst_new}. This defines a free covariance $C_{KL}^{0, \infty}$ of test functions $\varphi_K \in \mathcal{S}(\mathbb{R}^4)$. Note that the antifields $\varphi^\ddag_L$ act only as external sources and do not participate in the formal functional integration in the generating functional~\eqref{euclidean_pathintegral}; we thus do not associate any covariance with them. We then introduce an UV cutoff $\Lambda_0$ and an IR cutoff $\Lambda$ and define a regularised covariance matrix (``propagator'')
\begin{equation}
\label{reg_covariance}
C_{KL}^{\Lambda, \Lambda_0} \equiv C_{KL}^{0, \infty} \ast \left( R^{\Lambda_0} - R^\Lambda \right) \eqend{,}
\end{equation}
where $R^\Lambda$ is a real, smooth, $\mathrm{E}(4)$ invariant regulator function $R^\Lambda$, in Fourier space analytic at $p=0$, which fulfils the properties
\begin{equations}[r_prop]
0 < R^\Lambda(p) &< 1 \quad\text{ for }\quad 0 < \Lambda < \infty \eqend{,} \\
R^0(p) &= 0 \eqend{,} \qquad R^\infty(p) = 1 \eqend{,} \label{r_prop_2} \\
R^\Lambda(p) &< R^{\Lambda_0}(p) \quad\text{ for }\quad \Lambda < \Lambda_0 \eqend{,} \\
\label{r_prop_bound} \abs{\partial^w \partial^k_\Lambda R^\Lambda(p)} &\leq c \sup(\abs{p},\Lambda)^{-k-\abs{w}} \mathe^{-\frac{\abs{p}^2}{2 \Lambda^2}} \eqend{.}
\end{equations}
A simple example of such a regulator is given by
\begin{equation}
\label{r_example}
R^\Lambda(p) = \mathe^{-\frac{\abs{p}^2}{\Lambda^2}} \eqend{,}
\end{equation}
which is even analytic for all $p$. For Yang-Mills theories with the action~\eqref{yangmills_action} expressed in component form and $(\phi_K) = (A, c, \bar c, B)$ as above, one checks that
\begin{equation}
\label{cov_def}
C^{\Lambda, \Lambda_0}_{KL}(p) = \delta_{ab} \begin{pmatrix} \delta_{\mu\nu} + (\xi^{-1}-1) p_\mu p_\nu / p^2 & 0 & 0 & 0 \\ 0 & 0 & - 1 & 0 \\ 0 & 1 & 0 & 0 \\ 0 & 0 & 0 & p^2 \end{pmatrix} \frac{R^{\Lambda_0}(p) - R^\Lambda(p)}{p^2} \eqend{,}
\end{equation}
which is positive definite on the subspace corresponding to the Grassmann even fields $(A,B)$. For later use, we note that from property~\eqref{r_prop_bound} of the regulator and the fact that $S_0$ has dimension $4$ we can get the crucial estimate
\begin{equation}
\label{prop_abl}
\abs{\partial^w \partial_\Lambda C^{\Lambda, \Lambda_0}_{KL}(p)} \leq c \sup(\abs{p}, \Lambda)^{-5+[\phi_K]+[\phi_L]-\abs{w}} \, \mathe^{-\frac{\abs{p}^2}{2 \Lambda^2}} \eqend{.}
\end{equation}
The positive definite property of the regularised covariance $C^{\Lambda, \Lambda_0}_{KL}$ implies that one can defines from it a unique corresponding Gaussian measure on the space of test functions $\varphi \in \mathcal{S}(\mathbb{R}^4)$, which we denote by $\total \nu^{\Lambda, \Lambda_0}$~\cite{glimmjaffe1987,salmhofer1999}. Note that while for bosonic fields this is the usual Lebesgue integral, for fermionic (Grassmann-valued) fields the integral symbol is only formal and the integral is defined via a linear map on the space of Grassmann-valued functions, see Ref.~[\onlinecite{salmhofer1999}] for details. Alternatively, one may define the fermionic integral via a functional determinant. As is customary, we will not make this distinction explicit. We are thus led to consider the regularised generating functional
\begin{equation}
\label{regularised_genfunc}
Z^{\Lambda, \Lambda_0}(J) \equiv \int \exp\left[ - \frac{1}{\hbar} L^{\Lambda_0}[\varphi] + \frac{1}{\hbar} \left\langle \brst_0 \varphi_K, \phi_K^\ddag \right\rangle + \frac{1}{\hbar} \left\langle J_K, \varphi_K \right\rangle \right] \total\nu^{\Lambda, \Lambda_0}(\varphi) \eqend{,}
\end{equation}
for finite cutoffs $\Lambda$ and $\Lambda_0$. Here, $L^{\Lambda_0}$ is the part $S_\text{int}$ of the action $S$ which contains the terms not already included 
in the quadratic part $S_0$, \emph{plus} additional ``counterterms'' (depending on the cutoff $\Lambda_0$) by means of which we hope to make the functional integral well-defined upon removal of the cutoffs. One formally sees that the counterterms are of order $\hbar$ (treating, as we will, $\hbar$ as an expansion parameter):
\begin{equation}
\label{lint}
L^{\Lambda_0} = S_\text{int} + \bigo{\hbar} \eqend{,}
\end{equation}
and it will follow that the counterterms must be local operators that cannot exceed a certain dimension. We refrain from giving an explicit listing of all possible terms appearing in $L^{\Lambda_0}$, since this is a simple bookkeeping exercise using the dimension assignment given in Table~\ref{table_fields}. If we knew that $L^{\Lambda_0}$ was bounded from below, then the Gaussian integral~\eqref{regularised_genfunc} would actually be completely well defined, and hence would rigorously define $Z^{\Lambda, \Lambda_0}(J)$ as functional of the source $J$. We could then obtain the regularised correlation functions by functional differentiation with respect to $J$, and if we could show that the cutoffs can removed as $\Lambda_0 \to \infty$, $\Lambda \to 0$ with a suitable choice of the counterterms, this would then rigorously define the theory. We will indeed essentially follow this procedure, but there are two problems:
\begin{enumerate}
\item Since the regulator $R^\Lambda$ violates BRST invariance, the interaction part $L^{\Lambda_0}$ must contain counterterms which are also not BRST invariant. We thus a priori have to take the interaction part as the most general polynomial in fields and antifields of dimension $4$ which is invariant under the remaining unbroken global symmetries (such as $\mathrm{E}(4)$ invariance and ghost number conservation). The question then arises whether and how BRST invariance is restored when we remove the cutoffs. This will in the end be achieved by the powerful machine of Ward identities for the BRST symmetry.
\item Because $L^{\Lambda_0}$ is not manifestly bounded from below (it even contains Grassmann-valued fields, so that it is actually meaningless to ask it to be bounded from below in a straightforward sense), we cannot give a precise meaning to the functional integral even with cutoffs. We effectively ignore this issue and instead proceed by formally expanding out the exponential. Each term in the resulting series is then well defined, and we will proceed by giving mathematical meaning to each such term as the cutoffs are removed. The disadvantage is that we will only be able to define the theory in the sense of a formal power series (in $\hbar$, as it turns out).
\end{enumerate}
To proceed, it is convenient to pass to connected amputated correlation functions (CACs), which are obtained by taking derivatives of $\ln Z^{\Lambda, \Lambda_0}(J)$ with respect to the source $J$ and acting with the inverse of the free covariance $\left( C^{\Lambda, \Lambda_0} \right)^{-1}$ on each field. In the BV formalism, it is also convenient to remove the free part (\ie, quadratic in fields/antifields) from the field-antifield coupling displayed in equation~\eqref{action_field_antifield_coupling}. Both operations can be done by replacing the source by
\begin{equation}
J_K \to \frac{\delta_\text{L}}{\delta \phi_K} \left[ \frac{1}{2} \left\langle \phi_K, \left( C^{\Lambda, \Lambda_0} \right)^{-1}_{KL} \ast \phi_L \right\rangle - \left\langle \brst_0 \phi_K, \phi_K^\ddag \right\rangle \right]
\end{equation}
and shifting the integration variable $\varphi \to \varphi + \phi$ in the generating functional. In the flow equation framework, it is further necessary to remove the free part from the generating functional, and we denote the resulting generating functional for the CACS by $L^{\Lambda, \Lambda_0}\left( \phi, \phi^\ddag \right)$, where $\phi, \phi^\ddag \in \mathcal{S}(\mathbb{R}^4)$ are Schwartz functions. These manipulations in fact amount to defining
\begin{equation}
\label{l_0op_def}
L^{\Lambda, \Lambda_0} + I^{\Lambda, \Lambda_0} \equiv - \hbar \ln \left[ \nu^{\Lambda, \Lambda_0} \conv \exp\left( - \frac{1}{\hbar} L^{\Lambda_0} \right) \right] \eqend{,}
\end{equation}
with the convolution $\conv$ with the Gaussian measure defined by
\begin{equation}
\label{conv_def}
\left( \nu^{\Lambda, \Lambda_0} \conv F \right)(\phi) \equiv \int F(\phi + \varphi) \total\nu^{\Lambda, \Lambda_0}(\varphi)
\end{equation}
for $\phi, \varphi \in \mathcal{S}(\mathbb{R}^4)$, and where $I^{\Lambda, \Lambda_0}$, which only depends on antifields, is given by
\begin{equation}
I^{\Lambda, \Lambda_0} \equiv - \hbar \ln Z^{\Lambda, \Lambda_0}(0) \eqend{.}
\end{equation}
The antifield-independent part of $I^{\Lambda, \Lambda_0}$ is proportional to the volume of space and thus needs a finite volume to be well-defined. We do not make this explicit, since we are only interested in the quantities obtained from $\ln Z^{\Lambda,\Lambda_0}(\phi, \phi^\ddag)$ by functional differentiation in $\phi$ and $\phi^\ddag$, from which the antifield-independent part of $I^{\Lambda, \Lambda_0}$ anyhow drops out.

Taking a $\Lambda$ derivative of equation~\eqref{l_0op_def} and using the properties of Gaussian measures~\cite{glimmjaffe1987,mueller2003}, we obtain the equation
\begin{splitequation}
\label{l_0op_flow}
\partial_\Lambda L^{\Lambda, \Lambda_0} + \partial_\Lambda I^{\Lambda, \Lambda_0} &= \frac{\hbar}{2} \left\langle \frac{\delta}{\delta \phi_K}, \left( \partial_\Lambda C^{\Lambda, \Lambda_0}_{KL} \right) \ast \frac{\delta}{\delta \phi_L} \right\rangle L^{\Lambda, \Lambda_0} \\
&\quad- \frac{1}{2} \left\langle \frac{\delta}{\delta \phi_K} L^{\Lambda, \Lambda_0} , \left( \partial_\Lambda C^{\Lambda, \Lambda_0}_{KL} \right) \ast \frac{\delta}{\delta \phi_L} L^{\Lambda, \Lambda_0} \right\rangle \eqend{.}
\end{splitequation}
This differential equation, called the \emph{renormalisation group flow equation}, is the starting point in our analysis. If we expand $L^{\Lambda, \Lambda_0}$ in powers of $\hbar$, then each order is completely well defined, since it has a well-defined functional integral representation. However, rather than going back to that functional integral, we instead want to solve the flow equation (order-by-order in $\hbar$, see Subsection~\ref{sec_pert_theory}), and to do this we need to state the boundary conditions. As $\Lambda \to \Lambda_0$, the Gaussian measure $\total\nu^{\Lambda, \Lambda_0}$ reduces to a $\delta$ measure and we get
\begin{equation}
\label{l_0op_lambda0}
L^{\Lambda_0, \Lambda_0} = L^{\Lambda_0} \eqend{,} \qquad I^{\Lambda_0, \Lambda_0} = 0 \eqend{.}
\end{equation}
However, these boundary conditions at $\Lambda = \Lambda_0$ are not very practical, since a priori one does not know the form of the counterterms in $L^{\Lambda_0}$ that are required in order to make $L^{\Lambda, \Lambda_0}$ finite in the limit as the cutoffs are removed. What one knows in any case is that the counterterms must be local functionals not exceeding a certain dimension. Thus, this aspect of the boundary conditions \emph{can} be imposed at $\Lambda = \Lambda_0$. The remaining information only concerns a finite number of ``relevant'' counterterms, and it is much more convenient to encode it in a boundary condition at $\Lambda = 0$. The proof that this is possible, and the precise form of the boundary conditions are presented in full detail in Subsection~\ref{sec_pert_theory}.

We end this subsection noting that $I^{\Lambda, \Lambda_0}$ does not appear on the right-hand side of the flow equation, and its antifield-independent part is simply a constant times the total volume of spacetime. Since this part does not contribute to correlation functions, it is of no interest to us, and for all other parts we can pass without problems to the infinite volume limit (see Ref.~[\onlinecite{kellerkopper1992}] for details). In summary, given a solution to the flow equation subject to the boundary conditions presented below, we can obtain the physical (unregularised) CACs in the limit $\Lambda_0 \to \infty$ and $\Lambda \to 0$, provided we can show that these limits exist.

\subsection{Insertions of composite operators}

We are also interested in correlation functions with insertions of local composite operators, given by monomials
\begin{equation}
\label{op_def}
\op_A(x) \equiv \left( \prod_{i=1}^m \partial^{w_i} \phi_{K_i}(x) \right) \left( \prod_{j=1}^n \partial^{w^\ddag_j} \phi^\ddag_{L_j}(x) \right) \in \mathcal{F} \eqend{,}
\end{equation}
which are indexed by $A = \{ \vec{K}, \vec{L}^\ddag, \vec{w}, \vec{w}^\ddag \}$, and which have the engineering dimension $[\op_A] = [\vec{K}] + [\vec{L}^\ddag] + \abs{\vec{w}} + \abs{\vec{w}^\ddag}$. Let us denote by $\Delta$ the smallest difference in operator dimensions, \ie, $\Delta$ is the smallest number such that if $[\op_A] > [\op_B]$ we have $[\op_A] \geq [\op_B] + \Delta$ for any two operators $\op_A$ and $\op_B$ (including the basic fields themselves). For technical reasons, we restrict to strictly positive $\Delta > 0$. This is fulfilled both for pure YM theory, where we have $\Delta = 1$ since all basic fields have integer dimensions, and for YM theory including fermionic matter (with engineering dimension $3/2$), where $\Delta = 1/2$ [since derivatives are included in the definition of composite operators~\eqref{op_def}, we always have $\Delta \leq 1$]. The corresponding generating functional of the CACs with insertions is obtained by replacing
\begin{equation}
L^{\Lambda_0} \to L^{\Lambda_0} + \sum_{k=1}^s \left\langle \chi_k, \op_{A_k} + \delta^{\Lambda_0} \op_{A_k} \right\rangle \eqend{,}
\end{equation}
where $\chi_k \in \mathcal{S}(\mathbb{R}^4)$, and where
\begin{equation}
\label{op_ct_def}
\delta^{\Lambda_0} \colon \mathcal{F} \to \mathcal{F} \eqend{,} \quad \delta^{\Lambda_0} \op_{A_k} = \bigo{\hbar}
\end{equation}
is a map representing the counterterms (depending on the cutoff $\Lambda_0$) that are necessary to make the insertion finite, whose precise form is unimportant for our purposes. We then take variational derivatives with respect to the $\chi_k$ to define the generating functional of CACs with insertions:
\begin{splitequation}
\label{l_sop_def}
&L^{\Lambda, \Lambda_0}\left( \bigotimes_{k=1}^s \op_{A_k}(x_k) \right) + I^{\Lambda, \Lambda_0}\left( \bigotimes_{k=1}^s \op_{A_k}(x_k) \right) \\
&\quad\equiv - \hbar \left( \prod_{k=1}^s \frac{\delta}{\delta \chi_k(x_k)} \right) \ln \left[ \nu^{\Lambda, \Lambda_0} \conv \exp\left( - \frac{1}{\hbar} L^{\Lambda_0} - \frac{1}{\hbar} \sum_{k=1}^s \left\langle \chi_k, \op_{A_k} + \delta^{\Lambda_0} \op_{A_k} \right\rangle \right) \right]_{\chi_k = 0} \eqend{.}
\end{splitequation}
For the sake of brevity, we will in the following suppress the coordinate space dependence of $\op_{A_k}$ when no confusion can arise. The corresponding flow equation then reads
\begin{splitequation}
\label{l_sop_flow}
&\partial_\Lambda L^{\Lambda, \Lambda_0}\left( \bigotimes_{k=1}^s \op_{A_k} \right) + \partial_\Lambda I^{\Lambda, \Lambda_0}\left( \bigotimes_{k=1}^s \op_{A_k} \right) = \frac{\hbar}{2} \left\langle \frac{\delta}{\delta \phi_K}, \left( \partial_\Lambda C^{\Lambda, \Lambda_0}_{KL} \right) \ast \frac{\delta}{\delta \phi_L} \right\rangle L^{\Lambda, \Lambda_0}\left( \bigotimes_{k=1}^s \op_{A_k} \right) \\
&\qquad- \left\langle \frac{\delta}{\delta \phi_K} L^{\Lambda, \Lambda_0}, \left( \partial_\Lambda C^{\Lambda, \Lambda_0}_{KL} \right) \ast \frac{\delta}{\delta \phi_L} L^{\Lambda, \Lambda_0}\left( \bigotimes_{k=1}^s \op_{A_k} \right) \right\rangle \\
&\qquad- \hspace{-1em}\sum_{\subline{\alpha \cup \beta = \{1, \ldots, s\} \\ \alpha \neq \emptyset \neq \beta}} \left\langle \frac{\delta}{\delta \phi_K} L^{\Lambda, \Lambda_0}\left( \bigotimes_{k\in\alpha} \op_{A_k} \right) , \left( \partial_\Lambda C^{\Lambda, \Lambda_0}_{KL} \right) \ast \frac{\delta}{\delta \phi_L} L^{\Lambda, \Lambda_0}\left( \bigotimes_{k\in\beta} \op_{A_k} \right) \right\rangle \eqend{,}
\end{splitequation}
which in comparison to the flow equation for functionals without insertions~\eqref{l_0op_flow} has an additional source term depending on functionals with a smaller number of operator insertions. Again, as $\Lambda \to \Lambda_0$ the Gaussian measure reduces to a $\delta$ measure and we have the boundary conditions
\begin{equation}
\label{l_1op_lambda0}
L^{\Lambda_0, \Lambda_0}\left( \op_A \right) = \op_A + \delta^{\Lambda_0} \op_A
\end{equation}
for the insertion of one operator and
\begin{equation}
\label{l_sop_lambda0}
L^{\Lambda_0, \Lambda_0}\left( \bigotimes_{k=1}^s \op_{A_k} \right) = 0
\end{equation}
for the insertion of $s \geq 2$ operators, while always $I^{\Lambda_0, \Lambda_0}\left( \bigotimes_{k=1}^s \op_{A_k} \right) = 0$. As in the case of no insertions, this form of the boundary conditions is actually unpractical, and to make efficient use of the flow equation we need to reformulate them by imposing part of the boundary conditions at $\Lambda = \Lambda_0$. Again, this will be discussed in detail in Subsection~\ref{sec_pert_theory}.

It can be seen that for finite values of the cutoffs, the CACs with insertions are smooth (in fact real-analytic) as functions of the points $x_i$~\cite{kellerkopper1993,kopper1998}. For more than one insertion, this ceases to be true in the limit $\Lambda \to 0$, $\Lambda_0 \to \infty$, and the CACs with insertions become distributions on Schwartz space~\cite{kellerkopper1993,kopper1998}. Our bounds will imply that these distributions are in fact represented by smooth functions as long as all the $x_i$ are distinct: combining Propositions~\ref{thm_lowenstein_1} and~\ref{thm_lsc}, we see that all derivatives are bounded in this case. For later applications, one would also like to smear the CACs with one insertion of a composite operator of dimension $[\op_A] \leq 4$ in $x$ against the test function $1$. While this is legal as long as there is a finite IR cutoff $\Lambda$, it is of course not a priori possible to do this after the limit $\Lambda \to 0$ has been taken, unless one has sufficient information about the decay properties of CACs with insertions as the points $x_i$ tend to infinity. Explicit bounds incorporating such information were provided in previous works~\cite{hollandhollands2015a}, but in this paper we find it easier to derive the separate flow equation
\begin{splitequation}
\label{l_intop_flow}
\partial_\Lambda L^{\Lambda, \Lambda_0}\left( \int\!\op_A \right) + \partial_\Lambda I^{\Lambda, \Lambda_0}\left( \bigotimes_{k=1}^s \op_{A_k} \right) &= \frac{\hbar}{2} \left\langle \frac{\delta}{\delta \phi_K}, \left( \partial_\Lambda C^{\Lambda, \Lambda_0}_{KL} \right) \ast \frac{\delta}{\delta \phi_L} \right\rangle L^{\Lambda, \Lambda_0}\left( \int\!\op_A \right) \\
&\quad- \left\langle \frac{\delta}{\delta \phi_K} L^{\Lambda, \Lambda_0}, \left( \partial_\Lambda C^{\Lambda, \Lambda_0}_{KL} \right) \ast \frac{\delta}{\delta \phi_L} L^{\Lambda, \Lambda_0}\left( \int\!\op_A \right) \right\rangle
\end{splitequation}
for the integrated functionals, denoted by $L^{\Lambda, \Lambda_0}\left( \int\!\op_A \right)$, together with separate boundary conditions, which are then analysed in their own right.

We close this section by noting that for zero external fields/antifields $\phi$ and $\phi^\ddag$, we obtain the connected correlation functions with insertions of composite operators, which are ultimately the objects of interest for us, via:
\begin{equation}
\label{relation_l_opconn}
\left. L^{0,\infty}\left( \bigotimes_{k=1}^s \op_{A_k} \right) \right\rvert_{\phi = \phi^\ddag = 0} = (-\hbar)^{1-s} \expect{ \op_{A_1} \cdots \op_{A_s} }_\text{c} \eqend{.}
\end{equation}
This relation follows directly from the definition~\eqref{l_sop_def}. Since the expansion of the functionals $L^{0,\infty}$ in $\hbar$ starts at order $\hbar^0$, this shows that the expansion of the connected correlation functions of $s$ composite operators starts at order $\hbar^{s-1}$, and vice versa.

Note that in deriving the above equations, we have tacitly assumed that all operators $\op_{A_i}$ are bosonic, since then the functionals with operator insertions are totally symmetric in the insertions. To treat fermionic operators, we introduce for each of them an auxiliary constant fermion, such that all formulas are valid for the product of the auxiliary fermion and the fermionic operator. The correlation functions, including the correct minus signs, are then obtained by taking derivatives w.r.t. these constant fermions in the final results.

\subsection{Perturbation theory and boundary conditions on CACs}
\label{sec_pert_theory}

As already mentioned, in order to prove that the physical limit $\Lambda_0 \to \infty$, $\Lambda \to 0$ exists we have to resort to perturbation theory, expanding the generating functionals in the number of external fields and in a formal power series in $\hbar$. The expansion coefficients (which are also called functionals) are denoted by $\mathcal{L}^{\Lambda, \Lambda_0, l}_{\vec{K} \vec{L}^\ddag}$, and are defined by
\begin{equations}[field_expansion]
\mathcal{L}^{\Lambda, \Lambda_0, l}_{\vec{K} \vec{L}^\ddag}(\vec{x}) &\equiv \left. \frac{1}{l!} \frac{\partial^l}{\partial \hbar^l} \left( \prod_{i=1}^m \frac{\delta_\text{L}}{\delta \Phi_{K_i}(x_i)} \right) \left( \prod_{j=1}^n \frac{\delta_\text{L}}{\delta \phi^\ddag_{L_j}(x_{m+j})} \right) L^{\Lambda, \Lambda_0} \right\rvert_{\phi = \phi^\ddag = 0, \hbar = 0} \eqend{,} \\
\mathcal{L}^{\Lambda, \Lambda_0, l}_{\vec{L}^\ddag}(\vec{x}) &\equiv \left. \frac{1}{l!} \frac{\partial^l}{\partial \hbar^l} \left( \prod_{j=1}^n \frac{\delta_\text{L}}{\delta \phi^\ddag_{L_j}(x_j)} \right) I^{\Lambda, \Lambda_0} \right\rvert_{\phi^\ddag = 0, \hbar = 0} \eqend{,}
\end{equations}
and similarly for the functionals with insertions of composite operators. It is furthermore advantageous to pass to momentum space. Because of translation invariance, overall momentum conservation holds for functionals without insertions of composite operators, and we define their Fourier transform with the momentum-conserving $\delta$ taken out. To reduce notational clutter, we do not introduce new notation, and simply set
\begin{equation}
\label{func_ft_0op}
\mathcal{L}^{\Lambda, \Lambda_0, l}_{\vec{K} \vec{L}^\ddag}(\vec{x}) \equiv \int (2\pi)^4 \delta\left( \sum_{i=1}^{m+n} q_i \right) \mathcal{L}^{\Lambda, \Lambda_0, l}_{\vec{K} \vec{L}^\ddag}(\vec{q}) \prod_{i=1}^{m+n} \mathe^{\mathi q_i x_i} \frac{\total^4 q_i}{(2\pi)^4} \eqend{.}
\end{equation}
Note that even though we include $q_{m+n}$ in the list of arguments, it is not an independent variable, but instead determined as a function of the other $q_i$. For functionals with insertion of composite operators, we do not have overall momentum conservation, and thus define
\begin{equation}
\label{func_ft_sop}
\mathcal{L}^{\Lambda, \Lambda_0, l}_{\vec{K} \vec{L}^\ddag}\left( \bigotimes_{k=1}^s \op_{A_k}; \vec{x} \right) \equiv \int \mathcal{L}^{\Lambda, \Lambda_0, l}_{\vec{K} \vec{L}^\ddag}\left( \bigotimes_{k=1}^s \op_{A_k}; \vec{q} \right) \prod_{i=1}^{m+n} \mathe^{\mathi q_i x_i} \frac{\total^4 q_i}{(2\pi)^4} \eqend{.}
\end{equation}
However, translation invariance still tells us the functionals with insertions have a shift property
\begin{equation}
\label{func_sop_shift}
\mathcal{L}^{\Lambda, \Lambda_0, l}_{\vec{K} \vec{L}^\ddag}\left( \bigotimes_{k=1}^s \op_{A_k}(x_k); \vec{q} \right) = \mathe^{- \mathi y \sum_{i=1}^{m+n} q_i} \mathcal{L}^{\Lambda, \Lambda_0, l}_{\vec{K} \vec{L}^\ddag}\left( \bigotimes_{k=1}^s \op_{A_k}(x_k - y); \vec{q} \right) \eqend{.}
\end{equation}
Functionals which only contain integrated insertions again have overall momentum conservation, and thus we define their Fourier transform also with the momentum-conserving $\delta$ taken out
\begin{equation}
\label{func_ft_iop}
\mathcal{L}^{\Lambda, \Lambda_0, l}_{\vec{K} \vec{L}^\ddag}\left( \bigotimes_{k=1}^s \int\!\op_{A_k}; \vec{x} \right) \equiv \int (2\pi)^4 \delta\left( \sum_{i=1}^{m+n} q_i \right) \mathcal{L}^{\Lambda, \Lambda_0, l}_{\vec{K} \vec{L}^\ddag}\left( \bigotimes_{k=1}^s \int\!\op_{A_k}; \vec{q} \right) \prod_{i=1}^{m+n} \mathe^{\mathi q_i x_i} \frac{\total^4 q_i}{(2\pi)^4} \eqend{.}
\end{equation}

The flow equation~\eqref{l_0op_flow} then gives the hierarchy of perturbative flow equations
\begin{splitequation}
\label{l_0op_flow_hierarchy}
\partial_\Lambda \mathcal{L}^{\Lambda, \Lambda_0, l}_{\vec{K} \vec{L}^\ddag}(\vec{q}) &= \frac{c}{2} \int \left( \partial_\Lambda C^{\Lambda, \Lambda_0}_{MN}(-p) \right) \mathcal{L}^{\Lambda, \Lambda_0, l-1}_{MN \vec{K} \vec{L}^\ddag}(p, -p, \vec{q}) \frac{\total^4 p}{(2\pi)^4} \\
&\quad- \sum_{\subline{\sigma \cup \tau = \{1, \ldots, m\} \\ \rho \cup \varsigma = \{1, \ldots, n\}}} \sum_{l'=0}^l \frac{c_{\sigma\tau\rho\varsigma}}{2} \mathcal{L}^{\Lambda, \Lambda_0, l'}_{\vec{K}_\sigma \vec{L}_\rho^\ddag M}(\vec{q}_\sigma,\vec{q}_\rho,-k) \left( \partial_\Lambda C^{\Lambda, \Lambda_0}_{MN}(k) \right) \mathcal{L}^{\Lambda, \Lambda_0, l-l'}_{N \vec{K}_\tau \vec{L}_\varsigma^\ddag}(k,\vec{q}_\tau,\vec{q}_\varsigma)
\end{splitequation}
with
\begin{equation}
\label{k_def}
k \equiv \sum_{i \in \sigma \cup \rho} q_i = - \sum_{i \in \tau \cup \varsigma} q_i \eqend{,}
\end{equation}
where $c$ and $c_{\sigma\tau\rho\varsigma}$ are some constants stemming from the anticommutating nature of fermionic fields (and antifields). This hierarchy is now suited to inductive proofs: the functional in the first line on the right-hand side and the functionals in the second line have a lower order in $\hbar$ if $0 < l' < l$. If $l' = 0$ (or $l' = l$), most functionals in the second line have a smaller number of external fields and antifields, except when $l' = 0$ and the first functional has only one or two external fields or antifields, or when $l' = l$ and the second functional has only one or two external fields or antifields. However, these functionals vanish by definition since we removed the free part from $L^{\Lambda, \Lambda_0}$, which are exactly the terms linear and quadratic in the external fields and antifields at order $\hbar^0$. We can thus ascend in $m+n+2l$, where $m+n$ is the number of external fields and antifields, and for fixed $m+n+2l$, ascend in $l$.

For functionals with insertions of composite operators, the flow equation~\eqref{l_sop_flow} gives the hierarchy
\begin{splitequation}
\label{l_sop_flow_hierarchy}
&\partial_\Lambda \mathcal{L}^{\Lambda, \Lambda_0, l}_{\vec{K} \vec{L}^\ddag}\left( \bigotimes_{k=1}^s \op_{A_k}; \vec{q} \right) = \frac{c}{2} \int \left( \partial_\Lambda C^{\Lambda, \Lambda_0}_{MN}(-p) \right) \mathcal{L}^{\Lambda, \Lambda_0, l-1}_{MN \vec{K} \vec{L}^\ddag}\left( \bigotimes_{k=1}^s \op_{A_k}; p,-p,\vec{q} \right) \frac{\total^4 p}{(2\pi)^4} \\
&\qquad- \sum_{\subline{\sigma \cup \tau = \{1, \ldots, m\} \\ \rho \cup \varsigma = \{1, \ldots, n\} }} \sum_{l'=0}^l c_{\sigma\tau\rho\varsigma} \mathcal{L}^{\Lambda, \Lambda_0, l'}_{\vec{K}_\sigma \vec{L}_\rho^\ddag M}(\vec{q}_\sigma,\vec{q}_\rho,-k) \left( \partial_\Lambda C^{\Lambda, \Lambda_0}_{MN}(k) \right) \mathcal{L}^{\Lambda, \Lambda_0, l-l'}_{N \vec{K}_\tau \vec{L}_\varsigma^\ddag}\left( \bigotimes_{k=1}^s \op_{A_k}; k,\vec{q}_\tau,\vec{q}_\varsigma \right) \\
&\qquad- \sum_{\subline{\alpha\cup\beta = \{1, \ldots, s\} \\ \alpha \neq \emptyset \neq \beta}} \sum_{\subline{\sigma \cup \tau = \{1, \ldots, m\} \\ \rho \cup \varsigma = \{1, \ldots, n\} }} \sum_{l'=0}^l c_{\sigma\tau\rho\varsigma} \int \mathcal{L}^{\Lambda, \Lambda_0, l'}_{\vec{K}_\sigma \vec{L}_\rho^\ddag M}\left( \bigotimes_{k\in\alpha} \op_{A_k}; \vec{q}_\sigma,\vec{q}_\rho,p \right) \left( \partial_\Lambda C^{\Lambda, \Lambda_0}_{MN}(-p) \right) \\
&\hspace{18em}\times \mathcal{L}^{\Lambda, \Lambda_0, l-l'}_{N \vec{K}_\tau \vec{L}_\varsigma^\ddag}\left( \bigotimes_{k\in\beta} \op_{A_k};-p,\vec{q}_\tau,\vec{q}_\varsigma \right) \frac{\total^4 p}{(2\pi)^4} \eqend{.}
\end{splitequation}
The term in the second line depends on the functionals without insertions, such that these have to be bounded first. The consistency of the induction scheme is again assured by that fact that the functionals without insertions vanish at lowest loop order $l=0$ for one or two external fields, such that the terms on the right-hand side are either of lower loop order or, for the same loop order, have a lower number of external fields. The source term in the last line appears for $s \geq 2$, and depends on functionals with a lower number of insertions, such that the induction has to go up in the number of insertions. Furthermore, for functionals with an insertion of an integrated operator, the flow equation~\eqref{l_intop_flow} gives the hierarchy
\begin{splitequation}
\label{l_intop_flow_hierarchy}
&\partial_\Lambda \mathcal{L}^{\Lambda, \Lambda_0, l}_{\vec{K} \vec{L}^\ddag}\left( \int\!\op_A; \vec{q} \right) = \frac{c}{2} \int \left( \partial_\Lambda C^{\Lambda, \Lambda_0}_{MN}(-p) \right) \mathcal{L}^{\Lambda, \Lambda_0, l-1}_{MN \vec{K} \vec{L}^\ddag}\left( \int\!\op_A; p,-p,\vec{q} \right) \frac{\total^4 p}{(2\pi)^4} \\
&\qquad- \sum_{\subline{\sigma \cup \tau = \{1, \ldots, m\} \\ \rho \cup \varsigma = \{1, \ldots, n\} }} \sum_{l'=0}^l c_{\sigma\tau\rho\varsigma} \mathcal{L}^{\Lambda, \Lambda_0, l'}_{\vec{K}_\sigma \vec{L}_\rho^\ddag M}(\vec{q}_\sigma,\vec{q}_\rho,-k) \left( \partial_\Lambda C^{\Lambda, \Lambda_0}_{MN}(k) \right) \mathcal{L}^{\Lambda, \Lambda_0, l-l'}_{N \vec{K}_\tau \vec{L}_\varsigma^\ddag}\left( \int\!\op_A; k,\vec{q}_\tau,\vec{q}_\varsigma \right) \eqend{.}
\end{splitequation}

To close the induction we will also need flow equations for momentum derivatives of functionals, which, since the regulator is smooth, can be taken without problems and then distributed over the terms on the right-hand side. For the terms which are quadratic in functionals and where momentum is conserved, we have to view the last momentum $q_{m+n}$ as a function of the first $m+n-1$ momenta $q_i$ and of $k$, such that momentum derivatives can also act on $k$ (and thus on the regulated covariance when it depends on $k$). We refrain from writing out the corresponding flow equations in detail.

As we have already mentioned, instead of fixing all boundary conditions at $\Lambda = \Lambda_0$, it is much more convenient to give some boundary conditions at $\Lambda = 0$ and some at $\Lambda = \mu$, where $\mu$ is some renormalisation scale. This is possible because there is a one-to-one correspondence between conditions at $\Lambda = \Lambda_0$ and conditions given for some other value of $\Lambda$, namely
\begin{equation}
\mathcal{L}^{\Lambda_0, \Lambda_0, l}_{\vec{K} \vec{L}^\ddag}(\vec{q}) = \mathcal{L}^{\Lambda, \Lambda_0, l}_{\vec{K} \vec{L}^\ddag}(\vec{q}) + \int_{\Lambda}^{\Lambda_0} \partial_\lambda \mathcal{L}^{\lambda, \Lambda_0, l}_{\vec{K} \vec{L}^\ddag}(\vec{q}) \total \lambda \eqend{,}
\end{equation}
where the $\lambda$ derivative is given by the right-hand side of the flow equation~\eqref{l_0op_flow_hierarchy}, which is already determined previously in the induction, and thus fixed. The boundary conditions differ depending on the type of functional (with or without insertion) and whether it is irrelevant, marginal or relevant, shown in Table~\ref{table_boundary}. While most of the boundary conditions vanish (for relevant functionals, vanishing boundary conditions are necessary to obtain IR-finite results for non-exceptional momenta~\cite{kopper1998}), taking \emph{all} of them to vanish would result in a trivial non-interacting theory. (Typically, one takes non-zero constants at order $\hbar^0$ (\ie, $l = 0$) for all functionals which correspond to an interaction in the original Lagrangian, and vanishing boundary conditions for $l > 0$.) Furthermore, if one wants to preserve invariance of the correlation functions under symmetries which are not explicitly broken by the regulator~\eqref{reg_covariance}, such as manifest $\mathrm{E}(4)$ invariance in gauge theories, also the boundary conditions must be chosen to be invariant under this symmetry.
\begin{table}
\begin{center}
\begin{tabular}{lcc}
\hline
functional & type & \hspace{-2em} boundary condition \\
\hline
$\partial^w \mathcal{L}^{0, \Lambda_0, l}_{\vec{K} \vec{L}^\ddag}(\vec{0})$ & relevant: $[\vec{K}] + [\vec{L}^\ddag] + \abs{w} < 4$ & $0$ \\[0.5em]
$\partial^w \mathcal{L}^{\mu, \Lambda_0, l}_{\vec{K} \vec{L}^\ddag}(\vec{0})$ & marginal: $[\vec{K}] + [\vec{L}^\ddag] + \abs{w} = 4$ & arbitrary \\[0.5em]
$\partial^w \mathcal{L}^{\Lambda_0, \Lambda_0, l}_{\vec{K} \vec{L}^\ddag}(\vec{q})$ & irrelevant: $[\vec{K}] + [\vec{L}^\ddag] + \abs{w} > 4$ & $0$ \\[0.2em]
\hline
$\partial^w \mathcal{L}^{\mu, \Lambda_0, l}_{\vec{K} \vec{L}^\ddag}\left( \op_A; \vec{0} \right)$ & relevant/marginal: $[\vec{K}] + [\vec{L}^\ddag] + \abs{w} \leq [\op_A]$ & arbitrary \\[0.5em]
$\partial^w \mathcal{L}^{\Lambda_0, \Lambda_0, l}_{\vec{K} \vec{L}^\ddag}\left( \op_A; \vec{q} \right)$ & irrelevant: $[\vec{K}] + [\vec{L}^\ddag] + \abs{w} > [\op_A]$ & $0$ \\[0.2em]
\hline
$\partial^w \mathcal{L}^{0, \Lambda_0, l}_{\vec{K} \vec{L}^\ddag}\left( \int\!\op_A; \vec{0} \right)$ & relevant: $[\vec{K}] + [\vec{L}^\ddag] + \abs{w} < 4$ & $0$ \\[0.5em]
$\partial^w \mathcal{L}^{\mu, \Lambda_0, l}_{\vec{K} \vec{L}^\ddag}\left( \int\!\op_A; \vec{0} \right)$ & relevant/marginal: $4 \leq [\vec{K}] + [\vec{L}^\ddag] + \abs{w} \leq [\op_A]$ & arbitrary \\[0.5em]
$\partial^w \mathcal{L}^{\Lambda_0, \Lambda_0, l}_{\vec{K} \vec{L}^\ddag}\left( \int\!\op_A; \vec{q} \right)$ & irrelevant: $[\vec{K}] + [\vec{L}^\ddag] + \abs{w} > [\op_A]$ & $0$ \\[0.2em]
\hline
$\mathcal{L}^{\Lambda_0, \Lambda_0, l}_{\vec{K} \vec{L}^\ddag}\left( \bigotimes_{k=1}^s \op_{A_k}; \vec{q} \right)$ & always irrelevant & $0$ \\[0.2em]
\hline
\end{tabular}
\end{center}
\caption{Boundary conditions for the flow equation hierarchy. The ``arbitrary'' conditions should not all be vanishing (since otherwise the corresponding theory is trivial), and naturally must be invariant under the action of all symmetries which one wants to preserve manifestly, \eg, in our case they should be $\mathrm{E}(4)$-invariant.}
\label{table_boundary}
\end{table}
Furthermore, one could even give conditions for marginal functionals at $\Lambda = 0$, but then one has to restrict to non-exceptional momenta since the functionals are otherwise IR divergent. One first determines the value of a marginal functional at $\Lambda = \mu$ and non-exceptional momenta using the Taylor formula with integral remainder from the boundary conditions at $\Lambda = \mu$ and zero momenta. Since only irrelevant functionals appear in the Taylor formula which have their boundary conditions fixed, this is a one-to-one correspondence. In the second step, one then integrates the flow equation downwards to determine the value of the marginal functional at $\Lambda = 0$ and non-exceptional momenta from its value at $\Lambda = \mu$. Since all functionals appearing on the right-hand side of the flow equation have already been determined and fixed in the induction, the correspondence is again one-to-one. Thus although the exact correspondence will be in general very complicated, it is unique.

\section{Trees}
\label{sec_trees}

The bounds on functionals with and without insertions are specified in terms of fully reduced weighted trees, which we define in the following, and for which we also derive some properties that will be important later. The trees are motivated by the fact that in perturbation theory, the scaling behaviour of correlation functions including loop corrections is only modified logarithmically with respect to the tree level, and thus our trees basically represent tree level Feynman graphs. We stress, however, that this analogy must not be taken too literally; the trees and the valence of vertices are independent of the detailed form of the $n$-point interactions in the theory, and in particular do not depend in any way on spin, colour or Lorentz indices. Furthermore, internal lines (which would correspond to propagators in a tree-level Feynman diagram) and vertices are always assigned integer dimensions independent of the engineering dimension of the basic fields or the dimension of $n$-point interactions in the theory; the dimension of the basic fields only appears in the factors associated to external vertices.

\subsection{Weighted Trees}

Trees are connected graphs without loops; \ie, we define
\begin{definition}
A graph $G \equiv (V,L)$ is a finite set of vertices $V$ and lines (edges) $L$, which are unordered pairs of two elements of $V$, i.e., $e \in L \Leftrightarrow e = \{v,w\} \colon v,w \in V$ (for which we say that the line $e$ connects the vertices $v$ and $w$). A tree $T$ is a graph where
\begin{itemize}
\item Every vertex is connected by a line to some other vertex: either $\abs{V} = 1$ (there is only one vertex), or for all $v \in V$ there exists $e \in L$ such that $e = \{v,w\}$ for some $w \in V$,
\item There are no loops (the graph is acyclic): there exists no subset of $L$ of the form $\{ \{v_1,v_2\}$, $\{v_2,v_3\}$, \ldots, $\{ v_k,v_1 \} \}$ for all $k \in \mathbb{N}$ (including self-loops or ``tadpoles'' with $k=1$).
\end{itemize}
\end{definition}
We then further define
\begin{definition}
A weighted tree $T$ of order $(m+n,r)$ has $m+n$ external and $r$ internal vertices, and a tree $T^*$ of order $(m+n,r)$ has $m+n$ external, $r$ internal and one special vertex. External vertices have valency $1$ (\ie, exactly one line is incident to them), and may only be connected by a line to internal or special vertices. Internal vertices have valency between $1$ and $4$, and special vertices may have any valency. We require that $m+n \geq 1$ and $r \geq 1$ for trees $T$ (\ie, at least one external and one internal vertex), but do not impose further conditions on trees $T^*$ (\ie, a tree may consist only of the special vertex). The external vertices are numbered from $1$ to $m+n$, and a momentum $q$ is assigned to each of them. The momenta assigned to the lines are determined by imposing momentum conservation at each vertex (which is allowed because later on we will have $q_{m+n} = - \sum_{i=1}^{m+n-1} q_i$), except for the special vertex. Afterwards, internal vertices are assigned the momentum with highest absolute value among the momenta assigned to all lines incident to that vertex. Furthermore, we associate to each external vertex $v_e$ an index $K_e$ (or alternatively $L^\ddag_e$) and a dimension $[v_e] = [\phi_{K_e}] \in [1,3]$ (or $[v_e] = [\phi^\ddag_{L_e}] \in [1,3]$), and an overall derivative multiindex $\vec{w}$ to the tree.
\end{definition}
To reduce notational clutter, we will use $T$ also for a generic tree, when it is clear from the context which tree is meant (\ie, with or without a special vertex), or if a formula applies to all trees. To each tree, we assign a weight factor which appears in the bounds.
\begin{definition}
The weight factor $\mathsf{G}^{T,\vec{w}}_{\vec{K} \vec{L}^\ddag; [v_p]}(\vec{q}; \mu, \Lambda)$ associated to a tree $T$ is given by multiplying the weight factors assigned to each vertex and line of $T$ given in Table~\ref{table_weights}, the particular weight factor given by
\begin{equation}
\label{particular_weight}
\mathsf{G}^p(\vec{q}; \Lambda) = \sup(\abs{\vec{q}}, \mu, \Lambda)^{[v_p]}
\end{equation}
for a dimension $[v_p] \in \mathbb{R}$, and the derivative weight factor $\mathsf{G}^\vec{w}(\vec{q}; \Lambda)$, given by
\begin{equation}
\label{gw_def}
\mathsf{G}^\vec{w}(\vec{q}; \Lambda) \equiv \prod_{i=1}^{m+n} \begin{cases} \sup(\eta_{q_i}(\vec{q}), \Lambda)^{-\abs{w_i}} & \text{for trees $T$} \\ \sup(\bar{\eta}_{q_i}(\vec{q}), \Lambda)^{-\abs{w_i}} & \text{for trees $T^*$} \eqend{.} \end{cases}
\end{equation}
Note that for the trees $T$ with momentum conservation we require $w_{m+n} = 0$ in order to be consistent with the definition of $\eta_{q_i}$~\eqref{eta_i_def}. Since in this case the last momentum is determined by overall momentum conservation $q_{m+n} = - \sum_{i=1}^{m+n-1} q_i$, derivatives with respect to $q_n$ can be converted into derivatives with respect to the other $q_i$, and no problem arises.
\end{definition}
\begin{table}
\begin{center}
\begin{tabular}{cll}
\hline
\multicolumn{2}{c}{Component} & \multicolumn{1}{c}{Associated weight} \\
\hline
\raisebox{.3em}{\includegraphics{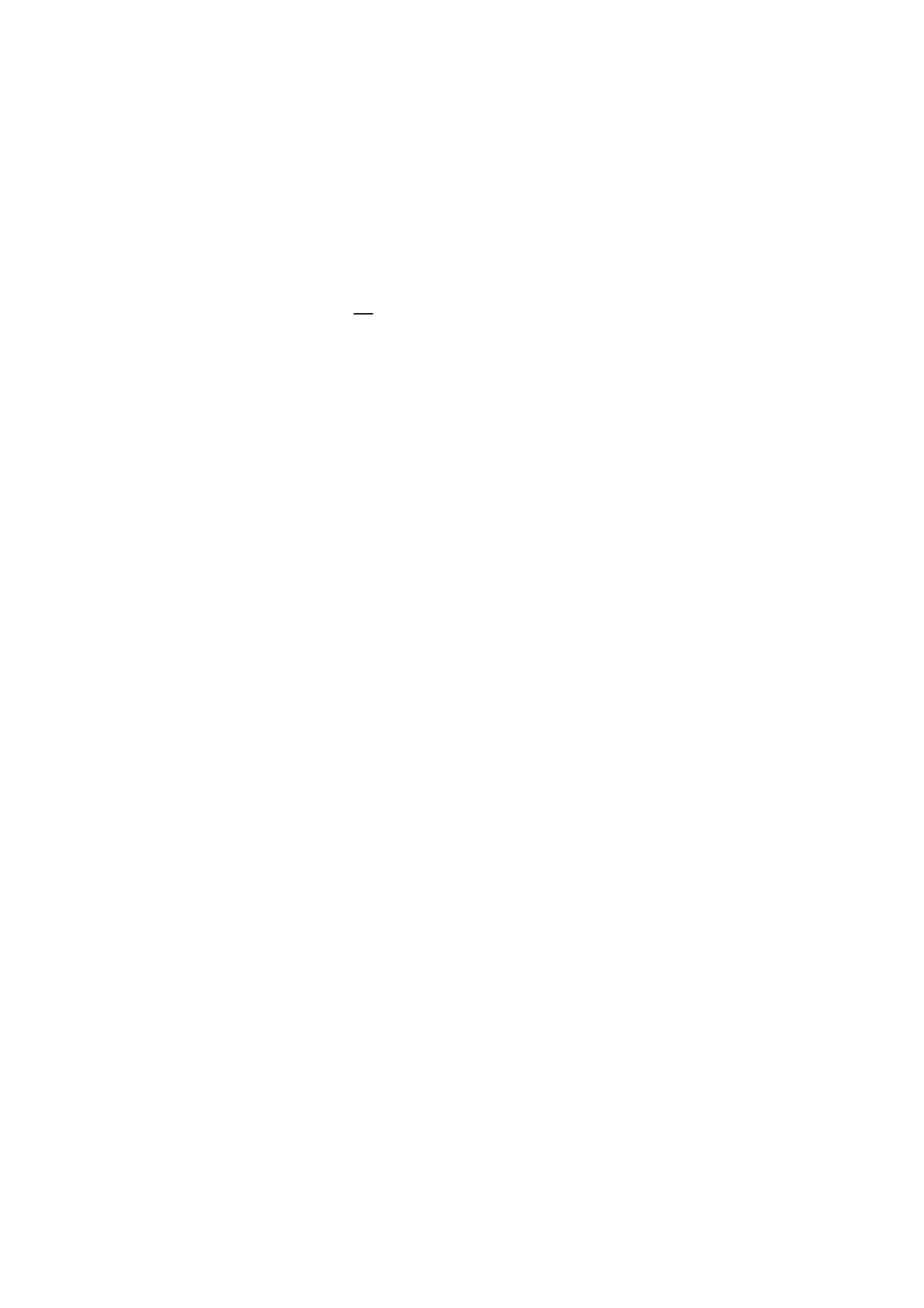}} & line $l$ & $\mathsf{G}^l(q; \mu, \Lambda) = \sup(\abs{q},\Lambda)^{-2}$ \\
\includegraphics{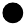} & external vertex $v_e$ & $\mathsf{G}^{v_e}(q; \mu, \Lambda) = \sup(\abs{q},\Lambda)^{3-[v_e]}$ \\
\includegraphics{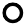} & internal vertex $v_i$ of valence $k$ & $\mathsf{G}^{v_i}(q; \mu, \Lambda) = \sup(\abs{q},\Lambda)^{4-k}$ \\
\raisebox{-.25\height}{\includegraphics{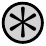}} & special vertex $v_s$ of valence $k$ & $\mathsf{G}^{v_s}(q; \mu, \Lambda) = \sup(\mu,\Lambda)^{-k}$ \\
\hline
\end{tabular}
\end{center}
\caption{Weights $\mathsf{G}$ assigned to components of a tree of order $(n,r)$. $q$ always refers to the momentum associated to the component.}
\label{table_weights}
\end{table}
The tree itself is also assigned a dimension, given by the sum of the exponents of all weight factors, which gives
\begin{definition}
To each tree $T$ we associate an overall dimension $[T]$, given by
\begin{equation}
[T] \equiv \sum_{v_e} ( 3 - [v_e] ) + \sum_{v_i} ( 4 - k_i ) + [v_p] - k_s - 2 N_l - \abs{\vec{w}} \eqend{,}
\end{equation}
where the sums run over all external vertices $v_e$ and all internal vertices $v_i$ with $k_i$ the valency of $v_i$, and $N_l$ is the number of lines and $k_s$ the valency of the special vertex (or $k_s = 0$ if no special vertex exists).
\end{definition}
This dimension measures the scaling of the tree weight, \ie, we have
\begin{equation}
\label{tree_scaling}
\lim_{\Lambda \to \infty} \mathsf{G}^{T,\vec{w}}_{\vec{K} \vec{L}^\ddag; [v_p]}(\vec{q}; \mu, \Lambda) \Lambda^{-[T]} = 1 \eqend{.}
\end{equation}
It is possible to obtain a simpler expression for the tree dimension, given by
\begin{lemma}
The tree dimension $[T]$ can be expressed as
\begin{equations}[t_dim_def]
[T] &= 4 + [v_p] - \sum_{v_e} [v_e] - \abs{\vec{w}} = 4 + [v_p] - [\vec{K}] - [\vec{L}^\ddag] - \abs{\vec{w}} \\
[T^*] &= [v_p] - \sum_{v_e} [v_e] - \abs{\vec{w}} = [v_p] - [\vec{K}] - [\vec{L}^\ddag] - \abs{\vec{w}} \eqend{.}
\end{equations}
\end{lemma}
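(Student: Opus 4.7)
The proof is essentially a bookkeeping exercise using two elementary tree identities: the Euler relation $N_l = N_v - 1$ for any connected acyclic graph, and the handshake lemma $\sum_v k(v) = 2 N_l$ summing valencies over all vertices. My plan is to substitute these into the definition of $[T]$ and watch the cancellations.

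First I would handle the case of a tree $T^*$ with a special vertex uniformly, treating $T$ afterwards as the limiting case. Let $N_e = m+n$ be the number of external vertices and $N_i$ the number of internal vertices, so that the total vertex count is $N_v = N_e + N_i + 1$ and hence $N_l = N_e + N_i$. Splitting the handshake identity according to vertex type gives
\begin{equation}
N_e + \sum_{v_i} k_i + k_s = 2 N_l = 2(N_e + N_i) \eqend{,}
\end{equation}
where I used that every external vertex has valency $1$. Solving for the internal valency sum yields $\sum_{v_i} k_i = N_e + 2 N_i - k_s$, so that
\begin{equation}
\sum_{v_i} (4 - k_i) = 4 N_i - (N_e + 2 N_i - k_s) = 2 N_i - N_e + k_s \eqend{.}
\end{equation}

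Next I would substitute this together with $\sum_{v_e}(3 - [v_e]) = 3 N_e - [\vec K] - [\vec L^\ddag]$ and $2 N_l = 2 N_e + 2 N_i$ directly into the definition of $[T^*]$, collecting terms by type:
\begin{equation}
[T^*] = \bigl(3 - 1 - 2\bigr) N_e + \bigl(2 - 2\bigr) N_i + \bigl(k_s - k_s\bigr) + [v_p] - [\vec K] - [\vec L^\ddag] - \abs{\vec w} \eqend{.}
\end{equation}
The coefficients of $N_e$, $N_i$, and $k_s$ all vanish, leaving the claimed expression $[T^*] = [v_p] - [\vec K] - [\vec L^\ddag] - \abs{\vec w}$.

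For a tree $T$ without a special vertex the same argument applies with the modifications $N_v = N_e + N_i$, so $N_l = N_e + N_i - 1$, and no $k_s$ or $[v_p]$ term. The handshake relation now gives $\sum_{v_i}k_i = N_e + 2 N_i - 2$, and the analogous substitution leaves behind an extra constant $+4$ (traceable to the $-2$ in $N_l$ entering with factor $-2$), which produces $[T] = 4 - [\vec K] - [\vec L^\ddag] - \abs{\vec w}$, matching the claim once one reads $[v_p] = 0$ in the absence of a special vertex. There is no genuine obstacle here; the only minor pitfall is to keep the $k_s$ and $N_l$ counts consistent between the two cases, which is why I would treat $T^*$ first and obtain $T$ as a specialisation.
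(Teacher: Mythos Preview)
Your argument is essentially the paper's: both rely on the Euler relation $N_l = N_v - 1$ and the handshake count of line-ends to simplify the definition of $[T]$. One slip, however: the particular dimension $[v_p]$ is \emph{not} attached to the special vertex. It is a separate contribution to the weight factor that can be present for both $T$ and $T^*$ (compare, for instance, the bounds for functionals with one integrated insertion, which involve trees $T$ without a special vertex but with nonzero $[v_p] = [\op_A]-4$). So in the $T$ case you should retain the $[v_p]$ term in the definition rather than dropping it and then declaring it zero; the same cancellation then gives $[T] = 4 + [v_p] - [\vec K] - [\vec L^\ddag] - \abs{\vec w}$ directly, as claimed. With this correction your proof is complete and matches the paper's.
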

\begin{proof}
A basic result from graph theory states that in a connected graph without loops, the number of vertices $N_v$ is one bigger than the number of lines, $N_v = N_l + 1$. This is easy to see: the simplest tree has one vertex and no lines, and for each subsequent line a new vertex must be added. Let us denote by $N_i$ the number of internal vertices and by $N_e$ the number of external vertices. The sum over $v_i$, with $k_i$ the valency of the internal vertex $v_i$, then counts all lines twice, except for the ones which are incident to the special vertex or external vertices, such that
\begin{equation}
\sum_{v_i} ( 4 - k_i ) = 4 N_i - \left( 2 N_l - k_s - N_e \right)
\end{equation}
and thus
\begin{equation}
[T] = 4 ( N_i + N_e - N_v + 1 ) + [v_p] - \sum_{v_e} [v_e] - \abs{\vec{w}} \eqend{.}
\end{equation}
If the tree has a special vertex, we have $N_v = N_i + N_e + 1$, while otherwise $N_v = N_i + N_e$. In total, we thus obtain the Lemma.
\end{proof}

A tree is said to be relevant if $[T] > 0$, marginal if $[T] = 0$ and irrelevant if $[T] < 0$, and the bounds for irrelevant, marginal and relevant functionals (with or without operator insertions) are given in terms of irrelevant, marginal and relevant trees, respectively. Obviously if all dimensions of external, particular and special vertices are a multiple of $\Delta$ (which will be the case later on), $[T]$ is a multiple of $\Delta$, such that we even have the stronger bounds $[T] \geq \Delta$ for relevant and $[T] \leq - \Delta$ for irrelevant trees.

To illustrate these definitions, an example of a tree $T$ of order $(3,5,0)$ is given in Figure~\ref{fig_tree_example} together with the associated weight $\mathsf{G}$.
\begin{figure}
\includegraphics[scale=0.9]{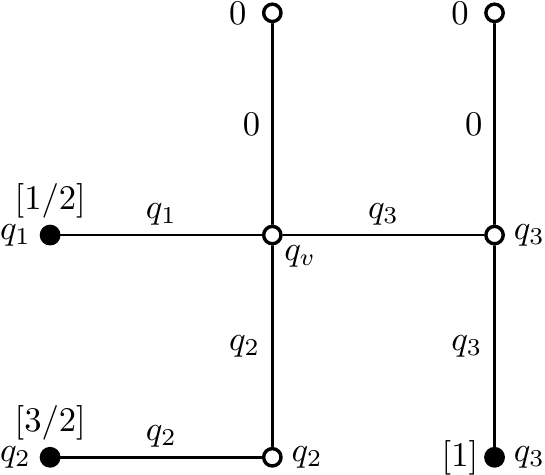}
\caption{Example of a tree $T$ of order $(3,5)$, displaying the momentum assignment and the dimensions of the external vertices. Since this tree does not contain any special vertex, we have overall momentum conservation such that $q_1 + q_2 + q_3 = 0$, and thus momentum is conserved also at the central vertex $v$. Its associated momentum $q_v$ is the one from $\{q_1, q_2, q_3\}$ with the largest absolute value. For $\vec{w} = ((1,0,0,0),(3,0,0,0),(0,0,0,0))$, the total weight of the tree reads $\mathsf{G}^{T,\vec{w}}(q_1, q_2, q_3; \mu, \Lambda) = \sqrt{\sup(\abs{q_1},\Lambda)}\, \Lambda^2 / \big[ \sqrt{\sup(\abs{q_2},\Lambda)}\, \sup(\abs{q_3},\Lambda) \sup(\inf(\abs{q_1},\abs{q_3}), \Lambda) \sup(\inf(\abs{q_2},\abs{q_3}), \Lambda)^3 \big]$, and we can read off that $[T] = -3 = 4 - (1/2+3/2+1) - (1+3+0)$.}
\label{fig_tree_example}
\end{figure}

\subsection{Reduction, Fusion and Amputation}
\label{sec_tree_fuse}

\paragraph*{Reduction.} Starting from some tree $T$, we obtain a reduced tree $T'$ by performing one of the following reduction operations:
\begin{itemize}
\item Remove an internal vertex of valence $2$, and fuse the incident lines into one line
\item Remove an internal vertex of valence $1$ and the incident line if it is connected to an internal vertex
\item Remove an internal vertex of valence $1$ and the incident line if it is connected to a particular or special vertex
\end{itemize}
Reduction operations may change the associated weight, but can only increase it as detailed in Table~\ref{table_reduction}; furthermore the order of a tree can only decrease (\ie, if the order of $T$ is $(n,r)$, the order of $T'$ is $(n,r')$ with $r' \leq r$). However, the dimension of the tree is unchanged under reduction, since it depends only on the dimensions of the external and particular vertices and on the particular weight factor, which are unchanged under reduction.
\begin{table}
\begin{center}
\begin{tabular}{cl}
\hline
\multicolumn{1}{c}{Reduction operation} & \multicolumn{1}{c}{Change in weights} \\
\hline
\raisebox{-2.3em}{\includegraphics{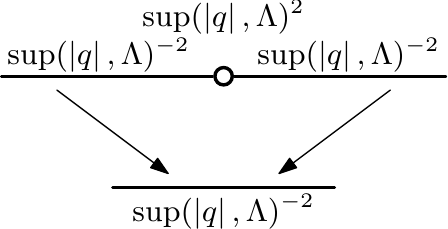}} & $\mathsf{G}^{T,w}_{\vec{K} \vec{L}^\ddag; [v_p]}(\vec{q}; \mu, \Lambda) = \mathsf{G}^{T',w}_{\vec{K} \vec{L}^\ddag; [v_p]}(\vec{q}; \mu, \Lambda)$ \\[3em]
\raisebox{-2.3em}{\includegraphics{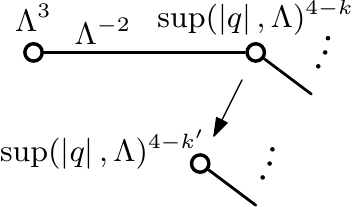}} & \begin{tabular}[c]{@{}c@{}}$\mathsf{G}^{T,w}_{\vec{K} \vec{L}^\ddag; [v_p]}(\vec{q}; \mu, \Lambda) = \dfrac{\Lambda}{\sup(\abs{q},\Lambda)} \mathsf{G}^{T',w}_{\vec{K} \vec{L}^\ddag; [v_p]}(\vec{q}; \mu, \Lambda)$\\for $k' = k-1$, since the valence is smaller by one\end{tabular} \\[3em]
\raisebox{-2.3em}{\includegraphics{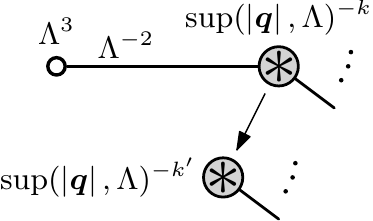}} & \begin{tabular}[c]{@{}c@{}}$\mathsf{G}^{T,w}_{\vec{K} \vec{L}^\ddag; [v_p]}(\vec{q}; \mu, \Lambda) = \dfrac{\Lambda}{\sup(\mu,\Lambda)} \mathsf{G}^{T',w}_{\vec{K} \vec{L}^\ddag; [v_p]}(\vec{q}; \mu, \Lambda)$\\for $k' = k-1$, since the valence is smaller by one\end{tabular} \\[0.1em]
\hline
\end{tabular}
\end{center}
\caption{Reduction operations that can be performed on a tree $T$ to obtain a reduced tree $T'$, and the corresponding change in weights.}
\label{table_reduction}
\end{table}
We thus obtain
\begin{definition}
A fully reduced (weighted) tree $T$ is a weighted tree $T$ where all possible reduction operations have been performed. It is easy to see that a fully reduced tree of order $(n,r)$ with $n \geq 3$ only contains internal vertices of valence $3$ and $4$.
\end{definition}

\paragraph*{Fusion.} A tree $T_1$ of order $(n_1,r_1)$ and another tree $T_2$ of order $(n_2,r_2)$ can also be fused in two different ways. If both trees have a special vertex, fusing is done by merging the two special vertices into one, creating a tree $T^*$ of order $(n_1+n_2, r_1+r_2)$. If only one or neither of both trees has a special vertex, fusion is only possible if $T_1$ has an external vertex $v_1$ with momentum $-k$ and $T_2$ has an external vertex $v_2$ with momentum $k$. If $T_1$ has no special vertex, $v_1$ must be the last external vertex of $T_1$, and if $T_2$ has no special vertex, $v_2$ must be the first external vertex of $T_2$ (remember that the external vertices are numbered). The trees are then fused by removing both external vertices and combining the incident lines into one line, which creates a tree $T$ of order $(n_1+n_2-2,r_1+r_2)$. For the change in weights, we obtain
\begin{lemma}
Fusing two trees $T_1^*$ and $T_2^*$ with special vertices into a tree $T^*$, the change in weight factors can be estimated by
\begin{equation}
\label{gw_fused_1_est}
\mathsf{G}^{T_1^*,\vec{w}_1}_{\vec{K}_1 \vec{L}^\ddag_1; [v_{p1}]}(\vec{q}_1; \mu, \Lambda) \mathsf{G}^{T_2^*,\vec{w}_2}_{\vec{K}_2 \vec{L}^\ddag_2; [v_{p2}]}(\vec{q}_2; \mu, \Lambda) \leq \mathsf{G}^{T^*,\vec{w}_1+\vec{w}_2}_{\vec{K}_1 \vec{L}^\ddag_1 \vec{K}_2 \vec{L}^\ddag_2; [v_{p1}]+[v_{p2}]}(\vec{q}_1, \vec{q}_2; \mu, \Lambda) \eqend{.}
\end{equation}
Fusing two trees $T_1$ and $T_2$ where at most one has a special vertex, under the assumption that the derivative weight factor acting on the momentum $k$ (of the line of $T_2$ which is combined in the fusion) was obtained by deriving the functional that is bounded by $T_2$ w.r.t. some specific $q_j$ (since $k = \sum_{i=1}^{n_1} q_i$), the change in weight factors is given by
\begin{equation}
\label{gw_fused_2_est}
\mathsf{G}^{T_1,\vec{w}_1}_{\vec{K}_1 \vec{L}^\ddag_1 M; [v_{p1}]}(\vec{q}_1, - k; \mu, \Lambda) \mathsf{G}^{T_2,\vec{w}_2}_{N \vec{K}_2 \vec{L}^\ddag_2; [v_{p2}]}(k, \vec{q}_2; \mu, \Lambda) \leq \frac{\mathsf{G}^{T,\vec{w}_1+\vec{w}_2}_{\vec{K}_1 \vec{L}^\ddag_1 \vec{K}_2 \vec{L}^\ddag_2; [v_{p1}]+[v_{p2}]}(\vec{q}_1, \vec{q}_2; \mu, \Lambda)}{\sup(\abs{k},\Lambda)^{[v_M]+[v_N]-4}} \eqend{.}
\end{equation}
\end{lemma}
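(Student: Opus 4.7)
The plan is to prove both bounds by decomposing each tree weight $\mathsf{G}^{T,\vec{w}}$ into its constituents (lines, external vertices, internal vertices, the possible special vertex, the particular weight and the derivative weight) and comparing contributions factor by factor. The general strategy is to isolate (i) a \emph{structural} ratio determined purely by which lines and vertices are merged, removed or kept in the fusion, and (ii) a \emph{non-structural} comparison of the particular weight $\mathsf{G}^p$ and the derivative weight $\mathsf{G}^{\vec{w}}$, whose values depend on the full external momentum configuration rather than on the local topology.

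For \eqref{gw_fused_1_est} the two special vertices are merged into a single one of valency $k_{s1}+k_{s2}$ and no lines are cut. Since the special-vertex weight $\sup(\mu,\Lambda)^{-k_s}$ is exactly multiplicative in its valency, the structural contributions factorise identically: the special-vertex, line, internal-vertex and external-vertex weights of $T_1^*$ and $T_2^*$ carry over unchanged to $T^*$, so only the particular and derivative weights need a non-trivial comparison. For the particular weight, since $\abs{\vec{q}_\ell}\leq\abs{(\vec{q}_1,\vec{q}_2)}$ and the particular dimensions coming from operator insertions are non-negative, the product $\sup(\abs{\vec{q}_1},\mu,\Lambda)^{[v_{p1}]}\sup(\abs{\vec{q}_2},\mu,\Lambda)^{[v_{p2}]}$ is bounded by $\sup(\abs{(\vec{q}_1,\vec{q}_2)},\mu,\Lambda)^{[v_{p1}]+[v_{p2}]}$. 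For the derivative weight, the monotonicity $\bar\eta_{q_i}((\vec{q}_1,\vec{q}_2))\leq\bar\eta_{q_i}(\vec{q}_\ell)$ (the infimum in the fused tree runs over a strictly larger set of admissible subsums) together with the negative exponent $-\abs{w_i}$ shows that the $T^*$ derivative weight dominates the product of the two individual weights.

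For \eqref{gw_fused_2_est} the fusion is more invasive: the two external vertices $v_M,v_N$ (carrying momentum $\pm k$) and their two incident lines are collapsed into a single internal line of $T$ still carrying momentum $k$. A direct bookkeeping from Table~\ref{table_weights} gives a gain of $\sup(\abs{k},\Lambda)^{+2}$ from one line disappearing and $\sup(\abs{k},\Lambda)^{[v_M]+[v_N]-6}$ from the two external vertices being removed, totalling a structural factor $\sup(\abs{k},\Lambda)^{[v_M]+[v_N]-4}$ that exactly matches the denominator in the statement. Internal vertices adjacent to $v_M,v_N$ retain their valencies (one external incidence is replaced by one internal incidence) and their associated maximal momentum (which still includes $\pm k$), and any special vertex is untouched. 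The particular-weight argument from the first case carries over using $\abs{(k,\vec{q}_2)}\leq\abs{(\vec{q}_1,\vec{q}_2)}$, which follows from $k=\sum_{i=1}^{n_1}q_i$. For the derivative weight, indices $i\neq j$ are handled by the same $\eta/\bar\eta$-monotonicity as before; the new ingredient is the distinguished index $j$, where the hypothesis on the origin of $w_N$ identifies the $k$-derivative in $T_2$ with an extra derivative on $q_j$ in $T$, so that the derivatives on $q_j$ in $T$ carry multiindex $w_j+w_N$. Using $k=\sum_{i=1}^{n_1}q_i$, any subsum $k+\sum_{q\in Q}q$ with $Q\subseteq\vec{q}_2$ equals $q_j+\sum_{q\in Q'}q$ with $Q'=(\vec{q}_1\setminus\{q_j\})\cup Q\subseteq(\vec{q}_1,\vec{q}_2)\setminus\{q_j\}$, so $\eta_{q_j}((\vec{q}_1,\vec{q}_2))\leq\eta_k((k,\vec{q}_2))$; combined with $\eta_{q_j}(T)\leq\eta_{q_j}(T_1)$ this yields the required domination at $q_j$.

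The main obstacle is this last step: correctly tracking how the $k$-derivative in $T_2$ becomes, via the chain rule and the identification $k=\sum_{i=1}^{n_1}q_i$, an additional derivative on the specific $q_j$ in the fused tree, and then verifying the infimum comparison $\eta_{q_j}(T)\leq\eta_k(T_2)$ through this substitution. In parallel one has to keep the case distinction between trees with and without a special vertex consistent throughout, since this controls the switch between $\eta$ and $\bar\eta$ and the restriction $w_{m+n}=0$ on the last external derivative. The structural factor $\sup(\abs{k},\Lambda)^{[v_M]+[v_N]-4}$ itself also has to be tabulated carefully, since a sign error in the line or external-vertex count would already invalidate the stated denominator.
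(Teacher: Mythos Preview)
Your proposal is correct and follows essentially the same route as the paper's proof: the paper also writes down the exact change-of-weight identity under fusion (its equations \eqref{gw_fused_1a} and \eqref{gw_fused_2}), then bounds the particular-weight ratio using $\abs{\vec{q}_\ell}\leq\abs{(\vec{q}_1,\vec{q}_2)}$ and the derivative-weight ratio via the three $\eta$-monotonicity inequalities you identify, including the key step $\eta_k(k,\vec{q}_2)\geq\eta_{q_j}(\vec{q}_1,\vec{q}_2)$ obtained by rewriting $k+\sum_Q q$ as $q_j+\sum_{Q'}q$. Your bookkeeping of the structural factor $\sup(\abs{k},\Lambda)^{[v_M]+[v_N]-4}$ (one line removed, two external vertices removed, adjacent internal vertices unchanged in valency and momentum) reproduces exactly the factor the paper records in \eqref{gw_fused_2}.
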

\begin{proof}
For the fusion of two trees with special vertices, the change in tree weights can be inferred from Table~\ref{table_weights}, and is given by
\begin{splitequation}
\label{gw_fused_1a}
&\mathsf{G}^{T,\vec{w}_1+\vec{w}_2}_{\vec{K}_1 \vec{L}^\ddag_1 \vec{K}_2 \vec{L}^\ddag_2; [v_{p1}]+[v_{p2}]}(\vec{q}_1, \vec{q}_2; \mu, \Lambda) = \mathsf{G}^{T_1,\vec{w}_1}_{\vec{K}_1 \vec{L}^\ddag_1; [v_{p1}]}(\vec{q}_1; \mu, \Lambda) \mathsf{G}^{T_2,\vec{w}_2}_{\vec{K}_2 \vec{L}^\ddag_2; [v_{p2}]}(\vec{q}_2; \mu, \Lambda) \\
&\qquad\times \frac{\sup(\abs{\vec{q}_1, \vec{q}_2}, \mu, \Lambda)^{[v_{p1}]+[v_{p2}]}}{\sup(\abs{\vec{q}_1}, \mu, \Lambda)^{[v_{p1}]} \sup(\abs{\vec{q}_2}, \mu, \Lambda)^{[v_{p2}]}} \frac{\mathsf{G}^{\vec{w}_1+\vec{w}_2}(\vec{q}_1, \vec{q}_2; \Lambda)}{\mathsf{G}^{\vec{w}_1}(\vec{q}_1; \Lambda) \mathsf{G}^{\vec{w}_2}(\vec{q}_2; \Lambda)} \eqend{.}
\end{splitequation}
Since $\abs{\vec{q}_1, \vec{q}_2} \geq \abs{\vec{q}_1}$ and $\abs{\vec{q}_1, \vec{q}_2} \geq \abs{\vec{q}_2}$, we can estimate this by
\begin{splitequation}
\label{gw_fused_1}
\mathsf{G}^{T,\vec{w}_1+\vec{w}_2}_{\vec{K}_1 \vec{L}^\ddag_1 \vec{K}_2 \vec{L}^\ddag_2; [v_{p1}]+[v_{p2}]}(\vec{q}_1, \vec{q}_2; \mu, \Lambda) &\geq \mathsf{G}^{T_1,\vec{w}_1}_{\vec{K}_1 \vec{L}^\ddag_1; [v_{p1}]}(\vec{q}_1; \mu, \Lambda) \mathsf{G}^{T_2,\vec{w}_2}_{\vec{K}_2 \vec{L}^\ddag_2; [v_{p2}]}(\vec{q}_2; \mu, \Lambda) \\
&\quad\times \frac{\mathsf{G}^{\vec{w}_1+\vec{w}_2}(\vec{q}_1, \vec{q}_2; \Lambda)}{\mathsf{G}^{\vec{w}_1}(\vec{q}_1; \Lambda) \mathsf{G}^{\vec{w}_2}(\vec{q}_2; \Lambda)} \eqend{.}
\end{splitequation}
For the fusion of two trees where at most one has a special vertex, the change in weights is given by
\begin{splitequation}
\label{gw_fused_2}
\mathsf{G}^{T,\vec{w}_1+\vec{w}_2}_{\vec{K}_1 \vec{L}^\ddag_1 \vec{K}_2 \vec{L}^\ddag_2; [v_{p1}]+[v_{p2}]}(\vec{q}_1, \vec{q}_2; \mu, \Lambda) &\geq \mathsf{G}^{T_1,\vec{w}_1}_{\vec{K}_1 \vec{L}^\ddag_1 M; [v_{p1}]}(\vec{q}_1, - k; \mu, \Lambda) \mathsf{G}^{T_2,\vec{w}_2}_{N \vec{K}_2 \vec{L}^\ddag_2; [v_{p2}]}(k, \vec{q}_2; \mu, \Lambda) \\
&\quad\times \sup(\abs{k},\Lambda)^{[v_M]+[v_N]-4} \frac{\mathsf{G}^{\vec{w}_1+\vec{w}_2}(\vec{q}_1, \vec{q}_2; \Lambda)}{\mathsf{G}^{\vec{w}_1}(\vec{q}_1, -k; \Lambda) \mathsf{G}^{\vec{w}_2}(k, \vec{q}_2; \Lambda)} \eqend{.}
\end{splitequation}
If both trees do not contain special vertices, overall momentum conservation tells us that $k = \sum_{i=1}^{n_1} q_i$ and $p_{n_2} = - k - \sum_{i=1}^{n_2-1} p_i$. According to the definition of $\eta_{q_i}$~\eqref{eta_i_def} we then have
\begin{equations}
\eta_{q_i}(\vec{q}_1, -k) &\geq \eta_{q_i}(\vec{q}_1, \vec{q}_2) \eqend{,} \\
\eta_{q_i}(k, \vec{q}_2) &\geq \eta_{q_i}(\vec{q}_1, \vec{q}_2) \eqend{,} \\
\eta_k(k, \vec{q}_2) &\geq \eta_q(\vec{q}_1, \vec{q}_2) \quad \text{ for any } q \in \vec{q}_1 \eqend{.}
\end{equations}
The assumption we made is that the derivative weight factor for the momentum $k$ in the second tree $T_2$ was obtained by deriving the functional that is bounded by $T_2$ w.r.t. some specific $q_j$ (since $k = \sum_{i=1}^{n_1} q_i$), and then we choose this $q_j$ in the last inequality (the extension to more than one derivative is straightforward). From the definition of $\mathsf{G}^\vec{w}$~\eqref{gw_def} and the fact that $w_i^\alpha \geq 0$ we thus obtain
\begin{equations}
\mathsf{G}^{\vec{w}_1}(\vec{q}_1, -k; \Lambda) &\leq \prod_{q_i \in \vec{q}_1} \sup(\eta_{q_i}(\vec{q}_1, \vec{q}_2), \Lambda)^{-\abs{w_{1,i}}} \eqend{,} \\
\begin{split}
\mathsf{G}^{\vec{w}_2}(k, \vec{q}_2; \Lambda) &= \sup(\eta_k(k, \vec{q}_2), \Lambda)^{-\abs{w_{2,*}}} \prod_{q_i \in \vec{q}_2} \sup(\eta_{q_i}(k, \vec{q}_2), \Lambda)^{-\abs{w_{2,i}}} \\
&\leq \sup(\eta_{q_j}(\vec{q}_1, \vec{q}_2), \Lambda)^{-\abs{w_{2,*}}} \prod_{q_i \in \vec{q}_2} \sup(\eta_{q_i}(\vec{q}_1, \vec{q}_2), \Lambda)^{-\abs{w_{2,i}}} \eqend{,}
\end{split}
\end{equations}
and it follows that
\begin{equation}
\label{gw_fused_est}
\frac{\mathsf{G}^{\vec{w}_1+\vec{w}_2}(\vec{q}_1, \vec{q}_2; \Lambda)}{\mathsf{G}^{\vec{w}_1}(\vec{q}_1, -k; \Lambda) \mathsf{G}^{\vec{w}_2}(k, \vec{q}_2; \Lambda)} \geq 1
\end{equation}
with $w_{2,*}$ added at the appropriate place:
\begin{equation}
\vec{w}_1+\vec{w}_2 = (w_{1,1}, \ldots, w_{1,j-1}, w_{1,j}+w_{2,*}, w_{1,j+1}, \ldots, w_{1,n_1}, w_{2,1}, \ldots, w_{2,n_2}) \eqend{.}
\end{equation}
In the case that one or both trees contain a special vertex, the argumentation is similar, with $\bar{\eta}_{q_i}$~\eqref{bareta_i_def} used instead of $\eta_{q_i}$ at the appropriate places, and we also obtain equation~\eqref{gw_fused_est}. For a fusion of two trees with special vertices, from equation~\eqref{gw_fused_1} we thus obtain equation~\eqref{gw_fused_1_est} of the Lemma, while in the other cases equation~\eqref{gw_fused_2} gives equation~\eqref{gw_fused_2_est}, and the Lemma is proven.
\end{proof}

\paragraph*{Amputation.} If an external momentum of a tree $T$ of order $(m+n+1,r)$ vanishes and no derivatives act on this momentum, we obtain a new tree $T'$ of order $(m+n,r)$ by amputating the corresponding external vertex $v$ and the incident line (w.l.o.g. we can assume that this external momentum is the first one). We treat first the case without special vertex, where we have
\begin{lemma}
Amputating an external vertex $v$ from a tree $T$ without a special vertex, the change in weights can be estimated by
\begin{equation}
\label{amputate}
\mathsf{G}^{T,\vec{w}}_{M \vec{K} \vec{L}^\ddag; [v_p]}(0, \vec{q}; \mu, \Lambda) \leq \frac{\Lambda^{1-[v]}}{\sup(\inf(\mu, \eta(\vec{q})),\Lambda)} \mathsf{G}^{T',\vec{w}}_{\vec{K} \vec{L}^\ddag; [v_p]}(\vec{q}; \mu, \Lambda) \eqend{,}
\end{equation}
while for trees $T^*$ with a special vertex we have the same estimate with $\eta$ replaced by $\bar{\eta}$.
\end{lemma}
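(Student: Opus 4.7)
My plan is to track how each factor in the weight $\mathsf{G}^{T,\vec{w}}$ transforms under the amputation and to collect the resulting ratio. Since the amputated external vertex $v$ has valency one and carries momentum $0$, momentum conservation at $v$ forces the incident line $l$ to carry momentum $0$ as well. From Table~\ref{table_weights}, the weights of $v$ and $l$ in $\mathsf{G}^T$ are then $\sup(0,\Lambda)^{3-[v]} = \Lambda^{3-[v]}$ and $\sup(0,\Lambda)^{-2} = \Lambda^{-2}$, whose product $\Lambda^{1-[v]}$ is precisely the numerator appearing in the claim.

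Next I analyse the vertex $v'$ at the other end of $l$. In a tree $T$ without special vertex, $v'$ is internal of some valency $k$, so after amputation its valency drops to $k-1$ and its weight changes from $\sup(|q_{v'}|,\Lambda)^{4-k}$ to $\sup(|q_{v'}|,\Lambda)^{5-k}$. The associated momentum $q_{v'}$ is unchanged because by definition it is the maximum in modulus of the momenta on the lines incident to $v'$, and removing a line of momentum $0$ cannot alter this maximum. For a tree $T^*$, the vertex $v'$ may instead be the special vertex, in which case its weight changes from $\sup(\mu,\Lambda)^{-k_s}$ to $\sup(\mu,\Lambda)^{-(k_s-1)}$. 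Either way a single reciprocal factor remains in the ratio $\mathsf{G}^T/\mathsf{G}^{T'}$, equal to $\sup(|q_{v'}|,\Lambda)^{-1}$ in the internal case and $\sup(\mu,\Lambda)^{-1}$ in the special case. The derivative weight $\mathsf{G}^{\vec{w}}$ is unaffected, since $w_1=0$ by hypothesis and since the exceptionalities $\eta_{q_i}$ and $\bar{\eta}_{q_i}$ are invariant under appending a zero momentum to the list.

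The remaining --- and principal --- task is to bound the reciprocal factor above by $\sup(\inf(\mu,\eta(\vec{q})),\Lambda)^{-1}$ (respectively with $\bar{\eta}$ for $T^*$). The special-vertex subcase is immediate from $\mu \geq \inf(\mu,\eta(\vec{q}))$. For internal $v'$ the claim reduces to $|q_{v'}| \geq \eta(\vec{q})$ (respectively $|q_{v'}| \geq \bar{\eta}(\vec{q})$), and this is the key input: the momentum on any line of a tree coincides, up to sign, with the sum of the external momenta lying in one of the two connected components produced by cutting that line. For a tree $T$ without special vertex, overall momentum conservation allows this subsum to be rewritten as one not involving $q_n$, so by~\eqref{eta_i_def}--\eqref{eta_def} its modulus is at least $\eta(\vec{q})$; for $T^*$ the line momentum is an arbitrary nonempty subsum of external momenta (since momentum need not be conserved at the special vertex), and is bounded below by $\bar{\eta}(\vec{q})$ via~\eqref{bareta_i_def}--\eqref{bareta_def}. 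Passing to the maximum over the lines incident to $v'$ preserves the inequality. The main obstacle is thus the clean identification of line momenta with subsums admissible in the definition of $\eta$ or $\bar{\eta}$, which in the ordinary case requires the complementation trick to swap a subset containing $q_n$ for one avoiding it; once this is in place, the two cases are assembled into a single uniform bound of the form stated in the lemma.
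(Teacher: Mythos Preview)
Your argument is correct and coincides with the paper's proof in substance. The paper phrases the computation as a two-step operation---first converting the external vertex into an internal one of valence~$1$ (accounting for the factor $\Lambda^{-[v]}$), then invoking the reduction rules of Table~\ref{table_reduction} (accounting for $\Lambda/\sup(|q|,\Lambda)$ or $\Lambda/\sup(\mu,\Lambda)$)---whereas you unpack these factors directly; the net bookkeeping is identical. Your justification of $|q_{v'}|\geq\eta(\vec{q})$ via the subsum interpretation of line momenta (with the complementation trick to avoid $q_n$) is more explicit than the paper's terse ``$|q_i|\geq\inf(\mu,\eta(\vec{q}))$ for any $q_i$'', and in fact yields the slightly sharper inequality $|q_{v'}|\geq\eta(\vec{q})$ rather than $\geq\inf(\mu,\eta(\vec{q}))$, though only the latter is needed.
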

\begin{proof}
Since no derivatives act on the amputated external vertex, the derivative weight factor does not change due to $\eta_{q_i}(0, \vec{q}) = \eta_{q_i}(\vec{q})$ for $i \in \{1,\ldots,n-1\}$. In order to determine the change in weights coming from the amputation, we first convert the external vertex into an internal one, and then perform a reduction operation. The conversion gives an extra factor of $\Lambda^{-[v]}$ (since the vertex has valence $1$), and the change in weights for the reduction is given in Table~\ref{table_reduction}. Since $\abs{q_i} \geq \inf(\mu, \eta(\vec{q}))$ for any $q_i$, the Lemma follows. For trees $T^*$ where momentum is not conserved at the special vertex, the same procedure applies, and we obtain the same estimate with $\eta$ replaced by $\bar{\eta}$.
\end{proof}

If necessary, one has to perform additional reduction operations afterwards to again obtain a fully reduced tree. Since reduction operations can only increase the tree weight, the estimate~\eqref{amputate} stays valid. We then define
\begin{definition}
The set of all fully reduced trees $T$ of order $(m+n,r)$ with arbitrary $r$ is denoted by $\mathcal{T}_{m+n}$, and the set of all fully reduced trees $T^*$ of order $(m+n,r)$ with arbitrary $r$ is denoted by $\mathcal{T}^*_{m+n}$.
\end{definition}
Note that these sets are finite since only three- and four-valent internal vertices are allowed (except for $\mathcal{T}_1$, which consists of one tree with a one-valent vertex, and $\mathcal{T}_2$, which consists of one tree with a two-valent vertex).

\subsection{Inequalities}
\label{sec_trees_ineqs}

For the proofs in the next section, we need to estimate tree weight factors for fully reduced trees for larger or smaller $\Lambda$ and for larger momenta, depending on whether the tree is irrelevant, marginal or relevant.

\subsubsection{Irrelevant and marginal trees}

\begin{lemma}
For $\lambda \geq \Lambda$ and any tree $T$ with $[T] \leq 0$, we have
\begin{equation}
\label{t_irr_ineq1}
\mathsf{G}^{T,\vec{w}}_{\vec{K} \vec{L}^\ddag}(\vec{q}; \mu, \lambda) \leq \mathsf{G}^{T,\vec{w}}_{\vec{K} \vec{L}^\ddag}(\vec{q}; \mu, \Lambda) \eqend{,}
\end{equation}
while for $[T] < 0$ we even have
\begin{equation}
\label{t_irr_ineq2}
\mathsf{G}^{T,\vec{w}}_{\vec{K} \vec{L}^\ddag}(\vec{q}; \mu, \lambda) \leq \left( \frac{\sup(\inf(\mu, \eta(\vec{q})), \Lambda)}{\sup(\inf(\mu, \eta(\vec{q})), \lambda)} \right)^\epsilon \mathsf{G}^{T,\vec{w}}_{\vec{K} \vec{L}^\ddag}(\vec{q}; \mu, \Lambda)
\end{equation}
for any $0 \leq \epsilon \leq -[T]$, in particular for $\epsilon = \Delta$. For trees $T^*$ with a special vertex where momentum is not conserved, the same estimates are valid, with $\eta$ replaced by $\bar{\eta}$.
\end{lemma}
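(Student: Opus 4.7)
The plan is to reduce both inequalities to a pointwise statement via an integral representation and then verify that statement using the tree structure. Every factor of $\mathsf{G}^{T,\vec{w}}$ has the form $\sup(r_\alpha,\Lambda)^{a_\alpha}$, where $\alpha$ runs over the lines, external/internal/special/particular vertices and derivative factors of the tree, $r_\alpha\ge 0$ is the associated reference scale (a line or vertex momentum, $\mu$, $\max(|\vec{q}|,\mu)$, or $\eta_{q_i}(\vec{q})$ — resp.\ $\bar{\eta}_{q_i}(\vec{q})$ for trees $T^*$), and $a_\alpha\in\mathbb{Z}$ is the integer exponent read off from Table~\ref{table_weights} and the definitions of $\mathsf{G}^p$ and $\mathsf{G}^{\vec{w}}$. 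Using the elementary identity $\ln\sup(r,\lambda)-\ln\sup(r,\Lambda)=\int_\Lambda^\lambda \Lambda'^{-1}\theta(\Lambda'-r)\total\Lambda'$ valid for $r>0$, one obtains
\begin{equation*}
\ln\frac{\mathsf{G}^{T,\vec{w}}(\vec{q};\mu,\lambda)}{\mathsf{G}^{T,\vec{w}}(\vec{q};\mu,\Lambda)} = \int_\Lambda^\lambda \frac{\total\Lambda'}{\Lambda'}\,A(\Lambda')\eqend{,}\qquad A(\Lambda')\equiv\sum_{\alpha\colon r_\alpha<\Lambda'} a_\alpha \eqend{.}
\end{equation*}
Inequality~\eqref{t_irr_ineq1} then reduces to the pointwise bound $A(\Lambda')\le 0$ for every $\Lambda'>0$, while~\eqref{t_irr_ineq2} reduces to the sharper bound $A(\Lambda')\le-\epsilon$ on the range $\Lambda'>\inf(\mu,\eta(\vec{q}))$ (combined with~\eqref{t_irr_ineq1} elsewhere).

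The central step is the pointwise control of $A(\Lambda')$. The key structural input is the coupled activation of vertices with their incident lines: an external vertex $v_e$ has its reference scale $|q_{v_e}|$ equal to the momentum of its unique incident line and hence activates precisely when that line does, while an internal vertex $v_i$ of valency $k$ has its reference scale equal to the maximum of its $k$ incident line momenta and hence activates precisely when its last (highest-momentum) incident line does. I would therefore regard $A(\Lambda')$ as a piecewise-constant function jumping only at the threshold values $\{r_\alpha\}$ and verify event-by-event that the cumulative sum stays non-positive. Most jumps are manifestly non-positive: an isolated line activation contributes $-2$; a line activating together with its external endpoint gives $(3-[v_e])-2=1-[v_e]\le 0$; a line activating together with an adjacent internal vertex of valency $k\in\{3,4\}$ (the only possibilities in a fully reduced tree of order $m+n\ge 3$) gives $(4-k)-2\le -1$; and the special, particular and derivative thresholds contribute $-k_s\le 0$, $[v_p]$ (possibly of either sign), and $-|w_i|\le 0$ respectively. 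The only events with a strictly positive contribution are the simultaneous activation of a valency-$3$ internal vertex, its last incident line, and a dimension-$1$ external endpoint at the far end, giving a net jump of $+1$. Such an event requires the two other lines of the vertex to have already activated, and in a fully reduced \emph{connected} tree with $[T]\le 0$ the vertex in question must have at least one non-external neighbour (a three-star with three dimension-$1$ external neighbours would form a disconnected component with its own $[T]=1$, contradicting $[T]\le 0$). A short case analysis of the possible structures attached to the two already-active lines then shows that their accumulated negative contributions absorb any subsequent $+1$ jump, so that $A(\Lambda')\le 0$ holds at every intermediate scale.

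For~\eqref{t_irr_ineq2}, once $\Lambda'>\inf(\mu,\eta(\vec{q}))$ the derivative, special, and particular thresholds are all crossed, and a refined accounting using that on this range every positive jump has necessarily been matched by preceding negative contributions gives $A(\Lambda')\le[T]\le-\epsilon$; inserting this into the integral representation reproduces exactly the prefactor $[\sup(\inf(\mu,\eta(\vec{q})),\Lambda)/\sup(\inf(\mu,\eta(\vec{q})),\lambda)]^\epsilon$ in~\eqref{t_irr_ineq2}. The case of trees $T^*$ with a special vertex is identical upon replacing $\eta_{q_i}$ by $\bar{\eta}_{q_i}$, since the only structural change is the absence of momentum conservation at the special vertex; the additional non-positive contribution $-k_s\,\theta(\Lambda'-\mu)$ to $A(\Lambda')$ only strengthens the bounds. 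The hard part will be the event-by-event bookkeeping, in particular handling the possibly negative sign of $[v_p]$ and ruling out cumulative $+1$ jumps — precisely where both full reducedness of $T$ and the hypothesis $[T]\le 0$ are essential, as the three-star example with $[T]=1$ shows that $A(\Lambda')$ can temporarily exceed zero when either hypothesis is dropped.
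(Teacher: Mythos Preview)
Your integral representation correctly reduces~\eqref{t_irr_ineq1} to the pointwise bound $A(\Lambda')\le 0$, and this is a genuinely different route from the paper's. There one does not track cumulative exponents at all, but instead \emph{pairs} each positive weight factor (a $3$-valent internal vertex, or the particular weight when $[v_p]>0$) with a negative factor carrying a \emph{smaller or equal} reference momentum --- half of an adjacent line for the vertex, any other factor for the particular weight --- and then invokes Lemma~\ref{lemma_biggermomentum} to conclude that each ratio $\sup(a,\lambda)/\sup(b,\lambda)$ with $a\ge b$ is non-increasing in $\lambda$; the remaining unpaired negative factors are trivially monotone. This pairing sidesteps your event-by-event bookkeeping entirely. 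Your ``short case analysis'' absorbing the $+1$ jumps is the crux of your approach and you only sketch it; you also omit the trees in $\mathcal{T}_1$ and $\mathcal{T}_2$, whose $1$- and $2$-valent internal vertices produce positive jumps ($+3$ and $+2$ when combined with their incident line) that are not in your list.

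There is a concrete error in your argument for~\eqref{t_irr_ineq2}. You assert that once $\Lambda'>\inf(\mu,\eta(\vec{q}))$ the derivative, special and particular thresholds are all crossed, and hence $A(\Lambda')\le[T]$. This is backwards: \emph{every} reference scale $r_\alpha$ --- line momenta, vertex momenta, $\eta_{q_i}(\vec{q})$, $\mu$, $\sup(\abs{\vec{q}},\mu)$ --- satisfies $r_\alpha\ge\inf(\mu,\eta(\vec{q}))$, so nothing at all has activated for $\Lambda'\le\inf(\mu,\eta)$, and just above this value only the smallest-threshold factors begin to activate. In particular the first activation may be an external-plus-line pair with net jump $0$, leaving $A(\Lambda')=0\not\le-\epsilon$; your refined accounting therefore does not yield $A\le[T]$ on the required range. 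The paper extracts the prefactor in~\eqref{t_irr_ineq2} not from a threshold argument but from the $-[T]$ leftover unpaired negative factors $\sup(a_j,\cdot)^{-c_j}$ remaining after the pairing, each of which has $a_j\ge\inf(\mu,\eta(\vec{q}))$.
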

\begin{proof}
The various weight factors contained in $\mathsf{G}^{T,\vec{w}}$ (mostly of the form $\sup(a, \lambda)^c$) come with positive and negative powers $c$, and we would like to estimate the whole tree weight factor at $\lambda = \Lambda$. This is trivial if $c \leq 0$, but we first have to extract positive weight factors using Lemma~\ref{lemma_biggermomentum}. For a tree without a particular weight factor, positive weight factors can come from $3$-valent internal vertices and external vertices. However, the weight factor of an external vertex $v_e$ together with the weight factor of the adjacent line of momentum $q$ give a factor of $\sup(\abs{q}, \Lambda)^{1-[v_e]}$, and since all $[v_e] \geq 1$ this has always a negative power. Since the momentum $q_v$ associated to an internal $3$-valent vertex $v$ is always the one of largest absolute value among the momenta of the incident lines $l$ ($\abs{q_v} \geq \abs{q_l}$) and the dimension of $v$ is $1$, we take the weight factor of this vertex together with half of the weight factor of any adjacent line $l$, giving a factor of $\sup(\abs{q_v},\lambda) / \sup(\abs{q_l},\lambda)$, which can be estimated using Lemma~\ref{lemma_biggermomentum} (taking $k = \Lambda$, $K = \lambda$). This only cannot work if we have three external vertices connected to a single $3$-valent internal vertex, since then we would need to use the weight factor of one internal line twice. However, since $[T] \leq 0$, either the dimensions of the external vertices are large enough (and thus provide the necessary negative weight), or we have at least one derivative weight factor, and use this instead of the weight factor of the corresponding internal line. If the tree contains a particular weight factor, the corresponding momentum is larger than any momentum associated to any other element of the tree, and if $[v_p] > 0$ we can use Lemma~\ref{lemma_biggermomentum} for the particular weight factor together with any other weight factor. Again, since $[T] \leq 0$, we can extract all positive weight factors. The remaining weight factors are all of the form $\sup(a, \lambda)^c$ with $c \leq 0$ and can thus be estimated at $\lambda = \Lambda$, such that we obtain equation~\eqref{t_irr_ineq1} of the Lemma. If the tree is strictly irrelevant, $[T] < 0$, there will be strictly more negative than positive weight factors, and since all $a$ that appear in the weight factors of the form $\sup(a,\Lambda)$ fulfil $a \geq \inf(\mu, \eta(\vec{q}))$, we obtain equation~\eqref{t_irr_ineq2} of the Lemma. For trees $T^*$ with a special vertex where momentum is not conserved, the same proof works, with $\eta$ replaced by $\bar{\eta}$, and the Lemma follows.
\end{proof}

\subsubsection{Relevant and marginal trees}

\begin{lemma}
For $0 \leq t \leq 1$ and any tree $T$ with $[T] \geq 0$, we have
\begin{equation}
\label{t_rel_ineq1}
\mathsf{G}^{T,\vec{w}}_{\vec{K} \vec{L}^\ddag}(t \vec{q}; \Lambda, \Lambda) \leq \mathsf{G}^{T,\vec{w}}_{\vec{K} \vec{L}^\ddag}(\vec{q}; \Lambda, \Lambda) \eqend{.}
\end{equation}
\end{lemma}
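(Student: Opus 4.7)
The plan is to reduce the inequality to the monotonicity of $f(t) := \mathsf{G}^{T,\vec w}_{\vec K \vec L^\ddag}(t\vec q; \Lambda,\Lambda)$ as a function of $t$ on $[0,1]$. Writing $f(t)$ as a product $\prod_i \sup(t a_i, \Lambda)^{c_i}$, where each factor comes from a line, vertex, particular, or derivative weight and each $a_i$ is a non-negative linear functional of the external momenta, logarithmic differentiation gives
\[
t\,\partial_t \ln f(t) \;=\; \sum_{i:\, t a_i > \Lambda} c_i .
\]
Monotonicity therefore reduces to the combinatorial claim that for every threshold $\lambda \geq \Lambda$, the partial sum $E(\lambda) := \sum_{i:\,a_i > \lambda} c_i$ of exponents over the ``high-momentum'' factors is non-negative. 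Since $f$ is continuous and piecewise smooth in $\ln t$, this would imply $f(t)\leq f(1)$ for $t\in[0,1]$.

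To establish $E(\lambda)\geq 0$, I study the structure of the high-momentum subgraph $T_\lambda$. Because $|q_v|$ equals the maximum of incident line momenta, and because an external vertex and its incident line carry identical momenta, any line in $T_\lambda$ forces both of its endpoints into $T_\lambda$. Consequently the vertex-and-line part of $T_\lambda$ is a forest, and removing the $M$ ``missing'' lines from $T$ partitions it into $M+1$ pieces: the $p$ full connected components of $T_\lambda$ plus $X=M+1-p$ isolated ``excluded'' vertices. Splitting $X=X_e+X_i$ and using that excluded external vertices have valence $1$ while excluded internal vertices of a fully reduced tree have valence $\geq 3$ (or $2$ in the $n=2$ case), the identity $\sum_{v^*\text{ excluded}} k_{v^*}=2M-D_\mathrm{total}$ yields the key bound
\[
D_\mathrm{total} \;\leq\; 2p + X - 2 - 2X_i ,
\]
where $D_\mathrm{total}$ is the total number of (in-$T_\lambda$ internal vertex, dangling line) incidences appearing with coefficient $-1$ in the computation of $E(\lambda)$.

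Combining this bound with Euler's identity $N_l(C)=N_v(C)-1$ for each component $C$ of $T_\lambda$ and the obvious factorisation
\[
\sum_{v_e\in C}(3-[v_e]) + \sum_{v_i\in C}(4-k_i) - 2N_l(C) \;=\; 4 - [\vec K_C] - [\vec L^\ddag_C] - D(C),
\]
summing over components, and exploiting the global dimension constraint $\sum_{j}\bigl([v_{e_j}]+|w_j|\bigr)=4-[T]+[v_p]$ together with the elementary implications $\eta_{q_j}>\lambda\Rightarrow|q_j|>\lambda\Rightarrow v_{e_j}\in T_\lambda$ and $|\vec q|\geq|q_\ell|$ (the latter also in $T^*$, via the cut containing the special vertex), I expect to obtain
\[
E(\lambda) \;\geq\; [T] + 2p + X_i - 2 \;\geq\; [T] \;\geq\; 0
\]
as soon as $p\geq 1$.

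The only remaining case is $p=0$, i.e.\ $T_\lambda$ contains no vertices or lines. Then every $|q_j|\leq\lambda$, hence $\eta_{q_j}\leq\lambda$, so no derivative factor is active and $E(\lambda)=[v_p]\,\mathbf 1[|\vec q|>\lambda]$; for trees $T^*$ the assumption $[T^*]\geq 0$ together with $[v_{e_j}]\geq 1$ and $|w_j|\geq 0$ forces $[v_p]\geq 0$, while for trees $T$ without a special vertex no particular factor occurs at all. I expect the main obstacle to be the clean bookkeeping when both derivatives and the particular factor are present simultaneously, and in particular verifying the above valence-counting inequality $D_\mathrm{total}\leq 2p+X-2-2X_i$ uniformly: one must carefully attribute each missing line to the excluded vertex it terminates at, using that an excluded internal vertex in a fully reduced tree necessarily has all of its $\geq 3$ incident lines missing, which is exactly what forces the extra $-2X_i$ that makes the bound tight enough to absorb the $-|\vec w|$ term hidden inside $[T]$.
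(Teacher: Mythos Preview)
Your approach is correct but takes a substantially different route from the paper. The paper's proof is only a few lines: it observes that since $[T]\geq 0$, every negative-exponent weight factor can be paired with a positive-exponent factor carrying a larger (or equal) momentum, with the particular weight factor having the largest momentum of all; it then applies the supremum-ratio estimate (Lemma~\ref{lemma_biggermomentum}) with $K=\Lambda/t$, $k=\Lambda$ to each pair
\[
\left(\frac{\sup(t|q_1|,\Lambda)}{\sup(t|q_2|,\Lambda)}\right)^c
=\left(\frac{\sup(|q_1|,\Lambda/t)}{\sup(|q_2|,\Lambda/t)}\right)^c
\leq\left(\frac{\sup(|q_1|,\Lambda)}{\sup(|q_2|,\Lambda)}\right)^c
\qquad(|q_1|\geq|q_2|,\ c\geq 0),
\]
and estimates the leftover unpaired positive factors trivially at $t=1$.

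Your log-differentiation reduction to the threshold condition $E(\lambda)\geq 0$ is clean (minor nitpick: the $a_i$ are positively homogeneous in $\vec q$ rather than linear, but that is all you use), and the forest analysis of $T_\lambda$ does establish it; the valence-counting inequality $D_{\mathrm{total}}\leq 2p+X-2-2X_i$ checks out from $2M=D_{\mathrm{total}}+\sum_{v^*}k_{v^*}$ and $M=p+X-1$. In fact $E(\lambda)\geq 0$ for all $\lambda$ is exactly the Hall-type condition guaranteeing the paper's pairing, so the two arguments share the same combinatorial core; the paper simply asserts the pairing and relies on the structural discussion from the preceding lemma, whereas you prove it explicitly. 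What you gain is rigour in a step the paper leaves somewhat implicit; what you pay is considerably more bookkeeping. One small correction: in the $p=0$ case you write that trees $T$ without special vertex carry no particular factor, but they can (e.g.\ in the bounds for one integrated insertion). The argument is still fine provided $[v_p]\geq 0$, which holds in every application; the paper's pairing argument shares the same implicit restriction, since for $[v_p]<0$ the particular factor would be a negative factor with the largest momentum and hence unpairable.
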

\begin{proof}
For any weight factors with a positive power, the inequality is immediate, and we thus only need to extract weight factors with a negative power first. This can also be done using Lemma~\ref{lemma_biggermomentum}, but in a different way. First we note that since $[T] \geq 0$, for each weight factor with a negative power, we have a weight factor with positive power but larger momentum (including the particular weight factor). Such a pair of weight factors can then be estimated as follows (with $\abs{q_1} > \abs{q_2}$ and $c \geq 0$):
\begin{equation}
\left( \frac{\sup(t \abs{q_1}, \Lambda)}{\sup(t \abs{q_2}, \Lambda)} \right)^c = \left( \frac{\sup\left( \abs{q_1}, \frac{\Lambda}{t} \right)}{\sup\left( \abs{q_2}, \frac{\Lambda}{t} \right)} \right)^c \leq \left( \frac{\sup\left( \abs{q_1}, \Lambda \right)}{\sup\left( \abs{q_2}, \Lambda \right)} \right)^c \eqend{,}
\end{equation}
using Lemma~\ref{lemma_biggermomentum} with $K = \Lambda/t$ and $k = \Lambda$. We can thus extract all negative weight factors, and the remaining weight factors are all of the form $\sup(t a, \Lambda)^c$ with $a,c \geq 0$ and can thus be estimated at $t = 1$, such that the Lemma follows.
\end{proof}

\begin{lemma}
For $\Lambda \leq \lambda \leq \mu$ and any tree $T^*$ with a special vertex with $[T] \geq 0$, we have
\begin{equation}
\label{t_rel_ineq3}
\mathsf{G}^{T^*,\vec{w}}_{\vec{K} \vec{L}^\ddag}(\vec{q}; \mu, \lambda) \leq \mathsf{G}^{T^*,\vec{w}}_{\vec{K} \vec{L}^\ddag}(\vec{q}; \mu, \Lambda) \eqend{.}
\end{equation}
For trees $T$ without a special vertex but with a particular weight factor, the same estimate is valid for $[\vec{K}] + [\vec{L}^\ddag] + \abs{\vec{w}} \geq 4$, \ie, only for trees which are ``not too relevant''.
\end{lemma}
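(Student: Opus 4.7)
The strategy is to exploit the fact that in the regime $\Lambda \leq \lambda \leq \mu$, both the special-vertex weight $\sup(\mu,\lambda)^{-k_s} = \mu^{-k_s}$ and the particular weight factor $\sup(\abs{\vec{q}}, \mu, \lambda)^{[v_p]} = \sup(\abs{\vec{q}}, \mu)^{[v_p]}$ are independent of $\lambda$; they therefore cancel out of the inequality and may be ignored. What remains is to compare the $\lambda$-dependent building blocks of $\mathsf{G}^{T^*,\vec{w}}$: the external vertex factors of positive power $3-[v_e] \in \{0,1,2\}$, the internal vertex factors of non-negative power $4-k_i \in \{0,1\}$ (valid for fully reduced trees of order $\geq 3$), the line factors of power $-2$, and the derivative factors of non-positive power.

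The idea, paralleling the proof of the irrelevant-tree lemma, is to pair every positive-power factor with a negative-power partner so that the combined factor is non-increasing in $\lambda$. An external vertex $v_e$ incident to a line of momentum $q$ fuses with that line to give $\sup(\abs{q},\lambda)^{1-[v_e]}$, whose power is $\leq 0$ because $[v_e] \geq 1$. A 3-valent internal vertex $v_i$ with associated momentum $q_v$ (dominating all adjacent line momenta) fuses with half the weight of one adjacent line of momentum $q_l$ to yield $\sup(\abs{q_v},\lambda)/\sup(\abs{q_l},\lambda)$; since $\abs{q_v} \geq \abs{q_l}$, this ratio is non-increasing in $\lambda$ by Lemma~\ref{lemma_biggermomentum}, applied now with $K=\lambda$, $k=\Lambda$. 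After pairing, all remaining factors carry non-positive $\lambda$-powers, hence the full product is non-increasing in $\lambda$ and the inequality follows by decreasing $\lambda$ down to $\Lambda$.

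The sole obstruction to this pairing scheme is the \emph{3-star} configuration in which a single 3-valent internal vertex has all three incident lines ending in external vertices; here the three lines would be needed simultaneously for the external pairings and for the half-line borrowed by the internal vertex. For a tree $T^*$ with a special vertex this configuration is impossible: such an internal 3-star would be disconnected from the special vertex, contradicting the connectedness of $T^*$. Hence the pairing scheme always succeeds in this case, yielding~\eqref{t_rel_ineq3}.

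For a tree $T$ without a special vertex but with a particular weight factor and $[\vec{K}]+[\vec{L}^\ddag]+\abs{\vec{w}} \geq 4$, the 3-star \emph{can} be the entire tree and must be handled separately -- this is the main point that requires attention. Labelling the external momenta $q_1,q_2,q_3$ with $q_1+q_2+q_3=0$ and $\abs{q_3}=\max_j\abs{q_j}$, so that $q_v=q_3$, the hypothesis will absorb the surplus $\sup(\abs{q_3},\lambda)$ from the internal vertex in two subcases: if $\abs{\vec{w}} \geq 1$, some derivative factor $\sup(\eta_{q_i},\lambda)^{-1}$ with $\eta_{q_i} = \inf(\abs{q_i},\abs{q_3}) \leq \abs{q_3}$ is available; and if $\abs{\vec{w}} = 0$, then $[\vec{K}]+[\vec{L}^\ddag]\geq 4$ forces at least one external vertex $v_{e^*}$ to carry dimension $\geq 2$, so its external+line pair has power $\leq -1$, from which a factor $\sup(\abs{q_{*}},\lambda)^{-1}$ with $\abs{q_3}\geq\abs{q_*}$ can be borrowed. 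In either subcase the surplus combines into a non-increasing ratio of $\sup$-factors, completing the monotonicity argument. The main obstacle is thus purely combinatorial -- verifying that the 3-star is the unique exceptional configuration and that the dimensional hypothesis always resolves it -- rather than any analytic difficulty.
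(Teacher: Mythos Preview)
Your argument is correct. The paper, however, organises the proof differently and more modularly. After the common observation that, for $\lambda\le\mu$, the special-vertex factor $\sup(\mu,\lambda)^{-k_s}=\mu^{-k_s}$ and the particular weight factor $\sup(\abs{\vec q},\mu,\lambda)^{[v_p]}=\sup(\abs{\vec q},\mu)^{[v_p]}$ are $\lambda$-independent, the paper \emph{removes the special vertex}, which breaks $T^*$ into subtrees $\{T_k\}$ (each carrying one dangling line formerly attached to the special vertex). A short dimension count gives $[T_k]=1-[\vec K_k]-[\vec L^\ddag_k]-\abs{\vec w_k}\le 0$, since every subtree necessarily contains at least one external vertex of dimension $\ge 1$. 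The already-proven inequality~\eqref{t_irr_ineq1} for irrelevant/marginal trees is then invoked on each $T_k$ as a black box, and the $\lambda$-independent factors are restored. For trees $T$ without special vertex one simply strips the particular weight, leaving a tree of dimension $4-[\vec K]-[\vec L^\ddag]-\abs{\vec w}$, which is $\le 0$ precisely under the stated hypothesis, so~\eqref{t_irr_ineq1} applies directly to the whole tree.

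Your route instead re-executes in place the pairing mechanism that underlies the proof of~\eqref{t_irr_ineq1}. This is a legitimate alternative, but it re-derives rather than reuses the earlier lemma. The paper's decomposition replaces your explicit $3$-star case analysis by a one-line dimension count, and it automatically covers the exceptional trees in $\mathcal{T}_1$ and $\mathcal{T}_2$ (with a $1$- or $2$-valent internal vertex, hence $4-k_i\in\{2,3\}$) that your ``$4-k_i\in\{0,1\}$'' setup tacitly excludes for the $T$-case. Those low-order cases are in fact trivial because all their weight factors share a single momentum argument, so this is not a genuine gap; it merely illustrates why the paper's reduction to~\eqref{t_irr_ineq1} is the cleaner organisation.
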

\begin{proof}
Since $\mu \geq \lambda$, the special vertex factor and the particular weight factor do not depend on $\lambda$. Consider then the trees $\{T_k\}$ which result by removing the special vertex and the particular weight factor (and which thus have lines connected to only one vertex). Since the special vertex factor contributes $- k$ to the tree dimension $[T^*]$, where $k$ is the valency of the special vertex, and overall we have $[T^*] = [v_p] - [\vec{K}] - [\vec{L}^\ddag] - \abs{\vec{w}}$~\eqref{t_dim_def}, each tree $T_k$ obtained after removing the special vertex (with $m_k + n_k$ external vertices and $\abs{\vec{w}_k}$ derivatives) and the particular weight factor has dimension $[T_k] = 1 - [\vec{K}] - [\vec{L}^\ddag] - \abs{\vec{w}_k}$. Since all $[\phi_K], [\phi_L^\ddag] \geq 1$ we especially have $[T_k] \leq 0$, and can thus apply the inequality~\eqref{t_irr_ineq1} to each $T_k$. Adding the special vertex and the particular weight factor back we thus obtain equation~\eqref{t_rel_ineq3} of the Lemma. For trees $T$ without a special vertex but with a particular weight factor, the expression for the tree dimension $[T] = 4 + [v_p] - [\vec{K}] - [\vec{L}^\ddag] - \abs{\vec{w}}$~\eqref{t_dim_def} shows that the same argument works for $[\vec{K}] + [\vec{L}^\ddag] + \abs{\vec{w}} \geq 4$, and the Lemma is proven.
\end{proof}

\section{Bounds on functionals}
\label{sec_bounds}

The regularised functionals with and without composite operator insertions are bounded uniformly in $\Lambda_0$, and furthermore their derivative with respect to $\Lambda_0$ is bounded in a way that implies the existence of the unregularised limit $\Lambda_0 \to \infty$. These bounds can be expressed using fully reduced weighted trees, and concretely we prove:
\begin{proposition}
\label{thm_l0}
For all multiindices $\vec{w}$, at each order $l$ in perturbation theory and for an arbitrary number $m$ of external fields $\vec{K}$ and $n$ antifields $\vec{L}^\ddag$, we have the bound
\begin{equation}
\label{bound_l0}
\abs{ \partial^\vec{w} \mathcal{L}^{\Lambda, \Lambda_0, l}_{\vec{K} \vec{L}^\ddag}(\vec{q}) } \leq \sum_{T \in \mathcal{T}_{m+n}} \mathsf{G}^{T,\vec{w}}_{\vec{K} \vec{L}^\ddag}(\vec{q}; \mu, \Lambda) \,\mathcal{P}\left( \ln_+ \frac{\sup\left( \abs{\vec{q}}, \mu \right)}{\sup(\inf(\mu, \eta(\vec{q})), \Lambda)}, \ln_+ \frac{\Lambda}{\mu} \right) \eqend{,}
\end{equation}
where $\mathcal{P}$ is a polynomial with non-negative coefficients (depending on $m,n,l,\abs{\vec{w}}$ and the renormalisation conditions). The sum runs over all fully reduced trees $T$ of order $(m+n,r)$ with arbitrary $r$ (the number of internal vertices), where the dimension of the external vertices is given by the dimension of the corresponding operator (\ie, $[v_1] = [\phi_{K_1}]$, \etc).
\end{proposition}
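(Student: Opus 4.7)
The plan is a strong induction on the pair $(m+n+2l,\,l)$ ordered lexicographically, exploiting the structure of the perturbative flow equation hierarchy~\eqref{l_0op_flow_hierarchy}. As noted in the discussion following that hierarchy, the first (``loop'') term on the right-hand side lives at loop order $l-1$ but with two more external legs (so $m+n+2l$ decreases by one), while the second (``tree'') term splits the external legs into two nonempty parts and the loop order into $l'$ and $l-l'$. Thanks to the vanishing of the free part of $L^{\Lambda,\Lambda_0}$ -- which kills the terms at $l=0$ with one or two external legs -- every functional appearing on the right-hand side has strictly smaller $(m+n+2l,\,l)$, so the induction is well-posed. The base cases at $m+n+2l=3$ (and a direct verification for $m+n=1,2$ at $l=0$) are trivial from the boundary data in Table~\ref{table_boundary}.

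For the induction step I would first bound $\partial_\Lambda\partial^{\vec{w}}\mathcal{L}^{\Lambda,\Lambda_0,l}_{\vec{K}\vec{L}^\ddag}(\vec{q})$ by plugging the inductive hypothesis into the right-hand side of \eqref{l_0op_flow_hierarchy}, distributing $\partial^{\vec{w}}$ by the Leibniz rule over the two factors and the propagator. For the tree term, the fusion lemma~\eqref{gw_fused_2_est} combines the two inductive trees into a single fully reduced tree on the external momenta $\vec{q}$; the bound~\eqref{prop_abl} on $\partial_\Lambda C^{\Lambda,\Lambda_0}_{MN}(k)$ supplies precisely the factor $\sup(|k|,\Lambda)^{-5+[v_M]+[v_N]}$ needed to cancel the $\sup(|k|,\Lambda)^{[v_M]+[v_N]-4}$ appearing in~\eqref{gw_fused_2_est}, up to the single remaining factor of $\sup(|k|,\Lambda)^{-1}$ that will be absorbed in the subsequent $\Lambda$-integration. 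For the loop term, the Gaussian suppression $\mathrm{e}^{-|p|^2/2\Lambda^2}$ in~\eqref{prop_abl} makes the $p$-integration convergent and bounded by $\Lambda^4$ times the supremum over $|p|\lesssim\Lambda$ of the integrand; the amputation lemma~\eqref{amputate}, applied to the two external legs $M$ and $N$ that have been contracted against the propagator, then produces a fully reduced tree on $\vec{q}$, with the leftover powers of $\Lambda$ reassembling into the expected tree weight factor.

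Next I would integrate this flow-equation bound in $\Lambda$, in three cases distinguished by the relevance of $(\vec{K},\vec{L}^\ddag,\vec{w})$:
\begin{itemize}
\item If the functional is irrelevant, $[\vec{K}]+[\vec{L}^\ddag]+|\vec{w}|>4$, integrate from $\Lambda$ upward to $\Lambda_0$ using the vanishing boundary condition of Table~\ref{table_boundary}; inequality~\eqref{t_irr_ineq2} ensures that the integrand is dominated by the target tree weight at scale $\Lambda$ with at most a multiplicative factor producing no spurious logarithm.
\item If the functional is relevant, $[\vec{K}]+[\vec{L}^\ddag]+|\vec{w}|<4$, the boundary condition at $\Lambda=0$, $\vec{q}=\vec{0}$ is zero; a Taylor expansion in $\vec{q}$ around the origin with integral remainder reduces the estimate to that of strictly higher-order momentum derivatives (which are irrelevant and thus already controlled), with inequality~\eqref{t_rel_ineq1} providing the required monotonicity along the Taylor segment $t\vec{q}$, $t\in[0,1]$.
\item If the functional is marginal, $[\vec{K}]+[\vec{L}^\ddag]+|\vec{w}|=4$, first Taylor-expand in $\vec{q}$ around $\vec{0}$ starting from the (arbitrary but fixed) boundary datum at $\Lambda=\mu$; this reduces the problem to the irrelevant case at $\Lambda=\mu$, and one then integrates the flow equation downward from $\mu$ to $\Lambda$, which produces precisely the $\ln_+(\Lambda/\mu)$ and $\ln_+[\sup(|\vec{q}|,\mu)/\sup(\inf(\mu,\eta(\vec{q})),\Lambda)]$ contributions that accumulate in the polynomial $\mathcal{P}$ in~\eqref{bound_l0}.
\end{itemize}
The logarithmic factors generated by the integrations are combined using the subadditivity property $\ln_+(ax+by)\leq\ln_+(a+b)+\ln_+ x+\ln_+ y$, and the sum over intermediate fusion/amputation configurations is absorbed into the finite sum over fully reduced trees in $\mathcal{T}_{m+n}$.

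The main obstacle is the treatment of the marginal case in the presence of exceptional momenta, where the factor $\sup(\inf(\mu,\eta(\vec{q})),\Lambda)$ in the denominator of the logarithm allows an apparent $\eta\to 0$ divergence at each loop order. One must (i)~choose the Taylor base-point at $\Lambda=\mu$ (not at $\Lambda=0$), so that only strictly irrelevant subtractions appear in the Taylor remainder, and (ii)~assign derivative weight factors via $\eta_{q_i}(\vec{q})$ (rather than $|q_i|$) so that the fusion estimate~\eqref{gw_fused_2_est} commutes with the propagation of exceptional-momentum weights through the induction -- it is precisely the bookkeeping behind the definition of $\eta$ and of the derivative weight $\mathsf{G}^{\vec{w}}$ in~\eqref{gw_def} that converts a naive $\eta$-singularity into the single controllable power of the logarithm per loop order recorded in $\mathcal{P}$.
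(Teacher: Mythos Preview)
Your inductive scheme and your treatment of the right-hand side of the flow equation are essentially the paper's: the fusion estimate~\eqref{gw_fused_2_est} for the quadratic term and the loop integration followed by amputation~\eqref{amputate} for the linear term are exactly what is done in Subsection~\ref{sec_bounds_rhs}, and the case split irrelevant/marginal/relevant is correct.

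There is, however, a genuine gap in the Taylor extension step. Your proposal to interpolate along the straight segment $\vec{k}(t)=t\vec{q}$ and invoke~\eqref{t_rel_ineq1} does \emph{not} work for functionals without insertions when $m+n\geq 3$. Applying~\eqref{t_rel_ineq1} to the remainder tree (which carries one extra derivative) and integrating in $t$ yields
\[
\int_0^1 |q_i|\,\mathsf{G}^{T,\vec{w}+\tilde{\vec{w}}}_{\vec{K}\vec{L}^\ddag}(t\vec{q};\Lambda,\Lambda)\,\total t
\;\leq\;
|q_i|\,\mathsf{G}^{T,\vec{w}+\tilde{\vec{w}}}_{\vec{K}\vec{L}^\ddag}(\vec{q};\Lambda,\Lambda)
\;=\;
\frac{|q_i|}{\sup(\eta_{q_i}(\vec{q}),\Lambda)}\,\mathsf{G}^{T,\vec{w}}_{\vec{K}\vec{L}^\ddag}(\vec{q};\Lambda,\Lambda)\,,
\]
and the prefactor $|q_i|/\sup(\eta_{q_i}(\vec{q}),\Lambda)$ is a genuine power of the exceptionality ratio, not a logarithm; it cannot be absorbed into $\mathcal{P}$. (The straight path \emph{does} work for functionals with one insertion, precisely because the large-momentum factor $\sup(1,|\vec{q}|/\Lambda)^{g^{(1)}}$ in~\eqref{bound_l1} swallows one power via property~\eqref{gs_prop_2}; no such factor is available here.) The paper therefore abandons the straight path for $m+n=3,4$ and constructs case-specific interpolations (Subsections~\ref{sec_bounds_irr}--\ref{sec_bounds_relvan}): for three legs a two-stage path whose first half keeps $|k_1(t)|=|k_2(t)|=|k_3(t)|$ and whose second half satisfies the inequalities~\eqref{rel_3pf_ineq}, and for four legs a piecewise-linear path moving the momenta one at a time. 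These paths are engineered precisely so that $\int_0^1 |k_i'(t)|/\sup(\eta_{k_i}(\vec{k}(t)),\Lambda)\,\total t$ is bounded by a polynomial in $\ln_+\!\bigl[\sup(|\vec{q}|,\mu)/\sup(\inf(\mu,\eta(\vec{q})),\Lambda)\bigr]$ via Lemma~\ref{lemma_slalom}.

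Two smaller omissions: (i)~in the relevant case the boundary datum is at $\Lambda=0$, $\vec{q}=\vec{0}$, so before Taylor-expanding you must first integrate the flow equation upward in $\Lambda$ at zero momentum to obtain~\eqref{relevant_zeromomentum}; (ii)~in the marginal case your description only covers $\Lambda<\mu$, whereas for $\Lambda\geq\mu$ one integrates upward from $\mu$ at zero momentum (obtaining~\eqref{bound_l0_zeromomentum}) and then extends to general momenta by the same Taylor argument.
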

For non-exceptional external momenta where $\eta(\vec{q}) > 0$, this shows uniform boundedness. For the existence of the unregularised limit, we also need a bound on the $\Lambda_0$ derivative, which is given in Subsection~\ref{bounds_lambda0}.

For functionals with one insertion of a composite operator, we can always use the shift property~\eqref{func_sop_shift} with $y = x$ to obtain a functional with one insertion at $x = 0$. We then prove
\begin{proposition}
\label{thm_l1}
For all multiindices $\vec{w}$, at each order $l$ in perturbation theory and for an arbitrary number $m$ of external fields $\vec{K}$ and $n$ antifields $\vec{L}^\ddag$ and any composite operator $\op_A$, we have the bound
\begin{splitequation}
\label{bound_l1}
\abs{ \partial^\vec{w} \mathcal{L}^{\Lambda, \Lambda_0, l}_{\vec{K} \vec{L}^\ddag}\left( \op_A(0); \vec{q} \right) } &\leq \sup\left( 1, \frac{\abs{\vec{q}}}{\sup(\mu, \Lambda)} \right)^{g^{(1)}([\op_A],m+n+2l,\abs{\vec{w}})} \\
&\quad\times \sum_{T^* \in \mathcal{T}^*_{m+n}} \mathsf{G}^{T^*,\vec{w}}_{\vec{K} \vec{L}^\ddag; [\op_A]}(\vec{q}; \mu, \Lambda) \, \mathcal{P}\left( \ln_+ \frac{\sup\left( \abs{\vec{q}}, \mu \right)}{\sup(\inf(\mu, \bar{\eta}(\vec{q})), \Lambda)}, \ln_+ \frac{\Lambda}{\mu} \right) \eqend{,}
\end{splitequation}
where the sum runs over all fully reduced trees $T^*$ with one special vertex where momentum is not conserved.
\end{proposition}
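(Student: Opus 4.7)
The plan is to prove the bound by induction on the pair $(m+n+2l, l)$ in lexicographic order, integrating the flow equation \eqref{l_sop_flow_hierarchy} with $s=1$ from an appropriate boundary condition down to the current scale $\Lambda$. For $s=1$ the right-hand side reduces to (i) a tadpole term at loop order $l-1$ with two extra external fields, and (ii) a bilinear coupling of an insertion-free functional $\mathcal{L}^{\Lambda,\Lambda_0,l'}_{\vec{K}_\sigma \vec{L}^\ddag_\rho M}(\vec{q}_\sigma,\vec{q}_\rho,-k)$, already controlled by Proposition \ref{thm_l0}, with a one-insertion functional $\mathcal{L}^{\Lambda,\Lambda_0,l-l'}_{N \vec{K}_\tau \vec{L}^\ddag_\varsigma}(\op_A; k,\vec{q}_\tau,\vec{q}_\varsigma)$ to which the induction hypothesis of the current proposition applies; at each such term either $l$ is strictly smaller or, at the same $l$, the number of external fields/antifields is smaller (using that insertion-free functionals with $m+n\leq 2$ vanish at $l=0$).

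To execute the induction step, I first apply the two hypotheses to bound the bilinear term as a sum over pairs $(T,T^*)\in\mathcal{T}_{m_1+n_1+1}\times\mathcal{T}^*_{m_2+n_2+1}$ meeting along the internal line of momentum $\pm k$. The fusion estimate \eqref{gw_fused_2_est} with $[v_{p1}]=0$, $[v_{p2}]=[\op_A]$, combined with the propagator bound \eqref{prop_abl} and the dimensional identity $[\phi_M]+[\phi_N]=5-(4-[v_M]-[v_N])+1$ built into the weights, dominates the product by a tree in $\mathcal{T}^*_{m+n}$ times $\sup(\abs{k},\Lambda)^{-1}\mathe^{-\abs{k}^2/(2\Lambda^2)}$. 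The tadpole term is handled analogously: the Gaussian damping renders the loop integration $\int\!\total^4 p$ convergent at the scale $\Lambda$ and, after reducing the resulting tree (which now has two external legs with opposite momenta $\pm p$) via the amputation Lemma of Section \ref{sec_trees}, produces again a tree in $\mathcal{T}^*_{m+n}$ with an extra factor of $\Lambda$ absorbed by the propagator weight.

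Second, I integrate in $\Lambda$ using the tree inequalities of Section \ref{sec_trees_ineqs}. For irrelevant indices $[\vec{K}]+[\vec{L}^\ddag]+\abs{\vec{w}}>[\op_A]$ the boundary condition $\mathcal{L}^{\Lambda_0,\Lambda_0,l}_{\vec{K}\vec{L}^\ddag}(\op_A;\vec{q})=0$ permits downward integration $\int_\Lambda^{\Lambda_0}\total\lambda$, whose $\Lambda_0$-uniform convergence is ensured by the irrelevant tree bound \eqref{t_irr_ineq2} supplying the extra factor $[\sup(\inf(\mu,\bar\eta(\vec{q})),\Lambda)/\sup(\inf(\mu,\bar\eta(\vec{q})),\lambda)]^\Delta$. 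For relevant and marginal indices $[\vec{K}]+[\vec{L}^\ddag]+\abs{\vec{w}}\leq[\op_A]$, where the boundary datum is prescribed at $\Lambda=\mu$ and $\vec{q}=0$, I Taylor-expand $\mathcal{L}$ about $\vec{q}=0$ up to order $[\op_A]-[\vec{K}]-[\vec{L}^\ddag]-\abs{\vec{w}}$: the polynomial part is absorbed into the constants of $\mathcal{P}$, while the integral remainder is a higher-derivative functional now lying in the irrelevant regime and thus controlled by the previous case. Integration from $\mu$ down to $\Lambda$ proceeds using the relevant/marginal tree inequality \eqref{t_rel_ineq3} applied to the special-vertex part of the tree weight, producing at most one new factor of $\ln_+$ per step, which accumulates into the stated polynomial $\mathcal{P}$ of logarithms.

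The main obstacle is the prefactor $\sup(1,\abs{\vec{q}}/\sup(\mu,\Lambda))^{g^{(1)}([\op_A],m+n+2l,\abs{\vec{w}})}$, which is absent in Proposition \ref{thm_l0}: since $[\op_A]$ may exceed $4$, the Taylor expansion at the boundary injects momentum powers up to $[\op_A]$ that, at each use of \eqref{gw_fused_2_est}, can rescale into factors of $\abs{\vec{q}}/\sup(\mu,\Lambda)$ rather than decorate the tree weight. One must therefore carry $g^{(1)}$ through the induction and verify by dimensional bookkeeping that it grows by a bounded amount at each loop and each inductive step, in direct analogy with the large-momentum bounds of Keller-Kopper \cite{kellerkopper1993}; in particular, one checks that the downward $\lambda$-integration is compatible with the monotonicity of $\sup(1,\abs{\vec{q}}/\sup(\mu,\lambda))^{g^{(1)}}$, so that the prefactor at the current scale $\Lambda$ already dominates the corresponding prefactors at all $\lambda\in[\Lambda,\Lambda_0]$.
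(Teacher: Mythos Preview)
Your overall induction scheme, the treatment of the two right-hand side terms via fusion and amputation, and the downward integration in the irrelevant case are all correct and match the paper's approach in Sections~\ref{sec_bounds_rhs}--\ref{sec_bounds_irr}.

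The gap is in your handling of the relevant/marginal case and the large-momentum factor. Your claim that after a multi-order Taylor expansion ``the polynomial part is absorbed into the constants of $\mathcal{P}$'' is not right: the Taylor polynomial carries explicit powers $\vec{q}^{\vec{u}}$ which are not logarithmic and cannot be hidden in $\mathcal{P}$. The paper does \emph{not} Taylor-expand all the way to the irrelevant regime; it ascends in relevancy one step at a time with a first-order Taylor formula along the simple path $k_i(t)=tq_i$ (available here precisely because momentum is not conserved at the special vertex). The key estimate, for $\Lambda\geq\mu$, is
\[
\frac{\abs{q_i}}{\sup(t\bar\eta_{q_i}(\vec{q}),\Lambda)}\,\sup\!\Bigl(1,\tfrac{t\abs{\vec{q}}}{\Lambda}\Bigr)^{g^{(1)}([\op_A],m+n+2l,\abs{\vec{w}}+1)}
\leq \sup\!\Bigl(1,\tfrac{\abs{\vec{q}}}{\Lambda}\Bigr)^{g^{(1)}([\op_A],m+n+2l,\abs{\vec{w}})},
\]
which uses exactly property~\eqref{gs_prop_2}. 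This is the precise mechanism by which each Taylor step consumes one unit of the large-momentum budget; the specific form~\eqref{gs_def} of $g^{(1)}$ is engineered so that \eqref{gs_prop_1} (needed when fusing the quadratic term) and \eqref{gs_prop_2} (needed in the Taylor step) hold simultaneously. Your attribution of the large-momentum factor to the fusion estimate~\eqref{gw_fused_2_est} is misplaced: fusion only needs~\eqref{gs_prop_1}; it is the Taylor remainder that forces $g^{(1)}$ into the bound. You also omit the paper's three-stage structure for the relevant/marginal case: first integrate the flow equation \emph{upward} from $\mu$ at $\vec{q}=0$ to get the bound for all $\Lambda\geq\mu$ at zero momentum; then Taylor-extend to general $\vec{q}$ at fixed $\Lambda\geq\mu$ using~\eqref{gs_prop_2} and~\eqref{t_rel_ineq1}; only then integrate downward to $\Lambda<\mu$ via~\eqref{t_rel_ineq3}.
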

This bound shows uniform boundedness, and to show convergence we again need a bound on the $\Lambda_0$ derivative, given in Subsection~\ref{bounds_lambda0}. In contrast to the case without insertions, this bound involves a loop-order dependent ``large momentum factor'' with a function $g^{(s)}$ defined for $[\op] \geq 0$, $r \geq 0$ and $s \geq 1$ by
\begin{equation}
\label{gs_def}
g^{(s)}([\op],r,\abs{\vec{w}}) \equiv ([\op]+s) (r+3s-3) + \sup([\op]+s-\abs{\vec{w}},0) \eqend{,}
\end{equation}
which for all $\vec{u}+\vec{v} \leq \vec{w}$ and all $\vec{w}'$ fulfils the properties
\begin{equations}
g^{(s)}([\op],r,\abs{\vec{v}}) &\leq g^{(s)}([\op],r+1,\abs{\vec{w}}) \eqend{,} \label{gs_prop_1} \\
g^{(s)}([\op],r,\abs{\vec{w}}+1) + 1 &\leq g^{(s)}([\op],r,\abs{\vec{w}}) \qquad\text{ for } \abs{\vec{w}} \leq [\op]+s-1 \eqend{,} \label{gs_prop_2} \\
g^{(s)}([\op],r,\abs{\vec{u}}) + g^{(s')}([\op'],r',\abs{\vec{v}}) &\leq g^{(s+s')}([\op]+[\op'], r+r'-2, \abs{\vec{w}'}) - ([\op]+[\op']+s+s') \eqend{.} \label{gs_prop_3}
\end{equations}
Functionals with one insertion of an integrated composite operator are proven to fulfil
\begin{proposition}
\label{thm_l1i}
For all multiindices $\vec{w}$, at each order $l$ in perturbation theory and for an arbitrary number $m$ of external fields $\vec{K}$ and $n$ antifields $\vec{L}^\ddag$ and any integrated composite operator $\op_A$ of dimension $[\op_A] \geq 4$, we have the bound
\begin{splitequation}
\label{bound_l1i}
\abs{ \partial^\vec{w} \mathcal{L}^{\Lambda, \Lambda_0, l}_{\vec{K} \vec{L}^\ddag}\left( \int\!\op_A; \vec{q} \right) } &\leq \sup\left( 1, \frac{\abs{\vec{q}}}{\sup(\mu, \Lambda)} \right)^{g^{(1)}([\op_A]-4,m+n+2l,\abs{\vec{w}})} \\
&\quad\times \sum_{T \in \mathcal{T}_{m+n}} \mathsf{G}^{T,\vec{w}}_{\vec{K} \vec{L}^\ddag; [\op_A]-4}(\vec{q}; \mu, \Lambda) \, \mathcal{P}\left( \ln_+ \frac{\sup\left( \abs{\vec{q}}, \mu \right)}{\sup(\inf(\mu, \eta(\vec{q})), \Lambda)}, \ln_+ \frac{\Lambda}{\mu} \right) \eqend{,}
\end{splitequation}
which only differs in the type of trees summed over, where momentum is again conserved (and thus also involves $\eta$ instead of $\bar{\eta}$).
\end{proposition}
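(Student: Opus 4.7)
The plan is to prove Proposition~\ref{thm_l1i} by induction following the same scheme used for the earlier propositions, ascending in $m+n+2l$ and, within a fixed value, in $l$. The starting point is the flow equation~\eqref{l_intop_flow_hierarchy}: because the insertion has already been integrated over space, there is no source term coupling two functionals each carrying an insertion, and the equation is structurally identical to~\eqref{l_0op_flow_hierarchy}, except that in each quadratic term one factor is an integrated-insertion functional while the other, having no insertion, has already been estimated in Proposition~\ref{thm_l0}. Translation invariance of $\int\!\op_A$ preserves overall momentum conservation, so the natural class of trees is $\mathcal{T}_{m+n}$ with the exceptionality $\eta$ rather than $\bar{\eta}$; the single integrated operator of dimension $[\op_A]\geq 4$ enters as a particular weight factor of dimension $[v_p]=[\op_A]-4\geq 0$, consistent with the tree-dimension formula~\eqref{t_dim_def}.

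According to Table~\ref{table_boundary} the argument splits into an irrelevant case and relevant/marginal cases with vanishing boundary data at $\Lambda=0$ or arbitrary data at $\Lambda=\mu$. In the irrelevant case $[\vec{K}]+[\vec{L}^\ddag]+|\vec{w}|>[\op_A]$ the boundary data vanish at $\Lambda=\Lambda_0$, and one integrates the flow equation downward from $\Lambda_0$ to $\Lambda$. The tadpole term is bounded using~\eqref{prop_abl} and the inductive hypothesis at one lower order in $\hbar$, with the Gaussian factor $\mathe^{-|p|^2/(2\Lambda^2)}$ performing the loop momentum integration exactly as in Proposition~\ref{thm_l0}. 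The quadratic term is controlled by the fusion inequality~\eqref{gw_fused_2_est}: when the tree estimating $\mathcal{L}^{\lambda,\Lambda_0,l'}(\int\!\op_A)$ is fused with the tree estimating the zero-insertion factor, the compensating factor $\sup(|k|,\lambda)^{[v_M]+[v_N]-4}$ absorbs the propagator weight coming from $\partial_\lambda C^{\lambda,\Lambda_0}_{MN}(k)$. The remaining $\lambda$-integration is performed using the irrelevant-tree inequality~\eqref{t_irr_ineq2}; this yields both the polynomial of logarithms and, via the standard telescoping estimate, the large-momentum prefactor $\sup(1,|\vec{q}|/\sup(\mu,\Lambda))^{g^{(1)}([\op_A]-4,m+n+2l,|\vec{w}|)}$. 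In the relevant and marginal ranges one first uses the Taylor formula with integral remainder about $\vec{q}=0$ at fixed $\Lambda$: the remainder is irrelevant and hence already controlled by the previous case, while the arbitrary constants fixed by the renormalisation conditions only contribute to the polynomial $\mathcal{P}$. Integrating the flow equation upward from $\Lambda=0$ or $\Lambda=\mu$ and estimating the right-hand side as above, now using the relevant-tree inequality~\eqref{t_rel_ineq1} and, when a particular weight factor is present, the supplementary inequality~\eqref{t_rel_ineq3}, closes the induction in these cases.

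The principal technical obstacle, as already in Proposition~\ref{thm_l1}, is the combinatorial bookkeeping of the large-momentum exponent. One has to check that when two trees are fused at the given loop order the exponent $g^{(1)}([\op_A]-4,m_1+n_1+2l',|\vec{u}|)$ carried by the insertion-bearing factor combines with the insertion-free factor (which carries no such exponent) to produce at most $g^{(1)}([\op_A]-4,m+n+2l,|\vec{w}|)$, and similarly that the tadpole step, which effectively replaces $r$ by $r+2$ at the preceding loop order, is absorbed by the monotonicity property~\eqref{gs_prop_1}. The subadditivity~\eqref{gs_prop_3} and the trade-off~\eqref{gs_prop_2} between powers of momenta and momentum derivatives then handle the redistribution of derivatives in the standard way. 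Once these consistency checks are carried out, the induction closes and the bound~\eqref{bound_l1i} follows.
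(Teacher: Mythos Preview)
Your proposal is essentially correct and follows the same approach as the paper, which treats Proposition~\ref{thm_l1i} in parallel with Propositions~\ref{thm_l0} and~\ref{thm_l1} in Sections~\ref{sec_bounds_rhs}--\ref{sec_bounds_relvan}. Two small imprecisions are worth flagging. First, the ordering in the relevant/marginal case is reversed relative to the paper: one first integrates the flow equation at \emph{zero} momentum (upward from $\Lambda=0$ or from $\Lambda=\mu$, depending on whether $[\vec{K}]+[\vec{L}^\ddag]+|\vec{w}|<4$ or $\geq 4$), and only then applies the Taylor formula to extend to general momenta; for the strictly relevant subcase $[\vec{K}]+[\vec{L}^\ddag]+|\vec{w}|<4$ this requires the carefully constructed momentum paths of Section~\ref{sec_bounds_relvan}, not just the straight path. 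Second, property~\eqref{gs_prop_3} is not needed here, since the quadratic term in~\eqref{l_intop_flow_hierarchy} pairs the insertion-bearing functional with an insertion-free one carrying no large-momentum exponent; only the monotonicity~\eqref{gs_prop_1} and the derivative trade-off~\eqref{gs_prop_2} are used.
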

We recall that functionals with integrated insertions are \emph{not} defined by first taking a functional with a non-integrated insertion of $\op_A(x)$ and then integrating over $x$ as done previously~\cite{hollandhollands2015a}, where one needs to show in addition that the integral over $x$ is convergent for multiple insertions, but are defined in their own right by the corresponding flow equation and boundary conditions, as suggested by the notation. Since the boundary conditions for non-integrated and integrated composite operators are distinct (see Table~\ref{table_boundary}), this has the unfortunate side effect that
\begin{equation}
(2\pi)^4 \delta\left( \sum_{i=1}^{m+n} q_i \right) \mathcal{L}^{\Lambda, \Lambda_0, l}_{\vec{K} \vec{L}^\ddag}\left( \int\!\op_A; \vec{q} \right) \neq \int \mathcal{L}^{\Lambda, \Lambda_0, l}_{\vec{K} \vec{L}^\ddag}\left( \op_A(x); \vec{q} \right) \total^4 x
\end{equation}
(we recall that the functionals with an integrated insertion are defined with the momentum-conserving $\delta$ taken out~\eqref{func_ft_iop}). However, we have
\begin{proposition}
\label{thm_l1i_d}
For any composite operator $\op_A$ of dimension $[\op_A] \geq 4$, there exists a uniquely defined composite operator $\tilde{\op}_A = \sum_k \op_{A,k}$ of lower or equal dimension $[\op_{A,k}] \leq [\op_A]$ and of the same ghost number, such that for all multiindices $\vec{w}$, at each order $l$ in perturbation theory and for an arbitrary number $m$ of external fields $\vec{K}$ and $n$ antifields $\vec{L}^\ddag$ we have
\begin{equation}
\label{func_1iop_int}
(2\pi)^4 \delta\left( \sum_{i=1}^{m+n} q_i \right) \mathcal{L}^{\Lambda, \Lambda_0, l}_{\vec{K} \vec{L}^\ddag}\left( \int\!\op_A; \vec{q} \right) = \int \mathcal{L}^{\Lambda, \Lambda_0, l}_{\vec{K} \vec{L}^\ddag}\left( \tilde{\op}_A(x); \vec{q} \right) \total^4 x \eqend{.}
\end{equation}
\end{proposition}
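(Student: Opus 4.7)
The plan is to exploit the translation invariance shift property~\eqref{func_sop_shift}, which gives
\begin{equation*}
\int \mathcal{L}^{\Lambda, \Lambda_0, l}_{\vec{K} \vec{L}^\ddag}\left( \op(x); \vec{q} \right) \total^4 x = (2\pi)^4 \delta\!\left(\sum_{i=1}^{m+n} q_i\right) \mathcal{L}^{\Lambda, \Lambda_0, l}_{\vec{K} \vec{L}^\ddag}\left( \op(0); \vec{q} \right) \Big\vert_{\sum q_i = 0} \eqend{,}
\end{equation*}
to reduce the proposition to the existence and uniqueness of a composite operator $\tilde{\op}_A$ satisfying $\mathcal{L}^{\Lambda,\Lambda_0,l}_{\vec{K}\vec{L}^\ddag}(\int\!\op_A; \vec{q}) = \mathcal{L}^{\Lambda,\Lambda_0,l}_{\vec{K}\vec{L}^\ddag}(\tilde{\op}_A(0); \vec{q})$ whenever momentum is conserved. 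The non-trivial content then lies in matching two functionals with a priori different boundary conditions (cf.\ Table~\ref{table_boundary}).

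First I would verify that, restricted to the surface $\sum q_i = 0$, the flow equation~\eqref{l_intop_flow_hierarchy} for $\mathcal{L}(\int\!\op_A;\vec{q})$ coincides termwise with the $s=1$ instance of~\eqref{l_sop_flow_hierarchy} for $\mathcal{L}(\op_A(0);\vec{q})$: momentum is conserved through the single special vertex at $x=0$ (since it carries no incoming external momentum for a functional integrated against $1$), and the tree-like source term with multiple insertions does not contribute. Consequently, at the inductive stage $(m+n+2l,l)$, the difference of the two functionals on the momentum-conservation surface is annihilated by $\partial_\Lambda$ provided the corresponding lower-order differences already vanish, so it is a $\Lambda$-independent polynomial in the external momenta of degree bounded by the irrelevant-counterterm boundary conditions. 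It therefore suffices to arrange matching boundary values at a single value of $\Lambda$.

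Next I would invoke the one-to-one correspondence between boundary conditions at $\Lambda = 0$ and at $\Lambda = \mu$ explained in Subsection~\ref{sec_pert_theory} (via the Taylor formula with integral remainder and downward integration of the flow equation). The boundary data for $\mathcal{L}(\op(0);\vec q)$ consists of arbitrary values of $\partial^w\mathcal{L}^{\mu,\Lambda_0,l}_{\vec{K}\vec{L}^\ddag}(\op;\vec 0)$ for all $[\vec K]+[\vec L^\ddag]+|w| \leq [\op]$, while the boundary data for $\mathcal{L}(\int\!\op_A;\vec q)$ consists of arbitrary values of $\partial^w\mathcal{L}^{\mu,\Lambda_0,l}_{\vec{K}\vec{L}^\ddag}(\int\!\op_A;\vec 0)$ for $4 \leq [\vec K]+[\vec L^\ddag]+|w| \leq [\op_A]$ together with the requirement that $\partial^w\mathcal{L}^{0,\Lambda_0,l}_{\vec{K}\vec{L}^\ddag}(\int\!\op_A;\vec 0) = 0$ for $[\vec K]+[\vec L^\ddag]+|w| < 4$. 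I would then define $\tilde{\op}_A$ order by order in $\hbar$ by keeping its marginal/relevant boundary conditions at $\Lambda = \mu$ in the range $4 \leq [\vec K]+[\vec L^\ddag]+|w| \leq [\op_A]$ identical to those prescribed for $\int\!\op_A$, and fixing its strictly relevant boundary conditions at $\Lambda = \mu$ (those with $[\vec K]+[\vec L^\ddag]+|w| < 4$) to the unique values whose downward flow yields vanishing boundary data at $\Lambda = 0$. The resulting shift from $\op_A$ to $\tilde{\op}_A$ is then realised by adding local counterterm operators $\op_{A,k}$ of dimension $\leq [\op_A]$ and of the same ghost number, since the space of admissible boundary conditions is in bijection with the finite-dimensional space of polynomial local functionals of fields and antifields with these gradings.

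The principal obstacle I anticipate is organising the construction consistently with the induction on $(m+n+2l,l)$ of Subsection~\ref{sec_pert_theory}: the required counterterms $\op_{A,k}$ at order $\hbar^l$ depend on already fixed data at lower orders through the non-linear source term in~\eqref{l_sop_flow_hierarchy}, and one has to check that at each inductive step the strictly relevant boundary values at $\Lambda=0$ needed to match the integrated-insertion boundary conditions do indeed correspond to local counterterms of dimension $\leq [\op_A]$ with the correct ghost number. This follows because differences of boundary conditions at fixed loop order and fixed $(m,n)$ are supported on monomials $\partial^w\phi_{\vec K}\phi^\ddag_{\vec L}$ of bounded dimension and fixed ghost number, and because the flow integration respects these gradings. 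Uniqueness of $\tilde{\op}_A$ is immediate from the bijection between its boundary data and the relevant/marginal part of its local representative.
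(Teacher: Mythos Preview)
Your proposal is correct and follows essentially the same route as the paper: use the shift property~\eqref{func_sop_shift} to reduce to matching $\mathcal{L}(\int\!\op_A;\vec{q})$ with $\mathcal{L}(\tilde{\op}_A(0);\vec{q})$ on the momentum-conservation surface, observe that both obey the same linear flow equation, and then fix $\tilde{\op}_A$ by matching boundary conditions at $\Lambda=\mu$, $\vec{q}=\vec{0}$. The paper's proof is considerably terser---it simply states that the solution of the linear flow is unique once boundary conditions are imposed, evaluates $\mathcal{L}(\int\!\op_A;\vec{0})$ at $\Lambda=\mu$, and reads off the $\op_{A,k}$ by ascending in their dimension---whereas you spell out the split into the ranges $[\vec K]+[\vec L^\ddag]+|w|<4$ and $4\leq[\vec K]+[\vec L^\ddag]+|w|\leq[\op_A]$ and the use of the $\Lambda=0\leftrightarrow\Lambda=\mu$ correspondence. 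One minor remark: what you call the ``non-linear source term'' in~\eqref{l_sop_flow_hierarchy} is, for $s=1$, the second line (linear in the insertion, coupling to the functional without insertion at lower order); your worry about the inductive organisation is legitimate but is exactly what the paper's phrase ``the solution of this equation is unique'' subsumes.
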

Furthermore, for a proper choice of boundary conditions the functionals with one operator insertion fulfil
\begin{proposition}
\label{thm_lowenstein_1}
For all multiindices $\vec{w}$, at each order $l$ in perturbation theory, for an arbitrary number $m$ of external fields $\vec{K}$ and $n$ antifields $\vec{L}^\ddag$ and for any composite operator $\op_A$ and any multiindex $a$, there exist boundary conditions (\eg, the ones given in equation~\eqref{func_1op_bphz}) such that
\begin{splitequation}
\label{lowenstein_1}
\partial^a_x \mathcal{L}^{\Lambda, \Lambda_0, l}_{\vec{K} \vec{L}^\ddag}\left( \op_A(x); \vec{q} \right) = \mathcal{L}^{\Lambda, \Lambda_0, l}_{\vec{K} \vec{L}^\ddag}\left( \partial^a_x \op_A(x); \vec{q} \right) \eqend{.}
\end{splitequation}
\end{proposition}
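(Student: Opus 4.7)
The plan is to prove~\eqref{lowenstein_1} by induction on $m+n+2l$ (with secondary induction on $l$), following the scheme used throughout Section~\ref{sec_bounds}. Using the shift property~\eqref{func_sop_shift} with $y = x$, the left-hand side equals $\partial^a_x \mathcal{L}(\op_A(x); \vec{q}) = \bigl(-\mathi \textstyle\sum_{i=1}^{m+n} q_i\bigr)^a \mathcal{L}(\op_A(x); \vec{q})$, and similarly shifting the right-hand side reduces the claim to showing that
\begin{equation*}
D^l_{\vec{K}\vec{L}^\ddag}(\vec{q}) \equiv \mathcal{L}^{\Lambda,\Lambda_0,l}_{\vec{K}\vec{L}^\ddag}\bigl(\partial^a_x \op_A(0); \vec{q}\bigr) - \bigl(-\mathi\textstyle\sum_{i=1}^{m+n} q_i\bigr)^a \mathcal{L}^{\Lambda,\Lambda_0,l}_{\vec{K}\vec{L}^\ddag}\bigl(\op_A(0); \vec{q}\bigr)
\end{equation*}
vanishes identically.

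The flow equation~\eqref{l_sop_flow_hierarchy} is linear in the insertion functional. Multiplying its form for $\mathcal{L}(\op_A)$ by $(-\mathi\sum q_i)^a$, one verifies that each source term factors correctly because the total external momentum at each source equals $\sum_{i=1}^{m+n} q_i$: in the loop term, $(p,-p,\vec{q})$ sum to $\sum q_i$, while in the quadratic term momentum conservation in the no-insertion factor forces $k = \sum_{i\in\sigma}q_i+\sum_{j\in\rho}q_j$, so that the insertion factor carries total momentum $k + \sum_{i\in\tau} q_i + \sum_{j\in\varsigma} q_j = \sum q_i$. By the inductive hypothesis applied to each source---at strictly lower loop order (loop term and quadratic term with $l'>0$), or at the same loop order with strictly fewer external fields (quadratic term with $l'=0$, where the no-insertion tree-level factor must have at least three external fields)---each source term coincides with the corresponding source in the flow equation for $\mathcal{L}(\partial^a_x \op_A)$. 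Hence $D^l$ satisfies the homogeneous version of the flow equation.

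It remains to verify that the boundary data for $D^l$ vanish. At $\Lambda = \Lambda_0$, the irrelevance condition of Table~\ref{table_boundary} forces $\mathcal{L}^{\Lambda_0,\Lambda_0,l}(\op_A; \vec{q})$ to be polynomial in $\vec{q}$ of degree at most $[\op_A]-[\vec{K}]-[\vec{L}^\ddag]$; multiplication by $(-\mathi\sum q_i)^a$ raises the degree by at most $|a|$, matching the irrelevance requirement for $\mathcal{L}^{\Lambda_0,\Lambda_0,l}(\partial^a_x \op_A)$, so the irrelevant part of $D^l$ vanishes at $\Lambda = \Lambda_0$. The arbitrary relevant and marginal Taylor coefficients at $\Lambda = \mu$, $\vec{q}=\vec{0}$ for $|w|+[\vec{K}]+[\vec{L}^\ddag] \leq [\op_A]+|a|$ are then \emph{defined} by
\begin{equation}
\label{func_1op_bphz}
\partial^w \mathcal{L}^{\mu,\Lambda_0,l}_{\vec{K}\vec{L}^\ddag}\bigl(\partial^a_x\op_A(0); \vec{0}\bigr) \equiv \partial^w \Bigl[\bigl(-\mathi\textstyle\sum q_i\bigr)^a \mathcal{L}^{\mu,\Lambda_0,l}_{\vec{K}\vec{L}^\ddag}\bigl(\op_A(0); \vec{q}\bigr)\Bigr]\big|_{\vec{q}=\vec{0}} \eqend{,}
\end{equation}
so that the corresponding Taylor coefficients of $D^l$ also vanish. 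Uniqueness of the solution to the flow equation under these mixed boundary conditions (as discussed in Subsection~\ref{sec_pert_theory}) then gives $D^l \equiv 0$.

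The main subtle point is the self-consistency of~\eqref{func_1op_bphz} under composition of derivatives, i.e., that applying the prescription once with multiindex $a = a_1+a_2$ yields the same boundary data as applying it iteratively first with $a_1$ and then with $a_2$; this is immediate from the associativity of polynomial multiplication by $(-\mathi\sum q_i)^a$ and of Taylor extraction at $\vec{q}=\vec{0}$. The base case $l=0$ is immediate, since the tree-level functional is simply the momentum-space symbol of $\op_A$ and the identity reduces to the Leibniz rule on monomials.
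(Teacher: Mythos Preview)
Your argument is correct and follows the same route as the paper: both sides of~\eqref{lowenstein_1} satisfy the same linear flow equation for a single insertion (the paper simply calls this ``obvious'', while you spell out why the factor $(-\mathi\sum q_i)^a$ passes correctly through each source term), so the claim reduces to matching the two sets of boundary data, after which uniqueness finishes the proof.

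The one substantive difference is in how the boundary data are handled. The paper fixes the BPHZ conditions~\eqref{func_1op_bphz} \emph{independently} for $\op_A$ and for each monomial $\op_{A;\vec{u}}$ in the Leibniz expansion $\partial^a_x\op_A = \sum_{\vec{u}}\frac{a!}{\vec{u}!}\op_{A;\vec{u}}$, and then performs the short combinatorial check that the induced relevant/marginal data at $\Lambda=\mu$, $\vec{q}=\vec{0}$ coincide on the two sides. You instead \emph{define} the relevant/marginal data for $\partial^a_x\op_A$ by your displayed formula so that $D^l$ vanishes there by construction. This establishes the existence assertion of the proposition, but it does not by itself verify the parenthetical ``e.g.''\ example; the missing observation is that your formula, fed with BPHZ input for $\op_A$, reproduces precisely the BPHZ conditions for $\partial^a_x\op_A$---which is exactly the computation the paper carries out. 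Your approach is slightly more general (it works starting from any $\Lambda_0$-independent conditions on $\op_A$), while the paper's is more concrete about the canonical choice.

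A minor point: you should not reuse the label \texttt{func\_1op\_bphz} for your displayed definition, since that label already names the paper's BPHZ conditions, which is a different formula.
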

Since the right-hand side of equation~\eqref{func_1iop_int} thus vanishes when $\tilde{\op}_A = \partial^a_x \tilde{\op}_B$, and $\tilde{\op}_A$ depends obviously  linearly on $\op_A$, it is seen to be consistent to choose boundary conditions for the functionals with one integrated operator insertion such that
\begin{equation}
\label{func_1iop_d}
\mathcal{L}^{\Lambda, \Lambda_0, l}_{\vec{K} \vec{L}^\ddag}\left( \int\!\op_A; \vec{q} \right) = 0
\end{equation}
if $\op_A(x) = \partial^a_x \op_B(x)$ for a composite operator $\op_B$ and a multiindex $a > 0$.

For functionals with more than one insertion of a composite operator, we have to distinguish between functionals with one and zero integrated insertions. First we note that we can always choose the last operator insertion to be at $x_s = 0$, since the general case can be recovered using the shift property~\eqref{func_sop_shift}, and this will be understood from now on even if not shown explicitly. We then prove
\begin{proposition}
\label{thm_lsa}
At each order $l$ in perturbation theory and for an arbitrary number $m$ of external fields $\vec{K}$ and $n$ antifields $\vec{L}^\ddag$, and for $\Lambda \geq \mu$, the functionals with $s \geq 2$ insertions of arbitrary (non-integrated) composite operators $\op_{A_i}$ can be written in the form
\begin{equation}
\label{bound_ls_lambdamu}
\partial^\vec{w} \mathcal{L}^{\Lambda, \Lambda_0, l}_{\vec{K} \vec{L}^\ddag}\left( \bigotimes_{k=1}^s \op_{A_k}(x_k); \vec{q} \right) = \sum_{\vec{a} > 0, \abs{\vec{a}} = D+[\op_\vec{A}]} \partial^\vec{a}_{\vec{x}} \partial^\vec{w} \mathcal{K}^{\Lambda, \Lambda_0, l}_{\vec{K} \vec{L}^\ddag; D}\left( \bigotimes_{k=1}^s \op_{A_k}(x_k); \vec{q} \right) \eqend{,}
\end{equation}
where $a$ is a multiindex not involving the last coordinate $x_s$,
\begin{equation}
[\op_\vec{A}] \equiv \sum_{k=1}^s [\op_{A_k}] \eqend{,}
\end{equation}
the parameter $D \in \{0,1\}$ for $m+n > 0$ and $D = 1$ for $m+n=0$. The kernel $\mathcal{K}^{\Lambda, \Lambda_0, l}$ satisfies the bound
\begin{splitequation}
\label{bound_ks_lambdamu}
\abs{ \partial^\vec{w} \mathcal{K}^{\Lambda, \Lambda_0, l}_{\vec{K} \vec{L}^\ddag; D}\left( \bigotimes_{k=1}^s \op_{A_k}(x_k); \vec{q} \right) } &\leq \prod_{i=1}^{s-1} \left( 1 + \ln_+ \frac{1}{\mu \abs{x_i}} \right) \sup\left( 1, \frac{\abs{\vec{q}}}{\Lambda} \right)^{g^{(s)}([\op_\vec{A}],m+n+2l,\abs{\vec{w}})} \\
&\quad\times \sum_{T^* \in \mathcal{T}^*_{m+n}} \mathsf{G}^{T^*,\vec{w}}_{\vec{K} \vec{L}^\ddag; -D}(\vec{q}; \mu, \Lambda) \, \mathcal{P}\left( \ln_+ \frac{\sup\left( \abs{\vec{q}}, \mu \right)}{\Lambda}, \ln_+ \frac{\Lambda}{\mu} \right) \eqend{.}
\end{splitequation}
\end{proposition}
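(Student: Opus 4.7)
The plan is to prove Proposition~\ref{thm_lsa} by downward integration of the perturbative flow equation~\eqref{l_sop_flow_hierarchy} from $\Lambda = \Lambda_0$ to the target value $\Lambda \geq \mu$, using the vanishing boundary conditions~\eqref{l_sop_lambda0} which hold for all $s \geq 2$. In this regime no renormalisation conditions at smaller scales enter, so the multi-insertion functionals behave purely irrelevantly in the flow-equation sense. The argument is organised as a triple induction: outermost on the number of insertions $s$ (with Proposition~\ref{thm_l1} serving as the base case $s=2$), inside that on $m+n+2l$ according to the scheme of Subsection~\ref{sec_pert_theory}, and innermost on $\abs{\vec{w}}$.

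First I would establish the representation~\eqref{bound_ls_lambdamu} as a sum of $\partial^{\vec{a}}_{\vec{x}}$ derivatives acting on a kernel. This is obtained by Fourier-inverting the definition~\eqref{func_ft_sop} with $x_s = 0$ and writing each monomial $q_k^{a_k}$ ($k<s$) in the momentum-space integrand as $(-\mathi)^{\abs{a_k}} \partial^{a_k}_{x_k}$. Taking $\abs{\vec{a}} = D + [\op_{\vec A}]$ such derivatives outside drops the effective dimension of the kernel by the corresponding amount, which is precisely what is needed so that the associated trees $T^*$ satisfy $[T^*] \leq -D$ and are amenable to the irrelevant-tree inequality~\eqref{t_irr_ineq2}. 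The choice $D \in \{0,1\}$ reflects that in the presence of external basic fields the kernel is already irrelevant at $D=0$, whereas for $m+n=0$ one needs $D=1$ to gain an additional inverse power of $\Lambda$. The logarithms $\prod_{i<s}(1+\ln_+ 1/(\mu\abs{x_i}))$ then emerge upon partial integration in the inverse Fourier transform once the kernel has been bounded pointwise in momentum space.

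For the inductive step I would integrate the flow equation over $\lambda \in [\Lambda,\Lambda_0]$ and estimate the three terms on its right-hand side separately: the one-loop term at one order lower in $\hbar$, the quadratic coupling to insertion-free CACs (bounded by Proposition~\ref{thm_l0}), and the source terms mixing two functionals of strictly fewer insertions each (bounded by Proposition~\ref{thm_l1} at $s=2$ and by the present proposition at the lower value of $s$ otherwise). For each term I would combine the tree fusion estimates~\eqref{gw_fused_1_est}--\eqref{gw_fused_2_est} with the covariance bound~\eqref{prop_abl} and the tree inequality~\eqref{t_irr_ineq2} in order to perform both the loop integration over the internal momentum and the $\lambda$-integration. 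The loop-order dependent factor $\sup(1,\abs{\vec{q}}/\Lambda)^{g^{(s)}}$ propagates consistently thanks to the composition properties~\eqref{gs_prop_1}--\eqref{gs_prop_3} of $g^{(s)}$: property~\eqref{gs_prop_3} governs the fusion of two multi-insertion bounds through the last source term, while~\eqref{gs_prop_1} and~\eqref{gs_prop_2} respectively absorb the effect of an extra loop and of momentum derivatives falling on the large-momentum factor.

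The hardest part is to ensure that the derivative representation~\eqref{bound_ls_lambdamu} and the tree-weighted bound~\eqref{bound_ks_lambdamu} close tightly under the induction. Two subtle bookkeeping issues deserve particular care. First, when fusing two factor functionals through the last line of~\eqref{l_sop_flow_hierarchy}, the fusion rule~\eqref{gw_fused_1_est} adds their particular-vertex dimensions $[v_{p1}]+[v_{p2}]$, and one must check that the resulting dimension matches the $-D$ required by the hypothesis at the new step; the Gaussian support $\exp(-\abs{p}^2/(2\lambda^2))$ of $\partial_\lambda C^{\lambda,\Lambda_0}$ from~\eqref{prop_abl} is essential here to integrate over the loop momentum even in the presence of momentum derivatives acting on $p$. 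Second, each flow-equation step contributes one additional logarithm, and one must verify that the degree of $\mathcal{P}$ grows only by a controlled amount, so that the final polynomial has degree bounded by a function of $m+n+2l$, $\abs{\vec{w}}$, $s$ and the $[\op_{A_k}]$. Once these accounting issues are resolved, the inductive step closes in parallel to the massless scalar analyses of Refs.~[\onlinecite{guidakopper2011,hollandhollands2015a}].
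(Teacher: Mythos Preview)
Your overall induction scheme and the handling of the linear and quadratic terms are in line with the paper, but there is a genuine gap in how you generate the $x$-derivatives in the representation~\eqref{bound_ls_lambdamu}, and this gap is precisely where the proof lives or dies.

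You propose to obtain the derivatives by ``Fourier-inverting~\eqref{func_ft_sop} \ldots and writing each monomial $q_k^{a_k}$ as $(-\mathi)^{\abs{a_k}}\partial_{x_k}^{a_k}$''. But~\eqref{func_ft_sop} is a Fourier transform in the \emph{external-field} positions; the operator-insertion points $x_k$ appear only through the shift phase~\eqref{func_sop_shift}, and there are no monomials $q_k^{a_k}$ sitting in the integrand to convert. More importantly, even if one manufactured such a representation a posteriori, it would not help with the real obstruction: the source term in~\eqref{l_sop_flow_hierarchy} couples two functionals whose fused tree has particular dimension $[\op_{A_\alpha}]+[\op_{A_\beta}]\geq 0$, so naively it scales like a \emph{positive} power of $\lambda$ and the $\int_\Lambda^{\Lambda_0}\!\total\lambda$ integral diverges. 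Writing $x$-derivatives outside does not by itself produce the extra $\lambda^{-\abs{a}}$ needed for convergence. The paper's mechanism is the identity
\[
\mathe^{-\mathi x_1(p+k)} \;=\; (-\mathi)^{\abs{a}}\,\partial_{x_1}^a\,\partial_p^a \prod_{\alpha=1}^4 \mathcal{E}_{a^\alpha}\!\big(x_1^\alpha(p+k)^\alpha\big)\,,
\]
applied inside the source integrand (equation~\eqref{func_sop_genderivs}), which generates $x$-derivatives and $p$-derivatives \emph{simultaneously}. After integrating the $p$-derivatives by parts onto the covariance and single-insertion functionals, one gains the required $\Lambda^{-\abs{a}}$; the bound~\eqref{expint_bounds} on the leftover $\mathcal{E}_k$'s is exactly what produces the factors $\prod_{i<s}(1+\ln_+ 1/(\mu\abs{x_i}))$. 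Your sentence ``the logarithms emerge upon partial integration in the inverse Fourier transform'' does not capture this; the logarithms come from the $\mathcal{E}_k$ remainder, and the partial integration is in the \emph{loop} momentum $p$, not in a Fourier inversion.

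A second, smaller omission: you do not mention the additional layer of induction on the parameter $D$. When both subfunctionals in the source term carry $\geq 2$ insertions, fusing two trees of particular dimension $-D$ each yields $-2D$, and to recover $-D$ at the new step one must first have established the case $D=0$ (for $m+n>0$) so that one of the two factors can be bounded with $D=0$. The paper makes this explicit: first $m+n>0,\,D=0$; then $m+n>0,\,D=1$; finally $m+n=0,\,D=1$.
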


For $\Lambda < \mu$, we show
\begin{proposition}
\label{thm_lsb}
At each order $l$ in perturbation theory and for an arbitrary number $m$ of external fields $\vec{K}$ and $n$ antifields $\vec{L}^\ddag$, and for $\Lambda < \mu$, the functionals with $s \geq 2$ insertions of arbitrary (non-integrated) composite operators $\op_{A_i}$ can be written in the form
\begin{equation}
\label{bound_ls_lambda}
\partial^\vec{w} \mathcal{L}^{\Lambda, \Lambda_0, l}_{\vec{K} \vec{L}^\ddag}\left( \bigotimes_{k=1}^s \op_{A_k}(x_k); \vec{q} \right) = \sum_{\abs{\vec{a}} \leq D+[\op_\vec{A}]} \mu^{D+[\op_\vec{A}]-\abs{\vec{a}}} \partial^\vec{a}_{\vec{x}} \partial^\vec{w} \mathcal{K}^{\Lambda, \Lambda_0, l}_{\vec{K} \vec{L}^\ddag}\left( \bigotimes_{k=1}^s \op_{A_k}(x_k); \vec{q} \right) \eqend{,}
\end{equation}
where the kernel $\mathcal{K}^{\Lambda, \Lambda_0, l}$ satisfies the bounds
\begin{splitequation}
\label{bound_ks_lambda}
\abs{ \partial^\vec{w} \mathcal{K}^{\Lambda, \Lambda_0, l}_{\vec{K} \vec{L}^\ddag}\left( \bigotimes_{k=1}^s \op_{A_k}(x_k); \vec{q} \right) } &\leq \prod_{i=1}^{s-1} \left( 1 + \ln_+ \frac{1}{\mu \abs{x_i}} \right) \sup\left( 1, \frac{\abs{\vec{q}}}{\mu} \right)^{g^{(s)}([\op_\vec{A}],m+n+2l,\abs{\vec{w}})} \\
&\quad\times \sum_{T^* \in \mathcal{T}^*_{m+n}} \mathsf{G}^{T^*,\vec{w}}_{\vec{K} \vec{L}^\ddag; -D}(\vec{q}; \mu, \Lambda) \, \mathcal{P}\left( \ln_+ \frac{\sup\left( \abs{\vec{q}}, \mu \right)}{\sup(\inf(\mu,\bar{\eta}(\vec{q})),\Lambda)} \right) \eqend{.}
\end{splitequation}
\end{proposition}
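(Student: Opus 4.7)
The plan is to obtain Proposition~\ref{thm_lsb} by propagating the bounds of Proposition~\ref{thm_lsa} downward in the IR cutoff $\Lambda$. I proceed by a nested induction: ascending first on $l$, then (at fixed $l$) on $m+n+2l$, and finally on $s$. The $s=1$ base case is provided by Proposition~\ref{thm_l1} after rewriting the functional in kernel form, while for $s\ge 2$ the initial datum at $\Lambda=\mu$ is supplied by Proposition~\ref{thm_lsa}: at $\lambda=\mu$ the two kernel bounds~\eqref{bound_ks_lambdamu} and~\eqref{bound_ks_lambda} coincide, since $\sup(\inf(\mu,\bar\eta),\mu)=\mu$, and the multi-indices with $\abs{\vec{a}}=D+[\op_\vec{A}]$ in~\eqref{bound_ls_lambdamu} form a subfamily of the ones allowed in~\eqref{bound_ls_lambda}.

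For the inductive step I use the integrated flow equation
\begin{equation*}
\mathcal{L}^{\Lambda,\Lambda_0,l}_{\vec{K}\vec{L}^\ddag}\!\left(\bigotimes_{k=1}^{s}\op_{A_k};\vec{q}\right)
=\mathcal{L}^{\mu,\Lambda_0,l}_{\vec{K}\vec{L}^\ddag}\!\left(\bigotimes_{k=1}^{s}\op_{A_k};\vec{q}\right)
-\int_\Lambda^\mu \partial_\lambda \mathcal{L}^{\lambda,\Lambda_0,l}_{\vec{K}\vec{L}^\ddag}\!\left(\bigotimes_{k=1}^{s}\op_{A_k};\vec{q}\right)\total\lambda \eqend{,}
\end{equation*}
and bound each of the three terms appearing on the right-hand side of the hierarchy~\eqref{l_sop_flow_hierarchy}. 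The tadpole and the pure quadratic term are treated as in the proof of Proposition~\ref{thm_lsa}: the decay estimate~\eqref{prop_abl} of $\partial_\lambda C^{\lambda,\Lambda_0}$ localises loop and internal momenta at scale $\lambda$, the insertion-free factor is controlled by Proposition~\ref{thm_l0}, and the two resulting trees are combined via the fusion estimate~\eqref{gw_fused_2_est}. The source term, which splits $\{1,\dots,s\}=\alpha\sqcup\beta$ with $1\le\abs\alpha,\abs\beta<s$, is handled analogously, but with both factors now carrying non-trivial insertion bounds supplied by the induction hypothesis on $s$; the combined large-momentum exponent is controlled by property~\eqref{gs_prop_3} of $g^{(s)}$, and the two families of position logarithms are merged consistently after relabelling so that the basepoint $x_s$ is shared.

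The kernel representation~\eqref{bound_ls_lambda} is then recovered by repeated integration by parts against the Fourier factors in~\eqref{func_ft_sop}: each power of a momentum $q_i$ with $i<s$ appearing in the a priori bound is traded either for a derivative $\partial_{x_i}$ (producing a summand with some $\abs{\vec{a}}$ in the required range) or for a power of $\mu$ (generating the prefactor $\mu^{D+[\op_\vec{A}]-\abs{\vec{a}}}$). Up to $D+[\op_\vec{A}]$ such powers can be extracted, which is precisely the threshold at which the residual momentum integral defining $\mathcal{K}$ becomes absolutely convergent uniformly in $\Lambda$. The IR-sensitive factor $\prod_{i=1}^{s-1}\bigl(1+\ln_+(\mu\abs{x_i})^{-1}\bigr)$ then arises from the remaining marginal integrations via the standard estimate $\int_{\abs{k}\le\mu}\abs{k}^{-4}\total^4 k \sim \ln$, its multiplicative structure in $i$ mirroring that of the exponential $\prod_i\mathe^{\mathi q_i x_i}$.

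The main obstacle is the simultaneous control, during the $\lambda$-integration and the source-term fusion, of (i) the particular weight $-D$ of the fused tree $T^*$, (ii) the large-momentum exponent, which must not exceed $g^{(s)}([\op_\vec{A}],m+n+2l,\abs{\vec{w}})$, and (iii) the IR polynomial $\mathcal{P}\bigl(\ln_+\sup(\abs{\vec{q}},\mu)/\sup(\inf(\mu,\bar\eta),\Lambda)\bigr)$, whose degree should increase by at most one per integration step and per fusion. The tree inequalities~\eqref{t_irr_ineq1}--\eqref{t_rel_ineq3} are the key tool here: they permit one to re-evaluate each subtree weight either at $\lambda$ or at $\Lambda$ according to whether the associated subtree is irrelevant, marginal or relevant, without losing spurious factors of $\mu/\Lambda$ that would spoil the final bound. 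Carrying out this bookkeeping so that, simultaneously, the dimension of the particular weight, the position-log product and the exponent $g^{(s)}$ all come out as stated is the one delicate step of the argument.
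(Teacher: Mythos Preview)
Your overall skeleton is correct---downward integration from $\Lambda=\mu$ using Proposition~\ref{thm_lsa} as boundary datum, induction in $(l,m+n+2l,s)$, and separate treatment of the linear, quadratic and source contributions of~\eqref{l_sop_flow_hierarchy}. However, the paragraph explaining how the kernel representation~\eqref{bound_ls_lambda} is obtained contains a real gap.

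You claim the $\partial_{\vec x}^{\vec a}$ structure is produced by ``repeated integration by parts against the Fourier factors'', and that the logarithms $\prod_i(1+\ln_+(\mu|x_i|)^{-1})$ arise from marginal momentum integrals of the type $\int_{|k|\le\mu}|k|^{-4}\total^4 k$. Neither is how the argument works in the regime $\Lambda<\mu$, and your proposed mechanism would in fact fail. Generating additional $p$-derivatives in the source term for $\Lambda<\mu$ worsens the IR behaviour: each derivative of a functional with one insertion produces a factor $\sup(\bar\eta_p(\vec q),\Lambda)^{-1}$ from the derivative weight~\eqref{gw_def}, which can be as bad as $\Lambda^{-1}$ and cannot be absorbed. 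The paper is explicit about this point (see the remark preceding~\eqref{bound_x_weight}): for $\Lambda\le\mu$ one does \emph{not} create new $p$-derivatives. Instead, the $x$-derivatives and the $\mu$-powers in~\eqref{bound_ls_lambda} are \emph{inherited}: for subfunctionals with $\ge 2$ insertions directly from the induction hypothesis~\eqref{bound_ls_lambda}, and for subfunctionals with one insertion by converting the particular tree dimension from $[\op_A]$ to $-D$ via~\eqref{particular_weight}, which produces precisely the factor $\mu^{[\op_A]+D}\sup(1,|\vec q|/\mu)^{[\op_A]+D}$ of~\eqref{func_sop_1op_mu}. The position-logarithms are likewise not generated anew: they come through the boundary condition at $\Lambda=\mu$ (where they were produced by the $\mathcal{E}_k$-bounds~\eqref{expint_bounds} in the $\Lambda\ge\mu$ argument), and in the source term for $\Lambda<\mu$ the missing factor $1+\ln_+(\mu|x_{|\alpha|}|)^{-1}\ge 1$ is simply inserted by hand. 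Once this is set up correctly, the $\lambda$-integration is straightforward: the trees are estimated at $\lambda=\Lambda$ via~\eqref{t_rel_ineq3}, and Lemma~\ref{lemma_lambdaint3} handles the remaining integral---there is no delicate ``simultaneous bookkeeping'' of the kind you describe.
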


If all $x_i$ are distinct, we can also prove bounds which do not involve any derivatives and which are given by
\begin{proposition}
\label{thm_lsc}
At each order $l$ in perturbation theory and for an arbitrary number $m$ of external fields $\vec{K}$ and $n$ antifields $\vec{L}^\ddag$, the functionals with $s \geq 2$ insertions of arbitrary (non-integrated) composite operators $\op_{A_i}$ fulfil the bound
\begin{splitequation}
\label{bound_ls_x}
&\abs{ \partial^\vec{w} \mathcal{L}^{\Lambda, \Lambda_0, l}_{\vec{K} \vec{L}^\ddag}\left( \bigotimes_{k=1}^s \op_{A_k}(x_k); \vec{q} \right) } \leq \mu^{[\op_\vec{A}]+s \epsilon} \sup\left( 1, \frac{\abs{\vec{q}}}{\sup(\mu,\Lambda)} \right)^{g^{(s)}([\op_\vec{A}],m+n+2l,\abs{\vec{w}})} \\
&\qquad\times \sum_{\tau \in \mathscr{T}_s} \mathsf{W}^\tau(x_1,\ldots,x_s) \sum_{T^* \in \mathcal{T}^*_{m+n}} \mathsf{G}^{T^*,\vec{w}}_{\vec{K} \vec{L}^\ddag; -s \epsilon}(\vec{q}; \mu, \Lambda) \, \mathcal{P}\left( \ln_+ \frac{\sup\left( \abs{\vec{q}}, \mu \right)}{\sup(\inf(\mu,\bar{\eta}(\vec{q})),\Lambda)}, \ln_+ \frac{\Lambda}{\mu} \right) \eqend{,}
\end{splitequation}
where the set $\mathscr{T}_s$ and the weight factors $\mathsf{W}^\tau$ are given in Definition~\ref{def_tree_x}, and where $\epsilon \in [0,\infty)$ with $m+n+\epsilon>0$ (\ie, for $m+n > 0$ we may take $\epsilon = 0$, but for $m+n = 0$ we need $\epsilon > 0$).
\end{proposition}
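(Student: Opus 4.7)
The plan is to prove Proposition \ref{thm_lsc} by induction on $s$, leveraging the distributional representation from Proposition \ref{thm_lsb} (in the case $\Lambda<\mu$) and Proposition \ref{thm_lsa} (in the case $\Lambda\geq\mu$). The underlying idea is to convert the $\partial^{\vec{a}}_{\vec{x}}$ derivatives appearing in \eqref{bound_ls_lambda} into spatial weight factors along the edges of a spanning tree $\tau\in\mathscr{T}_s$, by smearing the functional against smooth bump functions localised near the insertion points $x_i$ with radii dictated by the geometry of that tree.

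For the base case $s=2$, translation invariance \eqref{func_sop_shift} reduces the problem to $x_2=0$, and $\mathscr{T}_2$ contains a single topology with weight $\mathsf{W}=\inf(1,\abs{x_1})^{-[\op_{A_1}]-\epsilon}$. Starting from the representation $\mathcal{L}(x_1)=\sum_{\abs{\vec{a}}\leq D+[\op_\vec{A}]}\mu^{D+[\op_\vec{A}]-\abs{\vec{a}}}\partial^{\vec{a}}_{x_1}\mathcal{K}(x_1)$, I would smear against a smooth cutoff $\psi$ supported in the ball of radius $\inf(1,\abs{x_1})/2$ around $x_1$, integrate by parts to transfer all derivatives onto $\psi$ (each producing a factor of order $\inf(1,\abs{x_1})^{-1}$), and absorb the logarithm from \eqref{bound_ks_lambda} using $1+\ln_+(1/y)\leq C_\epsilon\, y^{-\epsilon}$. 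The $\mu^{D+[\op_\vec{A}]-\abs{\vec{a}}}$ prefactor combined with the redistribution of $s\epsilon$ units between the edge and the momentum tree then yields exactly $\mu^{[\op_\vec{A}]+s\epsilon}\inf(1,\abs{x_1})^{-[\op_{A_1}]-\epsilon}$ times the momentum tree weight $\mathsf{G}^{T^*,\vec{w}}_{\vec{K}\vec{L}^\ddag;-s\epsilon}$.

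For $s\geq 3$, I would, given a configuration $(x_1,\ldots,x_s)$, select the tree $\tau\in\mathscr{T}_s$ minimising $\sum_i\inf(1,\abs{x_i-x_{z(i)}})$ and choose bump functions $\psi_i$ supported in pairwise-disjoint balls of radii $r_i=\inf(1,\abs{x_i-x_{z(i)}})/2$ about each $x_i$ for $i<s$. Smearing against $\prod_i\psi_i/\int\psi_i$, applying the representation \eqref{bound_ls_lambda}, and integrating by parts, the derivative $\partial^{a_i}_{x_i}$ falls on $\psi_i$ and contributes a factor $r_i^{-\abs{a_i}}$ per edge $\{i,z(i)\}\in\tau$. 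After distributing the total power $[\op_\vec{A}]+s\epsilon$ across the edges (with the logarithmic factor in \eqref{bound_ks_lambda} converted to $r_i^{-\epsilon}$) one recovers precisely the required weight $\prod_i\inf(1,\abs{x_i-x_{z(i)}})^{-[\op_{A_i}]-\epsilon}$. The remaining momentum-space tree factors and polynomial in logarithms carry over from \eqref{bound_ks_lambda} using the inequalities collected in Section \ref{sec_trees_ineqs}, in particular the irrelevant-tree estimate \eqref{t_irr_ineq2} applied with the shift $[v_p]\to -s\epsilon$.

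The main obstacle I anticipate is the combinatorial distribution of the derivative multiindex $\vec{a}$ across the edges of the chosen tree: the sum in \eqref{bound_ls_lambda} runs over all $\abs{\vec{a}}\leq D+[\op_\vec{A}]$, so terms with low total derivative order must be rebalanced against the $\mu^{D+[\op_\vec{A}]-\abs{\vec{a}}}$ prefactor, exploiting $\mu\leq r_i^{-1}$ when $r_i<1$ and $r_i=1$ otherwise. One also has to verify, uniformly in the choice of $\tau$ and of the bump functions $\psi_i$, that the resulting momentum tree $T^*$ with shifted particular dimension $[v_p]=-s\epsilon$ remains fully reduced in $\mathcal{T}^*_{m+n}$ and still carries the right log-polynomial structure; this is the place where the fusion identity \eqref{gw_fused_1_est} and the marginal/relevant inequalities of Section \ref{sec_trees_ineqs} are invoked iteratively along $\tau$, working outward from the root $x_s$.
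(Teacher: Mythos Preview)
Your route — post-processing the distributional representation of Propositions~\ref{thm_lsa}/\ref{thm_lsb} by smearing against bump functions and integrating by parts — does not deliver the pointwise bound claimed in Proposition~\ref{thm_lsc}. Smearing $\mathcal{L}(\vec{x};\vec{q})$ against $\prod_i\psi_i$ and integrating the $\partial^{\vec{a}}_{\vec{x}}$ by parts bounds the \emph{average} $\int\mathcal{L}(\vec{y};\vec{q})\prod_i\psi_i(y_i)\,\total\vec{y}$, not the value at the centres; since Propositions~\ref{thm_lsa}/\ref{thm_lsb} only bound $\mathcal{K}$ and not its $x$-derivatives, you have no uniform control on the oscillation of $\mathcal{L}$ over the supports of the $\psi_i$ that would let you pass from averaged to pointwise estimates. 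Even setting this aside, the sum $\sum_{\abs{\vec{a}}\leq D+[\op_\vec{A}]}\mu^{D+[\op_\vec{A}]-\abs{\vec{a}}}\partial^{\vec{a}}_{\vec{x}}\mathcal{K}$ does not tie the distribution of derivatives among the coordinates to the individual dimensions $[\op_{A_i}]$: terms with all derivatives on a single $x_i$ produce $r_i^{-(D+[\op_\vec{A}])}$ after integration by parts, which is strictly worse than the tree weight $\prod_j\inf(1,\mu\abs{x_j-x_{z(j)}})^{-[\op_{A_j}]-\epsilon}$ and cannot be compensated by the $\mu$-prefactor. The spanning-tree structure is an output of the flow-equation recursion, not something recoverable from the finished kernel.

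The paper therefore does not start from Propositions~\ref{thm_lsa}/\ref{thm_lsb} at all; it reruns their inductive proof with one modification in the source term of~\eqref{l_sop_flow_hierarchy}. Instead of generating $x$-derivatives via~\eqref{func_sop_genderivs}, one writes $\mathe^{-\mathi x_1(p+k)}=\mathi^{\abs{a}}(\mu x_1^\alpha)^{-\abs{a}}\,\mu^{\abs{a}}\partial^a_p\,\mathe^{-\mathi x_1(p+k)}$ (with $\alpha$ chosen so that $\abs{x_1^\alpha}\geq\abs{x_1}/2$) and integrates the $\partial^a_p$ by parts, obtaining the factor $\inf(1,\mu\abs{x_1})^{-\abs{a}}$ directly in place of $\partial^a_{x_1}$; for $\Lambda<\mu$ no extra $p$-derivatives are introduced, avoiding IR problems. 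Because the number of $p$-derivatives needed at each fusion step is exactly $[\op_{A_i}]$ for the operator being adjoined, the weights $\mathsf{W}^\tau$ of Definition~\ref{def_tree_x} emerge automatically from the recursive structure of the source term, with $\epsilon=1$. Arbitrarily small $\epsilon>0$ is then obtained by replacing the integer-order integration by parts in $p$ with the fractional estimate of Lemma~\ref{lemma_frac}.
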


All these bounds can be proven in the same way: first we bound the right-hand side of the corresponding flow equation using the induction hypothesis, and then we integrate over $\lambda$ using the boundary conditions given in Table~\ref{table_boundary}. Since the necessary estimates are basically the same for all functionals we can do everything together: the right-hand side of the flow equation is bounded in Subsection~\ref{sec_bounds_rhs}, and the integration over $\lambda$ is done in Subsections~\ref{sec_bounds_irr} -- \ref{sec_bounds_relvan}. This thus proves the Propositions~\ref{thm_l0}, \ref{thm_l1} and~\ref{thm_l1i}. Propositions~\ref{thm_l1i_d} and~\ref{thm_lowenstein_1} are then proven in Subsection~\ref{sec_bounds_l1iprops}. Functionals with more than one insertion are sufficiently different that we prefer to treat them only afterwards, in Subsection~\ref{sec_bounds_func_sop}, where Propositions~\ref{thm_lsa}, \ref{thm_lsb} and~\ref{thm_lsc} are proven.

\subsection{The right-hand side}
\label{sec_bounds_rhs}

The bounds that we want to obtain for the right-hand side of the flow equation are of the form
\begin{equation}
\label{bound_l0_lambdaderiv}
\abs{ \partial_\Lambda \partial^\vec{w} \mathcal{L}^{\Lambda, \Lambda_0, l}_{\vec{K} \vec{L}^\ddag}(\vec{q}) } \leq \frac{1}{\sup(\inf(\mu, \eta(\vec{q})), \Lambda)} \sum_{T \in \mathcal{T}_{m+n}} \mathsf{G}^{T,\vec{w}}_{\vec{K} \vec{L}^\ddag}(\vec{q}; \mu, \Lambda) \,\mathcal{P}\left( \ln_+ \frac{\sup\left( \abs{\vec{q}}, \mu \right)}{\Lambda}, \ln_+ \frac{\Lambda}{\mu} \right) \eqend{,}
\end{equation}
\ie, the bound~\eqref{bound_l0} for $\partial^\vec{w} \mathcal{L}^{\Lambda, \Lambda_0, l}_{\vec{K} \vec{L}^\ddag}(\vec{q})$ divided by $\sup(\inf(\mu, \eta(\vec{q})), \Lambda)$, and with $\ln_+ \sup\left( \abs{\vec{q}}, \mu \right) / \sup(\inf(\mu, \eta(\vec{q})), \Lambda)$ replaced by $\ln_+ \sup\left( \abs{\vec{q}}, \mu \right) / \Lambda$ in the polynomial. For functionals with insertions, we similarly want to prove
\begin{splitequation}
\label{bound_l1_lambdaderiv}
\abs{ \partial_\Lambda \partial^\vec{w} \mathcal{L}^{\Lambda, \Lambda_0, l}_{\vec{K} \vec{L}^\ddag}\left( \op_A(0); \vec{q} \right) } &\leq \frac{1}{\sup(\inf(\mu, \bar{\eta}(\vec{q})), \Lambda)} \sup\left( 1, \frac{\abs{\vec{q}}}{\sup(\mu, \Lambda)} \right)^{g^{(1)}([\op_A],m+n+2l,\abs{\vec{w}})} \\
&\quad\times \sum_{T^* \in \mathcal{T}^*_{m+n}} \mathsf{G}^{T^*,\vec{w}}_{\vec{K} \vec{L}^\ddag; [\op_A]}(\vec{q}; \mu, \Lambda) \, \mathcal{P}\left( \ln_+ \frac{\sup\left( \abs{\vec{q}}, \mu \right)}{\Lambda}, \ln_+ \frac{\Lambda}{\mu} \right) \eqend{,}
\end{splitequation}
and for functionals with an integrated insertion we need
\begin{splitequation}
\label{bound_l1i_lambdaderiv}
\abs{ \partial_\Lambda \partial^\vec{w} \mathcal{L}^{\Lambda, \Lambda_0, l}_{\vec{K} \vec{L}^\ddag}\left( \int\!\op_A; \vec{q} \right) } &\leq \frac{1}{\sup(\inf(\mu, \eta(\vec{q})), \Lambda)} \sup\left( 1, \frac{\abs{\vec{q}}}{\sup(\mu, \Lambda)} \right)^{g^{(1)}([\op_A]-4,m+n+2l,\abs{\vec{w}})} \\
&\quad\times \sum_{T \in \mathcal{T}_{m+n}} \mathsf{G}^{T,\vec{w}}_{\vec{K} \vec{L}^\ddag; [\op_A]-4}(\vec{q}; \mu, \Lambda) \, \mathcal{P}\left( \ln_+ \frac{\sup\left( \abs{\vec{q}}, \mu \right)}{\Lambda}, \ln_+ \frac{\Lambda}{\mu} \right) \eqend{.}
\end{splitequation}

\subsubsection{The linear term}

Let us start with the first term on the right-hand side of the flow equation~\eqref{l_0op_flow_hierarchy}, which taking $\vec{w}$ momentum derivatives reads
\begin{equation}
F_1 \equiv \frac{c}{2} \int \left( \partial_\Lambda C^{\Lambda, \Lambda_0}_{MN}(-p) \right) \partial^\vec{w} \mathcal{L}^{\Lambda, \Lambda_0, l-1}_{MN \vec{K} \vec{L}^\ddag}(p, -p, \vec{q}) \frac{\total^4 p}{(2\pi)^4} \eqend{.}
\end{equation}
From the estimate on the covariance~\eqref{prop_abl}, we have the bound
\begin{equation}
\abs{F_1} \leq c \int \sup(\abs{p},\Lambda)^{-5+[\phi_M]+[\phi_N]} \, \mathe^{-\frac{\abs{p}^2}{2\Lambda^2}} \abs{\partial^\vec{w} \mathcal{L}^{\Lambda, \Lambda_0, l-1}_{MN \vec{K} \vec{L}^\ddag}(p, -p, \vec{q})} \frac{\total^4 p}{(2\pi)^4} \eqend{.}
\end{equation}
We now insert the induction hypothesis~\eqref{bound_l0}, using that
\begin{equations}[func_0op_estlog]
\eta(p, -p, \vec{q}) &= 0 \eqend{,} \\
\abs{p, -p, \vec{q}} &\leq 2 \abs{p} + \abs{\vec{q}} \eqend{,} \\
\ln_+ \frac{\sup\left( \abs{p, -p, \vec{q}}, \mu \right)}{\Lambda} &\leq \ln_+ \frac{\sup\left( \abs{\vec{q}}, \mu \right)}{\Lambda} + \ln_+ \frac{\abs{p}}{\Lambda} + \ln 2 \eqend{,}
\end{equations}
to obtain
\begin{splitequation}
\abs{F_1} \leq \sum_{T \in \mathcal{T}_{m+n+2}} \int &\sup(\abs{p},\Lambda)^{-5+[\phi_M]+[\phi_N]} \mathe^{-\frac{\abs{p}^2}{2 \Lambda^2}} \mathsf{G}^{T,\vec{w}}_{MN \vec{K} \vec{L}^\ddag}(p, -p, \vec{q}; \mu, \Lambda) \\
&\quad\times \mathcal{P}\left( \ln_+ \frac{\sup\left( \abs{\vec{q}}, \mu \right)}{\Lambda}, \ln_+ \frac{\abs{p}}{\Lambda}, \ln_+ \frac{\Lambda}{\mu} \right) \frac{\total^4 p}{(2\pi)^4} \eqend{.}
\end{splitequation}
Rescaling $p = x \Lambda$ and applying Lemma~\ref{lemma_pint2} with $\beta_i = \gamma_i = 1$, we obtain the bound
\begin{splitequation}
\abs{F_1} \leq \Lambda^{[\phi_M]+[\phi_N]-1} \sum_{T \in \mathcal{T}_{m+n+2}} &\mathsf{G}^{T,\vec{w}}_{MN \vec{K} \vec{L}^\ddag}(0, 0, \vec{q}; \mu, \Lambda) \, \mathcal{P}\left( \ln_+ \frac{\sup\left( \abs{\vec{q}}, \mu \right)}{\Lambda}, \ln_+ \frac{\Lambda}{\mu} \right) \eqend{.}
\end{splitequation}
For each tree $T$ in the sum, we now amputate the first two external vertices with zero momentum corresponding to $M$ and $N$. The amputation gives us an extra factor~\eqref{amputate}
\begin{equation}
\label{func_0op_estamputate}
\frac{\Lambda^{2-[\phi_M]-[\phi_N]}}{\sup(\inf(\mu, \eta(\vec{q})),\Lambda)^2} \leq \frac{\Lambda^{1-[\phi_M]-[\phi_N]}}{\sup(\inf(\mu, \eta(\vec{q})),\Lambda)} \eqend{,}
\end{equation}
and the new tree $T'$ has $m+n$ external vertices such that $T' \in \mathcal{T}_{m+n}$, and thus we obtain a bound of the form~\eqref{bound_l0_lambdaderiv} for $F_1$.

For functionals with one insertion of a composite operator, the first term of the right-hand side of the flow equation~\eqref{l_sop_flow_hierarchy} is given by
\begin{equation}
F_1 \equiv \frac{c}{2} \int \left( \partial_\Lambda C^{\Lambda, \Lambda_0}_{MN}(-p) \right) \partial^\vec{w} \mathcal{L}^{\Lambda, \Lambda_0, l-1}_{MN \vec{K} \vec{L}^\ddag}\left( \op_A(0); p, -p, \vec{q} \right) \frac{\total^4 p}{(2\pi)^4} \eqend{.}
\end{equation}
Inserting the bounds on the covariance~\eqref{prop_abl} and the induction hypothesis~\eqref{bound_l1}, we obtain in the same way as before
\begin{splitequation}
\abs{F_1} \leq &\sum_{T^* \in \mathcal{T}^*_{m+n+2}} \int \sup(\abs{p},\Lambda)^{-5+[\phi_M]+[\phi_N]} \mathe^{-\frac{\abs{p}^2}{2 \Lambda^2}} \mathsf{G}^{T*,\vec{w}}_{MN \vec{K} \vec{L}^\ddag; [\op_A]}(p, -p, \vec{q}; \mu, \Lambda) \\
&\quad\times \sup\left( 1, \frac{\abs{p,-p,\vec{q}}}{\sup(\mu, \Lambda)} \right)^{g^{(1)}([\op_A],m+n+2l,\abs{\vec{w}})} \, \mathcal{P}\left( \ln_+ \frac{\sup\left( \abs{\vec{q}}, \mu \right)}{\Lambda}, \ln_+ \frac{\abs{p}}{\Lambda}, \ln_+ \frac{\Lambda}{\mu} \right) \frac{\total^4 p}{(2\pi)^4} \eqend{.}
\end{splitequation}
We then rescale $p = x \Lambda$ as before, but now use Lemma~\ref{lemma_pint2} with $\beta_i = 1$ and $\gamma_i = \sup(\mu,\Lambda)/\Lambda$ to get
\begin{splitequation}
\abs{F_1} \leq \sum_{T^* \in \mathcal{T}^*_{m+n+2}} &\Lambda^{-1+[\phi_M]+[\phi_N]} \mathsf{G}^{T*,\vec{w}}_{MN \vec{K} \vec{L}^\ddag; [\op_A]}(0, 0, \vec{q}; \mu, \Lambda) \\
&\quad\times \sup\left( 1, \frac{\abs{\vec{q}}}{\sup(\mu, \Lambda)} \right)^{g^{(1)}([\op_A],m+n+2l,\abs{\vec{w}})} \, \mathcal{P}\left( \ln_+ \frac{\sup\left( \abs{\vec{q}}, \mu \right)}{\Lambda}, \ln_+ \frac{\Lambda}{\mu} \right) \eqend{.}
\end{splitequation}
The amputation of the external legs corresponding to $M$ and $N$ now gives an extra factor of
\begin{equation}
\frac{\Lambda^{2-[\phi_M]-[\phi_N]}}{\sup(\inf(\mu, \bar{\eta}(\vec{q})),\Lambda)^2} \leq \frac{\Lambda^{1-[\phi_M]-[\phi_N]}}{\sup(\inf(\mu, \bar{\eta}(\vec{q})),\Lambda)} \eqend{,}
\end{equation}
and we obtain a bound of the form~\eqref{bound_l1_lambdaderiv}. For functionals with an insertion of an integrated composite operator, we follow the same steps and obtain a bound of the form~\eqref{bound_l1i_lambdaderiv} for the first term on the right-hand side of the flow equation~\eqref{l_intop_flow_hierarchy}.

\subsubsection{The quadratic term}

Let us now turn to the second term on the right-hand of the flow equation~\eqref{l_0op_flow_hierarchy} (with $w$ momentum derivatives), which reads
\begin{splitequation}
F_2 \equiv - \sum_{\subline{\sigma \cup \tau = \{1, \ldots, m\} \\ \rho \cup \varsigma = \{1, \ldots, n\}}} \sum_{l'=0}^l &\sum_{\vec{u}+\vec{v} \leq \vec{w}} \frac{c_{\sigma\tau\rho\varsigma} c_{uvw}}{2} \left( \partial^\vec{u} \mathcal{L}^{\Lambda, \Lambda_0, l'}_{\vec{K}_\sigma \vec{L}_\rho^\ddag M}(\vec{q}_\sigma,\vec{q}_\rho,-k) \right) \\
&\quad\times \left( \partial^{\vec{w}-\vec{u}-\vec{v}} \partial_\Lambda C^{\Lambda, \Lambda_0}_{MN}(k) \right) \left( \partial^\vec{v} \mathcal{L}^{\Lambda, \Lambda_0, l-l'}_{N \vec{K}_\tau \vec{L}_\varsigma^\ddag}(k,\vec{q}_\tau,\vec{q}_\varsigma) \right) \eqend{,}
\end{splitequation}
with the momentum $k$ defined by equation~\eqref{k_def}, and where $c_{uvw}$ are some constants coming from the Leibniz rule for derivatives. Since for both functionals overall momentum is conserved, the last momentum $q_{m+n}$ is determined in terms of the other $q_i$ and $k$. We then insert the bound on the covariance~\eqref{prop_abl} and the induction hypothesis~\eqref{bound_l0} to obtain
\begin{splitequation}
\abs{F_2} &\leq \sum_{\subline{\sigma \cup \tau = \{1, \ldots, m\} \\ \rho \cup \varsigma = \{1, \ldots, n\}}} \sum_{\vec{u}+\vec{v} \leq \vec{w}} \sup(\abs{k}, \Lambda)^{-5+[\phi_M]+[\phi_N]-\abs{\vec{w}}+\abs{\vec{u}}+\abs{\vec{v}}} \, \mathe^{-\frac{\abs{k}^2}{2 \Lambda^2}} \\
&\quad\times\sum_{T_1 \in \mathcal{T}_{\abs{\sigma}+\abs{\rho}+1}} \mathsf{G}^{T_1,\vec{u}}_{\vec{K}_\sigma \vec{L}_\rho^\ddag M}(\vec{q}_\sigma,\vec{q}_\rho,-k; \mu, \Lambda) \,\mathcal{P}\left( \ln_+ \frac{\sup\left( \abs{\vec{q}_\sigma,\vec{q}_\rho,-k}, \mu \right)}{\sup(\inf(\mu, \eta(\vec{q}_\sigma,\vec{q}_\rho,-k)), \Lambda)}, \ln_+ \frac{\Lambda}{\mu} \right) \\
&\quad\times\sum_{T_2 \in \mathcal{T}_{\abs{\tau}+\abs{\varsigma}+1}} \mathsf{G}^{T_2,\vec{v}}_{N \vec{K}_\tau \vec{L}_\varsigma^\ddag}(k,\vec{q}_\tau,\vec{q}_\varsigma; \mu, \Lambda) \,\mathcal{P}\left( \ln_+ \frac{\sup\left( \abs{k,\vec{q}_\tau,\vec{q}_\varsigma}, \mu \right)}{\sup(\inf(\mu, \eta(k,\vec{q}_\tau,\vec{q}_\varsigma)), \Lambda)}, \ln_+ \frac{\Lambda}{\mu} \right) \eqend{.}
\end{splitequation}
We then estimate
\begin{equations}[func_0op_estlog2]
\abs{\vec{q}_\sigma,\vec{q}_\rho,-k} &\leq \abs{\vec{q}} \eqend{,} \\
\abs{k,\vec{q}_\tau,\vec{q}_\varsigma} &\leq \abs{\vec{q}} \eqend{,} \label{func_0op_estlog2_b} \\
\eta(\vec{q}_\sigma,\vec{q}_\rho,-k) &\geq 0 \eqend{,} \\
\eta(k,\vec{q}_\tau,\vec{q}_\varsigma) & \geq 0 \eqend{,}
\end{equations}
which enables us to merge the polynomials in logarithms. For each tree $T_1 \in \mathcal{T}_{\abs{\sigma}+\abs{\rho}+1}$ and $T_2 \in \mathcal{T}_{\abs{\tau}+\abs{\varsigma}+1}$, we fuse them as detailed in Section~\ref{sec_tree_fuse} to obtain a tree $T \in \mathcal{T}_{m+n}$, and can estimate the weight factors according to the estimate~\eqref{gw_fused_2_est}. This gives the bound
\begin{splitequation}
\abs{F_2} &\leq \sum_{\vec{u}+\vec{v} \leq \vec{w}} \sup(\abs{k}, \Lambda)^{-1-\abs{\vec{w}}+\abs{\vec{u}}+\abs{\vec{v}}} \, \mathe^{-\frac{\abs{k}^2}{2 \Lambda^2}} \\
&\quad\times \sum_{T \in \mathcal{T}_{m+n}} \mathsf{G}^{T,\vec{u}+\vec{v}}_{\vec{K} \vec{L}^\ddag}(\vec{q}; \mu, \Lambda) \, \mathcal{P}\left( \ln_+ \frac{\sup\left( \abs{\vec{q}}, \mu \right)}{\Lambda}, \ln_+ \frac{\Lambda}{\mu} \right) \eqend{.}
\end{splitequation}
The last step is to change the $\vec{u}+\vec{v}$ derivatives acting on the tree to $\vec{w}$ derivatives. Since $\eta_{q_i}(\vec{q}) \leq \abs{k}$ for any $i$, we have
\begin{equation}
\label{func_0op_estfuse_a}
\sup(\abs{k},\Lambda)^{-\abs{\vec{w}}+\abs{\vec{u}}+\abs{\vec{v}}} \leq \prod_{i=1}^{m+n} \sup(\eta_{q_i}(\vec{q}),\Lambda)^{-\abs{(\vec{w}-\vec{u}-\vec{v})_i}}
\end{equation}
and thus (remembering the definition of the derivative weight factor~\eqref{gw_def})
\begin{equation}
\label{func_0op_estfuse_b}
\sup(\abs{k},\Lambda)^{-\abs{\vec{w}}+\abs{\vec{u}}+\abs{\vec{v}}} \mathsf{G}^{T,\vec{u}+\vec{v}}_{\vec{K} \vec{L}^\ddag}(\vec{q}; \mu, \Lambda) \leq \mathsf{G}^{T,\vec{w}}_{\vec{K} \vec{L}^\ddag}(\vec{q}; \mu, \Lambda) \eqend{.}
\end{equation}
The last estimate
\begin{equation}
\label{func_0op_estexp}
\sup(\abs{k}, \Lambda)^{-1} \, \mathe^{-\frac{\abs{k}^2}{2 \Lambda^2}} \leq \sup(\inf(\mu, \eta(\vec{q})), \Lambda)^{-1}
\end{equation}
then gives us the required bound~\eqref{bound_l0_lambdaderiv}.

For functionals with one insertion, the corresponding term of the flow equation~\eqref{l_sop_flow_hierarchy} reads
\begin{splitequation}
\label{func_1op_f2_est}
F_2 \equiv - \sum_{\subline{\sigma \cup \tau = \{1, \ldots, m\} \\ \rho \cup \varsigma = \{1, \ldots, n\}}} \sum_{l'=0}^l \sum_{\vec{u}+\vec{v} \leq \vec{w}} &c_{\sigma\tau\rho\varsigma} c_{uvw} \left( \partial^\vec{u} \mathcal{L}^{\Lambda, \Lambda_0, l'}_{\vec{K}_\sigma \vec{L}_\rho^\ddag M}(\vec{q}_\sigma,\vec{q}_\rho,-k) \right) \\
&\quad\times \left( \partial^{\vec{w}-\vec{u}-\vec{v}} \partial_\Lambda C^{\Lambda, \Lambda_0}_{MN}(k) \right) \left( \partial^\vec{v} \mathcal{L}^{\Lambda, \Lambda_0, l-l'}_{N \vec{K}_\tau \vec{L}_\varsigma^\ddag}\left( \op_A(0); k,\vec{q}_\tau,\vec{q}_\varsigma \right) \right) \eqend{,}
\end{splitequation}
where now overall momentum is not conserved anymore (\ie, $q_{m+n}$ is an independent variable), but $k$ is still given by~\eqref{k_def} and thus momentum derivatives also act on the covariance. Inserting the estimates on the covariance~\eqref{prop_abl} and the induction hypotheses~\eqref{bound_l0} and~\eqref{bound_l1}, and fusing the polynomials in logarithms in the same way as above, we obtain
\begin{splitequation}
\abs{F_2} &\leq \sum_{\subline{\sigma \cup \tau = \{1, \ldots, m\} \\ \rho \cup \varsigma = \{1, \ldots, n\}}} \sum_{l'=0}^l \sum_{\vec{u}+\vec{v} \leq \vec{w}} \sup(\abs{k}, \Lambda)^{-5+[\phi_M]+[\phi_N]-\abs{\vec{w}}+\abs{\vec{u}}+\abs{\vec{v}}} \, \mathe^{-\frac{\abs{k}^2}{2 \Lambda^2}} \\
&\quad\times \sup\left( 1, \frac{\abs{k,\vec{q}_\tau,\vec{q}_\varsigma}}{\sup(\mu, \Lambda)} \right)^{g^{(1)}([\op_A],\abs{\tau}+\abs{\varsigma}+1+2(l-l'),\abs{\vec{v}})} \sum_{T \in \mathcal{T}_{\abs{\sigma}+\abs{\rho}+1}} \mathsf{G}^{T,\vec{u}}_{\vec{K}_\sigma \vec{L}_\rho^\ddag M}(\vec{q}_\sigma,\vec{q}_\rho,-k; \mu, \Lambda) \\
&\quad\times \sum_{T^* \in \mathcal{T}^*_{\abs{\tau}+\abs{\varsigma}+1}} \mathsf{G}^{T^*,\vec{v}}_{N \vec{K}_\tau \vec{L}_\varsigma^\ddag; [\op_A]}(k,\vec{q}_\tau,\vec{q}_\varsigma; \mu, \Lambda) \, \mathcal{P}\left( \ln_+ \frac{\sup\left( \abs{\vec{q}}, \mu \right)}{\Lambda}, \ln_+ \frac{\Lambda}{\mu} \right) \eqend{.}
\end{splitequation}
To estimate the large momentum factor, we use the bound~\eqref{func_0op_estlog2_b}, and then estimate
\begin{equation}
\label{func_1op_est_order}
\abs{\tau}+\abs{\varsigma}+1+2(l-l') \leq m+n+2l-1
\end{equation}
(obvious for $l' > 0$, while for $l' = 0$ we can assume that $\abs{\tau}+\abs{\varsigma} \leq m+n-1$ since otherwise the first functional in~\eqref{func_1op_f2_est}, and thus $F_2$, vanishes). Property~\eqref{gs_prop_1} of $g^{(s)}$ then shows that
\begin{equation}
\sup\left( 1, \frac{\abs{k,\vec{q}_\tau,\vec{q}_\varsigma}}{\sup(\mu, \Lambda)} \right)^{g^{(1)}([\op_A],\abs{\tau}+\abs{\varsigma}+1+2(l-l'),\abs{\vec{v}})} \leq \sup\left( 1, \frac{\abs{\vec{q}}}{\sup(\mu, \Lambda)} \right)^{g^{(1)}([\op_A],m+n+2l,\abs{\vec{w}})} \eqend{.}
\end{equation}
For each tree $T \in \mathcal{T}_{\abs{\sigma}+\abs{\rho}+1}$ and $T^* \in \mathcal{T}^*_{\abs{\tau}+\abs{\varsigma}+1}$, we fuse them as detailed in Section~\ref{sec_tree_fuse} (estimating the weight factors according to~\eqref{gw_fused_2_est}), and then change the $\vec{u}+\vec{v}$ derivatives acting on the fused tree to $\vec{w}$ derivatives in the same way as for functionals without insertions (using equations~\eqref{func_0op_estfuse_a} and~\eqref{func_0op_estfuse_b} with $\bar{\eta}$ instead of $\eta$). Finally, we also use the estimate~\eqref{func_0op_estexp} with $\bar{\eta}$ instead of $\eta$ and obtain a bound of the form~\eqref{bound_l1_lambdaderiv} for $F_2$. For functionals with one integrated insertion, we do the same steps, and arrive at a bound of the form~\eqref{bound_l1i_lambdaderiv}.

\subsection{Irrelevant functionals}
\label{sec_bounds_irr}

Irrelevant functionals are integrated downwards with vanishing boundary conditions at $\Lambda = \Lambda_0$, as specified in Table~\ref{table_boundary}. However, to derive the bounds we may admit non-vanishing boundary conditions as long as they are compatible with the bounds~\eqref{bound_l0,bound_l1,bound_l1i} evaluated at $\Lambda = \Lambda_0$, a freedom which we will need to exploit later on to prove the anomalous Ward identities in Section~\ref{sec_brst}.

For the functionals without insertions, which are irrelevant when $[\vec{K}] + [\vec{L}^\ddag] + \abs{\vec{w}} > 4$, we thus have
\begin{splitequation}
\abs{ \partial^\vec{w} \mathcal{L}^{\Lambda, \Lambda_0, l}_{\vec{K} \vec{L}^\ddag}(\vec{q}) } &\leq \sum_{T \in \mathcal{T}_{m+n}} \mathsf{G}^{T,\vec{w}}_{\vec{K} \vec{L}^\ddag}(\vec{q}; \mu, \Lambda_0) \,\mathcal{P}\left( \ln_+ \frac{\sup\left( \abs{\vec{q}}, \mu \right)}{\Lambda_0}, \ln_+ \frac{\Lambda_0}{\mu} \right) \\
&\quad+ \int_\Lambda^{\Lambda_0} \abs{ \partial_\lambda \partial^\vec{w} \mathcal{L}^{\lambda, \Lambda_0, l}_{\vec{K} \vec{L}^\ddag}(\vec{q}) } \total \lambda \eqend{,}
\end{splitequation}
where the first term is absent for vanishing boundary conditions. We then insert the bound for the $\Lambda$ derivative~\eqref{bound_l0_lambdaderiv} and estimate the trees using the inequality~\eqref{t_irr_ineq2} with $\epsilon = \Delta$ to obtain
\begin{splitequation}
\label{func_0op_irrelevant}
\abs{ \partial^\vec{w} \mathcal{L}^{\Lambda, \Lambda_0, l}_{\vec{K} \vec{L}^\ddag}(\vec{q}) } &\leq \sum_{T \in \mathcal{T}_{m+n}} \mathsf{G}^{T,\vec{w}}_{\vec{K} \vec{L}^\ddag}(\vec{q}; \mu, \Lambda) \Bigg[ \frac{\sup(\inf(\mu, \eta(\vec{q})), \Lambda)^\Delta}{\sup(\inf(\mu, \eta(\vec{q})), \Lambda_0)^\Delta} \,\mathcal{P}\left( \ln_+ \frac{\sup\left( \abs{\vec{q}}, \mu \right)}{\Lambda_0}, \ln_+ \frac{\Lambda_0}{\mu} \right) \\
&\qquad+ \int_\Lambda^{\Lambda_0} \frac{\sup(\inf(\mu, \eta(\vec{q})), \Lambda)^\Delta}{\sup(\inf(\mu, \eta(\vec{q})), \lambda)^{\Delta+1}} \,\mathcal{P}\left( \ln_+ \frac{\sup\left( \abs{\vec{q}}, \mu \right)}{\lambda}, \ln_+ \frac{\lambda}{\mu} \right) \total \lambda \Bigg] \eqend{.}
\end{splitequation}
We then use Lemma~\ref{lemma_largerlog} to obtain the estimate
\begin{splitequation}
\label{func_0op_deltalog}
\left( \frac{\sup(\inf(\mu,\eta(\vec{q})), \Lambda)}{\sup(\inf(\mu,\eta(\vec{q})),\Lambda_0)} \right)^\Delta \ln_+ \frac{\Lambda_0}{\mu} &\leq \left( \frac{\sup(\mu, \Lambda)}{\Lambda_0} \right)^\Delta \ln_+ \left( \frac{\Lambda_0}{\sup(\mu, \Lambda)} \frac{\sup(\mu, \Lambda)}{\mu} \right) \\
&\leq \mathcal{P}\left( \ln_+ \frac{\sup(\mu, \Lambda)}{\mu} \right) \leq \mathcal{P}\left( \ln_+ \frac{\Lambda}{\mu} \right) \eqend{,}
\end{splitequation}
and the subsequent estimate
\begin{equation}
\ln_+ \frac{\sup\left( \abs{\vec{q}}, \mu \right)}{\Lambda_0} \leq \ln_+ \frac{\sup\left( \abs{\vec{q}}, \mu \right)}{\sup(\inf(\mu, \eta(\vec{q})), \Lambda)}
\end{equation}
allows us to bound the first term. The second term can be estimated using Lemma~\ref{lemma_lambdaint}, and we obtain the bounds~\eqref{bound_l0}.

For functionals with one insertion of a composite operator, which are irrelevant when $[\vec{K}] + [\vec{L}^\ddag] + \abs{\vec{w}} > [\op_A]$, the same estimates apply (with $\bar{\eta}$ instead of $\eta$), and we only need to estimate the large momentum factor using
\begin{equation}
\sup\left( 1, \frac{\abs{\vec{q}}}{\sup(\mu, \Lambda_0)} \right)^{g^{(1)}([\op_A],m+n+2l,\abs{\vec{w}})} \leq \sup\left( 1, \frac{\abs{\vec{q}}}{\sup(\mu, \Lambda)} \right)^{g^{(1)}([\op_A],m+n+2l,\abs{\vec{w}})} \eqend{,}
\end{equation}
and then obtain the bounds~\eqref{bound_l1}. Using the same estimates, for functionals with one insertion of an integrated composite operator we obtain the bounds~\eqref{bound_l1i}.

\subsection{Relevant and marginal functionals with arbitrary boundary conditions}

For functionals without insertions, we can choose arbitrary boundary conditions at $\Lambda = \mu$ and vanishing momenta for marginal functionals with $[\vec{K}] + [\vec{L}^\ddag] + \abs{\vec{w}} = 4$. For functionals with an insertion of a composite operator, we have this freedom for relevant and marginal functionals which have $[\vec{K}] + [\vec{L}^\ddag] + \abs{\vec{w}} \leq [\op_A]$, but for functionals with an integrated insertion, we can choose only the conditions for marginal and some relevant functionals, which have $4 \leq [\vec{K}] + [\vec{L}^\ddag] + \abs{\vec{w}} \leq [\op_A]$. The proof of the bounds in these cases first extends the boundary conditions to $\Lambda \geq \mu$ at vanishing momenta using the flow equation, then to arbitrary momenta using Taylor's theorem with integral remainder, and then finally to all $\Lambda < \mu$. Since the integral remainder in Taylor's theorem involves the functionals with one additional momentum derivative, to close the induction we have to bound the marginal functionals first (where the integral remainder involves irrelevant functionals), followed by the least relevant (where the integral remainder involves marginal functionals) and then ascend in the order of relevancy.

\subsubsection{Vanishing momenta}

Starting with the functionals without insertions, we get for $\Lambda \geq \mu$
\begin{equation}
\abs{ \partial^\vec{w} \mathcal{L}^{\Lambda, \Lambda_0, l}_{\vec{K} \vec{L}^\ddag}(\vec{0}) } \leq c + \int_\mu^\Lambda \abs{ \partial_\lambda \partial^\vec{w} \mathcal{L}^{\lambda, \Lambda_0, l}_{\vec{K} \vec{L}^\ddag}(\vec{0}) } \total \lambda \eqend{,}
\end{equation}
where the constant subsumes that arbitrary renormalisation conditions. Inserting the bounds~\eqref{bound_l0_lambdaderiv} (where $\ln_+ \mu/\lambda = 0$ since $\lambda \geq \mu$), we get
\begin{equation}
\abs{ \partial^\vec{w} \mathcal{L}^{\Lambda, \Lambda_0, l}_{\vec{K} \vec{L}^\ddag}(\vec{0}) } \leq c + \int_\mu^\Lambda \frac{1}{\lambda} \sum_{T \in \mathcal{T}_{m+n}} \mathsf{G}^{T,\vec{w}}_{\vec{K} \vec{L}^\ddag}(\vec{0}; \mu, \lambda) \,\mathcal{P}\left( \ln_+ \frac{\lambda}{\mu} \right) \total \lambda = c + \int_\mu^\Lambda \frac{1}{\lambda} \, \mathcal{P}\left( \ln_+ \frac{\lambda}{\mu} \right) \total \lambda \eqend{,}
\end{equation}
since at zero momentum the tree weight factor $\mathsf{G}^{T,\vec{w}}$ is simply given by $\lambda^{[T]}$ (which can be read off from the Table~\ref{table_weights}), where $[T]$ is the dimension of the tree given by~\eqref{t_dim_def}. Marginal trees have $[T] = 0$, and a simple integration yields
\begin{equation}
\label{bound_l0_zeromomentum}
\abs{ \partial^\vec{w} \mathcal{L}^{\Lambda, \Lambda_0, l}_{\vec{K} \vec{L}^\ddag}(\vec{0}) } \leq c + \mathcal{P}\left( \ln_+ \frac{\Lambda}{\mu} \right) \int_\mu^\Lambda \frac{1}{\lambda} \total \lambda = \mathcal{P}\left( \ln_+ \frac{\Lambda}{\mu} \right) \eqend{.}
\end{equation}
The boundary conditions for relevant functionals with one insertion must involve a factor of $\mu^{[\op_A] - [\vec{K}] - [\vec{L}^\ddag] - \abs{\vec{w}}}$ in order to be compatible with the bounds. Since the flow equation~\eqref{l_sop_flow_hierarchy} is linear for $s=1$ insertion of a composite operator, the functionals with one insertion of $\op_A$ form a vector space indexed by $A$ and the imposed boundary conditions. A suitable basis is then given by taking BPHZ-like renormalisation conditions, \ie,
\begin{equation}
\label{func_1op_bphz}
\partial^\vec{w} \mathcal{L}^{\mu, \Lambda_0, l}_{\vec{K} \vec{L}^\ddag}\left( \op_A(0), \vec{0} \right) = (-\mathi)^\abs{\vec{w}} \vec{w}! \delta_{\vec{w},\vec{v}} \delta_{\vec{K} \vec{M}} \delta_{\vec{L}^\ddag \vec{N}^\ddag} \delta_{l,0}
\end{equation}
for the operator $\op_A$ indexed by $A = (\vec{M} \vec{N}^\ddag, \vec{v})$, and in the following we will make this choice (at least for the lowest loop order $l=0$). We then obtain in the same way as before the bounds
\begin{splitequation}
\label{bound_l1_zeromomentum}
\abs{ \partial^\vec{w} \mathcal{L}^{\Lambda, \Lambda_0, l}_{\vec{K} \vec{L}^\ddag}\left( \op_A(0), \vec{0} \right) } &\leq c \mu^{[\op_A] - [\vec{K}] - [\vec{L}^\ddag] - \abs{\vec{w}}} + \mathcal{P}\left( \ln_+ \frac{\Lambda}{\mu} \right) \int_\mu^\Lambda \lambda^{[\op_A] - [\vec{K}] - [\vec{L}^\ddag] - \abs{\vec{w}}-1} \total \lambda \\
&\leq \Lambda^{[\op_A] - [\vec{K}] - [\vec{L}^\ddag] - \abs{\vec{w}}} \, \mathcal{P}\left( \ln_+ \frac{\Lambda}{\mu} \right) \\
&= \sum_{T^* \in \mathcal{T}^*_{m+n}} \mathsf{G}^{T^*,\vec{w}}_{\vec{K} \vec{L}^\ddag; [\op_A]}(\vec{0}; \mu, \Lambda) \, \mathcal{P}\left( \ln_+ \frac{\Lambda}{\mu} \right) \eqend{,}
\end{splitequation}
for $\Lambda \geq \mu$, and the same bound for functionals with one integrated insertion.

\subsubsection{Extension to general momenta}

\begin{figure}
\setbox9=\hbox{\includegraphics[scale=0.9]{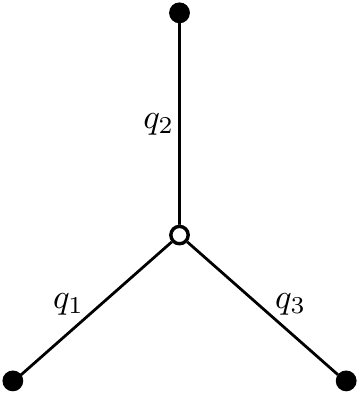}}
\begin{subfigure}[b]{.33\linewidth}
\centering\raisebox{\dimexpr(\ht9-\height)/2}{\includegraphics[scale=0.9]{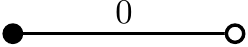}}
\caption{The unique tree $T \in \mathcal{T}_1$.}\label{fig_tree_1pf}
\end{subfigure}%
\begin{subfigure}[b]{.33\linewidth}
\centering\raisebox{\dimexpr(\ht9-\height)/2}{\includegraphics[scale=0.9]{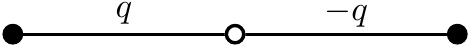}}
\caption{The unique tree $T \in \mathcal{T}_2$.}\label{fig_tree_2pf}
\end{subfigure}%
\begin{subfigure}[b]{.33\linewidth}
\centering\includegraphics[scale=0.9]{fig/tree_3pf}
\caption{The unique tree $T \in \mathcal{T}_3$, with $q_1 + q_2 + q_3 = 0$.}\label{fig_tree_3pf}
\end{subfigure}
\caption{The only fully reduced trees for functionals with one, two and three external legs.}
\end{figure}
\begin{figure}
\includegraphics[scale=0.9]{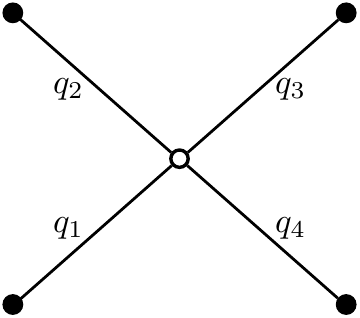}\hfil\includegraphics[scale=0.9]{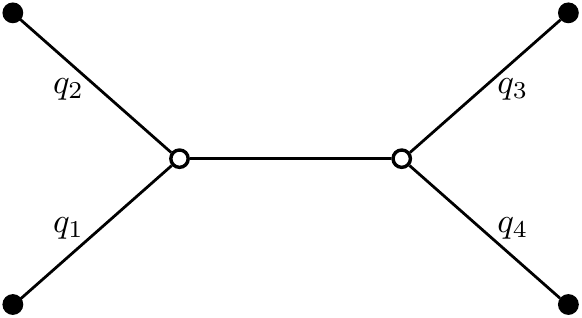}
\caption{The only fully reduced trees for functionals with four external legs, with $q_1+q_2+q_3+q_4 = 0$.}\label{fig_tree_4pf}
\end{figure}

We now extend these bounds to general momenta using the Taylor formula with integral remainder, which reads
\begin{equation}
\label{func_0op_taylor}
\partial^\vec{w} \mathcal{L}^{\Lambda, \Lambda_0, l}_{\vec{K} \vec{L}^\ddag}(\vec{q}) = \partial^\vec{w} \mathcal{L}^{\Lambda, \Lambda_0, l}_{\vec{K} \vec{L}^\ddag}(\vec{0}) + \sum_{i=1}^{m+n-1} \sum_{\alpha=1}^4 \int_0^1 \frac{\partial k_i^\alpha(t)}{\partial t} \partial_{k_i^\alpha} \partial^\vec{w} \mathcal{L}^{\Lambda, \Lambda_0, l}_{\vec{K} \vec{L}^\ddag}(\vec{k}(t)) \total t
\end{equation}
for some path $\vec{k}(t)$ with $\vec{k}(0) = 0$ and $\vec{k}(1) = \vec{q}$. The result is independent of the path taken, since for finite cutoffs all functionals are smooth in momenta, and only depend on $m+n-1$ momenta such that a generalised Stokes theorem holds. This is a freedom we need to exploit. Furthermore, the second term involves a functional which has one momentum derivative more and thus is irrelevant and has already been bounded. Since the functionals with one external leg do not depend on momenta, there is nothing left to do for them, and we just note that the bound~\eqref{bound_l0_zeromomentum} is already the needed one~\eqref{bound_l0}, with the only fully reduced tree $T \in \mathcal{T}_1$ shown in Figure~\ref{fig_tree_1pf}. Inserting the bounds~\eqref{bound_l0_zeromomentum} into the Taylor formula~\eqref{func_0op_taylor} and using the induction hypothesis~\eqref{bound_l0}, we obtain
\begin{splitequation}
\label{func_marginal_taylor}
\abs{ \partial^\vec{w} \mathcal{L}^{\Lambda, \Lambda_0, l}_{\vec{K} \vec{L}^\ddag}(\vec{q}) } &\leq \mathcal{P}\left( \ln_+ \frac{\Lambda}{\mu} \right) + \sum_{i=1}^{m+n-1} \sum_{\alpha=1}^4 \int_0^1 \abs{k_i'(t)} \\
&\qquad\times \sum_{T \in \mathcal{T}_{m+n}} \mathsf{G}^{T,\vec{w}+\tilde{\vec{w}}}_{\vec{K} \vec{L}^\ddag}(\vec{k}(t); \mu, \Lambda) \,\mathcal{P}\left( \ln_+ \frac{\sup\left( \abs{\vec{k}(t)}, \mu \right)}{\sup(\inf(\mu, \eta(\vec{k}(t))), \Lambda)}, \ln_+ \frac{\Lambda}{\mu} \right) \total t \eqend{,}
\end{splitequation}
where $\tilde{\vec{w}}$ is a multiindex which has only one non-zero entry corresponding to the additional $k_i^\alpha$ derivative. For the functionals with two external legs, we take the simple path
\begin{equation}
k(t) = t q
\end{equation}
and estimate the polynomial in logarithms using (since $\Lambda \geq \mu$)
\begin{equation}
\ln_+ \frac{\sup\left( \abs{\vec{k}(t)}, \mu \right)}{\sup(\inf(\mu, \eta(\vec{k}(t))), \Lambda)} \leq \ln_+ \frac{\sup\left( \abs{\vec{q}}, \mu \right)}{\Lambda} = \ln_+ \frac{\sup\left( \abs{\vec{q}}, \mu \right)}{\sup(\inf(\mu, \eta(\vec{q})), \Lambda)} \eqend{.}
\end{equation}
There is only one contributing tree, shown in Figure~\ref{fig_tree_2pf}, which has
\begin{equation}
\label{func_marginal_2pf_tree}
\mathsf{G}^{T,\vec{w}+\tilde{\vec{w}}}_{\vec{K} \vec{L}^\ddag}(\vec{k}(t); \mu, \Lambda) = \sup(\abs{k(t)}, \Lambda)^{-1} = \sup(t \abs{q}, \Lambda)^{-1} \eqend{.}
\end{equation}
We then estimate using Lemma~\ref{lemma_slalom}
\begin{equation}
\int_0^1 \frac{\abs{q}}{\sup(t \abs{q}, \Lambda)} \total t \leq 2 \int_0^1 \frac{\abs{q}}{t \abs{q} + \Lambda} \total t \leq \mathcal{P}\left( \ln_+ \frac{\abs{q}}{\Lambda} \right) \leq \mathcal{P}\left( \ln_+ \frac{\sup\left( \abs{q}, \mu \right)}{\sup(\inf(\mu, \eta(q)), \Lambda)} \right) \eqend{,}
\end{equation}
and so for $\Lambda \geq \mu$ the bounds~\eqref{bound_l0} follow by combining all of the above (the first term in the Taylor formula~\eqref{func_marginal_taylor} combines with the second). For the functionals with three external legs, we extend the summation also to $k_3$ (setting $\eta_{k_3} = \eta$) and may then w.l.o.g. assume $\abs{q_1} \geq \abs{q_2} \geq \abs{q_3}$ (otherwise we just relabel), and that $\abs{q_1} > 0$ (otherwise there is nothing to do). Define then
\begin{equation}
q_2^\bot \equiv q_2 + \alpha q_1 \eqend{,}
\end{equation}
with $\alpha = - (q_1 q_2) / \abs{q_1}^2$, and note that we have $\abs{q_2}^2 = \alpha^2 \abs{q_1}^2 + \abs{q_2^\bot}^2$ and, because of momentum conservation, $q_3 = - q_2^\bot + (\alpha-1) q_1$. Since furthermore $\abs{q_2} \geq \abs{q_3}$, it follows that $\alpha \geq 1/2$. The path that we choose is given by
\begin{equations}
k_1(t) &= \begin{cases} 2 t q_1 & 0 \leq t \leq \frac{1}{2} \\ q_1 & \frac{1}{2} \leq t \leq 1 \eqend{,} \end{cases} \\
k_2(t) &= \begin{cases} - t q_1 + \sqrt{3} \, t \dfrac{\abs{q_1}}{\abs{q_2^\bot}} q_2^\bot & 0 \leq t \leq \frac{1}{2} \\ 2 (1-t) k_2\left( \frac{1}{2} \right) + (2t-1) q_2 & \frac{1}{2} \leq t \leq 1 \eqend{,} \end{cases}
\end{equations}
and always $k_3(t) = - k_1(t) - k_2(t)$. This path is chosen such that on the first half ($0 \leq t \leq 1/2$), we have
\begin{equation}
\abs{k_1(t)} = 2 t \abs{q_1} = \abs{k_2(t)} = \abs{k_3(t)} \eqend{,}
\end{equation}
and on the second half ($1/2 \leq t \leq 1$) we have
\begin{equations}[rel_3pf_ineq]
\frac{1}{2} \abs{q_i} &\leq \abs{k_i(t)} \leq 2 \abs{q_1} \qquad\text{ for } i \in \{2,3\} \eqend{,} \\
2 (1-t) \abs{q_1} &\leq \abs{k_3(t)} \leq \abs{k_2(t)} \eqend{,} \\
\abs{k_2'(t)} &= \abs{k_3'(t)} \leq 2 \abs{q_1} \eqend{.}
\end{equations}
There is also only one contributing tree, which has the weight factor
\begin{equation}
\mathsf{G}^{T,\vec{w}+\tilde{\vec{w}}}_{\vec{K} \vec{L}^\ddag}(\vec{k}(t); \mu, \Lambda) = \frac{F(t)}{\sup(\eta_{k_i}(\vec{k}(t)),\Lambda)}
\end{equation}
with
\begin{equation}
F(t) \equiv \frac{\sup(\abs{\vec{k}(t)},\Lambda)}{\sup(\abs{k_1(t)},\Lambda)^{\alpha_1} \sup(\abs{k_2(t)},\Lambda)^{\alpha_2} \sup(\abs{k_3(t)},\Lambda)^{\alpha_3}}
\end{equation}
for some $\alpha_i \geq 0$ with $\alpha_1 + \alpha_2 + \alpha_3 = 1$ (since all fields have $[\phi_K], [\phi_L^\ddag] \geq 1$). On the first half of the path, all momenta have the same size and we obtain
\begin{equation}
F(t) = 1 \leq \frac{\sup(\abs{q_1},\Lambda)}{\sup(\abs{q_1},\Lambda)^{\alpha_1} \sup(\abs{q_2},\Lambda)^{\alpha_2} \sup(\abs{q_3},\Lambda)^{\alpha_3}} = F(1) = \mathsf{G}^{T,\vec{w}}_{\vec{K} \vec{L}^\ddag}(\vec{q}; \mu, \Lambda)\eqend{,}
\end{equation}
while on the second half we get
\begin{equation}
F(t) \leq \frac{\sup(2 \abs{q_1},\Lambda)}{\sup(\abs{q_1},\Lambda)^{\alpha_1} \sup(1/2 \abs{q_2},\Lambda)^{\alpha_2} \sup(1/2 \abs{q_3},\Lambda)^{\alpha_3}} \leq 2^{1+\alpha_2+\alpha_3} F(1) \eqend{.}
\end{equation}
For the polynomial in logarithms, we estimate
\begin{splitequation}
\ln_+ \frac{\sup\left( \abs{\vec{k}(t)}, \mu \right)}{\sup(\inf(\mu, \eta(\vec{k}(t))), \Lambda)} &= \ln_+ \frac{\sup\left( \abs{\vec{k}(t)}, \mu \right)}{\Lambda} \leq \ln_+ \frac{\sup\left( 2 \abs{q_1}, \mu \right)}{\Lambda} \\
&\leq \ln_+ \frac{\sup\left( \abs{q_1}, \mu \right)}{\Lambda} + 2 \ln 2 = \ln_+ \frac{\sup\left( \abs{\vec{q}}, \mu \right)}{\sup(\inf(\mu, \eta(\vec{q})), \Lambda)} + 2 \ln 2 \eqend{.}
\end{splitequation}
The remaining integral can then be estimated on the first half of the path by
\begin{equation}
\sum_{i=1}^3 \int_0^{1/2} \frac{\abs{k_i'(t)}}{\sup(\eta_{k_i}(\vec{k}(t)),\Lambda)} \total t = \int_0^{1/2} \frac{6 \abs{q_1}}{\sup(2 t \abs{q_1},\Lambda)} \total t \leq \int_0^1 \frac{6 \abs{q_1}}{s \abs{q_1} + \Lambda} \total s
\end{equation}
with the change of variables $s = 2 t$, and on the second half by
\begin{equation}
\sum_{i=1}^3 \int_{1/2}^1 \frac{\abs{k_i'(t)}}{\sup(\eta_{k_i}(\vec{k}(t)),\Lambda)} \total t \leq \int_{1/2}^1 \frac{4 \abs{q_1}}{\sup(2(1-t) \abs{q_1},\Lambda)} \total t \leq \int_0^1 \frac{4 \abs{q_1}}{s \abs{q_1} + \Lambda} \total s
\end{equation}
with the change of variables $s = 2(1-t)$. A subsequent application of Lemma~\ref{lemma_slalom} gives the result
\begin{equation}
\mathcal{P}\left( \ln_+ \frac{\abs{q_1}}{\Lambda} \right) \leq \mathcal{P}\left( \ln_+ \frac{\sup\left( \abs{\vec{q}}, \mu \right)}{\sup(\inf(\mu, \eta(\vec{q})), \Lambda)} \right) \eqend{,}
\end{equation}
and thus the bounds~\eqref{bound_l0} are proven also for this case (for $\Lambda \geq \mu$). Lastly, functionals with four external legs necessarily have external fields of dimension $1$ to be marginal, and no momentum derivatives. The fully reduced trees that contribute are shown in Figure~\ref{fig_tree_4pf}, and the path we take reads
\begin{equation}
k_{\pi(i)}(t) = q_{\pi(i)} \begin{cases} 0 & 0 \leq t \leq \frac{i-1}{3} \\ 3t-i+1 & \frac{i-1}{3} \leq t \leq \frac{i}{3} \\ 1 & \frac{i}{3} \leq t \leq 1 \eqend{,} \end{cases}
\end{equation}
for some permutation $\pi(i)$ of $\{1,2,3\}$ (\ie, on part $i$ of the path the momentum $k_{\pi(i)}$ changes linearly from $0$ to $q_{\pi(i)}$), and we set always $k_4(t) = - k_1(t) - k_2(t) - k_3(t)$. For the first tree of Figure~\ref{fig_tree_4pf}, we have
\begin{equation}
\mathsf{G}^{T,\vec{w}+\tilde{\vec{w}}}_{\vec{K} \vec{L}^\ddag}(\vec{k}(t); \mu, \Lambda) = \frac{1}{\sup(\eta_{k_{\pi(i)}(t)}(\vec{k}(t)),\Lambda)}
\end{equation}
and can take the permutation $\pi(i) = i$. The weight factor of the second tree depends on the exact labelling of the external vertices, and is of the form
\begin{equation}
\mathsf{G}^{T,\vec{w}+\tilde{\vec{w}}}_{\vec{K} \vec{L}^\ddag}(\vec{k}(t); \mu, \Lambda) = \frac{F(t)}{\sup(\eta_{k_{\pi(i)}(t)}(\vec{k}(t)),\Lambda)}
\end{equation}
for
\begin{equation}
F(t) \equiv \sup\left( 1, \frac{\sup(\abs{k_{j_1}(t)}, \abs{k_{j_2}(t)})}{\sup(\abs{k_{j_1}(t) + k_{j_2}(t)},\Lambda)} \right) \sup\left( 1, \frac{\sup(\abs{k_{j_3}(t)}, \abs{k_{j_1}(t) + k_{j_2}(t) + k_{j_3}(t)})}{\sup(\abs{k_{j_1}(t) + k_{j_2}(t)},\Lambda)} \right)
\end{equation}
with $j_i$ a permutation of $\{1,2,3\}$ such that $\abs{q_{j_2}} \leq \abs{q_{j_1}}$. To bound such a factor, we take the permutation $\pi(i) = j_i$ to get
\begin{equation}
F(t) = \begin{cases} 1 & 0 \leq t \leq \frac{1}{3} \\ \sup\left( 1, \dfrac{\sup(\abs{q_{j_1}}, (3t-1) \abs{q_{j_2}})}{\sup(\abs{q_{j_1} + (3t-1) q_{j_2}},\Lambda)} \right) & \frac{1}{3} \leq t \leq \frac{2}{3} \\ \sup\left( 1, \dfrac{\sup(\abs{q_{j_1}}, \abs{q_{j_2}})}{\sup(\abs{q_{j_1} + q_{j_2}},\Lambda)} \right) \sup\left( 1, \dfrac{\sup((3t-2)\abs{q_{j_3}}, \abs{q_{j_1} + q_{j_2} + (3t-2) q_{j_3}})}{\sup(\abs{q_{j_1} + q_{j_2}},\Lambda)} \right) & \frac{2}{3} \leq t \leq 1 \eqend{.} \end{cases}
\end{equation}
For $0 \leq t \leq 1/3$, obviously $F(t) = 1 \leq F(1)$. For the other parts of the path, we need some auxiliary bounds. For $0 \leq s \leq 1$ and $\abs{p} \geq \abs{q}$, define
\begin{equation}
f(p,q,s) \equiv 4 \abs{p + s q}^2 - \abs{p + q}^2 = 3 \abs{p}^2 + (4 s^2-1) \abs{q}^2 + 2 (4s-1) (p q) \eqend{.}
\end{equation}
For $0 \leq s \leq 1/2$, we estimate
\begin{equation}
f(s) \geq 3 \abs{p}^2 + (4 s^2-1) \abs{q}^2 + 2 (4s-1) \abs{p} \abs{q} = 4 \left( \abs{p} + s \abs{q} \right)^2 - \left( \abs{p} + \abs{q} \right)^2 \geq 4 \abs{p}^2 - \left( \abs{p} + \abs{q} \right)^2 \geq 0
\end{equation}
(since $\abs{q} \leq \abs{p}$), while for $1/2 \leq s \leq 1$ we have
\begin{equation}
f(s) \geq 3 \abs{p}^2 + (4 s^2-1) \abs{q}^2 - 2 (4s-1) \abs{p} \abs{q} = 4 \left( \abs{p} - s \abs{q} \right)^2 - \left( \abs{p} - \abs{q} \right)^2 \geq 3 \left( \abs{p} - \abs{q} \right)^2 \geq 0 \eqend{.}
\end{equation}
For $1/3 \leq t \leq 2/3$, we thus have
\begin{equation}
0 \leq f(q_{j_1},q_{j_2},3t-1) = 4 \abs{q_{j_1} + (3t-1) q_{j_2}}^2 - \abs{q_{j_1} + q_{j_2}}^2 \eqend{,}
\end{equation}
and therefore
\begin{equation}
\sup\left( 1, \frac{\sup(\abs{q_{j_1}}, (3t-1) \abs{q_{j_2}})}{\sup(\abs{q_{j_1} + (3t-1) q_{j_2}},\Lambda)} \right) \leq 2 \sup\left( 1, \frac{\sup(\abs{q_{j_1}}, \abs{q_{j_2}})}{\sup(\abs{q_{j_1} + q_{j_2}},\Lambda)} \right) \leq 2 F(1) \eqend{.}
\end{equation}
For the last term, we have
\begin{splitequation}
&\sup\left( 1, \frac{\sup((3t-2) \abs{q_{j_3}}, \abs{q_{j_1} + q_{j_2} + (3t-2) q_{j_3}})}{\sup(\abs{q_{j_1} + q_{j_2}},\Lambda)} \right) \leq \sup\left( 1, \frac{\sup(\abs{q_{j_3}}, \abs{q_{j_1} + q_{j_2}} + \abs{q_{j_3}})}{\sup(\abs{q_{j_1} + q_{j_2}},\Lambda)} \right) \\
&\qquad\leq 2 + \frac{\abs{q_{j_3}}}{\sup(\abs{q_{j_1} + q_{j_2}},\Lambda)} \leq 3 \sup\left( 1, \frac{\abs{q_{j_3}}}{\sup(\abs{q_{j_1} + q_{j_2}},\Lambda)} \right) \eqend{,}
\end{splitequation}
and thus on the whole path $F(t) \leq 3 F(1)$ such that
\begin{equation}
\sum_{T \in \mathcal{T}_4} \mathsf{G}^{T,\vec{w}+\tilde{\vec{w}}}_{\vec{K} \vec{L}^\ddag}(\vec{k}(t); \mu, \Lambda) \leq \frac{3}{\sup(\eta_{k_{\pi(i)}(t)}(\vec{k}(t)),\Lambda)} \sum_{T \in \mathcal{T}_4} \mathsf{G}^{T,\vec{w}}_{\vec{K} \vec{L}^\ddag}(\vec{q}; \mu, \Lambda) \eqend{.}
\end{equation}
Since on the whole path $\abs{k_i(t)} \leq \abs{q_i}$, we can bound the polynomial in logarithms as before, and since on each part of the path only one of the $k_i(t)$ changes, the remaining integral reads
\begin{equation}
\sum_{i=1}^3 \int_0^1 \frac{\abs{k_i'(t)}}{\sup(\eta_{k_i(t)}(\vec{k}(t)),\Lambda)} \total t \leq \sum_{i=1}^3 \sum_{Q \subseteq \{k_1(t),k_2(t),k_3(t)\}\setminus\{k_i(t)\}} \int_0^1 \frac{2 \abs{k_i'(t)}}{\abs{k_i(t) + \sum_{q \in Q} q} + \Lambda} \total t \eqend{.}
\end{equation}
An application of Lemma~\ref{lemma_slalom} to the integral then gives
\begin{equation}
\sum_{i=1}^3 \, \mathcal{P}\left( \ln_+ \frac{\abs{q_i}}{\Lambda} \right) \leq \mathcal{P}\left( \ln_+ \frac{\sup\left( \abs{\vec{q}}, \mu \right)}{\sup(\inf(\mu, \eta(\vec{q})), \Lambda)} \right) \eqend{,}
\end{equation}
and combining all of the above the bounds~\eqref{bound_l0} follow for $\Lambda \geq \mu$.

For functionals with one operator insertion, the Taylor formula reads
\begin{splitequation}
\label{func_1op_taylor}
\partial^\vec{w} \mathcal{L}^{\Lambda, \Lambda_0, l}_{\vec{K} \vec{L}^\ddag}\left( \op_A(0); \vec{q} \right) &= \partial^\vec{w} \mathcal{L}^{\Lambda, \Lambda_0, l}_{\vec{K} \vec{L}^\ddag}\left( \op_A(0); \vec{0} \right) \\
&\quad+ \sum_{i=1}^{m+n} \sum_{\alpha=1}^4 \int_0^1 \frac{\partial k_i^\alpha(t)}{\partial t} \partial_{k_i^\alpha} \partial^\vec{w} \mathcal{L}^{\Lambda, \Lambda_0, l}_{\vec{K} \vec{L}^\ddag}\left( \op_A(0); \vec{k}(t) \right) \total t \eqend{,}
\end{splitequation}
but this time the sum ranges over all momenta (since overall momentum is not conserved anymore), and we can take the simple path $k_i(t) = t q_i$ for all functionals. Inserting the bound~\eqref{bound_l1_zeromomentum} and the induction hypothesis~\eqref{bound_l1}, we obtain
\begin{splitequation}
\abs{ \partial^\vec{w} \mathcal{L}^{\Lambda, \Lambda_0, l}_{\vec{K} \vec{L}^\ddag}\left( \op_A(0); \vec{q} \right) } &\leq \sum_{T^* \in \mathcal{T}^*_{m+n}} \mathsf{G}^{T^*,\vec{w}}_{\vec{K} \vec{L}^\ddag; [\op_A]}(\vec{0}; \mu, \Lambda) \, \mathcal{P}\left( \ln_+ \frac{\Lambda}{\mu} \right) \\
&\quad+ \sum_{i=1}^{m+n} \int_0^1 \frac{\abs{q_i}}{\sup(t \bar{\eta}_{q_i}(\vec{q}), \Lambda)} \sup\left( 1, \frac{t \abs{\vec{q}}}{\Lambda} \right)^{g^{(1)}([\op_A],m+n+2l,\abs{\vec{w}}+1)} \\
&\qquad\quad\times \sum_{T^* \in \mathcal{T}^*_{m+n}} \mathsf{G}^{T^*,\vec{w}}_{\vec{K} \vec{L}^\ddag; [\op_A]}(t \vec{q}; \mu, \Lambda) \, \mathcal{P}\left( \ln_+ \frac{\sup\left( t \abs{\vec{q}}, \mu \right)}{\Lambda}, \ln_+ \frac{\Lambda}{\mu} \right) \total t \eqend{,}
\end{splitequation}
where we extracted the weight factor corresponding to the additional derivative from the tree weight factor. Since the functionals are marginal or relevant, we especially have $\abs{\vec{w}} \leq [\op_A]$ and can thus use property~\eqref{gs_prop_2} of $g^{(s)}$ to estimate
\begin{splitequation}
\frac{\abs{q_i}}{\sup(t \bar{\eta}_{q_i}(\vec{q}), \Lambda)} \sup\left( 1, \frac{t \abs{\vec{q}}}{\Lambda} \right)^{g^{(1)}([\op_A],m+n+2l,\abs{\vec{w}}+1)} &\leq \frac{\abs{\vec{q}}}{\Lambda} \sup\left( 1, \frac{\abs{\vec{q}}}{\Lambda} \right)^{g^{(1)}([\op_A],m+n+2l,\abs{\vec{w}}+1)} \\
&\leq \sup\left( 1, \frac{\abs{\vec{q}}}{\Lambda} \right)^{g^{(1)}([\op_A],m+n+2l,\abs{\vec{w}})} \eqend{.}
\end{splitequation}
The polynomial in logarithms can be estimated trivially at $t = 1$, and since for $\Lambda \geq \mu$ the tree weight factor does not depend on $\mu$, we can use the inequality~\eqref{t_rel_ineq1} to estimate the trees at $t = 1$. The remaining $t$ integral is trivial, and we obtain the bounds~\eqref{bound_l1}. The same estimates apply for functionals with one integrated insertion (only replacing $\bar{\eta}$ by $\eta$), and we obtain correspondingly the bounds~\eqref{bound_l1i}.

\subsubsection{Extension to small \texorpdfstring{$\Lambda$}{Λ}}

In the last step we use these bounds as boundary conditions at $\Lambda = \mu$ and integrate the flow equation downwards. For the functionals without insertions, we have (using the bounds~\eqref{bound_l0} at $\Lambda = \mu$ and~\eqref{bound_l0_lambdaderiv} for the $\Lambda$ derivative)
\begin{splitequation}
\abs{ \partial^\vec{w} \mathcal{L}^{\Lambda, \Lambda_0, l}_{\vec{K} \vec{L}^\ddag}(\vec{q}) } &\leq \abs{ \partial^\vec{w} \mathcal{L}^{\mu, \Lambda_0, l}_{\vec{K} \vec{L}^\ddag}(\vec{q}) } + \int_\Lambda^\mu \abs{\partial_\lambda \partial^\vec{w} \mathcal{L}^{\lambda, \Lambda_0, l}_{\vec{K} \vec{L}^\ddag}(\vec{q}) } \total \lambda \\
&= \sum_{T \in \mathcal{T}_{m+n}} \mathsf{G}^{T,\vec{w}}_{\vec{K} \vec{L}^\ddag}(\vec{q}; \mu, \mu) \,\mathcal{P}\left( \ln_+ \frac{\sup\left( \abs{\vec{q}}, \mu \right)}{\mu} \right) \\
&\quad+ \int_\Lambda^\mu \frac{1}{\sup(\inf(\mu, \eta(\vec{q})), \lambda)} \sum_{T \in \mathcal{T}_{m+n}} \mathsf{G}^{T,\vec{w}}_{\vec{K} \vec{L}^\ddag}(\vec{q}; \mu, \lambda) \,\mathcal{P}\left( \ln_+ \frac{\sup\left( \abs{\vec{q}}, \mu \right)}{\lambda} \right) \total \lambda \eqend{.}
\end{splitequation}
Since all trees are marginal, we use the inequality~\eqref{t_irr_ineq1} to estimate them at $\lambda = \Lambda$. For the first polynomial in logarithms we use
\begin{equation}
\label{func_marginal_polylog}
\ln_+ \frac{\sup\left( \abs{\vec{q}}, \mu \right)}{\mu} = \ln_+ \frac{\sup\left( \abs{\vec{q}}, \mu \right)}{\sup(\mu,\Lambda)} \leq \ln_+ \frac{\sup\left( \abs{\vec{q}}, \mu \right)}{\sup(\inf(\mu, \eta(\vec{q})),\Lambda)} \eqend{,}
\end{equation}
and the remaining $\lambda$ integral can be done using Lemma~\ref{lemma_lambdaint3}, which then gives the needed bound~\eqref{bound_l0} also for $\Lambda < \mu$. For the functionals with one insertion of a composite operator, we have (using the bounds~\eqref{bound_l1} at $\Lambda = \mu$ and~\eqref{bound_l1_lambdaderiv} for the $\Lambda$ derivative)
\begin{splitequation}
&\abs{ \partial^\vec{w} \mathcal{L}^{\Lambda, \Lambda_0, l}_{\vec{K} \vec{L}^\ddag}\left( \op_A(0); \vec{q} \right) } \leq \abs{ \partial^\vec{w} \mathcal{L}^{\mu, \Lambda_0, l}_{\vec{K} \vec{L}^\ddag}\left( \op_A(0); \vec{q} \right) } + \int_\Lambda^\mu \abs{\partial_\lambda \partial^\vec{w} \mathcal{L}^{\lambda, \Lambda_0, l}_{\vec{K} \vec{L}^\ddag}\left( \op_A(0); \vec{q} \right) } \total \lambda \\
&\quad= \sup\left( 1, \frac{\abs{\vec{q}}}{\mu} \right)^{g^{(1)}([\op_A],m+n+2l,\abs{\vec{w}})} \sum_{T^* \in \mathcal{T}^*_{m+n}} \mathsf{G}^{T^*,\vec{w}}_{\vec{K} \vec{L}^\ddag; [\op_A]}(\vec{q}; \mu, \mu) \, \mathcal{P}\left( \ln_+ \frac{\sup\left( \abs{\vec{q}}, \mu \right)}{\mu} \right) \\
&\qquad\quad+ \int_\Lambda^\mu \frac{1}{\sup(\inf(\mu, \bar{\eta}(\vec{q})), \lambda)} \sup\left( 1, \frac{\abs{\vec{q}}}{\mu} \right)^{g^{(1)}([\op_A],m+n+2l,\abs{\vec{w}})} \\
&\qquad\qquad\qquad\times \sum_{T^* \in \mathcal{T}^*_{m+n}} \mathsf{G}^{T^*,\vec{w}}_{\vec{K} \vec{L}^\ddag; [\op_A]}(\vec{q}; \mu, \lambda) \, \mathcal{P}\left( \ln_+ \frac{\sup\left( \abs{\vec{q}}, \mu \right)}{\lambda} \right) \total \lambda \eqend{.}
\end{splitequation}
The first polynomial in logarithms can be estimated using~\eqref{func_marginal_polylog}, and the large momentum factors can be trivially estimated. For the trees we use the inequality~\eqref{t_rel_ineq3}, and the remaining $\lambda$ integral can again be done using Lemma~\ref{lemma_lambdaint3}, which results in the needed bound~\eqref{bound_l1} also for $\Lambda < \mu$. For the functionals with one integrated insertion, we use the same estimates (where the condition $[\vec{K}] + [\vec{L}^\ddag] + \abs{\vec{w}} \geq 4$ is needed in order to use~\eqref{t_rel_ineq3}), and obtain the bounds~\eqref{bound_l1i}.

\subsection{Relevant functionals with vanishing boundary conditions}
\label{sec_bounds_relvan}

For the same reason as in the previous subsection, we have to bound the functionals in increasing order of relevancy. We first derive bounds for arbitrary $\Lambda$ at vanishing momentum, and then extend these bounds to general momenta using Taylor's formula with integral remainder, as in the previous section. For functionals without insertions, we thus have (using the bounds~\eqref{bound_l0_lambdaderiv}
\begin{splitequation}
\abs{ \partial^\vec{w} \mathcal{L}^{\Lambda, \Lambda_0, l}_{\vec{K} \vec{L}^\ddag}(\vec{0}) } &\leq \int_0^\Lambda \abs{ \partial_\lambda \partial^\vec{w} \mathcal{L}^{\lambda, \Lambda_0, l}_{\vec{K} \vec{L}^\ddag}(\vec{0})} \total \lambda \leq \int_0^\Lambda \frac{1}{\lambda} \sum_{T \in \mathcal{T}_{m+n}} \mathsf{G}^{T,\vec{w}}_{\vec{K} \vec{L}^\ddag}(\vec{0}; \mu, \lambda) \,\mathcal{P}\left( \ln_+ \frac{\mu}{\lambda}, \ln_+ \frac{\lambda}{\mu} \right) \total \lambda \\
&= \int_0^\Lambda \sum_{T \in \mathcal{T}_{m+n}} \lambda^{[T]-1} \,\mathcal{P}\left( \ln_+ \frac{\mu}{\lambda}, \ln_+ \frac{\lambda}{\mu} \right) \total \lambda \eqend{.}
\end{splitequation}
Since $[T] > 0$ for relevant trees, we can apply Lemma~\ref{lemma_lambdaint2} with $b = 0$ and obtain
\begin{equation}
\label{relevant_zeromomentum}
\abs{\partial^\vec{w} \mathcal{L}^{\Lambda, \Lambda_0, l}_{\vec{K} \vec{L}^\ddag}(\vec{0})} \leq \sum_{T \in \mathcal{T}_{m+n}} \sup(c,\Lambda)^{[T]} \, \mathcal{P}\left( \ln_+ \frac{\mu}{\sup(c,\Lambda)}, \ln_+ \frac{\Lambda}{\mu} \right)
\end{equation}
for any $c \geq 0$. The functionals with one external leg are independent of momenta, and taking $c = 0$ we already have the correct bound~\eqref{bound_l0}. For the functionals with two or three external legs we extend the bounds to general momenta using the Taylor formula~\eqref{func_0op_taylor}. Inserting the bounds~\eqref{relevant_zeromomentum} and the induction hypothesis~\eqref{bound_l0} in the Taylor formula, we obtain
\begin{splitequation}
\label{func_relevant_taylor}
\abs{ \partial^\vec{w} \mathcal{L}^{\Lambda, \Lambda_0, l}_{\vec{K} \vec{L}^\ddag}(\vec{q}) } &\leq \sum_{T \in \mathcal{T}_{m+n}} \bigg[ \sup(c,\Lambda)^{[T]} \, \mathcal{P}\left( \ln_+ \frac{\mu}{\sup(c,\Lambda)}, \ln_+ \frac{\Lambda}{\mu} \right) \\
&\qquad+ \sum_{i=1}^{m+n-1} \int_0^1 \abs{k_i'(t)} \mathsf{G}^{T,\vec{w}+\tilde{\vec{w}}}_{\vec{K} \vec{L}^\ddag}(\vec{k}(t); \mu, \Lambda) \,\mathcal{P}\left( \ln_+ \frac{\sup\left( \abs{\vec{k}(t)}, \mu \right)}{\sup(\inf(\mu, \eta(\vec{k}(t))), \Lambda)}, \ln_+ \frac{\Lambda}{\mu} \right) \total t \bigg] \eqend{,}
\end{splitequation}
where again $\tilde{\vec{w}}$ is a multiindex with only one non-vanishing entry corresponding to the additional $k_i^\alpha$ derivative. For the functionals with two external legs, the only contributing tree is shown in Figure~\ref{fig_tree_2pf}, and taking $c = \abs{q}$ and the path $k(t) = t q$ we get
\begin{splitequation}
\abs{ \partial^\vec{w} \mathcal{L}^{\Lambda, \Lambda_0, l}_{\vec{K} \vec{L}^\ddag}(\vec{q}) } \leq \sum_{T \in \mathcal{T}_{m+n}} &\bigg[ \sup(\abs{q},\Lambda)^{[T]} \, \mathcal{P}\left( \ln_+ \frac{\mu}{\sup(\abs{q},\Lambda)}, \ln_+ \frac{\Lambda}{\mu} \right) \\
&\quad+ \int_0^1 \abs{q} \sup( t \abs{q}, \Lambda )^{[T]-1} \,\mathcal{P}\left( \ln_+ \frac{\sup\left( t \abs{q}, \mu \right)}{\sup(\inf(\mu, t \abs{q}), \Lambda)}, \ln_+ \frac{\Lambda}{\mu} \right) \total t \bigg] \eqend{.}
\end{splitequation}
The estimate
\begin{equation}
\ln_+ \frac{\mu}{\sup(\abs{q},\Lambda)} \leq \ln_+ \frac{\sup(\abs{q}, \mu)}{\sup(\inf(\mu,\abs{q}),\Lambda)} = \ln_+ \frac{\sup(\abs{q}, \mu)}{\sup(\inf(\mu,\eta(q)),\Lambda)}
\end{equation}
gives the correct bound~\eqref{bound_l0} for the first term, while for the integral we have to change variables to $x = t \abs{q}$ and apply Lemma~\ref{lemma_taylor}. For functionals with three external legs, we again extend the summation to include $k_3$ (with $\eta_{k_3} = \eta$) and take the same path as for marginal functionals. Then we have as before
\begin{equation}
\mathsf{G}^{T,\vec{w}+\tilde{\vec{w}}}_{\vec{K} \vec{L}^\ddag}(\vec{k}(t); \mu, \Lambda) = \frac{1}{\sup(\eta_{k_i}(\vec{k}(t)),\Lambda)} \frac{\sup(\abs{\vec{k}(t)},\Lambda)}{\sup(\abs{k_1(t)},\Lambda)^{\alpha_1} \sup(\abs{k_2(t)},\Lambda)^{\alpha_2} \sup(\abs{k_3(t)},\Lambda)^{\alpha_3}}
\end{equation}
for some $\alpha_i \geq 0$, but now $\alpha_1 + \alpha_2 + \alpha_3 = 1 - [T] < 1$ since we have a relevant functional. On the first half of the path, the integral in~\eqref{func_relevant_taylor} then reduces to
\begin{equation}
6 \int_0^{1/2} \abs{q_1} \sup(2 t \abs{q_1},\Lambda)^{[T]-1} \,\mathcal{P}\left( \ln_+ \frac{\sup\left( 2 t \abs{q_1}, \mu \right)}{\sup(\inf(\mu, 2 t \abs{q_1}), \Lambda)}, \ln_+ \frac{\Lambda}{\mu} \right) \total t \eqend{,}
\end{equation}
and a change of variables $x = 2 t \abs{q_1}$ and an application of Lemma~\ref{lemma_taylor} (which can be used since $[T] > 0$) gives
\begin{equation}
\sup(\abs{q_1},\Lambda)^{[T]} \,\mathcal{P}\left( \ln_+ \frac{\sup\left( \abs{q_1}, \mu \right)}{\sup(\inf(\mu, \abs{q_1}), \Lambda)}, \ln_+ \frac{\Lambda}{\mu} \right) \eqend{.}
\end{equation}
Since $q_1$ is the largest momentum, we have
\begin{equation}
\label{func_relevant_g3}
\sup(\abs{q_1},\Lambda)^{[T]} \leq \frac{\sup(\abs{\vec{q}(t)},\Lambda)}{\sup(\abs{q_1},\Lambda)^{\alpha_1} \sup(\abs{q_2},\Lambda)^{\alpha_2} \sup(\abs{q_3},\Lambda)^{\alpha_3}} = \mathsf{G}^{T,\vec{w}}_{\vec{K} \vec{L}^\ddag}(\vec{q}; \mu, \Lambda) \eqend{,}
\end{equation}
and since $\abs{q_1} \geq \eta(\vec{q})$ we obtain the correct bound~\eqref{bound_l0} for this part. On the second half of the path, we use the inequalities~\eqref{rel_3pf_ineq} to estimate the integral by
\begin{splitequation}
&4 \int_{1/2}^1 \frac{\abs{q_1}}{\sup(\abs{k_3(t)},\Lambda)} \frac{\sup(2\abs{q_1},\Lambda)}{\sup(\abs{q_1},\Lambda)^{\alpha_1} \sup(1/2 \abs{q_2},\Lambda)^{\alpha_2} \sup(1/2 \abs{q_3},\Lambda)^{\alpha_3}} \\
&\hspace{8em}\times \mathcal{P}\left( \ln_+ \frac{\sup\left( 2 \abs{q_1}, \mu \right)}{\sup(\inf(\mu, 1/2 \abs{q_3}), \Lambda)}, \ln_+ \frac{\Lambda}{\mu} \right) \total t \\
&\quad\leq \mathsf{G}^{T,\vec{w}}_{\vec{K} \vec{L}^\ddag}(\vec{q}; \mu, \Lambda) \, \mathcal{P}\left( \ln_+ \frac{\sup\left( \abs{\vec{q}}, \mu \right)}{\sup(\inf(\mu, \eta(\vec{q})), \Lambda)}, \ln_+ \frac{\Lambda}{\mu} \right) \int_{1/2}^1 \frac{\abs{q_1}}{\sup(\abs{k_3(t)},\Lambda)} \total t \eqend{.} \raisetag{1.6\baselineskip}
\end{splitequation}
The subsequent estimate (using again the inequalities~\eqref{rel_3pf_ineq})
\begin{equation}
\frac{\abs{q_1}}{\sup(\abs{k_3(t)},\Lambda)} \leq \frac{4 \abs{q_1}}{3 \abs{k_3(t)} + \Lambda} \leq \frac{4 \abs{q_1}}{2 (1-t) \abs{q_1} + \abs{q_3} + \Lambda} \eqend{,}
\end{equation}
an application of Lemma~\ref{lemma_taylor} and the following
\begin{equation}
\ln_+ \frac{\abs{q_1}}{\abs{q_3} + \Lambda} \leq \ln_+ \frac{\abs{q_1}}{\sup(\abs{q_3},\Lambda)} \leq \ln_+ \frac{\sup(\abs{\vec{q}},\mu)}{\sup(\inf(\mu,\eta(\vec{q})),\Lambda)}
\end{equation}
then also give the bounds~\eqref{bound_l0}. It remains to bound the first term in~\eqref{func_relevant_taylor}, for which we take $c = \abs{q_1}$ and then use the estimates~\eqref{func_relevant_g3} and
\begin{equation}
\ln_+ \frac{\mu}{\sup(\abs{q_1},\Lambda)} \leq \ln_+ \frac{\sup(\abs{\vec{q}},\mu)}{\sup(\inf(\mu,\eta(\vec{q})),\Lambda)}
\end{equation}
to obtain the correct bounds.

For functionals with one integrated insertion, we use the same estimates. The only new term in the bounds that we want to prove is the large momentum factor, which is equal to one at vanishing momenta (and thus we do not need any new estimates in that case), while along the path we take in the Taylor formula we always have $\abs{k_i(t)} \leq 2 \abs{\vec{q}}$ and thus can estimate
\begin{equation}
\sup\left( 1, \frac{\abs{\vec{k}(t)}}{\sup(\mu, \Lambda)} \right) \leq 2 \sup\left( 1, \frac{\abs{\vec{q}}}{\sup(\mu, \Lambda)} \right) \eqend{,}
\end{equation}
such that the bounds~\eqref{bound_l1i} follow.

\subsection{Properties of functionals with one insertion of a composite operator}
\label{sec_bounds_l1iprops}

To prove Proposition~\ref{thm_l1i_d}, we first use the shift property~\eqref{func_sop_shift} to perform the integral over $x$ on the right-hand side of equation~\eqref{func_1iop_int}, which gives
\begin{equation}
\int \mathcal{L}^{\Lambda, \Lambda_0, l}_{\vec{K} \vec{L}^\ddag}\left( \tilde{\op}_A(x); \vec{q} \right) = (2\pi)^4 \delta\left( \sum_{i=1}^{m+n} q_i \right) \mathcal{L}^{\Lambda, \Lambda_0, l}_{\vec{K} \vec{L}^\ddag}\left( \tilde{\op}_A(0); \vec{q} \right) \eqend{.}
\end{equation}
We can thus take the $\Lambda$ derivative of an integral over functionals with one insertion without worrying about convergence issues, and then both sides of equation~\eqref{func_1iop_int} fulfil the same linear flow equation. Since the solution of this equation is unique, we can establish equality by imposing the same boundary conditions. This can be done by evaluating the functional with an insertion of $\int\!\op_A$ at $\Lambda = \mu$ and vanishing momenta, where the functionals with an insertion of $\op_{A,k}$ have their boundary conditions fixed. By ascending in the dimension of the $\op_{A,k}$, we also uniquely determine the required $\op_{A,k}$, and since the boundary conditions for irrelevant functionals are already the same on both sides it is clear that $[\op_{A,k}] \leq [\op_A]$.

To prove Proposition~\ref{thm_lowenstein_1}, we take the boundary conditions given in equation~\eqref{func_1op_bphz} for the functionals with one composite operator insertion. Note that on the left-hand side of equation~\eqref{lowenstein_1} we take derivatives with respect to $x$ of a functional with one operator insertion, while on the right-hand side we have a functional with a different operator insertion, which makes Proposition~\ref{thm_lowenstein_1} non-trivial. However, the flow equations are obviously the same on both sides, and only the boundary conditions need to be compared to prove equality. We then can use the shift property~\eqref{func_sop_shift} to calculate
\begin{splitequation}
\partial^\vec{w} \partial^a_x \mathcal{L}^{\Lambda, \Lambda_0, l}_{\vec{K} \vec{L}^\ddag}\left( \op_A(x); \vec{q} \right) &= \partial^\vec{w} \partial^a_x \left[ \mathe^{- \mathi x \sum_{i=1}^{m+n} q_i} \mathcal{L}^{\Lambda, \Lambda_0, l}_{\vec{K} \vec{L}^\ddag}\left( \op_A(0); \vec{q} \right) \right] \\
&= \partial^\vec{w} \left[ \left( - \mathi \sum_{i=1}^{m+n} q_i \right)^a \mathcal{L}^{\Lambda, \Lambda_0, l}_{\vec{K} \vec{L}^\ddag}\left( \op_A(x); \vec{q} \right) \right] \\
&= \sum_{\vec{u} \leq \vec{w}} \frac{\vec{w}!}{\vec{u}! (\vec{w}-\vec{u})!} \left[ \partial^\vec{u} \left( - \mathi \sum_{i=1}^{m+n} q_i \right)^a \right] \partial^{\vec{w}-\vec{u}} \mathcal{L}^{\Lambda, \Lambda_0, l}_{\vec{K} \vec{L}^\ddag}\left( \op_A(x); \vec{q} \right) \eqend{,}
\end{splitequation}
where we recall that
\begin{equation}
\op_A(x) = \left( \prod_{i=1}^m \partial^{v_i} \phi_{M_i}(x) \right) \left( \prod_{j=1}^n \partial^{v_{m+j}} \phi^\ddag_{N_j}(x) \right) \eqend{.}
\end{equation}
Setting $\Lambda = \mu$, $x = 0$ and $\vec{q} = \vec{0}$ to obtain the boundary conditions for relevant functionals, the right-hand side vanishes except if $u_1 + \cdots + u_{m+n} = a$ and $\vec{w}-\vec{u} = \vec{v}$, where we get
\begin{equation}
\label{lowenstein_bdry1}
\left[ \partial^\vec{w} \partial^a_x \mathcal{L}^{\mu, \Lambda_0, l}_{\vec{K} \vec{L}^\ddag}\left( \op_A(x); \vec{q} \right) \right]_{x = 0, \vec{q} = \vec{0}} = \sum_{u_1 + \cdots + u_{m+n} = a} \frac{\vec{w}!}{\vec{u}!} \left[ \partial^\vec{u} \left( \sum_{i=1}^{m+n} q_i \right)^a \right] (-\mathi)^\abs{\vec{w}} \delta_{\vec{K},\vec{M}} \delta_{\vec{L}^\ddag,\vec{N}^\ddag} \delta_{\vec{w},\vec{v}+\vec{u}} \delta_{l,0} \eqend{.}
\end{equation}
Since $u_1 + \cdots + u_{m+n} = a$, we get
\begin{equation}
\label{lowenstein_bdry2}
\partial^\vec{u} \left( \sum_{i=1}^{m+n} q_i \right)^a = \prod_{\alpha=1}^4 \partial^{u_1^\alpha + \cdots + u_{m+n}^\alpha} \left( \sum_{i=1}^{m+n} q_i \right)^{a^\alpha} = \prod_{\alpha=1}^4 \partial^{a^\alpha} \left( \sum_{i=1}^{m+n} q_i \right)^{a^\alpha} = \prod_{\alpha=1}^4 (a^\alpha)! = a! \eqend{.}
\end{equation}
On the other hand, we have
\begin{splitequation}
\partial^a_x \op_A(x) &= a! \sum_{u_1 + \cdots + u_{m+n} = a} \left( \prod_{i=1}^m \frac{1}{u_i!} \partial^{v_i+u_i} \phi_{M_i}(x) \right) \left( \prod_{j=1}^n \frac{1}{u_{m+j}!} \partial^{v_{m+j}+u_{m+j}} \phi^\ddag_{N_j}(x) \right) \\
&= \sum_{u_1 + \cdots + u_{m+n} = a} \frac{a!}{\vec{u}!} \op_{A;\vec{u}}(x) \eqend{,}
\end{splitequation}
where $\vec{u} = (u_1, \ldots, u_{m+n})$ and $\op_{A;\vec{u}}$ is defined by adding the appropriate $u_i$ derivatives on the fields appearing in $\op_A$. The relevant and marginal functionals with one insertion of $\op_{A;\vec{u}}$ then have the boundary conditions
\begin{equation}
\partial^\vec{w} \mathcal{L}^{\mu, \Lambda_0, l}_{\vec{K} \vec{L}^\ddag}\left( \op_{A;\vec{u}}(0); \vec{0} \right) = (-\mathi)^\abs{\vec{w}} \vec{w}! \delta_{\vec{K},\vec{M}} \delta_{\vec{L}^\ddag,\vec{N}^\ddag} \delta_{\vec{w},\vec{v}+\vec{u}} \delta_{l,0} \eqend{,}
\end{equation}
and since the functionals with one operator insertion are linear in the insertion, we get
\begin{equation}
\partial^\vec{w} \mathcal{L}^{\mu, \Lambda_0, l}_{\vec{K} \vec{L}^\ddag}\left( \partial^a \op_A(0); \vec{0} \right) = \sum_{u_1 + \cdots + u_{m+n} = a} \frac{a!}{\vec{u}!} (-\mathi)^\abs{\vec{w}} \vec{w}! \delta_{\vec{K},\vec{M}} \delta_{\vec{L}^\ddag,\vec{N}^\ddag} \delta_{\vec{w},\vec{v}+\vec{u}} \delta_{l,0} \eqend{.}
\end{equation}
This is thus seen to coincide with Equations~\eqref{lowenstein_bdry1} and~\eqref{lowenstein_bdry2}, and the proposition follows.

\subsection{Functionals with more than one insertion}
\label{sec_bounds_func_sop}

Since all functionals with more than one insertion of a composite operator have vanishing boundary conditions at $\Lambda = \Lambda_0$ (see Table~\ref{table_boundary}), we have to integrate the flow equation~\eqref{l_sop_flow_hierarchy} downwards. However, this is incompatible with the naive scaling of the source term (the last term in the flow equation~\eqref{l_sop_flow_hierarchy}), where the functionals with one operator insertion grow like a positive power of $\Lambda$ as $\Lambda \to \infty$. The way out of this is to generate additional negative powers of $\Lambda$ in the source term, which comes at the expense of derivatives with respect to the points at which the composite operators are inserted, or, if all points are distinct, at the expense of negative powers of the distance between two points. We remind the reader that we always take the last operator insertion to be at $x_s = 0$ (even if we do not show this explicitly in notation), since the general case can be recovered using the shift property~\eqref{func_sop_shift}.

Note that in this case, there is an additional induction step which involves the parameter $D$ in the bounds~\eqref{bound_ls_lambdamu}--\eqref{bound_ls_x}. For a given number $s$ of operator insertions we first have to prove the bounds for the functionals with at least one external leg (\ie, $m+n > 0$) for $D = 0$, which is possible because the right-hand side of the corresponding flow equation~\eqref{l_sop_flow_hierarchy} only involves functionals with at least one external leg. Then the bounds are proven for $m+n > 0$ and $D = 1$, and in the last step for $m+n = 0$ (and $D = 1$).

\subsubsection{Bounds for \texorpdfstring{$\Lambda \geq \mu$}{Λ≥μ}}

To prove equations~\eqref{bound_ls_lambdamu} and~\eqref{bound_ks_lambdamu}, we again first need to bound the right-hand side of the flow equation~\eqref{l_sop_flow_hierarchy}. The first two terms can already be written in the form~\eqref{bound_ls_lambdamu} by the induction, and then we want to prove the bounds
\begin{equation}
\label{bound_ks_lambdamu_lambdaderiv}
\abs{ \partial_\Lambda \partial^\vec{w} \mathcal{K}^{\Lambda, \Lambda_0, l}_{\vec{K} \vec{L}^\ddag; D}\left( \bigotimes_{k=1}^s \op_{A_k}(x_k); \vec{q} \right) } \leq \frac{1}{\Lambda} \abs{ \partial^\vec{w} \mathcal{K}^{\Lambda, \Lambda_0, l}_{\vec{K} \vec{L}^\ddag; D}\left( \bigotimes_{k=1}^s \op_{A_k}(x_k); \vec{q} \right) } \eqend{,}
\end{equation}
with the bound~\eqref{bound_ks_lambdamu} understood on the right-hand side. For the first (linear) term, we thus insert the induction hypothesis~\eqref{bound_ls_lambdamu} and obtain
\begin{equation}
\frac{c}{2} \int \left( \partial_\Lambda C^{\Lambda, \Lambda_0}_{MN}(-p) \right) \partial^\vec{w} \mathcal{L}^{\Lambda, \Lambda_0, l-1}_{MN \vec{K} \vec{L}^\ddag}\left( \bigotimes_{k=1}^s \op_{A_k}; p,-p,\vec{q} \right) \frac{\total^4 p}{(2\pi)^4} = \sum_{\vec{a} > 0, \abs{\vec{a}} = D+[\op_\vec{A}]} \partial^\vec{a}_{\vec{x}} F_1
\end{equation}
with
\begin{equation}
F_1 \equiv \frac{c}{2} \int \left( \partial_\Lambda C^{\Lambda, \Lambda_0}_{MN}(-p) \right) \partial^\vec{w} \mathcal{K}^{\Lambda, \Lambda_0, l-1}_{MN \vec{K} \vec{L}^\ddag; D}\left( \bigotimes_{k=1}^s \op_{A_k}; p,-p,\vec{q} \right) \frac{\total^4 p}{(2\pi)^4} \eqend{.}
\end{equation}
We then insert the induction hypothesis~\eqref{bound_ks_lambdamu} for $\mathcal{K}$ and the bounds on the covariance~\eqref{prop_abl} to get
\begin{splitequation}
\abs{F_1} &\leq \prod_{i=1}^{s-1} \left( 1 + \ln_+ \frac{1}{\mu \abs{x_i}} \right) \int \sup(\abs{p}, \Lambda)^{-5+[\phi_M]+[\phi_N]} \, \mathe^{-\frac{\abs{p}^2}{2 \Lambda^2}} \sup\left( 1, \frac{\abs{\vec{q},p,-p}}{\Lambda} \right)^{g^{(s)}([\op_\vec{A}],m+n+2l,\abs{\vec{w}})} \\
&\qquad\times \sum_{T^* \in \mathcal{T}^*_{m+n+2}} \mathsf{G}^{T^*,\vec{w}}_{MN \vec{K} \vec{L}; -D}(p,-p,\vec{q}; \mu, \Lambda) \, \mathcal{P}\left( \ln_+ \frac{\sup\left( \abs{\vec{q},p,-p}, \mu \right)}{\Lambda}, \ln_+ \frac{\Lambda}{\mu} \right) \frac{\total^4 p}{(2\pi)^4} \eqend{.}
\end{splitequation}
After using the estimates~\eqref{func_0op_estlog}, we rescale $p = x \Lambda$ and perform the $p$ integral applying Lemma~\ref{lemma_pint2}. A subsequent amputation of the external legs corresponding to $M$ and $N$ from each tree (which gives an additional factor according to the estimate~\eqref{amputate}) then gives a bound of the form~\eqref{bound_ks_lambdamu_lambdaderiv} for $F_1$. For the second (quadratic) term, we get after inserting the induction hypothesis~\eqref{bound_ls_lambdamu}
\begin{equation}
\sum_{\vec{a} > 0, \abs{\vec{a}} = D+[\op_\vec{A}]} \partial^\vec{a}_{\vec{x}} F_2
\end{equation}
with
\begin{splitequation}
F_2 \equiv - \sum_{\subline{\sigma \cup \tau = \{1, \ldots, m\} \\ \rho \cup \varsigma = \{1, \ldots, n\} }} \sum_{l'=0}^l \sum_{\vec{u}+\vec{v}\leq \vec{w}} &c_{\sigma\tau\rho\varsigma} c_{uvw} \left( \partial^\vec{u} \mathcal{L}^{\Lambda, \Lambda_0, l'}_{\vec{K}_\sigma \vec{L}_\rho^\ddag M}(\vec{q}_\sigma,\vec{q}_\rho,-k) \right) \left( \partial^{\vec{w}-\vec{u}-\vec{v}} \partial_\Lambda C^{\Lambda, \Lambda_0}_{MN}(k) \right) \\
&\quad\times \left[ \partial^\vec{v} \mathcal{K}^{\Lambda, \Lambda_0, l-l'}_{N \vec{K}_\tau \vec{L}_\varsigma^\ddag}\left( \bigotimes_{k=1}^s \op_{A_k}; k,\vec{q}_\tau,\vec{q}_\varsigma \right) \right] \eqend{,}
\end{splitequation}
where the momentum $k$ is defined by equation~\eqref{k_def}. Inserting the bound~\eqref{bound_l0} for the functional without insertions, the induction hypothesis~\eqref{bound_ks_lambdamu} for $\mathcal{K}$ and the bounds on the covariance~\eqref{prop_abl} it follows that
\begin{splitequation}
\abs{F_2} &\leq \sum_{\subline{\sigma \cup \tau = \{1, \ldots, m\} \\ \rho \cup \varsigma = \{1, \ldots, n\} }} \sum_{l'=0}^l \sum_{\vec{u}+\vec{v}\leq \vec{w}} \sum_{T \in \mathcal{T}_{\abs{\sigma}+\abs{\rho}+1}} \mathsf{G}^{T,\vec{u}}_{\vec{K}_\sigma \vec{L}_\rho^\ddag M}(\vec{q}_\sigma,\vec{q}_\rho,-k; \mu, \Lambda) \, \mathe^{-\frac{\abs{k}^2}{2 \Lambda^2}} \\
&\quad\times \sup(\abs{k}, \Lambda)^{-5+[\phi_M]+[\phi_N]-\abs{\vec{w}}+\abs{\vec{u}}+\abs{\vec{v}}} \, \mathcal{P}\left( \ln_+ \frac{\sup\left( \abs{\vec{q}_\sigma,\vec{q}_\rho,-k}, \mu \right)}{\Lambda}, \ln_+ \frac{\Lambda}{\mu} \right) \\
&\quad\times \prod_{i=1}^{s-1} \left( 1 + \ln_+ \frac{1}{\mu \abs{x_i}} \right) \sup\left( 1, \frac{\abs{k,\vec{q}_\tau,\vec{q}_\varsigma}}{\Lambda} \right)^{g^{(s)}([\op_\vec{A}],\abs{\tau}+\abs{\varsigma}+1+2(l-l'),\abs{\vec{v}})} \\
&\quad\times \sum_{T^* \in \mathcal{T}^*_{\abs{\tau}+\abs{\varsigma}+1}} \mathsf{G}^{T^*,\vec{v}}_{\vec{K} \vec{L}; -D}(k,\vec{q}_\tau,\vec{q}_\varsigma; \mu, \Lambda) \, \mathcal{P}\left( \ln_+ \frac{\sup\left( \abs{k,\vec{q}_\tau,\vec{q}_\varsigma}, \mu \right)}{\Lambda}, \ln_+ \frac{\Lambda}{\mu} \right) \eqend{.} \raisetag{1.9\baselineskip}
\end{splitequation}
Using the estimates~\eqref{func_0op_estlog2} and~\eqref{func_0op_estlog} we can fuse the polynomials in logarithms. For the large momentum factor, we use~\eqref{func_0op_estlog2_b} and~\eqref{func_1op_est_order}. The trees are fused as detailed in Section~\ref{sec_tree_fuse}, and we obtain an extra factor of $\sup(\abs{k}, \Lambda)^{4-[\phi_M]-[\phi_N]}$ according to the estimate~\eqref{gw_fused_2_est}. We then change the $\vec{u}+\vec{v}$ derivatives acting on the fused tree to $\vec{w}$ derivatives in the same way as for functionals without insertions (using equations~\eqref{func_0op_estfuse_a} and~\eqref{func_0op_estfuse_b} with $\bar{\eta}$ instead of $\eta$), such that taking everything together a bound of the form~\eqref{bound_ks_lambdamu_lambdaderiv} is achieved also for $F_2$.

For the source term in the flow equation~\eqref{l_sop_flow_hierarchy}, we have to distinguish three cases: first, both functionals contain only one operator insertion (this obviously can only happen if the functional on the left-hand side has exactly two operator insertions); second, one of the functionals contains one operator insertion and the other one contains at least two; and third, both functionals contain at least two operator insertions. Let us start with the first case, where the source term reads
\begin{splitequation}
F_3 \equiv - \sum_{\subline{\sigma \cup \tau = \{1, \ldots, m\} \\ \rho \cup \varsigma = \{1, \ldots, n\} }} &\sum_{l'=0}^l \sum_{\vec{u}\leq \vec{w}} c_{\sigma\tau\rho\varsigma} c_{uw} \int \left( \partial^\vec{u} \mathcal{L}^{\Lambda, \Lambda_0, l'}_{\vec{K}_\sigma \vec{L}_\rho^\ddag M}\left( \op_{A_1}(x_1); \vec{q}_\sigma,\vec{q}_\rho,p \right) \right) \\
&\qquad\times \left( \partial_\Lambda C^{\Lambda, \Lambda_0}_{MN}(-p) \right) \left( \partial^{\vec{w}-\vec{u}} \mathcal{L}^{\Lambda, \Lambda_0, l-l'}_{N \vec{K}_\tau \vec{L}_\varsigma^\ddag}\left( \op_{A_2}(0);-p,\vec{q}_\tau,\vec{q}_\varsigma \right) \right) \frac{\total^4 p}{(2\pi)^4} \eqend{.}
\end{splitequation}
While the second functional has an operator insertion at $x_2 = 0$ and can thus be bounded using the bound~\eqref{bound_l1}, for the first functional we have to use the shift property~\eqref{func_sop_shift} which gives an extra factor
\begin{equation}
\exp\left( - \mathi x_1 (p+k) \right) = \prod_{\alpha=1}^4 \mathcal{E}_0\left( x_1^\alpha (p+k)^\alpha \right) \eqend{,}
\end{equation}
with the function $\mathcal{E}_0$ defined in equation~\eqref{expint_func_def} and with $k$ defined in equation~\eqref{k_def}. Since the property~\eqref{expint_abl} is equivalent to
\begin{equation}
\mathcal{E}_k\left( x_1^\alpha (p+k)^\alpha \right) = - \mathi \partial_{p^\alpha} \partial_{x_1^\alpha} \mathcal{E}_{k+1}\left( x_1^\alpha (p+k)^\alpha \right)
\end{equation}
for any direction $\alpha \in \{1,2,3,4\}$, we choose a direction $\alpha$ such that $\abs{x_1^\alpha} \geq \abs{x_1}/2$ (which is always possible), and take the multiindex $a = (a^1, a^2, a^3, a^4)$ with $a^\beta = \abs{a} \delta^\beta_\alpha$ to obtain
\begin{equation}
\label{func_sop_genderivs}
\exp\left( - \mathi x_1 (p+k) \right) = \prod_{\alpha=1}^4 \left( - \mathi \partial_{p^\alpha} \partial_{x_1^\alpha} \right)^{a^\alpha} \mathcal{E}_{a^\alpha}\left( x_1^\alpha (p+k)^\alpha \right) = (-\mathi)^\abs{a} \partial_{x_1}^a \partial_p^a \prod_{\alpha=1}^4 \mathcal{E}_{a^\alpha}\left( x_1^\alpha (p+k)^\alpha \right) \eqend{.}
\end{equation}
Taking $\abs{a} = [\op_{A_1}] + [\op_{A_2}] + D$, we thus have obtained the necessary $x$ derivatives which can be taken out of the integral to obtain
\begin{equation}
F_3 = \partial_{x_1}^a F_4 \eqend{,}
\end{equation}
where $F_4$ denotes all the remaining terms for which we need to prove the bounds~\eqref{bound_ks_lambdamu_lambdaderiv}. We integrate the $p$ derivatives by parts (which does not give boundary terms because the covariance is rapidly decreasing as $\abs{p} \to \infty$~\eqref{r_prop_bound}) and insert the bound on the covariance~\eqref{prop_abl}, for the functionals with one insertion~\eqref{bound_l1} and for the function $\mathcal{E}_k$~\eqref{expint_bounds} to obtain (remember that $\Lambda \geq \mu$)
\begin{splitequation}
\abs{F_4} &\leq \sum_{\subline{\sigma \cup \tau = \{1, \ldots, m\} \\ \rho \cup \varsigma = \{1, \ldots, n\} }} \sum_{l'=0}^l \sum_{\vec{u} \leq \vec{w}} \sum_{v+v'\leq a} \int \left( 1 + \ln_+ \frac{1}{\abs{x_1^\alpha (p+k)^\alpha}} \right) \sup(\abs{p}, \Lambda)^{-5+[\phi_M]+[\phi_N]-\abs{a}+\abs{v}+\abs{v'}} \\
&\qquad\times \mathe^{-\frac{\abs{p}^2}{2 \Lambda^2}} \sup\left( 1, \frac{\abs{\vec{q}_\sigma,\vec{q}_\rho,p}}{\Lambda} \right)^{g^{(1)}([\op_{A_1}],\abs{\sigma}+\abs{\rho}+1+2l',\abs{\vec{u}}+\abs{v})} \\
&\qquad\times \sup\left( 1, \frac{\abs{-p,\vec{q}_\tau,\vec{q}_\varsigma}}{\Lambda} \right)^{g^{(1)}([\op_{A_2}],\abs{\tau}+\abs{\varsigma}+1+2(l-l'),\abs{\vec{w}}-\abs{\vec{u}}+\abs{v'})} \\
&\qquad\times \sum_{T^* \in \mathcal{T}^*_{\abs{\sigma}+\abs{\rho}+1}} \mathsf{G}^{T^*,\vec{u}+(\vec{0},v)}_{\vec{K}_\sigma \vec{L}_\rho^\ddag M; [\op_{A_1}]}(\vec{q}_\sigma, \vec{q}_\rho, p; \mu, \Lambda) \sum_{T^* \in \mathcal{T}^*_{\abs{\tau}+\abs{\varsigma}+1}} \mathsf{G}^{T^*,\vec{w}-\vec{u}+(v',\vec{0})}_{N \vec{K}_\tau \vec{L}_\varsigma^\ddag; [\op_{A_2}]}(-p, \vec{q}_\tau, \vec{q}_\varsigma; \mu, \Lambda) \\
&\qquad\times \mathcal{P}\left( \ln_+ \frac{\sup\left( \abs{\vec{q}_\sigma,\vec{q}_\rho,p}, \mu \right)}{\Lambda}, \ln_+ \frac{\Lambda}{\mu} \right) \mathcal{P}\left( \ln_+ \frac{\sup\left( \abs{-p,\vec{q}_\tau,\vec{q}_\varsigma}, \mu \right)}{\Lambda}, \ln_+ \frac{\Lambda}{\mu} \right) \frac{\total^4 p}{(2\pi)^4} \eqend{.} \raisetag{1.4\baselineskip}
\end{splitequation}
Using that
\begin{equation}
\label{func_2op_estlog}
\abs{\vec{q}_\sigma,\vec{q}_\rho,p} \leq \abs{\vec{q},p,-p} \eqend{,} \qquad \abs{-p,\vec{q}_\tau,\vec{q}_\varsigma} \leq \abs{\vec{q},p,-p}
\end{equation}
and the estimates~\eqref{func_0op_estlog}, we can fuse the polynomials in logarithms and the large momentum factors, and then we use property~\eqref{gs_prop_3} of $g^{(s)}$ to estimate
\begin{splitequation}
\label{func_sop_g1g1est}
&g^{(1)}([\op_{A_1}],\abs{\sigma}+\abs{\rho}+1+2l',\abs{\vec{u}}+\abs{v}) + g^{(1)}([\op_{A_2}],\abs{\tau}+\abs{\varsigma}+1+2(l-l'),\abs{\vec{w}}-\abs{\vec{u}}+\abs{v'}) \\
&\quad\leq g^{(2)}([\op_{A_1}]+[\op_{A_2}],m+n+2l,\abs{\vec{w}}) - ([\op_{A_1}]+[\op_{A_2}]+D) \eqend{.}
\end{splitequation}
The trees are fused according to the estimate~\eqref{gw_fused_1_est}, and for the logarithm coming from the $\mathcal{E}_k$ functions we use the estimate
\begin{equation}
\label{func_sop_logest}
\ln_+ \frac{1}{\abs{x_1^\alpha (p+k)^\alpha}} \leq \ln_+ \frac{1}{\mu \abs{x_1^\alpha}} + \ln_+ \frac{\Lambda}{\abs{(p+k)^\alpha}} \eqend{,}
\end{equation}
which then gives
\begin{splitequation}
\abs{F_4} \leq \sum_{v+v'\leq a} \int &\left( 1 + \ln_+ \frac{1}{\mu \abs{x_1^\alpha}} + \ln_+ \frac{\Lambda}{\abs{(p+k)^\alpha}} \right) \, \mathe^{-\frac{\abs{p}^2}{2 \Lambda^2}} \sup(\abs{p}, \Lambda)^{-5+[\phi_M]+[\phi_N]-\abs{a}+\abs{v}+\abs{v'}} \\
&\quad\times \sup\left( 1, \frac{\abs{\vec{q},p,-p}}{\Lambda} \right)^{g^{(2)}([\op_{A_1}]+[\op_{A_2}],m+n+2l,\abs{\vec{w}}) - ([\op_{A_1}]+[\op_{A_2}]+D)} \\
&\quad\times \sum_{T^* \in \mathcal{T}^*_{m+n+2}} \mathsf{G}^{T^*,\vec{w}+(\vec{0},v,v')}_{\vec{K} \vec{L}^\ddag M N; [\op_{A_1}]+[\op_{A_2}]}(\vec{q}, p, -p; \mu, \Lambda) \\
&\quad\times \mathcal{P}\left( \ln_+ \frac{\sup\left( \abs{\vec{q}}, \mu \right)}{\Lambda}, \ln_+ \frac{\abs{p}}{\Lambda}, \ln_+ \frac{\Lambda}{\mu} \right) \frac{\total^4 p}{(2\pi)^4} \eqend{.}
\end{splitequation}
The $p$ integral can now be done using Lemma~\ref{lemma_pint3}, and we obtain the estimate
\begin{splitequation}
\abs{F_4} &\leq \left( 1 + \ln_+ \frac{1}{\mu \abs{x_1^\alpha}} \right) \sum_{v+v' \leq a} \Lambda^{-1+[\phi_M]+[\phi_N]-[\op_{A_1}]-[\op_{A_2}]-D-\abs{a}+\abs{v}+\abs{v'}} \\
&\quad\times \sum_{T^* \in \mathcal{T}^*_{m+n+2}} \mathsf{G}^{T^*,\vec{w}+(\vec{0},v,v')}_{\vec{K} \vec{L}^\ddag M N; [\op_{A_1}]+[\op_{A_2}]}(\vec{q}, 0, 0; \mu, \Lambda) \, \mathcal{P}\left( \ln_+ \frac{\sup\left( \abs{\vec{q}}, \mu \right)}{\Lambda}, \ln_+ \frac{\Lambda}{\mu} \right) \\
&\qquad\times \sup\left( 1, \frac{\abs{\vec{q}}}{\Lambda} \right)^{g^{(2)}([\op_{A_1}]+[\op_{A_2}],m+n+2l,\abs{\vec{w}}) - [\op_{A_1}]-[\op_{A_2}]-D} \eqend{.}
\end{splitequation}
Since $\abs{x_1^\alpha} \geq \abs{x_1}/2$ by our choice of $\alpha$, we can estimate
\begin{equation}
\left( 1 + \ln_+ \frac{1}{\mu \abs{x_1^\alpha}} \right) \leq 1 + \ln_+ \frac{1}{\mu \abs{x_1^\alpha}} + \ln_+ 2 \leq 2 \left( 1 + \ln_+ \frac{1}{\mu \abs{x_1}} \right) \eqend{.}
\end{equation}
We then remove the derivative weight factor corresponding to the $v+v'$ derivatives that acted on the momentum $p$, which gives a factor~\eqref{gw_def} $\Lambda^{-\abs{v}-\abs{v'}}$ (since the momentum is now zero), and amputating then the external legs corresponding to $M$ and $N$ from the tree, we obtain factors according to the estimate~\eqref{amputate}, which (since $\bar{\eta}(\vec{q},0,0) = 0$) gives $\Lambda^{-[\phi_M]-[\phi_N]}$. We furthermore change the particular dimension associated to the tree from $[\op_{A_1}]+[\op_{A_2}]$ to $-D$, which according to~\eqref{particular_weight} gives an extra factor
\begin{equation}
\label{func_sop_treedim_est}
\sup(\abs{\vec{q}}, \Lambda)^{[\op_{A_1}]+[\op_{A_2}]+D} = \Lambda^{[\op_{A_1}]+[\op_{A_2}]+D} \sup\left( 1, \frac{\abs{\vec{q}}}{\Lambda} \right)^{[\op_{A_1}]+[\op_{A_2}]+D} \eqend{,}
\end{equation}
and then obtain a bound of the form~\eqref{bound_ks_lambdamu_lambdaderiv} for $F_4$.

The second case, where one functional contains $s' > 1$ insertions (of $\op_\vec{A}$), and the other one only one insertion of $\op_B$, is estimated in a similar way. We use the shift property~\eqref{func_sop_shift} to bring the last insertion of the first functional to the origin, and can then use the representation~\eqref{bound_ls_lambdamu} for the functional with more than one insertion, which already gives some of the necessary $x$ derivatives. We then perform the same estimates, using equation~\eqref{func_sop_genderivs} to generate the remaining $[\op_B]$ derivatives (note that the first functional has a lower number of insertions, and has thus already been estimated using the correct parameter $D$). Instead of equation~\eqref{func_sop_g1g1est}, property~\eqref{gs_prop_3} of $g^{(s)}$ gives
\begin{splitequation}
&g^{(s')}([\op_\vec{A}],\abs{\sigma}+\abs{\rho}+1+2l',\abs{\vec{u}}) + g^{(1)}([\op_B],\abs{\tau}+\abs{\varsigma}+1+2(l-l'),\abs{\vec{v}}) \\
&\quad\leq g^{(s'+1)}([\op_\vec{A}]+[\op_B],m+n+2l,\abs{\vec{w}}) - ([\op_\vec{A}]+[\op_B]+s'+1) \\
&\quad\leq g^{(s'+1)}([\op_\vec{A}]+[\op_B],m+n+2l,\abs{\vec{w}}) - [\op_B] \eqend{,}
\end{splitequation}
which is enough to bring the power of the particular weight factor of the fused tree from $-D+[\op_B]$ to $-D$, and cancel the powers of $\Lambda$ coming from the newly generated $x$ derivatives. The $p$ integral and the amputation are done as before, and we obtain a bound of the form~\eqref{bound_ks_lambdamu_lambdaderiv} as required. In the last case, where both functionals contain at least two insertions (of $s'$ operators $\op_\vec{A}$ and $s''$ operators $\op_\vec{B}$), we again use the shift property~\eqref{func_sop_shift} to bring the last operator insertion of the first functional to the origin, and can then use the representation~\eqref{bound_ls_lambdamu}, which already gives the correct number of derivatives. In this case we use property~\eqref{gs_prop_3} of $g^{(s)}$ to obtain
\begin{splitequation}
&g^{(s')}([\op_\vec{A}],\abs{\sigma}+\abs{\rho}+1+2l',\abs{\vec{u}}) + g^{(s'')}([\op_\vec{B}],\abs{\tau}+\abs{\varsigma}+1+2(l-l'),\abs{\vec{v}}) \\
&\qquad\leq g^{(s'+s'')}([\op_\vec{A}]+[\op_\vec{B}],m+n+2l,\abs{\vec{w}}) \eqend{.}
\end{splitequation}
For $D = 0$, the fused tree has then already the correct particular weight factor, while for $D = 1$ we need to insert the bounds with $D = 0$ for one of the source terms to obtain the correct particular weight factor, and thus need to ascend in $D$ as stated in the introduction. Everything else is done as before, and we obtain a bound of the form~\eqref{bound_ks_lambdamu_lambdaderiv} also in this case.

Since all functionals with more than one insertion are irrelevant with vanishing boundary conditions at $\Lambda = \Lambda_0$, we now simply integrate the bound~\eqref{bound_ks_lambdamu_lambdaderiv} over $\Lambda$ to obtain
\begin{splitequation}
\label{func_sop_lambdamu_int}
\abs{ \partial^\vec{w} \mathcal{K}^{\Lambda, \Lambda_0, l}_{\vec{K} \vec{L}^\ddag; D}\left( \bigotimes_{k=1}^s \op_{A_k}(x_k); \vec{q} \right) } &\leq \int_\Lambda^{\Lambda_0} \abs{ \partial^\vec{w} \partial_\lambda \mathcal{K}^{\lambda, \Lambda_0, l}_{\vec{K} \vec{L}^\ddag; D}\left( \bigotimes_{k=1}^s \op_{A_k}(x_k); \vec{q} \right) } \total \lambda \\
&\leq \prod_{i=1}^{s-1} \left( 1 + \ln_+ \frac{1}{\mu \abs{x_i}} \right) \int_\Lambda^{\Lambda_0} \frac{1}{\lambda} \sup\left( 1, \frac{\abs{\vec{q}}}{\lambda} \right)^{g^{(s)}([\op_\vec{A}],m+n+2l,\abs{\vec{w}})} \\
&\quad\times \sum_{T^* \in \mathcal{T}^*_{m+n}} \mathsf{G}^{T^*,\vec{w}}_{\vec{K} \vec{L}^\ddag; -D}(\vec{q}; \mu, \lambda) \, \mathcal{P}\left( \ln_+ \frac{\sup\left( \abs{\vec{q}}, \mu \right)}{\lambda}, \ln_+ \frac{\lambda}{\mu} \right) \total \lambda \eqend{.}
\end{splitequation}
The tree can be estimated at the lower bound $\lambda = \Lambda$ using the inequality~\eqref{t_irr_ineq2} with $\epsilon = 1$, and the large momentum factor is trivially estimated there, such that we get
\begin{splitequation}
&\abs{ \partial^\vec{w} \mathcal{K}^{\Lambda, \Lambda_0, l}_{\vec{K} \vec{L}^\ddag; D}\left( \bigotimes_{k=1}^s \op_{A_k}(x_k); \vec{q} \right) } \leq \prod_{i=1}^{s-1} \left( 1 + \ln_+ \frac{1}{\mu \abs{x_i}} \right) \sup\left( 1, \frac{\abs{\vec{q}}}{\Lambda} \right)^{g^{(s)}([\op_\vec{A}],m+n+2l,\abs{\vec{w}})} \\
&\hspace{10em}\times \sum_{T^* \in \mathcal{T}^*_{m+n}} \mathsf{G}^{T^*,\vec{w}}_{\vec{K} \vec{L}^\ddag; -D}(\vec{q}; \mu, \lambda) \int_\Lambda^{\Lambda_0} \frac{\Lambda^\epsilon}{\lambda^{1+\epsilon}} \, \mathcal{P}\left( \ln_+ \frac{\sup\left( \abs{\vec{q}}, \mu \right)}{\lambda}, \ln_+ \frac{\lambda}{\mu} \right) \total \lambda \eqend{.}
\end{splitequation}
An application of Lemma~\ref{lemma_lambdaint} to the remaining integral then gives the required bound~\eqref{bound_ks_lambdamu}. It is in this step that the restriction $D = 1$ enters for $m+n = 0$, because we need to have a negative tree weight factor~\eqref{t_dim_def} to apply Lemma~\ref{lemma_lambdaint}, which is ensured for either $D = 1$ or, since all basic fields have $[\phi_M] \geq 1$, by $m+n > 0$.

\subsubsection{Bounds for \texorpdfstring{$\Lambda < \mu$}{Λ<μ}}

Again, to prove the bounds~\eqref{bound_ls_lambda}, we first need to bound the right-hand side of the flow equation~\eqref{l_sop_flow_hierarchy}, where we want to prove the representation~\eqref{bound_ls_lambda} and
\begin{splitequation}
\label{bound_ks_lambda_lambdaderiv}
&\abs{ \partial_\Lambda \partial^\vec{w} \mathcal{K}^{\Lambda, \Lambda_0, l}_{\vec{K} \vec{L}^\ddag}\left( \bigotimes_{k=1}^s \op_{A_k}(x_k); \vec{q} \right) } \leq \frac{1}{\sup(\inf(\mu,\bar{\eta}(\vec{q})),\Lambda)} \prod_{i=1}^{s-1} \left( 1 + \ln_+ \frac{1}{\mu \abs{x_i}} \right) \\
&\qquad\qquad\times \sup\left( 1, \frac{\abs{\vec{q}}}{\mu} \right)^{g^{(s+1)}([\op_\vec{A}],m+n+2l,\abs{\vec{w}})} \sum_{T^* \in \mathcal{T}^*_{m+n}} \mathsf{G}^{T^*,\vec{w}}_{\vec{K} \vec{L}^\ddag; -D}(\vec{q}; \mu, \Lambda) \, \mathcal{P}\left( \ln_+ \frac{\sup\left( \abs{\vec{q}}, \mu \right)}{\Lambda} \right) \eqend{.}
\end{splitequation}
The first (linear) term on the right-hand side of the flow equation~\eqref{l_sop_flow_hierarchy} already fulfils the decomposition~\eqref{bound_ls_lambda} by hypothesis. Separating the $x$ derivatives, we can bound its contribution to $\partial_\Lambda \mathcal{K}$ (which we call $F_1$ as usual) by inserting the induction hypothesis~\eqref{bound_ks_lambda} and the bound on the covariance~\eqref{prop_abl} to obtain
\begin{splitequation}
\abs{F_1} &\leq \prod_{i=1}^{s-1} \left( 1 + \ln_+ \frac{1}{\mu \abs{x_i}} \right) \int \mathe^{-\frac{\abs{p}^2}{2 \Lambda^2}} \sup\left( 1, \frac{\abs{\vec{q},p,-p}}{\mu} \right)^{g^{(s+1)}([\op_\vec{A}],m+n+2l,\abs{\vec{w}})} \\
&\times \sup(\abs{p}, \Lambda)^{-5+[\phi_M]+[\phi_N]} \hspace{-1em}\sum_{T^* \in \mathcal{T}^*_{m+n+2}}\hspace{-1em} \mathsf{G}^{T^*,\vec{w}}_{MN \vec{K} \vec{L}^\ddag; -D}(p,-p,\vec{q}; \mu, \Lambda) \, \mathcal{P}\left( \ln_+ \frac{\sup\left( \abs{\vec{q},p,-p}, \mu \right)}{\Lambda} \right) \frac{\total^4 p}{(2\pi)^4} \eqend{.}
\end{splitequation}
We then rescale $p = x \Lambda$, use the estimates~\eqref{func_0op_estlog} for the polynomial in logarithms and perform the $p$ integral using Lemma~\ref{lemma_pint2}. Afterwards we amputate the external legs corresponding to $M$ and $N$ from each tree, which gives an extra factor according to equation~\eqref{amputate}, and the subsequent estimate~\eqref{func_0op_estamputate} (with $\bar{\eta}$ instead of $\eta$) gives the bound~\eqref{bound_ks_lambda_lambdaderiv} for $F_1$.

The second (quadratic) term (whose contribution to $\partial_\Lambda \mathcal{K}$ is called $F_2$) is bounded similarly: inserting the induction hypothesis~\eqref{bound_ks_lambda}, the bound on the covariance~\eqref{prop_abl} and the bounds on functionals without insertions~\eqref{bound_l0}, we get
\begin{splitequation}
\abs{F_2} &\leq \sum_{\subline{\sigma \cup \tau = \{1, \ldots, m\} \\ \rho \cup \varsigma = \{1, \ldots, n\} }} \sum_{l'=0}^l \sum_{\vec{u}+\vec{v}\leq \vec{w}} \prod_{i=1}^{s-1} \left( 1 + \ln_+ \frac{1}{\mu \abs{x_i}} \right) \sup(\abs{k}, \Lambda)^{-5+[\phi_M]+[\phi_N]-\abs{\vec{w}}+\abs{\vec{u}}+\abs{\vec{v}}} \\
&\qquad\times \mathe^{-\frac{\abs{k}^2}{2 \Lambda^2}} \sup\left( 1, \frac{\abs{k,\vec{q}_\tau,\vec{q}_\varsigma}}{\mu} \right)^{g^{(s+1)}([\op_\vec{A}],\abs{\tau}+\abs{\varsigma}+1+2(l-l'),\abs{\vec{v}})} \\
&\qquad\times \sum_{T \in \mathcal{T}_{\abs{\sigma}+\abs{\rho}+1}} \mathsf{G}^{T,\vec{u}}_{\vec{K}_\sigma \vec{L}_\rho^\ddag M}(\vec{q}_\sigma,\vec{q}_\rho,-k; \mu, \Lambda) \sum_{T^* \in \mathcal{T}^*_{\abs{\tau}+\abs{\varsigma}+1}} \mathsf{G}^{T^*,\vec{v}}_{N \vec{K}_\tau \vec{L}_\varsigma^\ddag; -D}(k,\vec{q}_\tau,\vec{q}_\varsigma; \mu, \Lambda) \\
&\qquad\times \mathcal{P}\left( \ln_+ \frac{\sup\left( \abs{\vec{q}_\sigma,\vec{q}_\rho,-k}, \mu \right)}{\Lambda} \right) \, \mathcal{P}\left( \ln_+ \frac{\sup\left( \abs{k,\vec{q}_\tau,\vec{q}_\varsigma}, \mu \right)}{\Lambda} \right) \eqend{.}
\end{splitequation}
We use the estimates~\eqref{func_0op_estlog2} and~\eqref{func_1op_est_order} to fuse the polynomials in logarithms and estimate the large momentum factor, and fuse the trees according to the estimate~\eqref{gw_fused_2_est}, changing the $\vec{u}+\vec{v}$ derivatives acting on the fused tree to $\vec{w}$ derivatives in the same way as for functionals without insertions (using equations~\eqref{func_0op_estfuse_a} and~\eqref{func_0op_estfuse_b} with $\bar{\eta}$ instead of $\eta$). The last estimate~\eqref{func_0op_estexp} (with $\bar{\eta}$ instead of $\eta$) then gives the required bound~\eqref{bound_ks_lambda_lambdaderiv}.

For the source term, the last term on the right-hand side of the flow equation~\eqref{l_sop_flow_hierarchy}, we first note that it is summed over all non-empty subsets $\alpha$, $\beta$ of $\{1,\ldots,s\}$, but we only display the contribution of one such set. Note then further that the bounds for functionals with one insertion~\eqref{bound_l1} (without derivatives, and for $\Lambda < \mu$) are compatible with the bounds~\eqref{bound_ks_lambda} (for $s=1$), except for the particular dimension of the tree, which must be changed from $[\op_A]$ to $-D$. This gives an extra factor of~\eqref{particular_weight}
\begin{equation}
\label{func_sop_1op_mu}
\sup\left( \abs{-p, \vec{q}_\tau, \vec{q}_\varsigma}, \mu, \Lambda \right)^{[\op_A]+D} = \mu^{[\op_A]+D} \sup\left( 1, \frac{\abs{-p,\vec{q}_\tau,\vec{q}_\varsigma}}{\mu} \right)^{[\op_A]+D} \eqend{,}
\end{equation}
but otherwise we do not need to treat separately the case where one of the functionals in the source term has only one insertion. We then first use the shift property~\eqref{func_sop_shift} to bring the position of the last operator insertion of the first functional to the origin. In the case where any of the functionals has more than one insertion, we then use the representation~\eqref{bound_ls_lambda}, which gives $x$ derivatives and powers of $\mu$ compatible with the representation~\eqref{bound_ls_lambda} (together with the power of $\mu$ coming from equation~\eqref{func_sop_1op_mu} if the second functional has only one insertion). The remaining terms then give a contribution to $\partial_\Lambda \mathcal{K}$, denoted by $F_3$, which can be estimated using the bound~\eqref{bound_ks_lambda} and the bound on the covariance~\eqref{prop_abl}. We thus obtain in that case
\begin{splitequation}
\abs{F_3} &\leq \prod_{i \in \{1, \ldots, s-1\} \setminus \{ \abs{\alpha} \} } \left( 1 + \ln_+ \frac{1}{\mu \abs{x_i}} \right) \sum_{\subline{\sigma \cup \tau = \{1, \ldots, m\} \\ \rho \cup \varsigma = \{1, \ldots, n\} }} \sum_{l'=0}^l \sum_{\vec{u}\leq \vec{w}} \int \sup(\abs{p}, \Lambda)^{-5+[\phi_M]+[\phi_N]} \, \mathe^{-\frac{\abs{p}^2}{2 \Lambda^2}} \\
&\quad\times \sup\left( 1, \frac{\abs{\vec{q}_\sigma,\vec{q}_\rho,p}}{\mu} \right)^{g^{(\abs{\alpha}+1)}\left( \sum_{k \in \alpha}[\op_{A_k}]+[\op_B]-4,\abs{\sigma}+\abs{\rho}+1+2l',\abs{\vec{u}} \right)} \\
&\quad\times \sup\left( 1, \frac{\abs{-p,\vec{q}_\tau,\vec{q}_\varsigma}}{\mu} \right)^{g^{(\abs{\beta})}\left( \sum_{k \in \beta}[\op_{A_k}],\abs{\tau}+\abs{\varsigma}+1+2(l-l'),\abs{\vec{w}}-\abs{\vec{u}} \right)} \\
&\quad\times \sum_{T^* \in \mathcal{T}^*_{\abs{\sigma}+\abs{\rho}+1}} \mathsf{G}^{T^*,\vec{u}}_{\vec{K}_\sigma \vec{L}_\rho^\ddag M; -D}(\vec{q}_\sigma,\vec{q}_\rho,p; \mu, \Lambda) \sum_{T^* \in \mathcal{T}^*_{\abs{\tau}+\abs{\varsigma}+1}} \mathsf{G}^{T^*,\vec{w}-\vec{u}}_{N \vec{K}_\tau \vec{L}_\varsigma^\ddag; -D}(-p,\vec{q}_\tau,\vec{q}_\varsigma; \mu, \Lambda) \\
&\quad\times \mathcal{P}\left( \ln_+ \frac{\sup\left( \abs{\vec{q}_\sigma,\vec{q}_\rho,p}, \mu \right)}{\Lambda} \right) \, \mathcal{P}\left( \ln_+ \frac{\sup\left( \abs{-p,\vec{q}_\tau,\vec{q}_\varsigma}, \mu \right)}{\Lambda} \right) \frac{\total^4 p}{(2\pi)^4} \eqend{,}
\end{splitequation}
and the same estimate with an additional factor of $\sup\left( 1, \abs{-p,\vec{q}_\tau,\vec{q}_\varsigma}/\mu \right)^{[\op_A]+D}$ if the second functional has only one insertion. Note that for $D = 1$, one of the source terms needs to be bounded with $D = 0$ as before to obtain the correct particular tree weight. We now use the estimates~\eqref{func_0op_estlog2} to fuse the polynomials in logarithms and the large momentum factor, and property~\eqref{gs_prop_3} of $g^{(s)}$ to obtain
\begin{splitequation}
\label{func_sop_gagbest}
&g^{(\abs{\alpha}+1)}\left( \sum_{k \in \alpha}[\op_{A_k}],\abs{\sigma}+\abs{\rho}+1+2l',\abs{\vec{u}} \right) + g^{(\abs{\beta})}\left( \sum_{k \in \beta}[\op_{A_k}],\abs{\tau}+\abs{\varsigma}+1+2(l-l'),\abs{\vec{w}}-\abs{\vec{u}} \right) \\
&\quad\leq g^{(s+1)}([\op_\vec{A}], m+n+2l,\abs{\vec{w}}) - ([\op_\vec{A}]+s+1) \eqend{,}
\end{splitequation}
and then estimate $- ([\op_\vec{A}]+s+1) \leq -[\op_A]-D$ in the case that the second functional has only one insertion, in order to cancel the additional large momentum factor, and $- ([\op_\vec{A}]+s+1) \leq 0$ otherwise. We furthermore insert a factor of $\left[ 1 + \ln_+ 1/\left( \mu \abs{x_\abs{\alpha}} \right) \right]$ and fuse the trees according to the estimate~\eqref{gw_fused_1_est}. A rescaling $p = x \Lambda$ and an application of Lemma~\ref{lemma_pint2} allows to perform the $p$ integral, and it follows that
\begin{splitequation}
\abs{F_3} &\leq \Lambda^{-1+[\phi_M]+[\phi_N]} \prod_{i=1}^{s-1} \left( 1 + \ln_+ \frac{1}{\mu \abs{x_i}} \right) \sup\left( 1, \frac{\abs{\vec{q}}}{\mu} \right)^{g^{(s+1)}([\op_\vec{A}],m+n+2l,\abs{\vec{w}})} \\
&\quad\times \sum_{T^* \in \mathcal{T}^*_{m+n+2}} \mathsf{G}^{T^*,\vec{w}}_{MN \vec{K} \vec{L}^\ddag; -D}(0,0,\vec{q}; \mu, \Lambda) \, \mathcal{P}\left( \ln_+ \frac{\sup\left( \abs{\vec{q}}, \mu \right)}{\Lambda} \right) \eqend{.}
\end{splitequation}
Amputating the external legs corresponding to $M$ and $N$ then gives an extra factor according to equation~\eqref{amputate}, and the subsequent estimate~\eqref{func_0op_estamputate} (with $\bar{\eta}$ instead of $\eta$) gives the bound~\eqref{bound_ks_lambda_lambdaderiv} also in this case. We now integrate the bound~\eqref{bound_ks_lambda_lambdaderiv} in $\Lambda$ to obtain
\begin{splitequation}
&\abs{ \partial^\vec{w} \mathcal{L}^{\Lambda, \Lambda_0, l}_{\vec{K} \vec{L}^\ddag}\left( \bigotimes_{k=1}^s \op_{A_k}(x_k); \vec{q} \right) } \leq \abs{ \partial^\vec{w} \mathcal{L}^{\mu, \Lambda_0, l}_{\vec{K} \vec{L}^\ddag}\left( \bigotimes_{k=1}^s \op_{A_k}(x_k); \vec{q} \right) } \\
&\quad+ \sup\left( 1, \frac{\abs{\vec{q}}}{\mu} \right)^{g^{(s+1)}([\op_\vec{A}],m+n+2l-\abs{\vec{w}})} \prod_{i=1}^{s-1} \left( 1 + \ln_+ \frac{1}{\mu \abs{x_i}} \right) \\
&\qquad\times \int_\Lambda^\mu \frac{1}{\sup(\inf(\mu,\bar{\eta}(\vec{q})),\lambda)} \sum_{T^* \in \mathcal{T}^*_{m+n}} \mathsf{G}^{T^*,\vec{w}}_{\vec{K} \vec{L}^\ddag; -D}(\vec{q}; \mu, \lambda) \, \mathcal{P}\left( \ln_+ \frac{\sup\left( \abs{\vec{q}}, \mu \right)}{\lambda} \right) \total \lambda \eqend{.}
\end{splitequation}
The trees appearing in the $\lambda$ integral can be estimated at $\lambda = \Lambda$ by the inequality~\eqref{t_rel_ineq3}, and an application of Lemma~\ref{lemma_lambdaint3} to the remaining $\lambda$ integral gives the bounds~\eqref{bound_ks_lambda} for the integral. The boundary conditions at $\Lambda = \mu$ are given by the bound~\eqref{bound_ks_lambdamu}, which reads
\begin{splitequation}
&\prod_{i=1}^{s-1} \left( 1 + \ln_+ \frac{1}{\mu \abs{x_i}} \right) \sup\left( 1, \frac{\abs{\vec{q}}}{\mu} \right)^{g^{(s)}([\op_\vec{A}],m+n+2l-\abs{\vec{w}})} \\
&\qquad\times \sum_{T^* \in \mathcal{T}^*_{m+n}} \mathsf{G}^{T^*,\vec{w}}_{\vec{K} \vec{L}^\ddag; -D}(\vec{q}; \mu, \mu) \, \mathcal{P}\left( \ln_+ \frac{\sup\left( \abs{\vec{q}}, \mu \right)}{\mu} \right) \eqend{.}
\end{splitequation}
To estimate their contribution, we use the inequality~\eqref{t_rel_ineq3} to estimate the trees at $\Lambda$, and the final estimate
\begin{equation}
\ln_+ \frac{\sup\left( \abs{\vec{q}}, \mu \right)}{\mu} = \ln_+ \frac{\sup\left( \abs{\vec{q}}, \mu \right)}{\sup(\mu, \Lambda)} \leq \ln_+ \frac{\sup\left( \abs{\vec{q}}, \mu \right)}{\sup(\inf(\mu,\bar{\eta}(\vec{q})),\Lambda)}
\end{equation}
then gives the bound~\eqref{bound_ks_lambda} also for the contribution from the boundary condition.

\subsubsection{Bounds for distinct points}

If all the $x_i$ are distinct, we also prove bounds where no $x$ derivatives appear anymore. The proof works almost in the same way as before, and so we do not need to spell it out in full but show only the differences. Instead of equation~\eqref{func_sop_genderivs}, we could use
\begin{equation}
\exp\left( - \mathi x_1 (p+k) \right) = \frac{\mathi^{\abs{a}}}{\mu^\abs{a} x_1^a} \mu^\abs{a} \partial_p^a \exp\left( - \mathi x_1 (p+k) \right)
\end{equation}
for $\Lambda \geq \mu$ to obtain negative powers of $x$ instead of derivatives and a logarithmically divergent kernel. Again, we can choose a direction $\alpha \in \{1,2,3,4\}$ such that $\abs{x_1^\alpha} \geq \abs{x_1}/2$, and take $a^\beta = \abs{a} \delta^\beta_\alpha$. For $\Lambda \leq \mu$, we do not create additional $p$ derivatives (which would lead to additional IR divergences), and thus the $x$-dependent factor that we obtain after taking absolute values would be given by
\begin{equation}
\label{bound_x_weight}
\sup\left( 1, \frac{1}{\mu^\abs{a} \abs{x_1^\alpha}^\abs{a}} \right) = \inf\left( 1, \mu \abs{x_1^\alpha} \right)^{-\abs{a}} \leq 2^\abs{a} \inf\left( 1, \mu \abs{x_1} \right)^{-\abs{a}} \eqend{.}
\end{equation}
The rest of the proof then proceeds as before, and we obtain the same estimates, with the $x$ derivatives and the logarithmic kernel replaced by products of the weight factors~\eqref{bound_x_weight}, which can be organised into the tree form given in Definition~\ref{def_tree_x}, with $\epsilon = 1$. For some functionals, the source term in the corresponding flow equation~\eqref{l_sop_flow_hierarchy} joins two functionals which already have the correct number of weight factors such that we do not need any additional factor, and thus some lines do not have any weight factors associated to them. However, since each operator insertion contributes $[\op_A]$ to the tree weight~\eqref{t_dim_def}, which must be compensated by the same number of $p$ derivatives, for each vertex there is one weight factor of the form~\eqref{bound_x_weight} associated to a line incident to that vertex, which is exactly the form given in Definition~\ref{def_tree_x}, such that we obtain the bounds~\eqref{bound_ls_x} with $\epsilon = 1$. In order to get to arbitrary small $\epsilon$, we use Lemma~\ref{lemma_frac} for the $p$ integrals appearing in the source term instead of the above, and then obtain the bounds~\eqref{bound_ls_x} with arbitrary $\epsilon > 0$.

\subsection{Additional bounds}
\label{sec_bounds_additional}

\subsubsection{Functionals which vanish in the limit \texorpdfstring{$\Lambda_0 \to \infty$}{Λ₀ → ∞}}

If one imposes vanishing boundary conditions everywhere, both the functionals without and with one insertion of a composite operator vanish, since their flow equation does not involve any source term. However, if the boundary conditions only vanish in the limit $\Lambda_0 \to \infty$ (which will be the case later on), vanishing of the functionals in this limit is a nontrivial fact, since their definition involves an integration over $\Lambda$. Concretely, we prove
\begin{proposition}
\label{thm_l0_van}
For all multiindices $\vec{w}$, at each order $l$ in perturbation theory and for an arbitrary number $m$ of external fields $\vec{K}$ and $n$ antifields $\vec{L}^\ddag$, if the boundary conditions given at $\Lambda = \mu$ (for relevant and marginal functionals) vanish, and the ones given at $\Lambda = \Lambda_0$ (for irrelevant functionals) are non-vanishing, but compatible with the bounds~\eqref{bound_l0} evaluated at $\Lambda = \Lambda_0$, we have the bound
\begin{splitequation}
\label{bound_l0_delta}
\abs{ \partial^\vec{w} \mathcal{L}^{\Lambda, \Lambda_0, l}_{\vec{K} \vec{L}^\ddag}(\vec{q}) } &\leq \left( \frac{\sup(\mu, \Lambda)}{\Lambda_0} \right)^\frac{\Delta}{2} \sum_{T \in \mathcal{T}_{m+n}} \mathsf{G}^{T,\vec{w}}_{\vec{K} \vec{L}^\ddag}(\vec{q}; \mu, \Lambda) \, \mathcal{P}\left( \ln_+ \frac{\sup\left( \abs{\vec{q}}, \mu \right)}{\sup(\inf(\mu, \eta(\vec{q})), \Lambda)}, \ln_+ \frac{\Lambda}{\mu} \right) \eqend{.}
\end{splitequation}
\end{proposition}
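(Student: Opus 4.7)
The plan is to follow the same inductive scheme used in the proof of Proposition~\ref{thm_l0} (induction on $m+n+2l$, and, at fixed $m+n+2l$, on $l$), modifying every estimate so as to carry the additional factor $(\sup(\mu,\Lambda)/\Lambda_0)^{\Delta/2}$ through. The essential observation is that this factor equals unity at $\Lambda = \Lambda_0$ (since $\Lambda_0 \geq \mu$), so the compatible boundary conditions of Proposition~\ref{thm_l0_van} are automatically consistent with~\eqref{bound_l0_delta}, and the nontrivial content is to show that the suppression by $(\sup(\mu,\Lambda)/\Lambda_0)^{\Delta/2}$ is produced upon integration of the flow equation down to $\Lambda$.

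First I would re-examine the bound on the right-hand side of the flow equation derived in Subsection~\ref{sec_bounds_rhs}. Both the linear and the quadratic term involve only functionals of strictly lower $m+n+2l$ or of strictly lower $l$, which by the induction hypothesis obey~\eqref{bound_l0_delta}; in the quadratic term it suffices that one of the two factors carries the extra suppression $(\sup(\mu,\lambda)/\Lambda_0)^{\Delta/2}$. The pointwise estimates then yield an analogue of~\eqref{bound_l0_lambdaderiv} with an extra prefactor $(\sup(\mu,\lambda)/\Lambda_0)^{\Delta/2}$.

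For irrelevant functionals I would repeat Subsection~\ref{sec_bounds_irr}, integrating downward from $\Lambda_0$. The boundary term is, by assumption, bounded by the right-hand side of~\eqref{bound_l0} evaluated at $\Lambda = \Lambda_0$, which matches~\eqref{bound_l0_delta} there because $(\sup(\mu,\Lambda_0)/\Lambda_0)^{\Delta/2} = 1$. The key step is then to split the available ``smallness'' $\Delta$ of an irrelevant tree into two halves: apply inequality~\eqref{t_irr_ineq2} with $\epsilon = \Delta/2$ to extract the factor
\begin{equation*}
\left(\frac{\sup(\inf(\mu,\eta(\vec{q})),\Lambda)}{\sup(\inf(\mu,\eta(\vec{q})),\Lambda_0)}\right)^{\Delta/2} \leq \left(\frac{\sup(\mu,\Lambda)}{\Lambda_0}\right)^{\Delta/2},
\end{equation*}
where in the last step one uses $\inf(\mu,\eta(\vec{q})) \leq \mu \leq \Lambda_0$. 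The remaining ``$\Delta/2$ of irrelevance'' is used, exactly as in the manipulation~\eqref{func_0op_deltalog} and via Lemma~\ref{lemma_largerlog}, to absorb the polynomial $\mathcal{P}(\ln_+\Lambda_0/\mu)$ into $\mathcal{P}(\ln_+\Lambda/\mu)$. The integral contribution $\int_\Lambda^{\Lambda_0}$ is handled analogously: after extracting $(\sup(\mu,\lambda)/\Lambda_0)^{\Delta/2}$ from the RHS estimate, one applies~\eqref{t_irr_ineq2} with $\epsilon=\Delta/2$ and Lemma~\ref{lemma_lambdaint} as before.

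For marginal and relevant functionals the prescribed boundary conditions at $\Lambda = \mu$ (respectively $\Lambda = 0$) vanish, so only the integral over $\lambda$ remains. On the relevant integration ranges one has $\sup(\mu,\lambda) \leq \sup(\mu,\Lambda)$, so $(\sup(\mu,\lambda)/\Lambda_0)^{\Delta/2}$ from the induction hypothesis on the RHS may be replaced by $(\sup(\mu,\Lambda)/\Lambda_0)^{\Delta/2}$ and pulled out of the $\lambda$-integral; the remaining steps are those of Subsections~\ref{sec_bounds_irr} and~\ref{sec_bounds_relvan} without change. The Taylor extension from vanishing to general momenta is unaffected, since the remainder involves functionals of higher $\abs{\vec{w}}$ which by induction already satisfy~\eqref{bound_l0_delta}, and the subsequent downward integration from $\Lambda = \mu$ to $\Lambda < \mu$ is handled as in the existing proof with the same extraction of the extra factor. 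The main technical obstacle is the bookkeeping of the $\Delta=\Delta/2+\Delta/2$ split in the irrelevant boundary term, where one must simultaneously produce the correct suppression and keep the remaining polynomial in logarithms under control; the assumption $\Delta > 0$ is precisely what makes this decomposition possible.
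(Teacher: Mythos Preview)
Your overall strategy is the same as the paper's, and most of the argument is right: carry the factor $(\sup(\mu,\Lambda)/\Lambda_0)^{\Delta/2}$ through the induction, observe that it equals $1$ at $\Lambda=\Lambda_0$, pull it out trivially in the upward integrations for marginal/relevant functionals, and produce it from the tree irrelevance in the downward integration for irrelevant functionals.

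There is, however, a genuine technical slip in the irrelevant case. You propose to apply~\eqref{t_irr_ineq2} with $\epsilon=\Delta/2$ and then use the ``remaining $\Delta/2$ of irrelevance'' separately. But~\eqref{t_irr_ineq2} is a one-shot bound $\mathsf{G}(\lambda)\le(\ldots)^{\epsilon}\mathsf{G}(\Lambda)$; once applied, there is no residual irrelevance left to exploit. With $\epsilon=\Delta/2$ the integral term becomes, after combining $(\sup(\inf(\mu,\eta),\Lambda)/\sup(\inf(\mu,\eta),\lambda))^{\Delta/2}\cdot(\sup(\mu,\lambda)/\Lambda_0)^{\Delta/2}\le(\sup(\mu,\Lambda)/\Lambda_0)^{\Delta/2}$, an integrand behaving like $\sup(\inf(\mu,\eta),\lambda)^{-1}$, which is \emph{not} admissible for Lemma~\ref{lemma_lambdaint} (it requires $m<-1$). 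The same problem occurs for the boundary term: after extracting the suppression you have no factor left to absorb $\mathcal{P}(\ln_+\Lambda_0/\mu)$ via~\eqref{func_0op_deltalog}.

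The fix is precisely what the paper does: apply~\eqref{t_irr_ineq2} with the full $\epsilon=\Delta$ (as in the original proof of Proposition~\ref{thm_l0}), and only then split the resulting factor $(\ldots)^{\Delta}=(\ldots)^{\Delta/2}\cdot(\ldots)^{\Delta/2}$. One half combines with the inductive factor via the case-by-case estimate
\[
\left(\frac{\sup(\inf(\mu,\eta(\vec q)),\Lambda)}{\sup(\inf(\mu,\eta(\vec q)),\lambda)}\cdot\frac{\sup(\mu,\lambda)}{\Lambda_0}\right)^{\Delta/2}\le\left(\frac{\sup(\mu,\Lambda)}{\Lambda_0}\right)^{\Delta/2},
\]
while the other half leaves $\sup(\inf(\mu,\eta),\lambda)^{-1-\Delta/2}$ in the integrand, to which Lemma~\ref{lemma_lambdaint} now applies. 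For the boundary term the second half is what feeds~\eqref{func_0op_deltalog} (used with exponent $\Delta/2$) to kill $\ln_+\Lambda_0/\mu$. With this correction your plan goes through and coincides with the paper's proof.
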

This bound differs from the bounds~\eqref{bound_l0} only in the additional factor $\left( \sup(\mu, \Lambda) / \Lambda_0 \right)^\frac{\Delta}{2}$, which only introduces small modifications in the inductive proof, and it is shorter to just give these modifications instead of spelling out the full proof. Since the right-hand side of the flow equation is estimated at fixed $\Lambda$ and the additional factor is momentum independent, the right-hand side of the flow equation can be bounded just as before and we obtain the estimate~\eqref{bound_l0_lambdaderiv} with this additional factor. Note that for the second (quadratic) term on the right-hand side of the flow equation~\eqref{l_0op_flow_hierarchy}, since there are two of these additional factors we need to estimate
\begin{equation}
\label{bound_vanish_delta}
\left( \frac{\sup(\mu, \Lambda)}{\Lambda_0} \right)^\Delta \leq \left( \frac{\sup(\mu, \Lambda)}{\Lambda_0} \right)^\frac{\Delta}{2} \eqend{.}
\end{equation}
For irrelevant terms, we proceed as before, and instead of equation~\eqref{func_0op_irrelevant} obtain
\begin{splitequation}
\abs{ \partial^\vec{w} \mathcal{L}^{\Lambda, \Lambda_0, l}_{\vec{K} \vec{L}^\ddag}(\vec{q}) } &\leq \sum_{T \in \mathcal{T}_{m+n}} \mathsf{G}^{T,\vec{w}}_{\vec{K} \vec{L}^\ddag}(\vec{q}; \mu, \Lambda)  \Bigg[ \frac{\sup(\inf(\mu, \eta(\vec{q})), \Lambda)^\Delta}{\sup(\inf(\mu, \eta(\vec{q})), \Lambda_0)^\Delta} \,\mathcal{P}\left( \ln_+ \frac{\sup\left( \abs{\vec{q}}, \mu \right)}{\Lambda_0}, \ln_+ \frac{\Lambda_0}{\mu} \right) \\
&+ \int_\Lambda^{\Lambda_0} \frac{\sup(\inf(\mu, \eta(\vec{q})), \Lambda)^\Delta}{\sup(\inf(\mu, \eta(\vec{q})), \lambda)^{\Delta+1}} \left( \frac{\sup(\mu, \lambda)}{\Lambda_0} \right)^\frac{\Delta}{2} \,\mathcal{P}\left( \ln_+ \frac{\sup\left( \abs{\vec{q}}, \mu \right)}{\lambda}, \ln_+ \frac{\lambda}{\mu} \right) \total \lambda \Bigg] \eqend{.}
\end{splitequation}
We then use the estimate~\eqref{func_0op_deltalog} with $\Delta/2$ instead of $\Delta$ to bound the first polynomial in logarithms, and estimate
\begin{equation}
\left( \frac{\sup(\inf(\mu, \eta(\vec{q})), \Lambda)}{\sup(\inf(\mu, \eta(\vec{q})), \Lambda_0)} \right)^\frac{\Delta}{2} \leq \left( \frac{\sup(\mu, \Lambda)}{\Lambda_0} \right)^\frac{\Delta}{2}
\end{equation}
to obtain the correct additional factor for the first term. For the second term, we estimate by a long but straightforward case-by-case analysis
\begin{equation}
\left( \frac{\sup(\inf(\mu, \eta(\vec{q})), \Lambda)}{\sup(\inf(\mu, \eta(\vec{q})), \lambda)} \frac{\sup(\mu, \lambda)}{\Lambda_0} \right)^\frac{\Delta}{2} \leq \left( \frac{\sup(\mu, \Lambda)}{\Lambda_0} \right)^\frac{\Delta}{2} \eqend{,}
\end{equation}
and then apply Lemma~\ref{lemma_lambdaint} to bound the remaining integral and obtain the bounds~\eqref{bound_l0_delta}. For marginal terms which now have vanishing boundary conditions at $\Lambda = \mu$, instead of equation~\eqref{bound_l0_zeromomentum} we get at zero momentum for $\Lambda \geq \mu$
\begin{splitequation}
\abs{ \partial^\vec{w} \mathcal{L}^{\Lambda, \Lambda_0, l}_{\vec{K} \vec{L}^\ddag}(\vec{0}) } &\leq \left( \frac{\sup(\mu, \Lambda)}{\Lambda_0} \right)^\frac{\Delta}{2} \mathcal{P}\left( \ln_+ \frac{\Lambda}{\mu} \right) \int_\mu^\Lambda \lambda^{4 - [\vec{K}] - [\vec{L}^\ddag] - \abs{\vec{w}}-1} \total \lambda \\
&\leq \left( \frac{\sup(\mu, \Lambda)}{\Lambda_0} \right)^\frac{\Delta}{2} \sum_{T \in \mathcal{T}_{m+n}} \mathsf{G}^{T,\vec{w}}_{\vec{K} \vec{L}^\ddag}(\vec{0}; \mu, \Lambda) \, \mathcal{P}\left( \ln_+ \frac{\Lambda}{\mu} \right) \eqend{.}
\end{splitequation}
The Taylor expansion needed to extend this bound to general momenta is done at fixed $\Lambda$, and the additional factor thus stays unchanged, and for the subsequent extension to $\Lambda < \mu$ achieved by integration of the flow equation we note that the additional factor becomes independent of $\Lambda$, such that we obtain the bounds~\eqref{bound_l0_delta} also in this cases. Lastly, for the relevant functionals with have vanishing boundary condition at $\Lambda = 0$ and vanishing momenta, the additional factor can be trivially estimated since the integration only proceeds upwards in $\Lambda$, such that instead of equation~\eqref{relevant_zeromomentum} we obtain
\begin{equation}
\abs{\partial^\vec{w} \mathcal{L}^{\Lambda, \Lambda_0, l}_{\vec{K} \vec{L}^\ddag}(\vec{0})} \leq \left( \frac{\sup(\mu, \Lambda)}{\Lambda_0} \right)^\frac{\Delta}{2} \sum_{T \in \mathcal{T}_{m+n}} \sup(c,\Lambda)^{[T]} \, \mathcal{P}\left( \ln_+ \frac{\mu}{\sup(c,\Lambda)}, \ln_+ \frac{\Lambda}{\mu} \right) \eqend{.}
\end{equation}
Again, the Taylor expansion that we use the extend this bound to general momenta is done for fixed $\Lambda$ such that the additional factor does not change, and we also obtain the bounds~\eqref{bound_l0_delta}.

For functionals with one operator insertion, we prove
\begin{proposition}
\label{thm_l1_van}
For all multiindices $\vec{w}$, at each order $l$ in perturbation theory and for an arbitrary number $m$ of external fields $\vec{K}$ and $n$ antifields $\vec{L}^\ddag$ and any composite operator $\op_A$, we have the bound
\begin{splitequation}
\label{bound_l1_delta}
\abs{ \partial^\vec{w} \mathcal{L}^{\Lambda, \Lambda_0, l}_{\vec{K} \vec{L}^\ddag}\left( \op_A(0); \vec{q} \right) } &\leq \sup\left( 1, \frac{\abs{\vec{q}}}{\sup(\mu, \Lambda)} \right)^{g^{(1)}([\op_A],m+n+2l,\abs{\vec{w}})} \sum_{T^* \in \mathcal{T}^*_{m+n}} \mathsf{G}^{T^*,\vec{w}}_{\vec{K} \vec{L}^\ddag; [\op_A]}(\vec{q}; \mu, \Lambda) \\
&\quad\times \left( \frac{\sup(\mu, \Lambda)}{\Lambda_0} \right)^\frac{\Delta}{2} \, \mathcal{P}\left( \ln_+ \frac{\sup\left( \abs{\vec{q}}, \mu \right)}{\sup(\inf(\mu, \bar{\eta}(\vec{q})), \Lambda)}, \ln_+ \frac{\Lambda}{\mu} \right) \eqend{,}
\end{splitequation}
if the boundary conditions given at $\Lambda = \mu$ (for relevant and marginal functionals) vanish, and the ones given at $\Lambda = \Lambda_0$ (for irrelevant functionals) are non-vanishing, but compatible with the bounds~\eqref{bound_l1}.
\end{proposition}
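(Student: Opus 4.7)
The plan is to follow the inductive scheme used for Proposition~\ref{thm_l1} verbatim, carrying the additional small factor $\left( \sup(\mu,\Lambda)/\Lambda_0 \right)^{\Delta/2}$ through every step in exactly the same way the analogous factor was propagated in the proof of Proposition~\ref{thm_l0_van}. Because this factor is $\Lambda_0$-suppressed and independent of $\vec{q}$ and of all tree weights, it will commute with every $p$-integration, every Taylor expansion, every fusion of trees and every application of the inequalities in Section~\ref{sec_trees_ineqs}, so essentially the only work will be to check at each step that one small factor of the stated form remains uncancelled.

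First I would bound the right-hand side of the flow equation~\eqref{l_sop_flow_hierarchy} for $s=1$. The linear (loop) term inherits the small factor directly from the induction hypothesis~\eqref{bound_l1_delta} applied to $\mathcal{L}^{\Lambda,\Lambda_0,l-1}_{MN\vec{K}\vec{L}^\ddag}(\op_A)$, and the $p$-integration is treated exactly as in the derivation of~\eqref{bound_l1_lambdaderiv} using Lemma~\ref{lemma_pint2} and the covariance bound~\eqref{prop_abl}. For the quadratic term, only the one-insertion factor carries vanishing boundary conditions at $\Lambda=\mu$, so the functional without insertion is bounded by the ordinary estimate~\eqref{bound_l0} (no small factor), while $\mathcal{L}(\op_A;\cdot)$ is bounded by~\eqref{bound_l1_delta}; the product therefore contributes a single power $(\sup(\mu,\Lambda)/\Lambda_0)^{\Delta/2}$. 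The fusion of the two trees via~\eqref{gw_fused_2_est}, the redistribution of derivatives via~\eqref{func_0op_estfuse_a}--\eqref{func_0op_estfuse_b}, and the estimate~\eqref{func_0op_estexp} (with $\bar\eta$) then reproduce the desired bound on $\partial_\Lambda\partial^\vec{w}\mathcal{L}(\op_A;\vec{q})$ with the additional factor $(\sup(\mu,\Lambda)/\Lambda_0)^{\Delta/2}$ pulled outside.

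The $\Lambda$-integration step is handled case by case as in the proof of Proposition~\ref{thm_l0_van}. For irrelevant functionals, integrating down from $\Lambda_0$ with vanishing boundary conditions there, one estimates trees at $\lambda=\Lambda$ using~\eqref{t_irr_ineq2}, and the bookkeeping
\begin{equation*}
\left( \frac{\sup(\inf(\mu,\bar{\eta}(\vec{q})),\Lambda)}{\sup(\inf(\mu,\bar{\eta}(\vec{q})),\lambda)} \frac{\sup(\mu,\lambda)}{\Lambda_0} \right)^{\Delta/2} \leq \left( \frac{\sup(\mu,\Lambda)}{\Lambda_0} \right)^{\Delta/2}
\end{equation*}
(verified by a short case split on $\Lambda\lessgtr\mu$ and $\lambda\lessgtr\mu$) shows that the small factor survives Lemma~\ref{lemma_lambdaint}. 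For marginal and relevant functionals, the boundary conditions at $\Lambda=\mu$ now vanish, so the argument that produced~\eqref{bound_l1_zeromomentum} reduces to the integral piece, and one obtains at vanishing momenta
\begin{equation*}
\abs{\partial^\vec{w}\mathcal{L}^{\Lambda,\Lambda_0,l}_{\vec{K}\vec{L}^\ddag}(\op_A(0);\vec{0})} \leq \left( \frac{\sup(\mu,\Lambda)}{\Lambda_0} \right)^{\Delta/2} \sum_{T^*\in\mathcal{T}^*_{m+n}} \mathsf{G}^{T^*,\vec{w}}_{\vec{K}\vec{L}^\ddag;[\op_A]}(\vec{0};\mu,\Lambda)\,\mathcal{P}\left(\ln_+\tfrac{\Lambda}{\mu}\right)
\end{equation*}
for $\Lambda\geq\mu$, exactly as in the marginal case of Proposition~\ref{thm_l0_van}.

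The extension to general momenta via the Taylor formula~\eqref{func_1op_taylor} is carried out at fixed $\Lambda$, so the factor $(\sup(\mu,\Lambda)/\Lambda_0)^{\Delta/2}$ passes unchanged through the path integral and the usual application of property~\eqref{gs_prop_2} of $g^{(s)}$ and inequality~\eqref{t_rel_ineq3}. Finally, the extension to $\Lambda<\mu$ is obtained by integrating the flow equation downwards from $\mu$; on this interval $\sup(\mu,\Lambda)=\mu$, so the small factor becomes the $\Lambda$-independent constant $(\mu/\Lambda_0)^{\Delta/2}$ and can simply be carried through the estimates that produced~\eqref{bound_l1} for $\Lambda<\mu$, using Lemma~\ref{lemma_lambdaint3} exactly as before. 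The only genuine technical check is the case analysis establishing the inequality above, which is essentially the same calculation as in Proposition~\ref{thm_l0_van}; beyond this the argument is purely bookkeeping.
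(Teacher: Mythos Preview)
Your approach is exactly the paper's: the paper's entire proof of this proposition is the single sentence ``the proof is the same as for functionals without insertions [Proposition~\ref{thm_l0_van}], with the appropriate change in notation,'' and you have spelled out precisely that argument.

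There is one genuine slip, however. You write that for irrelevant functionals one integrates down from $\Lambda_0$ \emph{with vanishing boundary conditions there}. This contradicts the hypothesis of the proposition, which states that the boundary conditions at $\Lambda=\Lambda_0$ for irrelevant functionals are \emph{non-vanishing} but compatible with the bound~\eqref{bound_l1}. (Indeed, if they vanished, the linear flow equation with vanishing boundary data everywhere would force the functional to vanish identically, and the proposition would be trivial.) Consequently your treatment of the irrelevant case is missing the boundary term at $\Lambda=\Lambda_0$. In the paper's proof of Proposition~\ref{thm_l0_van} this term is handled separately: the tree at $\Lambda_0$ is estimated at $\Lambda$ via~\eqref{t_irr_ineq2}, the polynomial $\mathcal{P}(\ln_+\Lambda_0/\mu)$ is absorbed via~\eqref{func_0op_deltalog} with exponent $\Delta/2$, and the remaining factor is controlled by
\[
\left( \frac{\sup(\inf(\mu,\bar\eta(\vec q)),\Lambda)}{\sup(\inf(\mu,\bar\eta(\vec q)),\Lambda_0)} \right)^{\Delta/2} \le \left( \frac{\sup(\mu,\Lambda)}{\Lambda_0} \right)^{\Delta/2}.
\]
Once you insert this missing boundary-term estimate, your argument is complete and matches the paper.
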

Again, this differs from the bounds~\eqref{bound_l1} only in the additional factor $\left( \sup(\mu, \Lambda) / \Lambda_0 \right)^\frac{\Delta}{2}$, and the proof is the same as for functionals without insertions, with the appropriate change in notation. For functionals with one integrated insertion, we have
\begin{proposition}
\label{thm_l1i_van}
For all multiindices $\vec{w}$, at each order $l$ in perturbation theory and for an arbitrary number $m$ of external fields $\vec{K}$ and $n$ antifields $\vec{L}^\ddag$ and any integrated composite operator $\op_A$ with $[\op_A] \geq 4$, when the boundary conditions given at $\Lambda = \mu$ (for marginal and some relevant functionals) vanish, and the ones given at $\Lambda = \Lambda_0$ (for irrelevant functionals) are non-vanishing, but compatible with the bounds~\eqref{bound_l1i}, we have the bound
\begin{splitequation}
\label{bound_l1i_delta}
\abs{ \partial^\vec{w} \mathcal{L}^{\Lambda, \Lambda_0, l}_{\vec{K} \vec{L}^\ddag}\left( \int\!\op_A; \vec{q} \right) } &\leq \sup\left( 1, \frac{\abs{\vec{q}}}{\sup(\mu, \Lambda)} \right)^{g^{(1)}([\op_A]-4,m+n+2l,\abs{\vec{w}})} \left( \frac{\sup(\mu, \Lambda)}{\Lambda_0} \right)^\frac{\Delta}{2} \\
&\times \sum_{T \in \mathcal{T}_{m+n}} \mathsf{G}^{T,\vec{w}}_{\vec{K} \vec{L}^\ddag; [\op_A]-4}(\vec{q}; \mu, \Lambda) \, \mathcal{P}\left( \ln_+ \frac{\sup\left( \abs{\vec{q}}, \mu \right)}{\sup(\inf(\mu, \eta(\vec{q})), \Lambda)}, \ln_+ \frac{\Lambda}{\mu} \right) \eqend{.}
\end{splitequation}
\end{proposition}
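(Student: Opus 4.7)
The plan is to mimic the inductive argument used for Propositions~\ref{thm_l0_van} and~\ref{thm_l1_van}, now applied to the flow equation hierarchy~\eqref{l_intop_flow_hierarchy} governing $\mathcal{L}^{\Lambda,\Lambda_0,l}_{\vec{K}\vec{L}^\ddag}(\int\!\op_A;\vec{q})$. The bound to be established differs from~\eqref{bound_l1i} only by the factor $(\sup(\mu,\Lambda)/\Lambda_0)^{\Delta/2}$, which is momentum independent and scales trivially under differentiation in $\vec{q}$. The induction is carried out as before: we ascend in $m+n+2l$, and at fixed total order we ascend in $l$; within each layer we first bound irrelevant functionals (vanishing boundary conditions at $\Lambda=\Lambda_0$), then marginal and some relevant ones with vanishing boundary conditions at $\Lambda=\mu$, then relevant ones with vanishing boundary conditions at $\Lambda=0$.

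First I would bound the right-hand side of~\eqref{l_intop_flow_hierarchy} and prove an estimate of the form~\eqref{bound_l1i_lambdaderiv} with the extra factor $(\sup(\mu,\Lambda)/\Lambda_0)^{\Delta/2}$. For the linear (tadpole) term this is automatic, since inserting the induction hypothesis~\eqref{bound_l1i_delta} for $\mathcal{L}^{\Lambda,\Lambda_0,l-1}_{MN\vec{K}\vec{L}^\ddag}(\int\!\op_A;p,-p,\vec{q})$ produces exactly one copy of the factor, and the momentum integration, amputation of the $M,N$ legs, and passage from $\abs{\vec{w}}$-derivative weights proceed just as for the proof of Proposition~\ref{thm_l1i}. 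For the quadratic term, the first functional (without insertions) is bounded via~\eqref{bound_l0_delta} and the second (with one integrated insertion) via~\eqref{bound_l1i_delta}, producing two factors of $(\sup(\mu,\Lambda)/\Lambda_0)^{\Delta/2}$; we then collapse these to a single factor using the key estimate~\eqref{bound_vanish_delta}. The rest of the tree-fusing, large-momentum factor manipulation via property~\eqref{gs_prop_1} of $g^{(s)}$, and the final estimate~\eqref{func_0op_estexp} are identical to those used in Section~\ref{sec_bounds_rhs}.

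Integrating this bound in $\Lambda$ downward from $\Lambda_0$ for irrelevant functionals, the estimates leading to~\eqref{func_0op_irrelevant} go through verbatim, and the key case-by-case inequality
\begin{equation*}
\left( \frac{\sup(\inf(\mu,\eta(\vec{q})), \Lambda)}{\sup(\inf(\mu,\eta(\vec{q})),\lambda)} \frac{\sup(\mu,\lambda)}{\Lambda_0} \right)^{\Delta/2} \leq \left( \frac{\sup(\mu,\Lambda)}{\Lambda_0} \right)^{\Delta/2}
\end{equation*}
together with the modified version of~\eqref{func_0op_deltalog} with $\Delta$ replaced by $\Delta/2$, plus an application of Lemma~\ref{lemma_lambdaint}, gives~\eqref{bound_l1i_delta} in the irrelevant range. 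For the marginal and relevant functionals with vanishing boundary conditions at $\Lambda=\mu$, the integration at zero momenta again proceeds just as in Section~\ref{sec_bounds_relvan}, with the factor $(\sup(\mu,\Lambda)/\Lambda_0)^{\Delta/2}$ carried unchanged through the integral from $\mu$ to $\Lambda$. The extension to general momenta via the Taylor formula~\eqref{func_1op_taylor} and the extension to $\Lambda<\mu$ by downward integration of the flow equation are both performed at fixed $\Lambda$ (and $\Lambda_0$), so the additional factor is inert under these operations; combining the bounds on the boundary term at $\Lambda=\mu$ with the $\Lambda$-integral contribution via Lemma~\ref{lemma_lambdaint3} yields the claimed inequality~\eqref{bound_l1i_delta}.

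The only genuinely new ingredient relative to Proposition~\ref{thm_l1i} is the need to reduce two instances of $(\sup(\mu,\Lambda)/\Lambda_0)^{\Delta/2}$ to one in the quadratic and source terms, which is precisely what~\eqref{bound_vanish_delta} accomplishes; everything else is formally identical to the proofs already carried out, and I expect no conceptual obstacle beyond careful bookkeeping of the additional factor through the induction. In particular, since integrated insertions have momentum-conserving trees and thus involve $\eta$ rather than $\bar{\eta}$, one simply uses $\eta$ in every estimate, exactly as in the proof of Proposition~\ref{thm_l1i}.
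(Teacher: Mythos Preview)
Your approach is essentially correct and matches the paper's (which in fact gives no separate proof for this proposition, deferring implicitly to the arguments for Propositions~\ref{thm_l0_van} and~\ref{thm_l1_van}). There is, however, a small inaccuracy in your treatment of the quadratic term: the functional without insertions $\mathcal{L}^{\Lambda,\Lambda_0,l'}_{\vec{K}_\sigma\vec{L}_\rho^\ddag M}(\vec{q}_\sigma,\vec{q}_\rho,-k)$ appearing on the right-hand side of~\eqref{l_intop_flow_hierarchy} is the \emph{ordinary} functional with the standard (non-vanishing) boundary conditions of Table~\ref{table_boundary}, so it is bounded by~\eqref{bound_l0}, not by~\eqref{bound_l0_delta}. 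Consequently only \emph{one} copy of $(\sup(\mu,\Lambda)/\Lambda_0)^{\Delta/2}$ appears in the quadratic term, coming from the induction hypothesis~\eqref{bound_l1i_delta} on the functional with the integrated insertion; the estimate~\eqref{bound_vanish_delta} is therefore not needed here. Likewise, the flow equation~\eqref{l_intop_flow_hierarchy} has no source term, so your closing remark about reducing two factors in ``source terms'' is spurious. These are simplifications rather than gaps; with them the argument goes through exactly as you outline.
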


Functionals with more than one insertion already have vanishing boundary conditions for all functionals at $\Lambda = \Lambda_0$, and the non-vanishing contribution comes from the source term which depends on functionals with a lower number of insertions. To make these functionals vanish as $\Lambda_0 \to \infty$, it is thus necessary that the source term vanishes in that limit. Again, we have to ascend in the number of operator insertions, starting with two insertions, and we prove
\begin{proposition}
\label{thm_ls_van}
For all multiindices $\vec{w}$, at each order $l$ in perturbation theory, for an arbitrary number $m$ of external fields $\vec{K}$ and $n$ antifields $\vec{L}^\ddag$, and for an arbitrary number $s$ of non-integrated composite operators $\op_{A_i}$, the functionals with at least two insertions of the $\op_{A_i}$ can be written in the form given in Proposition~\ref{thm_lsa} (for $\Lambda \geq \mu$) or in the form given in Proposition~\ref{thm_lsb} (for $\Lambda < \mu$), and the kernels $\mathcal{K}^{\Lambda, \Lambda_0, l}$ satisfy a bound of the form given in these propositions with an additional factor of
\begin{equation}
\label{lambda0_vanishing}
\left( \frac{\sup(\Lambda,\mu)}{\Lambda_0} \right)^\frac{\Delta}{2} \eqend{,}
\end{equation}
if at least one of the two functionals appearing in the source term of the corresponding flow equation~\eqref{l_sop_flow} fulfils a bound of the form~\eqref{bound_l1_delta} or \eqref{bound_l1i_delta} (for functionals with one insertion), or the kernels $\mathcal{K}^{\Lambda, \Lambda_0, l}$ are bounded with an extra factor~\eqref{lambda0_vanishing} (for functionals with more than one insertion). Furthermore, the boundary conditions at $\Lambda = \Lambda_0$ need not vanish, but must be compatible with the bounds~\eqref{bound_ks_lambdamu} or~\eqref{bound_ks_lambda}.
\end{proposition}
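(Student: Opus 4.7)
The proof is an inductive repetition of the proofs of Propositions~\ref{thm_lsa} and~\ref{thm_lsb}, where one simply tracks the additional multiplicative factor $\left( \sup(\Lambda,\mu)/\Lambda_0 \right)^{\Delta/2}$ through every estimate. The induction proceeds on the number of operator insertions $s \geq 2$; within fixed $s$ we follow the same sub-induction on $m+n+2l$, then on $l$, then on the parameter $D \in \{0,1\}$ as in Subsection~\ref{sec_bounds_func_sop}. The tree structures, derivative counting, and large-momentum exponents are handled exactly as before, so we only need to explain how the vanishing factor propagates from the right-hand side of the flow equation~\eqref{l_sop_flow_hierarchy} to the bound on $\mathcal{K}^{\Lambda, \Lambda_0, l}$ on the left.

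For the first (linear) and second (quadratic) terms on the right-hand side of the flow equation, the extra factor is supplied by the $s$-insertion functional appearing there, through the induction hypothesis on $m+n+2l$; the additional zero-insertion functional in the quadratic term obeys the standard bound~\eqref{bound_l0} without any modification. The source term is the genuinely new piece: it couples two functionals with strictly fewer insertions. For $s=2$ both factors have exactly one insertion, and by the hypothesis of the proposition at least one of them obeys the bound~\eqref{bound_l1_delta} or~\eqref{bound_l1i_delta}; for $s > 2$ at least one of the two fewer-insertion factors carries the extra factor $\left( \sup(\lambda,\mu)/\Lambda_0 \right)^{\Delta/2}$ by the outer induction on $s$. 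Whenever, for a given contribution, both factors happen to carry the small factor, we reduce the product to a single factor using the elementary estimate~\eqref{bound_vanish_delta}, $\left( \sup(\lambda,\mu)/\Lambda_0 \right)^{\Delta} \leq \left( \sup(\lambda,\mu)/\Lambda_0 \right)^{\Delta/2}$. The remainder of the source-term estimate --- shift to the origin via~\eqref{func_sop_shift}, generation of $x$-derivatives or $x$-weight factors by~\eqref{func_sop_genderivs}, fusion of trees via~\eqref{gw_fused_1_est}, the bounds~\eqref{func_sop_g1g1est} or~\eqref{func_sop_gagbest} on the composite large-momentum exponents, and the $p$-integration via Lemmas~\ref{lemma_pint2} or~\ref{lemma_pint3} --- is copied verbatim from the proofs of Propositions~\ref{thm_lsa} and~\ref{thm_lsb}. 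The net result is that the analogue of~\eqref{bound_ks_lambdamu_lambdaderiv} (for $\Lambda \geq \mu$) or~\eqref{bound_ks_lambda_lambdaderiv} (for $\Lambda < \mu$) holds with the extra factor $\left( \sup(\lambda,\mu)/\Lambda_0 \right)^{\Delta/2}$ on the right-hand side.

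In the final step we integrate this bound on $\partial_\Lambda \mathcal{K}^{\Lambda, \Lambda_0, l}$ from $\Lambda$ to $\Lambda_0$. The boundary contribution at $\Lambda = \Lambda_0$ is, by hypothesis, compatible with the standard bounds of Propositions~\ref{thm_lsa} and~\ref{thm_lsb}, and trivially carries the additional factor since $\left( \sup(\mu,\Lambda_0)/\Lambda_0 \right)^{\Delta/2} = 1$. For the integral over $\lambda \in [\Lambda,\Lambda_0]$ we proceed as in the proof of Proposition~\ref{thm_l0_van}, using
\begin{equation}
\left( \frac{\sup(\mu,\lambda)}{\Lambda_0} \right)^{\Delta/2} \leq \left( \frac{\sup(\mu,\Lambda)}{\Lambda_0} \right)^{\Delta/2} \left( \frac{\sup(\mu,\lambda)}{\sup(\mu,\Lambda)} \right)^{\Delta/2}
\end{equation}
and absorbing the residual $\lambda$-dependent factor into the irrelevance exponent at the cost of $\Delta/2$; since the trees on the right-hand side have by construction strictly negative dimension bounded above by $-\Delta$ (either because $D=1$ or because $m+n\geq 1$), the subsequent $\lambda$-integrals remain controlled by Lemmas~\ref{lemma_lambdaint} and~\ref{lemma_lambdaint3}. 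This yields the claimed bound with prefactor $\left( \sup(\Lambda,\mu)/\Lambda_0 \right)^{\Delta/2}$.

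The main obstacle in the bookkeeping is the source term for $s = 2$, where only one of the two one-insertion functionals is a priori guaranteed to carry the vanishing factor; the case split into ``which side supplies the factor'' must be threaded through the tree-fusion estimate and the distinction between relevant and irrelevant one-insertion functionals without double counting or creating a mismatch of exponents. Once the small factor is extracted outside the momentum-space estimates, however, it is inert under tree operations, $p$-integration, and amputation, so that the remainder of the argument is indeed a verbatim copy of the proofs of Propositions~\ref{thm_lsa} and~\ref{thm_lsb}.
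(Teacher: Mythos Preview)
Your proposal is correct and follows essentially the same approach as the paper: track the factor $\left(\sup(\mu,\lambda)/\Lambda_0\right)^{\Delta/2}$ through the right-hand side of the flow equation (using~\eqref{bound_vanish_delta} when both source-term factors carry it), then absorb the residual $\lambda$-growth into the irrelevance of the trees via~\eqref{t_irr_ineq2}. The paper is slightly more explicit than you are about the split between the two regimes: for $\Lambda\geq\mu$ it writes out the integrand as $(\Lambda/\Lambda_0)^{\Delta/2}\cdot\Lambda^{1-\Delta/2}/\lambda^{2-\Delta/2}$ (using $\epsilon=1$ in~\eqref{t_irr_ineq2}, which is admissible since the tree dimension is in fact $\leq -1$, not merely $\leq -\Delta$) and handles the boundary term at $\Lambda_0$ via Lemma~\ref{lemma_largerlog}, whereas for $\Lambda<\mu$ it simply observes that the extra factor equals the constant $(\mu/\Lambda_0)^{\Delta/2}$ on $[\Lambda,\mu]$ so that the original proof goes through unchanged.
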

This shows that the factor~\eqref{lambda0_vanishing} propagates, first for functionals with two insertions where the source term only contains functionals with one insertion, and then ascending in the number of insertions. Again, it is simpler to just detail the modifications of the proof that have to be done. Let us start with the case $\Lambda \geq \mu$. Since the right-hand side of the flow equation~\eqref{l_sop_flow_hierarchy} is estimated at fixed $\Lambda$, nothing changes, and we obtain the estimate~\eqref{bound_ks_lambdamu_lambdaderiv} with an additional factor of~\eqref{lambda0_vanishing}. Note that if both functionals in the source term come with this factor in their respective bounds, we just use the estimate~\eqref{bound_vanish_delta}. Instead of equation~\eqref{func_sop_lambdamu_int} we then obtain
\begin{splitequation}
&\abs{ \mathcal{K}^{\Lambda, \Lambda_0, l}_{\vec{K} \vec{L}^\ddag}\left( \bigotimes_{k=1}^s \op_{A_k}(x_k); \vec{q} \right) } \leq \prod_{i=1}^{s-1} \left( 1 + \ln_+ \frac{1}{\mu \abs{x_i}} \right) \sup\left( 1, \frac{\abs{\vec{q}}}{\Lambda_0} \right)^{g^{(s)}([\op_\vec{A}],m+n+2l,\abs{\vec{w}})} \\
&\hspace{10em}\times \sum_{T^* \in \mathcal{T}^*_{m+n}} \mathsf{G}^{T^*,\vec{w}}_{\vec{K} \vec{L}^\ddag; -D}(\vec{q}; \mu, \Lambda_0) \, \mathcal{P}\left( \ln_+ \frac{\sup\left( \abs{\vec{q}}, \mu \right)}{\Lambda_0}, \ln_+ \frac{\Lambda_0}{\mu} \right) \\
&\hspace{8em}+ \prod_{i=1}^{s-1} \left( 1 + \ln_+ \frac{1}{\mu \abs{x_i}} \right) \int_\Lambda^{\Lambda_0} \frac{1}{\lambda} \left( \frac{\lambda}{\Lambda_0} \right)^\frac{\Delta}{2} \sup\left( 1, \frac{\abs{\vec{q}}}{\lambda} \right)^{g^{(s)}([\op_\vec{A}],m+n+2l,\abs{\vec{w}})} \\
&\hspace{10em}\times \sum_{T^* \in \mathcal{T}^*_{m+n}} \mathsf{G}^{T^*,\vec{w}}_{\vec{K} \vec{L}; -D}(\vec{q}; \mu, \lambda) \, \mathcal{P}\left( \ln_+ \frac{\sup\left( \abs{\vec{q}}, \mu \right)}{\lambda}, \ln_+ \frac{\lambda}{\mu} \right) \total \lambda \eqend{.}
\end{splitequation}
The trees can be estimated at $\lambda = \Lambda$ using the inequality~\eqref{t_irr_ineq2} with $\epsilon = 1$, and the large momentum factor is trivially estimated there, such that we get
\begin{splitequation}
&\abs{ \mathcal{K}^{\Lambda, \Lambda_0, l}_{\vec{K} \vec{L}^\ddag}\left( \bigotimes_{k=1}^s \op_{A_k}(x_k); \vec{q} \right) } \leq \prod_{i=1}^{s-1} \left( 1 + \ln_+ \frac{1}{\mu \abs{x_i}} \right) \sup\left( 1, \frac{\abs{\vec{q}}}{\Lambda} \right)^{g^{(s)}([\op_\vec{A}],m+n+2l,\abs{\vec{w}})} \\
&\hspace{12em}\times \sum_{T^* \in \mathcal{T}^*_{m+n}} \mathsf{G}^{T^*,\vec{w}}_{\vec{K} \vec{L}^\ddag; -D}(\vec{q}; \mu, \Lambda) \left( \frac{\Lambda}{\Lambda_0} \right)^\frac{\Delta}{2} \\
&\quad\times \left[ \left( \frac{\Lambda}{\Lambda_0} \right)^{1-\frac{\Delta}{2}} \, \mathcal{P}\left( \ln_+ \frac{\sup\left( \abs{\vec{q}}, \mu \right)}{\Lambda_0}, \ln_+ \frac{\Lambda_0}{\mu} \right) + \int_\Lambda^{\Lambda_0} \frac{\Lambda^{1-\frac{\Delta}{2}}}{\lambda^{2-\frac{\Delta}{2}}} \, \mathcal{P}\left( \ln_+ \frac{\sup\left( \abs{\vec{q}}, \mu \right)}{\lambda}, \ln_+ \frac{\lambda}{\mu} \right) \total \lambda \right] \eqend{.}
\end{splitequation}
Since $\Delta \leq 1$, Lemma~\ref{lemma_largerlog} can be used to estimate the second logarithm in the first polynomial at $\Lambda$, while the first logarithm is trivially estimated there. An application of Lemma~\ref{lemma_lambdaint} to the remaining integral then gives the required bound, equation~\eqref{bound_ks_lambdamu} with an additional factor of~\eqref{lambda0_vanishing}. For $\Lambda < \mu$, the extra factor does not depend on $\Lambda$, and thus nothing at all changes in the proof.

\subsubsection{Convergence}
\label{bounds_lambda0}

Up to now, we only have shown boundedness of the functionals with and without operator insertions. To show convergence in the limit $\Lambda_0 \to \infty$, we also need to prove bounds on the derivative of the functionals with respect to $\Lambda_0$. The corresponding flow equation is obtained by taking a $\Lambda_0$ derivative of the respective flow equation~\eqref{l_0op_flow_hierarchy,l_sop_flow_hierarchy}. The boundary conditions at $\Lambda = 0$ or $\Lambda = \mu$ can be simply obtained by taking a $\Lambda_0$ derivative of the boundary conditions for the respective functional, and since these boundary conditions do not depend on $\Lambda_0$ we have vanishing boundary conditions in that case. For the irrelevant functionals where one imposes vanishing boundary conditions at $\Lambda = \Lambda_0$ one has to be more careful. Starting with the functionals without insertions, one integrates the flow equation~\eqref{l_0op_flow_hierarchy} with respect to $\Lambda$ and obtains for irrelevant functionals
\begin{equation}
\mathcal{L}^{\Lambda, \Lambda_0, l}_{\vec{K} \vec{L}^\ddag}(\vec{q}) = \int_\Lambda^{\Lambda_0} \partial_\lambda \mathcal{L}^{\lambda, \Lambda_0, l}_{\vec{K} \vec{L}^\ddag}(\vec{q}) \total \lambda \eqend{,}
\end{equation}
where $\partial_\lambda \mathcal{L}^{\lambda, \Lambda_0, l}_{\vec{K} \vec{L}^\ddag}(\vec{q}) \equiv F(\lambda)$ is given by the right-hand side of the flow equation~\eqref{l_0op_flow_hierarchy}. Taking a $\Lambda_0$ derivative, we obtain
\begin{equation}
\partial_{\Lambda_0} \mathcal{L}^{\Lambda, \Lambda_0, l}_{\vec{K} \vec{L}^\ddag}(\vec{q}) = F(\Lambda_0) + \int_\Lambda^{\Lambda_0} \partial_{\Lambda_0} F(\lambda) \total \lambda \eqend{,}
\end{equation}
and taking the limit $\Lambda \to \Lambda_0$ the second term vanishes. The right boundary conditions for the $\Lambda_0$ derivative of irrelevant functionals are thus given by the right-hand side of the flow equation~\eqref{l_0op_flow_hierarchy} evaluated at $\Lambda = \Lambda_0$, which satisfies the bound~\eqref{bound_l0_lambdaderiv}. We thus have
\begin{equation}
\label{bound_lambda0}
\Lambda_0 \abs{ \partial_{\Lambda_0} \partial^\vec{w} \mathcal{L}^{\Lambda, \Lambda_0, l}_{\vec{K} \vec{L}^\ddag}(\vec{q}) }_{\Lambda = \Lambda_0} \leq \abs{ \partial^\vec{w} \mathcal{L}^{\Lambda_0, \Lambda_0, l}_{\vec{K} \vec{L}^\ddag}(\vec{q}) } \eqend{,}
\end{equation}
and in the same way one finds the boundary conditions for the $\Lambda_0$ derivative of irrelevant functionals with insertions, by integrating the flow equations~\eqref{l_sop_flow_hierarchy} over $\Lambda$, taking a $\Lambda_0$ derivative and then the limit $\Lambda \to \Lambda_0$.

For the $\Lambda_0$ derivative of any functional, multiplied by $\Lambda_0$, we then obtain almost the same flow equation as for the functional itself. The only difference is the appearance of new source terms, which come from distributing the $\Lambda_0$ derivative. Furthermore, the boundary conditions~\eqref{bound_lambda0} (and similarly for functionals with insertions) are compatible with the appropriate bounds evaluated at $\Lambda = \Lambda_0$ for irrelevant functionals (given by~\eqref{bound_l0,bound_l1,bound_ls_lambdamu,bound_ks_lambdamu} and vanishing boundary conditions at $\Lambda = \mu$ (or, for some relevant functionals, at $\Lambda = 0$) and vanishing momenta. One then notes that the additional source terms can be estimated in the same way as the existing ones if the induction hypothesis~\eqref{bound_l1_delta} (resp.~\eqref{bound_l1i_delta} or the appropriate generalisation to functionals with more than one insertion) of the last subsection is made, since it only introduces an additional factor of~\eqref{lambda0_vanishing}. We can thus reuse the proof of the last subsection, which therefore tells us that the $\Lambda_0$ derivative of any functional satisfies the same bound as the corresponding functional, multiplied by
\begin{equation}
\frac{1}{\Lambda_0} \left( \frac{\sup(\Lambda, \mu)}{\Lambda_0} \right)^\frac{\Delta}{2} \eqend{.}
\end{equation}
An integration over $\Lambda_0$ then gives the bound
\begin{splitequation}
\abs{ \mathcal{L}^{\Lambda, \Lambda_1, l}_{\vec{K} \vec{L}^\ddag}(\vec{q}) } &\leq \abs{ \mathcal{L}^{\Lambda, \Lambda_0, l}_{\vec{K} \vec{L}^\ddag}(\vec{q}) } + \abs{ \mathcal{L}^{\Lambda, \Lambda_0, l}_{\vec{K} \vec{L}^\ddag}(\vec{q}) } \int_{\Lambda_0}^{\Lambda_1} \frac{1}{\lambda} \left( \frac{\sup(\Lambda, \mu)}{\lambda} \right)^\frac{\Delta}{2} \total \lambda \\
&\leq \abs{ \mathcal{L}^{\Lambda, \Lambda_0, l}_{\vec{K} \vec{L}^\ddag}(\vec{q}) } + \frac{2}{\Delta} \left( \frac{\sup(\Lambda, \mu)}{\Lambda_0} \right)^\frac{\Delta}{2} \abs{ \mathcal{L}^{\Lambda, \Lambda_0, l}_{\vec{K} \vec{L}^\ddag}(\vec{q}) }
\end{splitequation}
for all $\Lambda_1 \geq \Lambda_0$ since the bound~\eqref{bound_l1} for $\mathcal{L}^{\Lambda, \Lambda_0, l}_{\vec{K} \vec{L}^\ddag}(\vec{q})$ is independent of $\Lambda_0$, and similar bounds for functionals with insertions. Especially, we can take the limit $\Lambda_1 \to \infty$.

\section{Restoration of BRST symmetry}
\label{sec_brst}

In the previous sections we have given a rather exhaustive treatment of bounds on correlation functions without paying any attention to gauge/BRST invariance. The presence of the cutoffs $\Lambda$ and $\Lambda_0$ necessarily breaks this invariance, but a naive hope might be that it will be restored in the physical limit $\Lambda_0 \to \infty$ and $\Lambda \to 0$ (which we have shown to exist at least for non-exceptional momenta). However, for a general set of boundary conditions on the CACs (see Table~\ref{table_boundary} and equation~\eqref{func_1op_bphz}), this is simply false. One's first reaction might be that this renders our method basically unsuitable for theories with local gauge invariance, as was indeed the view taken in the early days of quantum Yang-Mills theory. This is however not so, because one can restore gauge invariance by a set of ``finite renormalisation'' changes (\ie, making certain specific choices of the boundary conditions on the CACs) while leaving all our analytical bounds intact, as we will see in this section. With that choice of boundary conditions, the final result is that gauge invariance holds in the sense expressed in Theorem~\ref{thm4}, as we explained in Sections~\ref{sec_intro_gauge} and~\ref{sec_gauge}.

Our main tools in the proof of Theorem~\ref{thm4} will be essentially algebraic methods from BRST/BV theory. In fact, in order to apply these methods, it turns out that rather than proving Theorem~\ref{thm4}, it is actually better to prove in one stroke a more general version of this result which we now present. Since in most of the following section we work exclusively in the physical limit $\Lambda \to 0$, $\Lambda_0 \to \infty$, to shorten the notation, we set
\begin{equation}
\label{abbreviation}
L(\cdots) \equiv L^{0,\infty}(\cdots) \eqend{.}
\end{equation}
The more general version of Theorem~\ref{thm4} then reads:
\begin{proposition}
\label{thm_brst}
One can choose $\mathrm{E}(4)$-covariant boundary/renormalisation conditions for the generating functionals with and without operator insertions, such that
\begin{equation}
\label{ward_0op}
\st_0 L = 0 \eqend{,}
\end{equation}
\begin{equation}
\label{ward_1op}
\st_0 L\left( \op_A(x) \right) = L\left( (\stq \op_A)(x) \right) \eqend{.}
\end{equation}
and (for $s \geq 2$)
\begin{splitequation}
\label{ward_sop}
\st_0 L\left( \bigotimes_{k=1}^s \op_{A_k}(x_k) \right) &= \sum_{l=1}^s L\left( \bigotimes_{k\in\{1,\ldots,s\}\setminus\{l\}} \op_{A_k}(x_k) \otimes (\stq \op_{A_l})(x_l) \right) \\
&\quad+ \sum_{1 \leq l < l' \leq s} L\left( \bigotimes_{k\in\{1,\ldots,s\}\setminus\{l,l'\}} \op_{A_k}(x_k) \otimes \left( \op_{A_l}(x_l), \op_{A_{l'}}(x_{l'}) \right)_\hbar \right) \eqend{,}
\end{splitequation}
understood as a shorthand for the hierarchy of identities obtained when we expand the above equations in $\hbar$, in external fields $\phi_K$ and antifields $\phi^\ddag_L$ as in equation~\eqref{field_expansion}. Here, $\st_0$ is the free part of $\st$ defined in equation~\eqref{st0_def}, $\stq$ is a differential and $\left( \cdot, \cdot \right)_\hbar$ a bracket which fulfil the properties stated in Theorem~\ref{thm4}.

In the given form, these identities are valid for bosonic operators, while for fermionic operators additional minus signs appear. As explained in Theorem~\ref{thm4}, the correct minus signs are obtained by introducing auxiliary constant fermions $\epsilon_k$ for each fermionic operator $\op_{A_k}$, replacing $\op_{A_k} \to \epsilon_k \op_{A_k}$ and then taking derivatives with respect to the $\epsilon_k$, using that the functionals are multilinear in the operator insertions.
\end{proposition}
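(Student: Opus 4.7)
The plan is to proceed by a double induction, first on the loop order $l$ in $\hbar$ and, within each order, on the number $s$ of operator insertions, taking as the inductive device an \emph{anomalous} version of (\ref{ward_0op})--(\ref{ward_sop}) valid at finite cutoffs $\Lambda > 0$, $\Lambda_0 < \infty$. Since the momentum-space regulator in (\ref{reg_covariance}) breaks BRST invariance, direct application of $\st_0$ to the defining functional integrals together with the properties of Gaussian convolution (integration by parts in field space) will give identities of the schematic form $\st_0 L^{\Lambda,\Lambda_0} = \mathsf{A}_0^{\Lambda,\Lambda_0}$, $\st_0 L^{\Lambda,\Lambda_0}(\op_A) - L^{\Lambda,\Lambda_0}(\st \op_A) = \mathsf{A}_1^{\Lambda,\Lambda_0}(\op_A)$, and analogous relations with contact terms for $s \geq 2$, where the $\mathsf{A}_i$ are ``anomalous insertions'' that formally vanish as $\Lambda_0 \to \infty$, $\Lambda \to 0$ at the tree level but must be controlled to all orders.

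First I would derive flow equations for the anomalies $\mathsf{A}_0, \mathsf{A}_1, \mathsf{A}_2$ by commuting $\st_0$ past $\partial_\Lambda$ in the hierarchies (\ref{l_0op_flow_hierarchy})--(\ref{l_sop_flow_hierarchy}) and using the nilpotency $\st_0^2 = 0$; this would yield ``Wess--Zumino type'' consistency conditions of the form $\st_0 \mathsf{A}_0 = 0$ (modulo lower-order terms) at each order in $\hbar$, with analogous relations for $\mathsf{A}_1$ involving both $\st_0$ and the classical antibracket $(S,\cdot)$. Applying the bounds from Section~\ref{sec_bounds}, in particular Propositions~\ref{thm_l0_van}--\ref{thm_ls_van} with the extra factor $(\sup(\Lambda,\mu)/\Lambda_0)^{\Delta/2}$, I would show that in the physical limit $\mathsf{A}_0$ reduces to a local functional of dimension $\leq 4$ and ghost number $1$, hence represents a class in $H^{1,4}_{\mathrm{E}(4)}(\st\vert\total)$.

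The crucial cohomological input is then invoked: by the parity argument recalled in Section~\ref{sec_intro_gauge}, this equivariant cohomology is empty, so $\mathsf{A}_0$ is $\st$-exact modulo $\total$-exact terms at each loop order, meaning it can be cancelled by a finite change in the \emph{marginal} boundary conditions at $\Lambda = \mu$ in Table~\ref{table_boundary}. An analogous argument using the vanishing of $H^{1,[\op_A]+1}(\st)$ (or a modification when this is non-trivial) removes $\mathsf{A}_1$ by adjusting the boundary conditions of the composite operator insertions, which precisely encodes the deformation $\stq = \st + \bigo{\hbar}$. The residual $\mathsf{A}_2$, which cannot be compensated since all boundary conditions are exhausted, is absorbed into the definition of the quantum antibracket $(\cdot,\cdot)_\hbar$; its graded symmetry and Jacobi identity (\ref{bvq_symm}), (\ref{bvq_jacobi}) and the compatibility (\ref{bvq_stq_compat}) would follow from the higher Wess--Zumino consistency conditions obtained by applying $\st_0$ once more to the anomalous identities.

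The main obstacle will be the cohomological step: one must verify that each finite renormalisation performed to kill $\mathsf{A}_0$ at loop order $l$ is compatible with the identity already established at lower orders, and does not spoil the bounds of Section~\ref{sec_bounds}. This requires showing that the representatives of the cohomology classes can be chosen as $\mathrm{E}(4)$-covariant local polynomials of the right dimension and ghost number, compatible with the boundary condition framework; the counting of the boundary data against the cocycle conditions has to balance. A secondary technical obstacle is ensuring that the flow-equation-generated $\mathsf{A}_i$ are themselves controlled by trees (so that one may take the unregularised limit before invoking locality of the anomaly), for which Proposition~\ref{thm_ls_van} with the vanishing factor (\ref{lambda0_vanishing}) is essential; this will be handled by treating the $\mathsf{A}_i$ as generating functionals with one or two composite operator insertions of bounded dimension and applying the bounds of Subsection~\ref{sec_bounds_additional} verbatim.
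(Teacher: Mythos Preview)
Your proposal follows essentially the same architecture as the paper's proof: a regularised (anomalous) Ward identity at finite cutoffs, flow-equation control of the anomalies, a splitting into a local anomaly plus a piece vanishing as $\Lambda_0\to\infty$ (via Propositions~\ref{thm_l0_van}--\ref{thm_ls_van}), the Wess--Zumino consistency conditions from $\st_0^2=0$, and the cohomological removal of $\mathsf{A}_0$ using the emptiness of $H^{1,4}_{\mathrm{E}(4)}(\st\vert\total)$.

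Two slips worth correcting. First, the anomaly $\mathsf{A}_0=\int\!\mathcal{A}$ has $[\mathcal{A}]=5$, not $\leq 4$: since $\st_0$ raises the engineering dimension by one and the action integrand has dimension $4$, the anomaly integrand lives at dimension $5$ (cf.\ Theorem~\ref{thm4} and Proposition~\ref{thm_anomward}). This does not change the relevant cohomology class $H^{1,4}_{\mathrm{E}(4)}(\st\vert\total)$, which you identified correctly, but the dimension counting matters for matching the anomaly against the available marginal counterterms in $L^{\Lambda_0}$. Second, your description of $\mathsf{A}_1$ is slightly garbled: in the paper $\mathsf{A}_1$ is \emph{not} removed in general but is absorbed into the definition $\stq=\st+\mathsf{A}_1(\cdot)$; only for operators where the relevant $H^{1,p}(\st)$ at the appropriate dimension is empty (e.g.\ classically gauge-invariant operators at ghost number $0$) can one choose boundary conditions so that $\stq\op_A=\st\op_A$. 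The nilpotency $\stq^2=0$ and the properties of $(\cdot,\cdot)_\hbar$ are then consequences of the consistency conditions, exactly as you outline.
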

Theorem~\ref{thm4} is a trivial corollary of this proposition. Indeed, is is just the first term of the expansion of equation~\eqref{ward_sop} where no external fields and antifields are present, with the relation between the functionals with multiple operator insertions and the connected correlation functions of these operators given by equation~\eqref{relation_l_opconn}.

Thus what we need to show is Proposition~\ref{thm_brst}. In general, the boundary conditions chosen so far -- summarised in Table~\ref{table_boundary} and equation~\eqref{func_1op_bphz} -- will not lead to the fulfilment of the identities stated in Proposition~\ref{thm_brst}. Naturally, one way to obtain the correct boundary conditions would be essentially by explicit calculation: one imposes boundary conditions containing sufficiently many free parameters and makes the dependence of the functionals on these parameters explicit throughout the calculation, as done for QED in Ref.~[\onlinecite{kellerkopper1991}] and for spontaneously broken SU(2) theory in Refs.~[\onlinecite{koppermueller2000a,koppermueller2000b,mueller2003,koppermueller2009}]. However, this is quite laborious and ultimately not very practical in our case if we want to consider arbitrary insertions of composition operators as in equation~\eqref{ward_sop}. We therefore proceed here by another method, which consists in first writing down ``anomalous'' versions of the identities stated in Proposition~\ref{thm_brst}. The ``anomaly'' quantifies the extent to which the identities are violated. Actually, there are three kinds of anomalies:
\begin{enumerate}
\item An anomaly $\mathsf{A}_0$, quantifying the violation of equation~\eqref{ward_0op}. This anomaly can be understood as governing local gauge invariance at the level of the renormalised (effective) action. It is often called ``gauge anomaly''. We show that $\mathsf{A}_0$ is of the form $\int\!\mathcal{A}$ where $\mathcal{A} \in \mathcal{F}^{1,4}$ is a local composite operator of form degree $4$ and ghost number $1$.
\item An anomaly $\mathsf{A}_1$, quantifying the difference between equation~\eqref{ward_1op} and the naively expected one $\st_0 L\left( \op_A \right) = L\left( \st \op_A \right)$. This anomaly can be understood as governing the local gauge invariance of renormalised composite operators inside a correlation function. Technically, it is a map $\mathsf{A}_1\colon \mathcal{F}^{g,p} \to \mathcal{F}^{g+1,p}$. $\mathsf{A}_1$ combines with the classical BRST differential $\st$ to the ``quantum'' BRST differential $\stq = \st + \mathsf{A}_1(\cdot)$, where the qualifier ``quantum'' refers to the fact that $\mathsf{A}_1$ is at least of order $\hbar$.
\item An anomaly $\mathsf{A}_2$, quantifying again the difference between equation~\eqref{ward_sop} and the naively expected one (which would be~\eqref{ward_sop} with $\st$ instead of $\stq$ and the classical BV bracket $(\cdot,\cdot)$ instead of $(\cdot,\cdot)_\hbar$). This anomaly can be understood as governing the local gauge invariance of ``contact type terms'' of renormalised composite operators inside a correlation function. Technically, it is a map $\mathsf{A}_2: \mathcal{F}^{g,p} \otimes \mathcal{F}^{g',p'} \to \mathcal{F}^{g+g'+1,p+p'}_2$. $\mathsf{A}_2$ combines with the classical BV bracket $(\cdot, \cdot)$~\cite{batalinvilkovisky1981,batalinvilkovisky1983,batalinvilkovisky1984,weinberg_v2} to the ``quantum'' BV bracket $(\cdot, \cdot)_\hbar = (\cdot, \cdot) + \mathsf{A}_2(\cdot, \cdot)$, where the qualifier ``quantum'' refers again to the fact that $\mathsf{A}_2$ is at least of order $\hbar$.
\end{enumerate}

In most of the literature, only the first type of anomaly $\mathsf{A}_0$ is discussed. Since its vanishing is a prerequisite for the Ward identities~\eqref{ward_1op} and~\eqref{ward_sop} to hold, it is indeed the most important one, while the other anomalies $\mathsf{A}_1$ and $\mathsf{A}_2$ ``just'' give the proper extension of the BRST differential and the BV bracket to the quantum theory. We now present in detail our result, which is the ``anomalous version'' of Proposition~\ref{thm_brst}.
\begin{proposition}
\label{thm_anomward}
For a general choice of boundary conditions (as given in Table~\ref{table_boundary}), the following is true:
\begin{enumerate}
\item There exists a composite operator $\mathcal{A} \in \mathcal{F}^{1,4}_{\mathrm{E}(4)}$ of dimension $[\mathcal{A}] = 5$ and (at least) order $\hbar$ such that, with $\mathsf{A}_0 = \int\!\mathcal{A}$ in the sense of Proposition~\ref{thm_l1i_d},
\begin{equation}
\label{anom_ward_0op}
\st_0 L = L\left( \mathsf{A}_0 \right) \eqend{.}
\end{equation}
\item There exists a linear map $\mathsf{A}_1\colon \mathcal{F}^{g,p} \to \mathcal{F}^{g+1,p}$ commuting with $\mathrm{E}(4)$ and of dimension $[\mathsf{A}_1(\op)] = [\op]+1$, such that
\begin{equation}
\label{anom_ward_1op_a}
\st_0 L\left( \mathsf{A}_0 \right) = L\left( \st \mathsf{A}_0 \right) + L\left( \mathsf{A}_1\left( \mathsf{A}_0 \right) \right) \eqend{,}
\end{equation}
with the integrated composite operators $\st \mathsf{A}_0 \equiv \int \st \mathcal{A}$ and $\mathsf{A}_1\left( \mathsf{A}_0 \right) \equiv \int \mathsf{A}_1\left( \mathcal{A} \right)$. Moreover, $\mathsf{A}_1(\op)$ is of higher order in $\hbar$ than $\op$.
\end{enumerate}
For a choice of boundary conditions such that $\mathsf{A}_0 = 0$, the following is true:
\begin{enumerate}
\item Defining
\begin{equation}
\label{stq_def}
\stq \op_A \equiv \st \op_A + \mathsf{A}_1(\op_A)
\end{equation}
with the same map $\mathsf{A}_1$ as before, we have $\stq^2 = 0$ and
\begin{equation}
\label{anom_ward_1op}
\st_0 L\left( \op(x) \right) = L\left( \stq \op(x) \right) \eqend{.}
\end{equation}
\item There exists a bilinear map $\mathsf{A}_2\colon \mathcal{F}^{p,g} \otimes \mathcal{F}^{p',g'} \to \mathcal{F}^{p+p',g+g'+1}_2$ of dimension \\ $[\mathsf{A}_2\left( \op_{A_k} \otimes \op_{A_l} \right)] = [\op_{A_k}]+[\op_{A_l}]-3$, supported on the diagonal $x_k = x_l$ and of higher order in $\hbar$ than $\left( \op_{A_k},\op_{A_l} \right)$, such that we have
\begin{splitequation}
\label{anom_ward_sop}
\st_0 L\left( \bigotimes_{k=1}^s \op_{A_k}(x_k) \right) &= \sum_{l=1}^s L\left( \bigotimes_{k\in\{1,\ldots,s\}\setminus\{l\}} \op_{A_k}(x_k) \otimes (\stq \op_{A_l})(x_l) \right) \\
&\quad+ \sum_{1 \leq l < l' \leq s} L\left( \bigotimes_{k\in\{1,\ldots,s\}\setminus\{l,l'\}} \op_{A_k}(x_k) \otimes \left( \op_{A_l}(x_l), \op_{A_{l'}}(x_{l'}) \right)_\hbar \right)
\end{splitequation}
with the quantum antibracket $\left( \cdot, \cdot \right)_\hbar$ defined by
\begin{equation}
\label{bvq_def}
\left( \op_{A_k},\op_{A_l} \right)_\hbar \equiv \left( \op_{A_k},\op_{A_l} \right) + \mathsf{A}_2\left( \op_{A_k} \otimes \op_{A_l} \right) \eqend{.}
\end{equation}
The bracket satisfies the symmetry~\eqref{bvq_symm}, the graded Jacobi identity~\eqref{bvq_jacobi} and the compatibility condition~\eqref{bvq_stq_compat}.
\end{enumerate}
Again, all identities should be understood as a shorthand for the hierarchy of identities obtained when we expand the above equations in $\hbar$ and in external fields and antifields.
\end{proposition}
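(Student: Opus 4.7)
The plan is to build the anomalies $\mathsf{A}_0$, $\mathsf{A}_1$ and $\mathsf{A}_2$ as integrated composite operator insertions arising from the failure of the Gaussian convolution in~\eqref{l_0op_def} to commute with $\st_0$. Since the cutoff covariance $C^{\Lambda,\Lambda_0}_{KL}$ is not annihilated by $\st_0$, a direct computation -- integrating $\delta/\delta\phi_K$ by parts inside $\nu^{\Lambda,\Lambda_0} \conv \,\cdot\,$ and organising terms so as to use $\st S = 0$ from~\eqref{classical_brst_action} -- yields the regularized identity $\st_0 L^{\Lambda,\Lambda_0} = L^{\Lambda,\Lambda_0}(\mathsf{A}_0^{\Lambda,\Lambda_0})$, with $\mathsf{A}_0^{\Lambda_0,\Lambda_0} = \st L^{\Lambda_0} + \tfrac{1}{2}(L^{\Lambda_0},L^{\Lambda_0})$ the local composite operator built out of the bare interaction (the classical Master Equation combination). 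The same manipulation applied to generating functionals with insertions gives the regularized counterparts of~\eqref{anom_ward_1op} and~\eqref{anom_ward_sop}, with extra insertions that likewise originate from $\st_0 C^{\Lambda,\Lambda_0}_{KL}$.

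First I would establish part~1. The operator $\mathsf{A}_0^{\Lambda_0,\Lambda_0}$ is local, $\mathrm{E}(4)$-covariant (by choice of boundary conditions for $L^{\Lambda_0}$), of ghost number $1$ since $\st$ raises the ghost number by one, of dimension at most $5$ by power counting, and of order $\hbar$ since the classical Master Equation $\st S + \tfrac{1}{2}(S,S) = 0$ kills the $\hbar^0$ piece. The flow equation obeyed by $L^{\Lambda,\Lambda_0}(\mathsf{A}_0^{\Lambda,\Lambda_0})$ has the same structure as that for $L^{\Lambda,\Lambda_0}(\int\!\op_A)$ in~\eqref{l_intop_flow}, so Propositions~\ref{thm_l1i} and~\ref{thm_l1i_d} apply: the uniform bounds give a well-defined limit and Proposition~\ref{thm_l1i_d} allows me to represent it in the form $\int\!\mathcal{A}$ with $\mathcal{A} \in \mathcal{F}^{1,4}_{\mathrm{E}(4)}$ of dimension $5$.

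For part~2, applying $\st_0$ once more to~\eqref{anom_ward_0op} and using $\st_0^2 = 0$ gives a consistency condition that at finite cutoff is itself anomalous. Repeating the same integration-by-parts mechanism, but now on a functional with one insertion, defines a linear map $\mathsf{A}_1^{\Lambda,\Lambda_0}\colon \mathcal{F}^{g,p} \to \mathcal{F}^{g+1,p}$ via $\st_0 L^{\Lambda,\Lambda_0}(\op_A) = L^{\Lambda,\Lambda_0}(\st\op_A) + L^{\Lambda,\Lambda_0}(\mathsf{A}_1^{\Lambda,\Lambda_0}(\op_A))$. The bounds of Proposition~\ref{thm_l1} yield the physical limit $\mathsf{A}_1$ with the claimed dimension shift and $\mathrm{E}(4)$-covariance, and $\mathsf{A}_1$ is of strictly higher $\hbar$-order than its argument, again by the classical Master Equation. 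Specialising $\op_A$ to (the components of) $\mathcal{A}$, and interpreting an integrated insertion by Proposition~\ref{thm_l1i_d}, then gives~\eqref{anom_ward_1op_a}.

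Under the assumption $\mathsf{A}_0 = 0$, parts~3 and~4 follow by iterating the same mechanism. Equation~\eqref{anom_ward_1op} is the specialisation of the identity constructed in part~2, and nilpotency $\stq^2 = 0$ comes from applying $\st_0$ once more and matching the resulting insertions. The quantum antibracket $\mathsf{A}_2$ is then read off from the regularized $s=2$ Ward identity: the non-$\st_0$-invariance of $C^{\Lambda,\Lambda_0}_{KL}$ produces a term in which a single covariance derivative connects the two insertions, which in the physical limit collapses to a distribution supported on $x_1 = x_2$ of the claimed dimension; its separate analytic control is provided by Propositions~\ref{thm_lsa}--\ref{thm_lsc}. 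Graded symmetry~\eqref{bvq_symm} follows from the symmetry of the functionals in their insertions, while the graded Jacobi identity~\eqref{bvq_jacobi} and the compatibility~\eqref{bvq_stq_compat} are obtained by applying $\st_0$ to the regularized $s=3$ and $s=2$ identities respectively and matching terms. The main obstacle will be part~1: showing that $\mathsf{A}_0^{\Lambda,\Lambda_0}$, defined through a mixing of $L^{\Lambda_0}$ with covariance-commutator terms, really converges to an integrated local composite operator of the claimed dimension in the limit $\Lambda \to 0$, $\Lambda_0 \to \infty$, which requires the convergence rates of Proposition~\ref{thm_l1i_d} combined with the vanishing-in-$\Lambda_0$ estimates of Subsection~\ref{sec_bounds_additional} applied to the covariance-commutator pieces.
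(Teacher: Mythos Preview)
Your overall strategy matches the paper's: derive a regularised Ward identity by integration by parts in the Gaussian measure, identify the obstruction as a functional satisfying a linear flow equation, and take the physical limit. However, your formula for the bare obstruction is wrong in ways that hide the central difficulty. The correct boundary value (the paper's Proposition~\ref{thm_anomward_reg}) is
\begin{equation*}
W^{\Lambda_0} = \tfrac{1}{2}\bigl(S_0^{\Lambda_0}+L^{\Lambda_0},\, S_0^{\Lambda_0}+L^{\Lambda_0}\bigr)^{\Lambda_0} + \hbar\,\laplace^{\Lambda_0}\bigl(S_0^{\Lambda_0}+L^{\Lambda_0}\bigr),
\end{equation*}
which differs from your $\st L^{\Lambda_0}+\tfrac{1}{2}(L^{\Lambda_0},L^{\Lambda_0})$ in three essential respects: it uses the \emph{regularised} antibracket $(\cdot,\cdot)^{\Lambda_0}$ containing $R^{\Lambda_0}$; it contains the BV Laplacian $\hbar\laplace^{\Lambda_0}$, which arises inevitably from the second functional derivative when integrating by parts and without which the identity is simply false; and it involves $S_0^{\Lambda_0}$, not the full $S$. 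The first two points mean that $W^{\Lambda_0}$ is \emph{not} a local operator, so your assertion ``$\mathsf{A}_0^{\Lambda_0,\Lambda_0}$ is local'' fails, and Propositions~\ref{thm_l1i} and~\ref{thm_l1i_d} cannot be invoked directly.

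This is why the splitting $W^{\Lambda,\Lambda_0} = L^{\Lambda,\Lambda_0}(\mathsf{A}_0) + N^{\Lambda,\Lambda_0}$ is the heart of the argument rather than a technicality: the relevant and marginal boundary data of $W$ (read off from $\st_0^{\Lambda_0} L^{0,\Lambda_0}$ at $\Lambda=0$, the only value at which the regularised identity actually holds) define the genuine local insertion $\mathsf{A}_0$, while the non-local irrelevant boundary data at $\Lambda=\Lambda_0$ are carried by $N^{\Lambda,\Lambda_0}$, which is then shown to vanish as $\Lambda_0\to\infty$ via Proposition~\ref{thm_l1i_van}. Your closing paragraph gestures at this, but the body of the proposal treats the obstruction as already local. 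The same mechanism recurs for $\mathsf{A}_1$ and $\mathsf{A}_2$; in particular, for $\mathsf{A}_2$ the bilocal $W^{\Lambda_0}(\op_{A_k}\otimes\op_{A_l})$ does not ``collapse to the diagonal'' in the physical limit --- one performs a finite Taylor expansion in the relative momentum, the polynomial part furnishing the diagonal-supported $\mathsf{A}_2$ and the remainder being absorbed into the vanishing $N$.
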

Proposition~\ref{thm_brst} immediately follows from this proposition if we can choose renormalisation conditions such that $\mathsf{A}_0 = 0$, and we prove that such a choice is indeed possible. The essential point is that the Wess-Zumino consistency conditions for the anomaly $\mathsf{A}_0$~\cite{wesszumino1971,barnichetal2000}, which follow from Proposition~\ref{thm_anomward}, are sufficiently strong such as to reduce this proof to purely algebraic manipulations. This is the essential advantage of the BRST method, and the detailed argument is given in Subsection~\ref{sec_brst_consistency}. Furthermore, the consistency conditions for the anomalies $\mathsf{A}_1$ and $\mathsf{A}_2$, which also follow from Proposition~\ref{thm_anomward}, give the properties of the quantum Slavnov-Taylor differential $\stq$ and the quantum antibracket $(\cdot,\cdot)_\hbar$ stated in Theorem~\ref{thm4}.

The natural way to prove the anomalous Ward identities in Proposition~\ref{thm_anomward} in the flow equation setup is by an argument involving a flow equation. For this, we need to go back to the regularised quantities with finite cutoffs $\Lambda, \Lambda_0$, but as we have already mentioned, the proposition is not expected to hold for these. To make progress, we need to consider yet another type of identity valid for finite cutoffs, which we call ``regularised Ward identity''. The proof of Proposition~\ref{thm_anomward} will ultimately follow from this regularised identity. Thus our chain of implications is altogether
\begin{quote}
Regularised Ward identity (Proposition~\ref{thm_anomward_reg}, Subsection~\ref{sec_brst_anomward}) $\Rightarrow$

Unregulated, anomalous Ward identity (Proposition~\ref{thm_anomward}, Subsections~\ref{sec_brst_anomward1}--\ref{sec_brst_anomward3}) $\Rightarrow$

Consistency conditions on the anomaly (Subsection~\ref{sec_brst_consistency}) $\Rightarrow$

Ward identity (Proposition~\ref{thm_brst}) $\Rightarrow$

Theorem~\ref{thm4}.
\end{quote}

\subsection{Regularised Ward identity}
\label{sec_brst_anomward}

To set up our regularised identity and machinery, we first introduce a regularised free action
\begin{equation}
\label{s0_lambda0_def}
S_0^{\Lambda_0} \equiv \frac{1}{2} \left\langle \phi_K. \left( C^{0, \Lambda_0} \right)^{-1}_{KL} \ast \phi_L \right\rangle - \left\langle \brst_0 \phi_K, \phi_K^\ddag \right\rangle \eqend{,}
\end{equation}
a regularised antibracket
\begin{equation}
\label{bvreg_def}
(F, G)^{\Lambda_0} \equiv \left\langle \frac{\delta_\text{R} F}{\delta \phi_K\vphantom{\delta \phi_K^\ddag}}, R^{\Lambda_0} \ast \frac{\delta_\text{L} G}{\delta \phi_K^\ddag} \right\rangle - \left\langle \frac{\delta_\text{R} F}{\delta \phi_K^\ddag}, R^{\Lambda_0} \ast \frac{\delta_\text{L} G}{\delta \phi_K\vphantom{\delta \phi_K^\ddag}} \right\rangle \eqend{,}
\end{equation}
which still satisfies the graded Jacobi identity exactly, and a regularised free Slavnov-Taylor differential
\begin{equation}
\st_0^{\Lambda_0} F \equiv \left( S_0^{\Lambda_0}, F \right)^{\Lambda_0} \eqend{.}
\end{equation}
Since in the unregularised limit $\Lambda_0 \to \infty$ we have $R^{\Lambda_0}(x-y) \to \delta^4(x-y)$, the regularised antibracket becomes the usual one, and since the free Slavnov-Taylor differential is linear, $\st_0^{\Lambda_0} F$ converges to $\st_0 F$ whenever $F$ is suitably convergent in that limit. Furthermore, we still have
\begin{equation}
\label{bvreg_s0_vanish}
\left( S_0^{\Lambda_0}, S_0^{\Lambda_0} \right)^{\Lambda_0} = 0
\end{equation}
exactly since the regulators cancel out. We also define a regulated ``BV Laplacian''
\begin{equation}
\label{bv_laplace_def}
\laplace^{\Lambda_0} F \equiv \left\langle \frac{\delta_\text{L}}{\delta \phi_K\vphantom{\delta \phi_K^\ddag}}, R^{\Lambda_0} \ast \frac{\delta_\text{R}}{\delta \phi_K^\ddag} \right\rangle F \eqend{,}
\end{equation}
and since $\st_0$ is linear and nilpotent we have
\begin{equation}
\label{bv_laplace_s0_vanish}
\laplace^{\Lambda_0} S_0^{\Lambda_0} = 0 \eqend{.}
\end{equation}

Using the properties of Gaussian measures (and the rules for fermionic ``integration'')~\cite{glimmjaffe1987,salmhofer1999,mueller2003} and integrating by parts, one can then easily establish a regularised Ward identity:
\begin{proposition}
\label{thm_anomward_reg}
For an arbitrary bosonic functional $B^{\Lambda_0}$ which is polynomial in the fields and antifields, depending on $\Lambda_0$ but not on $\Lambda$, and for non-exceptional momenta, we have
\begin{splitequation}
\label{anom_ward_general}
&\st_0^{\Lambda_0} \left[ \nu^{0, \Lambda_0} \conv \exp\left( - \frac{1}{\hbar} B^{\Lambda_0} \right) \right] \\
&\quad= - \frac{1}{2\hbar} \nu^{0, \Lambda_0} \conv \left[ \left[ \left( S_0^{\Lambda_0} + B^{\Lambda_0}, S_0^{\Lambda_0} + B^{\Lambda_0} \right)^{\Lambda_0} + 2 \hbar \laplace^{\Lambda_0} \left( S_0^{\Lambda_0} + B^{\Lambda_0} \right) \right] \exp\left( - \frac{1}{\hbar} B^{\Lambda_0} \right) \right] \eqend{,}
\end{splitequation}
with the convolution $\conv$ defined in equation~\eqref{conv_def}, understood as a shorthand for the hierarchy of identities obtained when we expand the above equations in $\hbar$ and in external fields and antifields.
\end{proposition}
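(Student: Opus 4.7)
Write $G = \exp(-B^{\Lambda_0}/\hbar)$ and $F = \nu^{0,\Lambda_0} \conv G$. The strategy is to compute $\st_0^{\Lambda_0} F = (S_0^{\Lambda_0}, F)^{\Lambda_0}$ by moving the $\st_0^{\Lambda_0}$ under the Gaussian convolution, assembling the result into the BV master-equation combination on the right-hand side. Split $S_0^{\Lambda_0} = S_0^{\text{kin}} + S_0^{\text{af}}$ where $S_0^{\text{kin}} = \tfrac{1}{2}\langle\phi_K,(C^{0,\Lambda_0})^{-1}_{KL}\ast\phi_L\rangle$ and $S_0^{\text{af}} = -\langle\brst_0\phi_K,\phi^\ddag_K\rangle$. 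The plan is to treat each piece in turn, using only the $\phi$-translation invariance of the Lebesgue measure underlying $\nu^{0,\Lambda_0}$ (i.e., Gaussian integration by parts) together with the crucial regulator identity $R^{\Lambda_0}\ast(C^{0,\Lambda_0})^{-1} = (C^{0,\infty})^{-1}$.

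\textbf{Pushing $\st_0^{\Lambda_0}$ through the convolution.} For $S_0^{\text{af}}$, whose field-derivative $-\brst_0^{*}\phi^\ddag_K$ is independent of the integration variable $\varphi$ while $-\brst_0\phi_K$ is linear in the external field, use $\phi = (\phi+\varphi) - \varphi$ to split each such factor into a piece that assembles into $\nu^{0,\Lambda_0}\conv(S_0^{\text{af}},G)^{\Lambda_0}$ and a remainder involving $\brst_0\varphi$ under $d\nu$; the latter is handled by Gaussian IBP. The kinetic part $S_0^{\text{kin}}$ contributes only through $\frac{\delta S_0^{\text{kin}}}{\delta\phi_K} = (C^{0,\Lambda_0})^{-1}_{KL}\ast\phi_L$ convolved with $R^{\Lambda_0}$, which collapses to $(C^{0,\infty})^{-1}_{KL}\ast(\phi+\varphi-\varphi)_L$. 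The $(\phi+\varphi)$-piece directly recombines into $\nu^{0,\Lambda_0}\conv(S_0^{\text{kin}},G)^{\Lambda_0}$. The $-\varphi$-piece, upon Gaussian IBP (which replaces $\varphi_L$ by $\hbar C^{0,\Lambda_0}_{LM}\ast\delta/\delta\varphi_M$), produces a contraction of $R^{\Lambda_0}$ with two derivatives $\delta/\delta\phi_K \cdot \delta/\delta\phi^\ddag_K$ on $G$, i.e., precisely $-\hbar\,\laplace^{\Lambda_0}G$. The net result is the intermediate identity
\begin{equation*}
\st_0^{\Lambda_0}\bigl[\nu^{0,\Lambda_0}\conv G\bigr] = \nu^{0,\Lambda_0}\conv\bigl[(S_0^{\Lambda_0},G)^{\Lambda_0} - \hbar\,\laplace^{\Lambda_0}G\bigr].
\end{equation*}

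\textbf{Assembly via the chain rule.} Specialise to $G = \exp(-B^{\Lambda_0}/\hbar)$. The graded Leibniz rule for the antibracket gives $(S_0^{\Lambda_0},G)^{\Lambda_0} = -\tfrac{1}{\hbar}(S_0^{\Lambda_0},B^{\Lambda_0})^{\Lambda_0}\,G$, and a short computation using the product rule on $\frac{\delta_L}{\delta\phi_K}R^{\Lambda_0}\ast\frac{\delta_R}{\delta\phi^\ddag_K}$ yields
\begin{equation*}
\hbar\,\laplace^{\Lambda_0}G = \Bigl[-\laplace^{\Lambda_0}B^{\Lambda_0} + \tfrac{1}{2\hbar}(B^{\Lambda_0},B^{\Lambda_0})^{\Lambda_0}\Bigr]G.
\end{equation*}
Adding these and completing the square by means of the two exact identities $(S_0^{\Lambda_0},S_0^{\Lambda_0})^{\Lambda_0}=0$ (equation \eqref{bvreg_s0_vanish}) and $\laplace^{\Lambda_0}S_0^{\Lambda_0}=0$ (equation \eqref{bv_laplace_s0_vanish}), together with the graded symmetry $(S_0^{\Lambda_0},B^{\Lambda_0})^{\Lambda_0} = (B^{\Lambda_0},S_0^{\Lambda_0})^{\Lambda_0}$ valid for bosonic $B^{\Lambda_0}$, reconstructs $\tfrac{1}{2}(S_0^{\Lambda_0}+B^{\Lambda_0},S_0^{\Lambda_0}+B^{\Lambda_0})^{\Lambda_0} + \hbar\,\laplace^{\Lambda_0}(S_0^{\Lambda_0}+B^{\Lambda_0})$ multiplying $-\tfrac{1}{\hbar}G$, which is exactly the asserted right-hand side.

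\textbf{Main obstacle.} The central difficulty is bookkeeping of the regulators: the inverse covariance $(C^{0,\Lambda_0})^{-1}$ appearing in $S_0^{\text{kin}}$ is non-local, and one must verify that after combining with the $R^{\Lambda_0}$ inside the antibracket definition and with the Gaussian IBP factor $C^{0,\Lambda_0}$, the result is local and reproduces precisely the kernel $R^{\Lambda_0}$ that defines $\laplace^{\Lambda_0}$; this is exactly the content of $R^{\Lambda_0}\ast(C^{0,\Lambda_0})^{-1}\ast C^{0,\Lambda_0} = R^{\Lambda_0}$. A secondary concern is the careful tracking of Grassmann signs through left/right derivatives and antifield gradings, which requires the bosonic assumption on $B^{\Lambda_0}$ to avoid spurious signs. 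The restriction to non-exceptional momenta enters only through the interpretation of the formal power series in $\hbar$, fields, and antifields as tempered distributions via the bounds of Section~\ref{sec_bounds}; the algebraic identity itself holds term by term in these expansions.
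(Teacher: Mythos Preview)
Your proposal is correct and follows exactly the approach the paper indicates: the paper does not spell out a proof but merely states that the identity is ``easily establish[ed]'' using ``the properties of Gaussian measures \ldots\ and integrating by parts,'' which is precisely the Gaussian IBP argument you sketch. Your plan in fact supplies considerably more detail than the paper, including the key regulator identity $R^{\Lambda_0}\ast(C^{0,\Lambda_0})^{-1}=(C^{0,\infty})^{-1}$ (which follows from $C^{0,\Lambda_0}=C^{0,\infty}\ast R^{\Lambda_0}$) and the intermediate identity for general $G$ before specialising to the exponential.
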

For $B^{\Lambda_0} = L^{\Lambda_0}$~\eqref{lint} equal to the interaction part of the Lagrangian, in the naive unregularised limit $\Lambda_0 \to \infty$ the right-hand side of equation~\eqref{anom_ward_general} would reduce to the convolution with what is known as the ``Quantum Master Equation'' (there is no factor of $\mathi$ because we are working in a Euclidean setting)
\begin{equation}
(S,S) + 2 \hbar \laplace S \eqend{,}
\end{equation}
whose vanishing is often given as a condition for the gauge-fixing independence of classically gauge-invariant correlation functions~\cite{weinberg_v2,costello2011}. In our framework, we show that for a suitable choice of boundary conditions the Ward identity~\eqref{ward_0op} holds, and for those boundary conditions, the ``regulated Quantum Master Equation'' (defined by the right-hand side of equation~\eqref{anom_ward_general}) thus vanishes in the unregularised limit. This makes the connection of our work to other approaches clear. Actually, as explained previously, our approach goes even further and also quantifies gauge invariance of renormalised composite operators and ``contact type terms'', so in this sense our approach is more general than the quantum master equation.

In the remaining subsections, we will apply the foregoing proposition to the following type of functional:
\begin{equation}
B^{\Lambda_0} = L^{\Lambda_0} + \sum_{k=1}^s \left\langle \chi_k, \op_{A_k} + \delta^{\Lambda_0} \op_{A_k} \right\rangle \eqend{,}
\end{equation}
where $L^{\Lambda_0}$ is the interaction Lagrangian with counterterms~\eqref{lint}, where $\op_A$ are local composite operators~\eqref{op_def} with the counterterm map $\delta^{\Lambda_0}$~\eqref{op_ct_def}, and where $\chi_k \in \mathcal{S}(\mathbb{R}^4)$.

\subsection{Proof of Proposition~\ref{thm_anomward}, Equation~\texorpdfstring{\eqref{anom_ward_0op}}{(\ref{anom_ward_0op})}}
\label{sec_brst_anomward1}

For finite UV cutoff $\Lambda_0$ we do not obtain equation~\eqref{anom_ward_0op}. Instead, we have
\begin{proposition}
\label{lemma_anomward1}
For a general choice of boundary conditions for the functionals without insertions $L^{\Lambda, \Lambda_0}$, the following holds:
\begin{enumerate}
\item There exists a functional $W^{\Lambda, \Lambda_0}$ obeying a linear flow equation (see equation~\eqref{anom_ward_0op_w_flow}) such that
\begin{equation}
\label{anom_ward_0op_w}
\st_0^{\Lambda_0} L^{0, \Lambda_0} = W^{0,\Lambda_0} \eqend{.}
\end{equation}
\item There exists a choice of boundary conditions defining the functional with one insertion of the anomaly $\mathsf{A}_0$ (and thus the anomaly itself), such that the decomposition
\begin{equation}
\label{anom_0op_w_split}
W^{\Lambda,\Lambda_0} = L^{\Lambda, \Lambda_0}\left( \mathsf{A}_0 \right) + N^{\Lambda, \Lambda_0}
\end{equation}
holds, with yet another functional $N^{\Lambda, \Lambda_0}$ satisfying $\lim_{\Lambda_0 \to \infty} N^{\Lambda, \Lambda_0} = 0$.
\item The anomaly $\mathsf{A}_0$ defined in this way satisfies the conditions of Proposition~\ref{thm_anomward}.
\end{enumerate}
\end{proposition}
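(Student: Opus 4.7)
The plan is to apply the regularised Ward identity (Proposition~\ref{thm_anomward_reg}) to the specific choice $B^{\Lambda_0} = L^{\Lambda_0}$, using the fact that by construction $L^{0,\Lambda_0}$ equals the logarithm (up to sign and factor of $\hbar$) of the Gaussian convolution on the left-hand side, while the $\st_0^{\Lambda_0}$ on the left-hand side becomes (up to commutator terms that assemble into the BV Laplacian absorbed into the quantum master expression on the right) the desired $\st_0^{\Lambda_0} L^{0,\Lambda_0}$. This motivates defining
\begin{equation}
W^{\Lambda, \Lambda_0} \equiv -\hbar \mathe^{L^{\Lambda,\Lambda_0}/\hbar} \, \nu^{\Lambda, \Lambda_0} \conv \left[ \mathcal{Q}^{\Lambda_0} \, \mathe^{-L^{\Lambda_0}/\hbar} \right] \eqend{,}
\end{equation}
where $\mathcal{Q}^{\Lambda_0} \equiv \tfrac{1}{2}(S_0^{\Lambda_0} + L^{\Lambda_0}, S_0^{\Lambda_0} + L^{\Lambda_0})^{\Lambda_0} + \hbar \, \laplace^{\Lambda_0}(S_0^{\Lambda_0} + L^{\Lambda_0})$ is the regularised quantum master expression; the crucial point (using $\laplace^{\Lambda_0} S_0^{\Lambda_0} = 0$~\eqref{bv_laplace_s0_vanish} and $(S_0^{\Lambda_0}, S_0^{\Lambda_0})^{\Lambda_0} = 0$~\eqref{bvreg_s0_vanish}) is that $\mathcal{Q}^{\Lambda_0}$ is a \emph{local} polynomial in the fields and antifields with no purely-free part. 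Applied at $\Lambda = 0$, this yields~\eqref{anom_ward_0op_w}. A direct $\Lambda$ derivative, combined with the standard Polchinski manipulations and the fact that $\mathcal{Q}^{\Lambda_0}$ sits inside the exponential but does not itself depend on $\Lambda$, produces a linear flow equation for $W^{\Lambda, \Lambda_0}$ of exactly the form~\eqref{l_sop_flow} for functionals with one composite operator insertion (the nonlinear source term drops out because $\mathcal{Q}^{\Lambda_0}$ enters only linearly).

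With the flow equation in hand, I would next pin down the structure of $W^{\Lambda, \Lambda_0}$ as an insertion. From the definition, the boundary value at $\Lambda = \Lambda_0$ equals $\mathcal{Q}^{\Lambda_0}$, a local polynomial in fields/antifields. Power counting gives the dimension: $S_0^{\Lambda_0} + L^{\Lambda_0}$ has dimension $4$, the regularised antibracket and Laplacian each raise dimension by $1$ (because they differentiate one field and one antifield whose dimensions sum to $3$, with an extra unit coming from momentum conservation), so every monomial in $\mathcal{Q}^{\Lambda_0}$ has dimension $5$. Ghost number counting is analogous and yields ghost number $1$, and form degree is $4$ since we integrate over $\mathbb{R}^4$. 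Moreover, at $\hbar = 0$, $\mathcal{Q}^{\Lambda_0}$ reduces to $\tfrac{1}{2}(S_0^{\Lambda_0} + S_\text{int}, S_0^{\Lambda_0} + S_\text{int})^{\Lambda_0}$, which vanishes in the limit $\Lambda_0 \to \infty$ by~\eqref{classical_brst_action}; hence $\mathsf{A}_0 = \bigo{\hbar}$. Choosing $\mathrm{E}(4)$-covariant boundary conditions for $L^{\Lambda_0}$ makes $\mathcal{Q}^{\Lambda_0}$, and thus $W^{\Lambda,\Lambda_0}$, manifestly $\mathrm{E}(4)$-covariant.

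The next task is to extract the genuine anomaly by matching. I would define $\mathsf{A}_0 \equiv \int \mathcal{A}$ with $\mathcal{A}$ an integrated local composite operator of dimension $5$, ghost number $1$ and form degree $4$, by imposing on $L^{\Lambda, \Lambda_0}(\mathsf{A}_0)$ the \emph{same} boundary values at $\Lambda = \mu$ and vanishing momenta as those of $W^{\Lambda, \Lambda_0}$ (for the finitely many relevant and marginal coefficients allowed by Table~\ref{table_boundary} applied to the integrated insertion), together with the prescription~\eqref{func_1iop_d} of Proposition~\ref{thm_l1i_d} for derivative-type components. Since both $W^{\Lambda,\Lambda_0}$ and $L^{\Lambda,\Lambda_0}(\mathsf{A}_0)$ obey the \emph{same} linear flow equation~\eqref{l_intop_flow}, the difference $N^{\Lambda,\Lambda_0} = W^{\Lambda,\Lambda_0} - L^{\Lambda,\Lambda_0}(\mathsf{A}_0)$ solves the same linear flow equation with boundary conditions that vanish at $\Lambda = \mu$ (by construction of $\mathsf{A}_0$) and coincide at $\Lambda = \Lambda_0$ with the irrelevant components of $\mathcal{Q}^{\Lambda_0}$, which are of order $\Lambda_0^{-\Delta/2}\sup(\Lambda,\mu)^{\Delta/2}$ relative to the bounds of Proposition~\ref{thm_l1i}. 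By Proposition~\ref{thm_l1i_van} this is exactly the setting in which $N^{\Lambda,\Lambda_0} \to 0$ as $\Lambda_0 \to \infty$, so~\eqref{anom_0op_w_split} holds.

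The main obstacle is the third part, verifying that $\mathsf{A}_0$ so constructed lies in $\mathcal{F}^{1,4}_{\mathrm{E}(4)}$ with the stated dimension and $\hbar$-order: one must show that the relevant part of $W^{\Lambda,\Lambda_0}$ at $\Lambda = \mu$ is represented by an \emph{integrated local} composite operator (rather than by arbitrary smooth kernels) and that no contributions of order $\hbar^0$ survive in the $\Lambda_0 \to \infty$ limit. Locality follows because $\mathcal{Q}^{\Lambda_0}$ is itself a local polynomial and the marginal/relevant boundary data at $\Lambda = \mu$ are Taylor coefficients of the regularised kernels at vanishing momenta, which in the limit combine (via Proposition~\ref{thm_l1i_d}) into a genuine local composite operator of dimension $\leq 5$. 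The $\hbar$-order statement reduces to showing that the tree-level contribution to the boundary data vanishes, which is precisely~\eqref{classical_brst_action}. Once this is established, the full claim follows, and equation~\eqref{anom_ward_0op} is obtained by sending $\Lambda \to 0$, $\Lambda_0 \to \infty$ in~\eqref{anom_ward_0op_w} and using $N^{0,\infty} = 0$.
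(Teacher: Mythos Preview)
Your overall strategy matches the paper's: apply Proposition~\ref{thm_anomward_reg} with $B^{\Lambda_0} = L^{\Lambda_0}$ to define $W^{\Lambda,\Lambda_0}$, observe that it satisfies the linear flow equation of a single-insertion functional, split off the relevant/marginal part as $L^{\Lambda,\Lambda_0}(\mathsf{A}_0)$ and the irrelevant-boundary part as $N^{\Lambda,\Lambda_0}$, and invoke Proposition~\ref{thm_l1i_van} to kill $N$. The paper matches the relevant/marginal boundary data at $\Lambda = 0$ (read off directly from~\eqref{anom_ward_0op_w}) rather than at $\Lambda = \mu$, but this is inessential by the one-to-one correspondence discussed at the end of Subsection~\ref{sec_pert_theory}.

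There is, however, a genuine error in your account of why $N^{\Lambda,\Lambda_0} \to 0$. You assert that the irrelevant components of $\mathcal{Q}^{\Lambda_0}$ at $\Lambda = \Lambda_0$ are ``of order $\Lambda_0^{-\Delta/2}\sup(\Lambda,\mu)^{\Delta/2}$ relative to the bounds of Proposition~\ref{thm_l1i}''. This is false: those boundary values saturate the bound~\eqref{bound_l1i} at $\Lambda = \Lambda_0$ and carry no intrinsic $\Lambda_0$-suppression. The factor $(\sup(\mu,\Lambda)/\Lambda_0)^{\Delta/2}$ in Proposition~\ref{thm_l1i_van} arises entirely from the \emph{vanishing} of the relevant and marginal boundary conditions (which you have correctly arranged) propagated through the flow; the irrelevant boundary data need only be \emph{compatible} with~\eqref{bound_l1i}. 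Relatedly, you have skipped the substantive technical step the paper carries out: verifying that $W^{\Lambda_0} = \mathcal{Q}^{\Lambda_0}$ is in fact compatible with~\eqref{bound_l1i} at $\Lambda = \Lambda_0$. This requires writing $\mathcal{Q}^{\Lambda_0}$ out explicitly using the matrix $\mathsf{M}_{KL}$ of~\eqref{mixing_def}, the inverse covariance~\eqref{propagator_est}, and the regulator bound~\eqref{r_prop_bound}, estimating each of the five resulting terms via tree fusion and amputation, and finally absorbing the residual factor $\sup(\abs{\vec{q}},\Lambda_0)$ into the particular weight factor of dimension $1 = [\mathsf{A}_0] - 4$. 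Without this estimate, the hypothesis of Proposition~\ref{thm_l1i_van} is unverified and the vanishing of $N^{\Lambda,\Lambda_0}$ does not follow.
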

Equation~\eqref{anom_ward_0op} then immediately follows from this proposition by taking the unregularised limit $\Lambda_0 \to \infty$, $\Lambda \to 0$. To prove the proposition, we first define the functional $W^{\Lambda, \Lambda_0}$ by a flow equation and boundary conditions, in such a way that equation~\eqref{anom_ward_0op_w} holds. Our aim is then to show that in the unregularised limit we obtain equation~\eqref{anom_ward_0op} from equation~\eqref{anom_ward_0op_w}, \ie, that $W^{0, \Lambda_0} \to L^{0, \Lambda_0}\left( \mathsf{A}_0 \right)$ as $\Lambda_0 \to \infty$ for an anomaly $\mathsf{A}_0$ that satisfies the conditions of Proposition~\ref{thm_anomward}. This is done via the decomposition of $W^{\Lambda, \Lambda_0}$ into two contributions~\eqref{anom_0op_w_split}, and proving that the unwanted contribution vanishes in the unregularised limit, again via a flow equation argument. Similar ideas appeared first in the context of a proof of the Ward identities for QED~\cite{kellerkopper1991}, and later for spontaneously broken $\mathrm{SU}(2)$ gauge theories~\cite{koppermueller2000a,koppermueller2000b,koppermueller2009}. As a last point, we have to verify the conditions on the anomaly stated in Proposition~\ref{thm_anomward}.

\begin{proof}[Proof of Equation~\texorpdfstring{\eqref{anom_ward_0op_w}}{(\ref{anom_ward_0op_w})}.]
In the case $B^{\Lambda_0} = L^{\Lambda_0}$, the regularised Ward identity, equation~\eqref{anom_ward_general}, reduces to
\begin{equation}
\label{anom_ward_0op_start}
\st_0^{\Lambda_0} \left[ \nu^{0, \Lambda_0} \conv \exp\left( - \frac{1}{\hbar} L^{\Lambda_0} \right) \right] = \st_0^{\Lambda_0} \exp\left( - \frac{1}{\hbar} L^{0, \Lambda_0} \right) = - \frac{1}{\hbar} \nu^{0, \Lambda_0} \conv \left[ W^{\Lambda_0} \exp\left( - \frac{1}{\hbar} L^{\Lambda_0} \right) \right] \eqend{,}
\end{equation}
where we used the definition~\eqref{l_0op_def} of the generating functional without operator insertions, and defined $W^{\Lambda_0}$ by
\begin{equation}
\label{anom_0op_w_irrelevant}
W^{\Lambda_0} \equiv \frac{1}{2} \left( S_0^{\Lambda_0} + L^{\Lambda_0}, S_0^{\Lambda_0} + L^{\Lambda_0} \right)^{\Lambda_0} + \hbar \laplace^{\Lambda_0} \left( S_0^{\Lambda_0} + L^{\Lambda_0} \right) \eqend{.}
\end{equation}
Since $\st_0^{\Lambda_0}$ is a derivation and thus obeys the Leibniz rule, we obtain
\begin{equation}
\st_0^{\Lambda_0} \exp\left( - \frac{1}{\hbar} L^{0, \Lambda_0} \right) = - \frac{1}{\hbar} \exp\left( - \frac{1}{\hbar} L^{0, \Lambda_0} \right) \st_0^{\Lambda_0} L^{0, \Lambda_0} \eqend{,}
\end{equation}
and by defining
\begin{splitequation}
\label{anom_0op_w_def}
W^{\Lambda, \Lambda_0} &\equiv - \hbar \frac{\total}{\total t} \ln \left[ \nu^{\Lambda, \Lambda_0} \conv \exp\left( - \frac{1}{\hbar} L^{\Lambda_0} - \frac{t}{\hbar} W^{\Lambda_0} \right) \right]_{t=0} \\
&= \exp\left( \frac{1}{\hbar} L^{\Lambda, \Lambda_0} \right) \nu^{\Lambda, \Lambda_0} \conv \left[ \exp\left( - \frac{1}{\hbar} L^{\Lambda_0} \right) W^{\Lambda_0} \right] \eqend{,}
\end{splitequation}
equation~\eqref{anom_ward_0op_start} gives equation~\eqref{anom_ward_0op_w}. In the same manner as for functionals with one operator insertion (equation~\eqref{l_sop_flow} for $s=1$), taking a $\Lambda$-derivative of the definition~\eqref{anom_0op_w_def} we obtain a flow equation for $W^{\Lambda,\Lambda_0}$, which reads
\begin{equation}
\label{anom_ward_0op_w_flow}
\partial_\Lambda W^{\Lambda, \Lambda_0} = \frac{\hbar}{2} \left\langle \frac{\delta}{\delta \phi_K}, \left( \partial_\Lambda C^{\Lambda, \Lambda_0}_{KL} \right) \ast \frac{\delta}{\delta \phi_L} \right\rangle W^{\Lambda, \Lambda_0} - \left\langle \frac{\delta}{\delta \phi_K} L^{\Lambda, \Lambda_0}, \left( \partial_\Lambda C^{\Lambda, \Lambda_0}_{KL} \right) \ast \frac{\delta}{\delta \phi_L} W^{\Lambda, \Lambda_0} \right\rangle \eqend{.}
\end{equation}
The boundary conditions for irrelevant functionals are then given by $W^{\Lambda_0,\Lambda_0} = W^{\Lambda_0}$~\eqref{anom_0op_w_irrelevant}, and are non-vanishing because of the regulator $R^{\Lambda_0}$ in the definition of the regularised antibracket~\eqref{bvreg_def} and the BV Laplacian~\eqref{bv_laplace_def}. For relevant and marginal functionals, in Section~\ref{sec_bounds} we always put boundary conditions at $\Lambda = \mu$ and zero momentum to make the proofs simpler. Nevertheless, as explained at the end of Subsection~\ref{sec_pert_theory}, we can alternatively also put conditions at $\Lambda = 0$ and some non-exceptional momenta, and there is a one-to-one correspondence between these possibilities. Since equation~\eqref{anom_ward_0op_w} only holds for $\Lambda = 0$, this is the only possible choice here, such that the boundary conditions for relevant and marginal functionals of $W^{\Lambda, \Lambda_0}$ are obtained from equation~\eqref{anom_ward_0op_w}.
\end{proof}

\begin{proof}[Proof of Equation~\texorpdfstring{\eqref{anom_0op_w_split}}{(\ref{anom_0op_w_split})}.]
To perform the split~\eqref{anom_0op_w_split}, we define both $N^{\Lambda, \Lambda_0}$ and $L^{\Lambda, \Lambda_0}\left( \mathsf{A}_0 \right)$ by a flow equation and boundary conditions in such a way that equation~\eqref{anom_0op_w_split} holds. Since the flow equation for $W^{\Lambda, \Lambda_0}$~\eqref{anom_ward_0op_w_flow} is linear, the flow equation for functionals with one insertion of a composite operator is also linear, and since both flow equations have exactly the same structure, $N^{\Lambda, \Lambda_0}$ must satisfy a linear flow equation of the same structure as well. Equation~\eqref{anom_0op_w_split} then holds if the boundary conditions are chosen such that the sum of the boundary conditions of $N^{\Lambda, \Lambda_0}$ and $L^{\Lambda, \Lambda_0}\left( \mathsf{A}_0 \right)$ is equal to the boundary conditions of $W^{\Lambda, \Lambda_0}$. Concretely, we impose vanishing boundary conditions for all marginal and relevant functionals of $N^{\Lambda, \Lambda_0}$, and vanishing boundary conditions for all irrelevant functionals of $L^{\Lambda, \Lambda_0}\left( \mathsf{A}_0 \right)$, such that $N^{\Lambda, \Lambda_0}$ collects all the non-zero boundary conditions of $W^{\Lambda, \Lambda_0}$ for irrelevant functionals, and $L^{\Lambda, \Lambda_0}\left( \mathsf{A}_0 \right)$ collects all the non-zero boundary conditions of $W^{\Lambda, \Lambda_0}$ for marginal and relevant functionals. In this way, $L^{\Lambda, \Lambda_0}\left( \mathsf{A}_0 \right)$ is really a functional with an insertion of an integrated composite operator $\mathsf{A}_0$ of dimension $5$: it has vanishing boundary conditions for all relevant functionals of dimension $<5$ at $\Lambda = 0$ and zero momentum (since $L^{0, \Lambda_0}$ vanishes there, $\brst_0^{\Lambda_0}$ increases the dimension by $1$, and the regulator contained in $\st_0^{\Lambda_0}$ only changes these conditions in higher orders of relevancy).

It remains to show that $N^{\Lambda, \Lambda_0}$ vanishes in the unregularised limit $\Lambda_0 \to \infty$, which follows from the bounds~\eqref{bound_l1i_delta} if we can show that the boundary conditions for irrelevant functionals are compatible with the bounds~\eqref{bound_l1i}. For this, it is convenient to define the matrix $\mathsf{M}_{KL}$ by
\begin{equation}
\label{mixing_def}
\brst_0 \phi_L \equiv \phi_K \ast \mathsf{M}_{KL} \eqend{.}
\end{equation}
For Yang-Mills theories with the field-antifield coupling~\eqref{action_field_antifield_coupling} expressed in component form, the matrix $\mathsf{M}_{KL}$ is (in momentum space representation) given by
\begin{equation}
\mathsf{M}_{KL} = \begin{pmatrix} 0 & 0 & - \xi^2 p^\mu & 0 \\ - \mathi p_\mu & 0 & 0 & - \mathi \xi p^2 \\ 0 & 0 & 0 & 0 \\ 0 & 0 & \xi & 0 \end{pmatrix} \delta_{ab} \eqend{,}
\end{equation}
and since $\brst_0$ increases the dimension by $1$, we have in general (using equation~\eqref{antifield_dim})
\begin{equation}
\label{mixing_est}
\abs{ \partial^w \mathsf{M}_{KL}(p) } \leq c \abs{p}^{4-\abs{w}-[\phi_K]-[\phi_L^\ddag]} = c \abs{p}^{1-\abs{w}-[\phi_K]+[\phi_L]} \eqend{,}
\end{equation}
as long as the exponent is non-negative (otherwise the left-hand side simply vanishes). Furthermore, since $S_0$ has dimension $4$, from the explicit expression~\eqref{free_action} for the free action $S_0$ we obtain the estimate
\begin{equation}
\label{propagator_est}
\abs{ \partial^w \left( C^{0, \infty} \right)^{-1}_{KL} } \leq c \abs{p}^{4-\abs{w}-[\phi_K]-[\phi_L]}
\end{equation}
again as long as the exponent is non-negative, and a vanishing result otherwise.

As explained on the last page, the boundary conditions for irrelevant functionals are given by $N^{\Lambda_0, \Lambda_0} = W^{\Lambda_0}$, and $W^{\Lambda_0}$ can be estimated by writing equation~\eqref{anom_0op_w_irrelevant} in the explicit form (using equations~\eqref{bvreg_s0_vanish,bv_laplace_s0_vanish} and the explicit form of $S_0^{\Lambda_0}$~\eqref{s0_lambda0_def,mixing_def})
\begin{splitequation}
&N^{\Lambda_0, \Lambda_0} = \left\langle \phi_N \ast \mathsf{M}_{NM}, R^{\Lambda_0} \ast \frac{\delta_\text{L} L^{\Lambda_0}}{\delta \phi_M\vphantom{\delta \phi_M^\ddag}} \right\rangle + \left\langle \frac{\delta_\text{R} L^{\Lambda_0}}{\delta \phi_M^\ddag} \ast R^{\Lambda_0}, \mathsf{M}_{MN} \ast \phi_N^\ddag \right\rangle \\
&\quad+ \left\langle \phi_N, \left( C^{0, \infty} \right)^{-1}_{NM} \ast \frac{\delta_\text{L} L^{\Lambda_0}}{\delta \phi_M^\ddag} \right\rangle + \left\langle \frac{\delta_\text{R} L^{\Lambda_0}}{\delta \phi_M\vphantom{\delta \phi_M^\ddag}}, R^{\Lambda_0} \ast \frac{\delta_\text{L} L^{\Lambda_0}}{\delta \phi_M^\ddag} \right\rangle + \hbar \left\langle \frac{\delta_\text{L}}{\delta \phi_M\vphantom{\delta \phi_M^\ddag}}, R^{\Lambda_0} \ast \frac{\delta_\text{R}}{\delta \phi_M^\ddag} \right\rangle L^{\Lambda_0} \eqend{.}
\end{splitequation}
Taking now additional derivatives with respect to fields (and antifields), performing a Fourier transform (with the overall $\delta$ which enforces momentum conservation taken out), expanding in $\hbar$ and taking some momentum derivatives to obtain an irrelevant functional (which we denote by $\mathcal{N}$), we obtain
\begin{splitequation}
&\partial^\vec{w} \mathcal{N}^{\Lambda_0, \Lambda_0, l}_{\vec{K} \vec{L}^\ddag}(\vec{q}) = \sum_{i=1}^m \sum_{\vec{u}+\vec{v}\leq \vec{w}} c_{uvw} \left( \partial^{\vec{w}-\vec{u}-\vec{v}} R^{\Lambda_0}(q_i) \right) \left( \partial^\vec{u} \mathsf{M}_{K_i M}(q_i) \right) \left( \partial^\vec{v} \mathcal{L}^{\Lambda_0, \Lambda_0, l}_{\vec{K}_{\setminus i} \vec{L}^\ddag M}(\vec{q}_{\setminus i}, q_i) \right) \\
&\quad+ \sum_{j=1}^n \sum_{\vec{u}+\vec{v}\leq \vec{w}} c_{uvw} \left( \partial^{\vec{w}-\vec{u}-\vec{v}} R^{\Lambda_0}(q_{m+j}) \right) \left( \partial^\vec{u} \mathsf{M}_{M L_j}(q_{m+j}) \right) \left( \partial^\vec{v} \mathcal{L}^{\Lambda_0, \Lambda_0, l}_{\vec{K} \vec{L}_{\setminus j}^\ddag M^\ddag}(\vec{q}_{\setminus (m+j)}, q_{m+j}) \right) \\
&\quad+ \sum_{i=1}^m \sum_{\vec{v}\leq \vec{w}} c_{vw} \left( \partial^{\vec{w}-\vec{v}} \left( C^{0,\infty} \right)^{-1}_{K_i M}(q_i) \right) \left( \partial^\vec{w} \mathcal{L}^{\Lambda_0, \Lambda_0, l}_{\vec{K}_{\setminus i} \vec{L}^\ddag M^\ddag}(\vec{q}_{\setminus i}, q_i) \right) \\
&\quad+ \int R^{\Lambda_0}(p) \partial^\vec{w} \mathcal{L}^{\Lambda_0, \Lambda_0, l-1}_{\vec{K} \vec{L}^\ddag M M^\ddag}(\vec{q},-p,p) \frac{\total^4 p}{(2\pi)^4} \\
&\quad+\!\! \sum_{\subline{\sigma \cup \tau = \{1, \ldots, m\} \\ \rho \cup \varsigma = \{1, \ldots, n\}}} \sum_{l'=0}^l \sum_{\vec{u}+\vec{v}\leq \vec{w}} \!\! c_{uvw} \left( \partial^\vec{u} \mathcal{L}^{\Lambda_0, \Lambda_0, l'}_{\vec{K}_\sigma \vec{L}_\rho^\ddag M}(\vec{q}_\sigma,\vec{q}_\rho,-k) \right) \left( \partial^{\vec{w}-\vec{u}-\vec{v}} R^{\Lambda_0}(k) \right) \\
&\hspace{12em}\times \left( \partial^\vec{v} \mathcal{L}^{\Lambda_0, \Lambda_0, l-l}_{M^\ddag \vec{K}_\tau \vec{L}_\varsigma^\ddag}(k,\vec{q}_\tau,\vec{q}_\varsigma) \right) \eqend{,}
\end{splitequation}
where the momentum $k$ is defined by~\eqref{k_def}, and where
\begin{equations}
\vec{K}_{\setminus i} &\equiv K_1 \cdots K_{i-1} K_{i+1} \cdots K_m \eqend{,} \\
\vec{L}^\ddag_{\setminus j} &\equiv L_1^\ddag \cdots L_{j-1}^\ddag L_{j+1}^\ddag \cdots L_m^\ddag \eqend{,} \\
\vec{q}_{\setminus i} &\equiv (q_1,\ldots,q_{i-1},q_{i+1},\ldots q_{m+n}) \eqend{.}
\end{equations}
We then insert the bounds~\eqref{bound_l0} evaluated at $\Lambda = \Lambda_0$ for the functionals (note that since the bare action is polynomial in momenta there are no logarithms in momenta at $\Lambda = \Lambda_0$) and the bounds on the regulator~\eqref{r_prop_bound}, the matrix $\mathsf{M}$~\eqref{mixing_est} and the inverse of the covariance~\eqref{propagator_est} to obtain
\begin{splitequation}
\label{anomaly_0op_n}
\abs{\partial^\vec{w} \mathcal{N}^{\Lambda_0, \Lambda_0, l}_{\vec{K} \vec{L}^\ddag}(\vec{q})} &\leq \Bigg[ \sum_{i=1}^m \sum_{\vec{u}+\vec{v}\leq \vec{w}} \sup(\abs{q_i},\Lambda_0)^{-\abs{\vec{w}}+\abs{\vec{u}}+\abs{\vec{v}}} \mathe^{-\frac{\abs{q_i}^2}{2 \Lambda_0^2}} \abs{q_i}^{1-\abs{\vec{u}}-[\phi_{K_i}]+[\phi_M]} \\
&\hspace{12em}\times \sum_{T \in \mathcal{T}_{m+n}} \mathsf{G}^{T,\vec{v}}_{\vec{K}_{\setminus i} \vec{L}^\ddag M}(\vec{q}_{\setminus i}, q_i; \mu, \Lambda_0) \\
&\quad+ \sum_{j=1}^n \sum_{\vec{u}+\vec{v}\leq \vec{w}} \sup(\abs{q_{m+j}},\Lambda_0)^{-\abs{\vec{w}}+\abs{\vec{u}}+\abs{\vec{v}}} \mathe^{-\frac{\abs{q_{m+j}}^2}{2 \Lambda_0^2}} \abs{q_{m+j}}^{4-\abs{\vec{u}}-[\phi_M]-[\phi_{L_j}^\ddag]} \\
&\hspace{12em}\times \sum_{T \in \mathcal{T}_{m+n}} \mathsf{G}^{T,\vec{v}}_{\vec{K} \vec{L}_{\setminus j}^\ddag M^\ddag}(\vec{q}_{\setminus (m+j)}, q_{m+j}; \mu, \Lambda_0) \\
&\quad+ \sum_{i=1}^m \sum_{\vec{v}\leq \vec{w}} \abs{q_i}^{4-\abs{\vec{w}}+\abs{\vec{v}}-[\phi_{K_i}]-[\phi_M]} \sum_{T \in \mathcal{T}_{m+n}} \mathsf{G}^{T,\vec{v}}_{\vec{K}_{\setminus i} \vec{L}^\ddag M^\ddag}(\vec{q}_{\setminus i}, q_i; \mu, \Lambda_0) \\
&\quad+ \int \mathe^{-\frac{\abs{p}^2}{2 \Lambda_0^2}} \sum_{T \in \mathcal{T}_{m+n+2}} \mathsf{G}^{T,\vec{w}}_{\vec{K} \vec{L}^\ddag M M^\ddag}(\vec{q},-p,p; \mu, \Lambda_0) \frac{\total^4 p}{(2\pi)^4} \\
&\quad+ \sum_{\subline{\sigma \cup \tau = \{1, \ldots, m\} \\ \rho \cup \varsigma = \{1, \ldots, n\}}} \sum_{l'=0}^l \sum_{\vec{u}+\vec{v}\leq \vec{w}} \sum_{T \in \mathcal{T}_{\abs{\sigma}+\abs{\rho}+1}} \mathsf{G}^{T,\vec{u}}_{\vec{K}_\sigma \vec{L}_\rho^\ddag M}(\vec{q}_\sigma,\vec{q}_\rho,-k; \mu, \Lambda_0) \, \mathe^{-\frac{\abs{k}^2}{2 \Lambda_0^2}} \\
&\qquad\times \sup(\abs{k},\Lambda_0)^{-\abs{\vec{w}}+\abs{\vec{u}}+\abs{\vec{v}}} \!\!\!\! \sum_{T' \in \mathcal{T}_{\abs{\tau}+\abs{\varsigma}+1}} \mathsf{G}^{T',\vec{v}}_{M^\ddag \vec{K}_\tau \vec{L}_\varsigma^\ddag}(k,\vec{q}_\tau,\vec{q}_\varsigma; \mu, \Lambda_0) \Bigg] \mathcal{P}\left( \ln_+ \frac{\Lambda_0}{\mu} \right) \eqend{.} \raisetag{1.7\baselineskip}
\end{splitequation}
Let us start with the first term. Since $\mathsf{M}$ vanishes when too many derivatives act~\eqref{mixing_est}, the power of $\abs{q_i}$ in that term is always positive, and we estimate
\begin{equation}
\label{anomaly_0op_est1}
\sup(\abs{q_i},\Lambda_0)^{-\abs{\vec{w}}+\abs{\vec{u}}+\abs{\vec{v}}} \mathe^{-\frac{\abs{q_i}^2}{2 \Lambda_0^2}} \abs{q_i}^{1-\abs{\vec{u}}-[\phi_{K_i}]+[\phi_M]} \leq \sup(\abs{q_i},\Lambda_0)^{1-\abs{\vec{w}}+\abs{\vec{v}}-[\phi_{K_i}]+[\phi_M]} \eqend{.}
\end{equation}
We then change the external vertex of each tree from $M$ to $K_i$, which according to Table~\ref{table_weights} gives an extra factor of
\begin{equation}
\label{anomaly_0op_est2}
\sup(\abs{q_i}, \Lambda_0)^{-[\phi_M]+[\phi_{K_i}]} \eqend{,}
\end{equation}
and use the estimate~\eqref{func_0op_estfuse_b} to convert the $\vec{v}$ derivatives acting on the tree into $\vec{w}$ derivatives. The second and third term are treated in the same way, using additionally that $[\phi_M] + [\phi_M^\ddag] = 3$~\eqref{antifield_dim}. In the fourth (quadratic) term we fuse the trees using the estimate~\eqref{gw_fused_2_est}, which does give an additional factor of $\sup(\abs{k}, \Lambda_0)$, and convert the $\vec{u}+\vec{v}$ derivatives acting on the fused tree into $\vec{w}$ derivatives using the estimate~\eqref{func_0op_estfuse_b}. The integral over $p$ can be done after the rescaling $p = x \Lambda_0$ using Lemma~\ref{lemma_pint2} with $\beta_i = \gamma_i = 1$, and we obtain
\begin{splitequation}
\abs{\partial^\vec{w} \mathcal{N}^{\Lambda_0, \Lambda_0, l}_{\vec{K} \vec{L}^\ddag}(\vec{q})} &\leq \Bigg[ \sum_{T \in \mathcal{T}_{m+n}} \mathsf{G}^{T,\vec{w}}_{\vec{K} \vec{L}^\ddag}(\vec{q}; \mu, \Lambda_0) \sum_{i=1}^{m+n} \sup(\abs{q_i},\Lambda_0) \\
&\quad+ \Lambda_0^4 \sum_{T \in \mathcal{T}_{m+n+2}} \mathsf{G}^{T,\vec{w}}_{\vec{K} \vec{L}^\ddag M M^\ddag}(\vec{q},0,0; \mu, \Lambda_0) \\
&\quad+ \sum_{T \in \mathcal{T}_{m+n}} \mathsf{G}^{T,\vec{w}}_{\vec{K} \vec{L}^\ddag}(\vec{q}; \mu, \Lambda_0) \sum_{\subline{\sigma \cup \tau = \{1, \ldots, m\} \\ \rho \cup \varsigma = \{1, \ldots, n\}}} \sup(\abs{k},\Lambda_0) \Bigg] \mathcal{P}\left( \ln_+ \frac{\Lambda_0}{\mu} \right) \eqend{.}
\end{splitequation}
From the trees in the second term, we have to amputate the external legs corresponding to $M$ and $M^\ddag$. The amputation gives a factor of~\eqref{amputate}
\begin{equation}
\Lambda_0^{-[\phi_M]-[\phi_M^\ddag]} = \Lambda_0^{-3} \leq \Lambda_0^{-4} \sup(\abs{\vec{q}},\Lambda_0) \eqend{,}
\end{equation}
and for the other terms we also estimate $\sup(\abs{k}, \Lambda_0), \sup(\abs{q_i}, \Lambda_0) \leq \sup(\abs{\vec{q}},\Lambda_0)$. This extra factor can be absorbed in the particular weight factor of the trees, such that
\begin{equation}
\abs{\partial^\vec{w} \mathcal{N}^{\Lambda_0, \Lambda_0, l}_{\vec{K} \vec{L}^\ddag}(\vec{q})} \leq \sum_{T \in \mathcal{T}_{m+n}} \mathsf{G}^{T,\vec{w}}_{\vec{K} \vec{L}^\ddag; 1}(\vec{q}; \mu, \Lambda_0) \, \mathcal{P}\left( \ln_+ \frac{\Lambda_0}{\mu} \right) \eqend{,}
\end{equation}
which is compatible with the bounds~\eqref{bound_l1i} for an integrated operator of dimension $5$. Thus, $N^{\Lambda, \Lambda_0}$ satisfies a linear flow equation with vanishing boundary conditions for all relevant and marginal functionals, and boundary conditions for the irrelevant functionals which vanish in the limit $\Lambda_0 \to \infty$. We can thus apply Proposition~\ref{thm_l1i_van} to $N^{\Lambda, \Lambda_0}$ (with the obvious change in notation), which gives the bounds~\eqref{bound_l1i_delta}. Since these bounds contain an explicit factor of $\left( \sup(\mu, \Lambda) / \Lambda_0 \right)^\frac{\Delta}{2}$ and are otherwise independent of $\Lambda_0$, we conclude that $\lim_{\Lambda_0 \to \infty} N^{\Lambda, \Lambda_0} = 0$.
\end{proof}

\begin{proof}[Proof of the properties of \texorpdfstring{$\mathsf{A}_0$}{A0}.]
It was already shown that $\mathsf{A}_0$ is an integrated composite operator of dimension $5$. Since $\st_0^{\Lambda_0}$ increases the ghost number by $1$, equations~\eqref{anom_ward_0op_w} and~\eqref{anom_0op_w_split} show that $\mathsf{A}_0$ has ghost number $1$. With our conventions, $\mathsf{A}_0$ is thus the integral of a $4$-form. Furthermore, since $L^{\Lambda, \Lambda_0}\left( \mathsf{A}_0 \right)$ satisfies a linear flow equation with vanishing boundary conditions for all relevant functionals, the marginal functionals at order $\hbar^0$ are independent of $\Lambda$ and thus equal to the marginal part of $W^{\Lambda_0}$~\eqref{anom_0op_w_irrelevant} at that order. However, as $\Lambda_0 \to \infty$, we obtain at order $\hbar^0$ that
\begin{equation}
W^{\Lambda_0} \to \frac{1}{2} (S,S) = 0 \eqend{,}
\end{equation}
since the classical theory is gauge invariant~\eqref{classical_brst_action}. Then all irrelevant functionals vanish as well, such that
\begin{equation}
\label{anom_0op_orderh}
L^{0, \infty}\left( \mathsf{A}_0 \right) = \bigo{\hbar^k}
\end{equation}
for some $k \geq 1$.
\end{proof}

\subsection{Proof of Proposition~\ref{thm_anomward}, Equations~\texorpdfstring{\eqref{anom_ward_1op_a}}{(\ref{anom_ward_1op_a})} and~\texorpdfstring{\eqref{anom_ward_1op}}{(\ref{anom_ward_1op})}}
\label{sec_brst_anomward2}

Similarly to the previous subsection, for finite UV cutoff $\Lambda_0$ we have
\begin{proposition}
\label{lemma_anomward2}
For a general choice of boundary conditions for the functionals without insertions $L^{\Lambda, \Lambda_0}$, the following holds:
\begin{enumerate}
\item There exists a functional $W^{\Lambda, \Lambda_0}\left( \op_A \right)$ satisfying an inhomogeneous flow equation such that
\begin{equation}
\label{anom_ward_1op_w}
\st_0^{\Lambda_0} L^{0, \Lambda_0}\left( \op_A \right) = L^{0, \Lambda_0}\left( \st\op_A \right) + W^{0, \Lambda_0}\left( \op_A \right) \eqend{.}
\end{equation}
\item There exists a choice of boundary conditions defining the functional with one insertion of the anomaly $\mathsf{A}_1$ (and thus the anomaly itself), such that the decomposition
\begin{equation}
\label{anom_1op_w_split}
W^{\Lambda, \Lambda_0}\left( \op_A \right) = L^{\Lambda, \Lambda_0}\left( \mathsf{A}_1\left( \op_A\right) \right) + N^{\Lambda, \Lambda_0}\left( \op_A \right)
\end{equation}
holds for $\op_A = \mathsf{A}_0$, with yet another functional $N^{\Lambda, \Lambda_0}\left( \op_A \right)$ satisfying $N^{\Lambda, \infty}\left( \op_A \right) = 0$. For a choice of boundary conditions for the functionals without insertions $L^{\Lambda, \Lambda_0}$ such that $\mathsf{A}_0 = 0$, equation~\eqref{anom_1op_w_split} holds for all composite operators $\op_A$.
\item The anomaly $\mathsf{A}_1$ defined in this way satisfies the conditions of Proposition~\ref{thm_anomward}.
\end{enumerate}
\end{proposition}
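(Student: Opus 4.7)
The plan mirrors the three-step scheme used for Proposition~\ref{lemma_anomward1}, but now with one insertion of the composite operator $\op_A$ present throughout. First, I would take $B^{\Lambda_0} = L^{\Lambda_0} + \langle \chi, \op_A + \delta^{\Lambda_0} \op_A \rangle$ in the regularised Ward identity (Proposition~\ref{thm_anomward_reg}), differentiate once with respect to the source $\chi$ at $\chi = 0$, and use the definition~\eqref{l_sop_def} of the generating functional with one insertion. Writing out $(S_0^{\Lambda_0} + B^{\Lambda_0}, S_0^{\Lambda_0} + B^{\Lambda_0})^{\Lambda_0} + 2\hbar \laplace^{\Lambda_0}(S_0^{\Lambda_0} + B^{\Lambda_0})$ and using~\eqref{bvreg_s0_vanish} and~\eqref{bv_laplace_s0_vanish}, the linear-in-$\chi$ piece produces, after some manipulation, an identity of the schematic form
\begin{equation*}
\st_0^{\Lambda_0} L^{0,\Lambda_0}(\op_A) = L^{0,\Lambda_0}(\st \op_A) + W^{0,\Lambda_0}(\op_A) + \bigl(\text{regulator terms involving $\mathsf{A}_0$}\bigr),
\end{equation*}
where $W^{\Lambda, \Lambda_0}(\op_A)$ is defined, in analogy with~\eqref{anom_0op_w_def}, as a Gaussian convolution whose integrand contains $\op_A$ multiplied by the ``insertion'' of $W^{\Lambda_0}$ from~\eqref{anom_0op_w_irrelevant} plus the contribution of $\delta^{\Lambda_0}\op_A$ through the antibracket. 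Differentiating this convolution with respect to $\Lambda$ yields a linear inhomogeneous flow equation of exactly the same structure as the flow equation~\eqref{l_sop_flow} for a functional with one operator insertion, but with an additional source term depending on $L^{\Lambda,\Lambda_0}(\st \op_A)$ and, when $\mathsf{A}_0 \neq 0$, on the regularised version of $\mathsf{A}_0$. This establishes~\eqref{anom_ward_1op_w}.

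The second step is the split~\eqref{anom_1op_w_split}. Since $W^{\Lambda, \Lambda_0}(\op_A)$ satisfies a linear flow equation of the same type as $L^{\Lambda, \Lambda_0}(\cdot)$, I would \emph{define} both $L^{\Lambda, \Lambda_0}(\mathsf{A}_1(\op_A))$ and $N^{\Lambda, \Lambda_0}(\op_A)$ as solutions of flow equations with the same linear right-hand side: the first carries the boundary conditions for all relevant and marginal functionals (at $\Lambda = 0$ and suitably chosen non-exceptional momenta, as for $\mathsf{A}_0$), with vanishing boundary conditions for irrelevant functionals; the second inherits all irrelevant boundary data at $\Lambda = \Lambda_0$ from $W^{\Lambda_0,\Lambda_0}(\op_A)$ with vanishing relevant/marginal conditions. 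By the linearity and uniqueness of the solution, the sum reproduces $W^{\Lambda,\Lambda_0}(\op_A)$. This construction automatically identifies $\mathsf{A}_1(\op_A)$ as a composite operator (integrated against $\chi$ through $\op_A$), linear in $\op_A$, with dimension $[\op_A]+1$ and ghost number one higher than that of $\op_A$, exactly as for $\mathsf{A}_0$.

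The third step is to show $N^{\Lambda, \infty}(\op_A) = 0$. Here the strategy is identical to that used for $N^{\Lambda, \Lambda_0}$ in the previous subsection: I would bound the boundary data $N^{\Lambda_0, \Lambda_0}(\op_A)$ by explicitly writing out the terms produced by the $\chi$-derivative of $W^{\Lambda_0}$ (each containing a factor $R^{\Lambda_0}$, a $\mathsf{M}$-matrix~\eqref{mixing_def}, a propagator inverse~\eqref{propagator_est}, or a regulated BV-Laplacian loop), then inserting the bounds~\eqref{bound_l0}, \eqref{bound_l1} evaluated at $\Lambda = \Lambda_0$, and carrying out the same sequence of estimates — amputations, tree fusions~\eqref{gw_fused_2_est}, derivative conversions~\eqref{func_0op_estfuse_b}, and the $p$-integral Lemma~\ref{lemma_pint2} — that produced~\eqref{anomaly_0op_n}. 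The resulting bound is compatible with~\eqref{bound_l1} with one extra power in the particular weight factor, so Proposition~\ref{thm_l1_van} applies to $N^{\Lambda, \Lambda_0}(\op_A)$ and gives an overall factor $(\sup(\mu,\Lambda)/\Lambda_0)^{\Delta/2}$, which yields the vanishing of $N^{\Lambda, \infty}(\op_A)$. The same argument, with the obvious bookkeeping, handles the inhomogeneous source involving $\mathsf{A}_0$; this is why for general boundary conditions~\eqref{anom_1op_w_split} is only proved for $\op_A = \mathsf{A}_0$ (where the problematic source term is absent by~\eqref{anom_ward_0op_w}), whereas if $\mathsf{A}_0 = 0$ the split works uniformly in $\op_A$.

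The properties of $\mathsf{A}_1$ stated in Proposition~\ref{thm_anomward} then follow: $\mathrm{E}(4)$-covariance from the $\mathrm{E}(4)$-invariance of the construction, dimension $[\op_A]+1$ and the ghost-number shift from the dimension/grading of the regulated antibracket, and the fact that $\mathsf{A}_1(\op_A)$ is at least of order $\hbar$ beyond $\op_A$ from the observation that, at lowest order in $\hbar$, $W^{\Lambda_0}$ reduces to the classical $\frac{1}{2}(S,S) = 0$ and the $\delta^{\Lambda_0}\op_A$ counterterms are themselves $\bigo{\hbar}$ — the same argument as in~\eqref{anom_0op_orderh}. I expect the main obstacle to be a careful bookkeeping of the multiple contributions to $W^{\Lambda_0, \Lambda_0}(\op_A)$ that arise from the $\chi$-derivative hitting each factor in $(S_0^{\Lambda_0} + B^{\Lambda_0}, S_0^{\Lambda_0} + B^{\Lambda_0})^{\Lambda_0}$ and in $\laplace^{\Lambda_0}$, together with the inhomogeneous $\mathsf{A}_0$-dependent source, and showing that each of them separately admits bounds compatible with those of a one-insertion functional of the appropriate dimension — but this is essentially a more elaborate repetition of the estimates leading to~\eqref{anomaly_0op_n}, and no new analytic input is needed.
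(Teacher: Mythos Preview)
Your overall scheme matches the paper's, but there is a concrete gap in the second and third steps concerning the structure of the flow equation for $W^{\Lambda,\Lambda_0}(\op_A)$ and the mechanism that makes $N^{\Lambda,\Lambda_0}(\op_A)$ vanish.

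The flow equation for $W^{\Lambda,\Lambda_0}(\op_A)$ is \emph{not} a one-insertion flow with a source depending on $L^{\Lambda,\Lambda_0}(\st\op_A)$; it is of the $s=2$ type~\eqref{l_sop_flow_hierarchy}, with source term
\[
-\left\langle \frac{\delta}{\delta\phi_K} L^{\Lambda,\Lambda_0}(\op_A),\ (\partial_\Lambda C^{\Lambda,\Lambda_0}_{KL}) * \frac{\delta}{\delta\phi_L} W^{\Lambda,\Lambda_0} \right\rangle ,
\]
pairing $L^{\Lambda,\Lambda_0}(\op_A)$ with the \emph{zero-insertion} functional $W^{\Lambda,\Lambda_0}$ from the previous subsection. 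If you define $N(\op_A)=W(\op_A)-L(\mathsf{A}_1(\op_A))$ directly, the flow equation for $N(\op_A)$ inherits this source, and since $W^{\Lambda,\Lambda_0}$ does \emph{not} vanish as $\Lambda_0\to\infty$ (only $N^{\Lambda,\Lambda_0}$ does), Proposition~\ref{thm_l1_van} cannot be applied. The identity~\eqref{anom_ward_0op_w} that you cite does not kill this source.

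The paper's cure is to define instead
\[
N^{\Lambda,\Lambda_0}(\op_A)\equiv W^{\Lambda,\Lambda_0}(\op_A)-L^{\Lambda,\Lambda_0}\bigl(\op_A\otimes\mathsf{A}_0\bigr)-L^{\Lambda,\Lambda_0}\bigl(\mathsf{A}_1(\op_A)\bigr),
\]
see~\eqref{anom_1op_w_split_op}. The extra two-insertion functional $L(\op_A\otimes\mathsf{A}_0)$ has precisely the flow-equation source $\langle\delta L(\op_A),\partial_\Lambda C*\delta L(\mathsf{A}_0)\rangle$, so after subtraction the source in the $N(\op_A)$ flow equation~\eqref{n_op_flow} contains $N^{\Lambda,\Lambda_0}$ rather than $W^{\Lambda,\Lambda_0}$, and \emph{that} is what allows the vanishing argument to go through. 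The decomposition~\eqref{anom_1op_w_split} (without the extra term) then holds exactly in the two cases of the proposition because $L(\op_A\otimes\mathsf{A}_0)=0$ there: for $\op_A=\mathsf{A}_0$ by Grassmann oddness of $\mathsf{A}_0$ (so $L(\mathsf{A}_0\otimes\mathsf{A}_0)=0$ identically), and trivially when $\mathsf{A}_0=0$. This is the missing idea in your proposal; once it is inserted, the remaining estimates on $N^{\Lambda_0,\Lambda_0}(\op_A)$ proceed exactly as you describe.
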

Equations~\eqref{anom_ward_1op_a} and~\eqref{anom_ward_1op} immediately follow from this proposition by taking the unregularised limit $\Lambda_0 \to \infty$, $\Lambda \to 0$. To prove it, we follow the same steps as in the previous subsection: Definition of $W^{\Lambda, \Lambda_0}\left( \op_A \right)$ by a flow equation and boundary condition such that equation~\eqref{anom_ward_1op_w} holds, decomposition of $W^{\Lambda, \Lambda_0}\left( \op_A \right)$ into two contributions of which one is vanishing in the unregularised limit $\Lambda_0 \to \infty$, and verifying the remaining properties of the anomaly $\mathsf{A}_1$.

\begin{proof}[Proof of Equation~\texorpdfstring{\eqref{anom_ward_1op_w}}{(\ref{anom_ward_1op_w})}.]
We take the functional $B^{\Lambda_0} = L^{\Lambda_0} + \left\langle \chi, \op_A + \delta^{\Lambda_0} \op_A \right\rangle$ in the regularised Ward identity~\eqref{anom_ward_general}. Taking a variational derivative with respect to $\chi$ and using equation~\eqref{anom_ward_0op_w} and the definition of the classical Slavnov-Taylor differential~\eqref{st_def}, one easily verifies that equation~\eqref{anom_ward_general} reduces to equation~\eqref{anom_ward_1op_w} with the functional $W^{\Lambda, \Lambda_0}\left( \op_A(x) \right)$ defined by
\begin{splitequation}
\label{anom_1op_w_def}
W^{\Lambda, \Lambda_0}\left( \op_A(x) \right) &\equiv - \hbar \frac{\total}{\total \chi(x)} \frac{\total}{\total t} \ln \left[ \nu^{\Lambda, \Lambda_0} \conv \exp\left( - \frac{1}{\hbar} L^{\Lambda_0} - \frac{1}{\hbar} \left\langle \chi, \op_A + \delta^{\Lambda_0} \op_A \right\rangle \right. \right. \\
&\hspace{14em}\left. \left. - \frac{t}{\hbar} W^{\Lambda_0} - \frac{t}{\hbar} \left\langle \chi, W^{\Lambda_0}\left( \op_A \right) \right\rangle \right) \right]_{t = \chi = 0}
\end{splitequation}
with
\begin{splitequation}
\label{anom_1op_w_irrelevant}
W^{\Lambda_0}\left( \op_A(x) \right) &\equiv \left( S_0^{\Lambda_0} + L^{\Lambda_0}, \op_A(x) + \delta^{\Lambda_0} \op_A(x) \right)^{\Lambda_0} + \hbar \laplace^{\Lambda_0} \left( \op_A(x) + \delta^{\Lambda_0} \op_A(x) \right) \\
&\qquad- (\st\op_A)(x) - \delta^{\Lambda_0} (\st\op_A)(x) \eqend{.}
\end{splitequation}
Taking a $\Lambda$ derivative of equation~\eqref{anom_1op_w_def} and using the flow equation for $W^{\Lambda, \Lambda_0}$~\eqref{anom_ward_0op_w_flow}, we obtain a flow equation for $W^{\Lambda, \Lambda_0}\left( \op_A(x) \right)$, which is the same as the flow equation for a functional with two operator insertions, equation~\eqref{l_sop_flow_hierarchy} with $s = 2$. The boundary conditions are given by $W^{\Lambda_0, \Lambda_0}\left( \op_A(x) \right) = W^{\Lambda_0}\left( \op_A(x) \right)$, but we may alternatively also use equation~\eqref{anom_ward_1op_w} to obtain boundary conditions at $\Lambda = 0$ and non-exceptional momenta.
\end{proof}

\begin{proof}[Proof of Equation~\texorpdfstring{\eqref{anom_1op_w_split}}{(\ref{anom_1op_w_split})}.]
We define the functional $L^{\Lambda, \Lambda_0}\left( \mathsf{A}_1\left( \op_A\right)(x) \right)$ by the linear flow equation~\eqref{l_sop_flow} with $s = 1$, vanishing boundary conditions at $\Lambda = \Lambda_0$ for irrelevant functionals and boundary conditions for the relevant and marginal functionals given by $L^{0, \Lambda_0}\left( \mathsf{A}_1\left( \op_A\right)(x) \right) = W^{0, \Lambda_0}\left( \op_A(x) \right)$ for non-exceptional momenta, which can be read off from equation~\eqref{anom_ward_1op_w}. This makes $\mathsf{A}_1\left( \op_A\right)$ a composite operator of dimension $[\mathsf{A}_1\left(\op_A\right)] = [\op_A]+1$ depending linearly on $\op_A$, such that $\mathsf{A}_1$ is a map as stated in Proposition~\ref{thm_anomward}.

We then define $N^{\Lambda, \Lambda_0}(\op_A)$ to be the difference
\begin{equation}
\label{anom_1op_w_split_op}
N^{\Lambda, \Lambda_0}\left( \op_A(x) \right) \equiv W^{\Lambda, \Lambda_0}\left( \op_A(x) \right) - L^{\Lambda, \Lambda_0}\left( \op_A(x) \otimes \mathsf{A}_0 \right) - L^{\Lambda, \Lambda_0}\left( \mathsf{A}_1\left( \op_A\right)(x) \right) \eqend{.}
\end{equation}
The second functional on the right-hand side $L^{\Lambda, \Lambda_0}\left( \op_A(x) \otimes \mathsf{A}_0 \right)$ is a functional with one insertion of a non-integrated and one insertion of an integrated composite operator, for which we did not derive bounds. However, since the anomaly $\mathsf{A}_0$ is Grassmann odd, $L^{\Lambda, \Lambda_0}\left( \mathsf{A}_0 \otimes \mathsf{A}_0 \right) = 0$ already for finite cutoffs $\Lambda$ and $\Lambda_0$, while for general composite operators $\op_A$ we only need to treat the case where $\mathsf{A}_0 = 0$ and this functional also vanishes. Thus, equation~\eqref{anom_1op_w_split_op} is the same as the decomposition~\eqref{anom_1op_w_split} in all cases relevant for us. Nevertheless, equation~\eqref{anom_1op_w_split_op} is important to obtain the correct flow equation for $N^{\Lambda, \Lambda_0}\left( \op_A(x) \right)$: by taking a $\Lambda$ derivative of this equation and using the flow equations~\eqref{anom_1op_w_def} for $W^{\Lambda, \Lambda_0}\left( \op_A \right)$ and~\eqref{l_sop_flow} for $L^{\Lambda, \Lambda_0}\left( \mathsf{A}_1\left( \op_A\right)(x) \right)$, we obtain
\begin{splitequation}
\label{n_op_flow}
\partial_\Lambda N^{\Lambda, \Lambda_0}(\op_A) &= \frac{\hbar}{2} \left\langle \frac{\delta}{\delta \phi_K}, \left( \partial_\Lambda C^{\Lambda, \Lambda_0}_{KL} \right) \ast \frac{\delta}{\delta \phi_L} \right\rangle N^{\Lambda, \Lambda_0}(\op_A) \\
&\quad- \left\langle \frac{\delta}{\delta \phi_K} L^{\Lambda, \Lambda_0}, \left( \partial_\Lambda C^{\Lambda, \Lambda_0}_{KL} \right) \ast \frac{\delta}{\delta \phi_L} N^{\Lambda, \Lambda_0}(\op_A) \right\rangle \\
&\quad- \left\langle \frac{\delta}{\delta \phi_K} L^{\Lambda, \Lambda_0}(\op_A), \left( \partial_\Lambda C^{\Lambda, \Lambda_0}_{KL} \right) \ast \frac{\delta}{\delta \phi_L} N^{\Lambda, \Lambda_0} \right\rangle \eqend{,}
\end{splitequation}
which is similar to the flow equation for a functional with two insertions, but contains the functional $N^{\Lambda, \Lambda_0}$ in the last line instead of $W^{\Lambda, \Lambda_0}$ (which naively would have been obtained from the decomposition~\eqref{anom_1op_w_split}). The boundary conditions for $N^{\Lambda, \Lambda_0}(\op_A)$ can be read off from the definition~\eqref{anom_1op_w_split_op} and the boundary conditions that we imposed on $L^{\Lambda, \Lambda_0}\left( \mathsf{A}_1\left( \op_A\right)(x) \right)$. They are of the form appropriate for a single insertion: for relevant and marginal functionals, we have vanishing boundary conditions at $\Lambda = 0$ and non-exceptional momenta, and for the irrelevant functionals they are equal to $W^{\Lambda_0}\left( \op_A \right)$~\eqref{anom_1op_w_irrelevant}. It remains to show that $N^{\Lambda, \Lambda_0}(\op_A)$ vanishes in the unregularised limit $\Lambda_0 \to \infty$, which follows almost immediately from the bounds~\eqref{bound_l1_delta} if we can show that the boundary conditions for irrelevant functionals are compatible with the bounds~\eqref{bound_l1}. The only obstacle is that the bounds~\eqref{bound_l1_delta} were derived for a linear flow equation, while $N^{\Lambda, \Lambda_0}(\op_A)$ satisfies the flow equation~\eqref{n_op_flow} with an additional term. However, since we proved in the last subsection that the bounds~\eqref{bound_l1i_delta} apply to $N^{\Lambda, \Lambda_0}$, this additional term can be estimated in exactly the same way as the second term on the right-hand side of the flow equation~\eqref{n_op_flow}, and we can apply the proof for functionals with one (non-integrated) operator insertion of Subsection~\ref{sec_bounds_additional}. Thus, to show that $N^{\Lambda, \infty}(\op_A) = 0$ we only have to prove that the boundary conditions for irrelevant functionals are compatible with the bounds~\eqref{bound_l1}, which is done below.

For irrelevant functionals, the last two terms of equation~\eqref{anom_1op_w_irrelevant} vanish by definition, and we have $\op_A(x) + \delta^{\Lambda_0} \op_A(x) = L^{\Lambda_0, \Lambda_0}(\op_A(x))$, such that for irrelevant functionals we get
\begin{splitequation}
&N^{\Lambda_0, \Lambda_0}(\op_A(x)) = \left( S_0^{\Lambda_0} + L^{\Lambda_0}, L^{\Lambda_0, \Lambda_0}(\op_A(x)) \right)^{\Lambda_0} + \hbar \laplace^{\Lambda_0} L^{\Lambda_0, \Lambda_0}(\op_A(x)) \\
&\quad= \left\langle \phi_N, \left( C^{0, \infty} \right)^{-1}_{NM} \ast \frac{\delta_\text{L} L^{\Lambda_0, \Lambda_0}(\op_A(x))}{\delta \phi_M^\ddag} \right\rangle + \left\langle \frac{\delta_\text{R} L^{\Lambda_0, \Lambda_0}(\op_A(x))}{\delta \phi_M^\ddag} \ast R^{\Lambda_0}, \mathsf{M}_{MN} \ast \phi_N^\ddag \right\rangle \\
&\qquad+ \left\langle \phi_N \ast \mathsf{M}_{NM}, R^{\Lambda_0} \ast \frac{\delta_\text{L} L^{\Lambda_0, \Lambda_0}(\op_A(x))}{\delta \phi_M} \right\rangle + \left\langle \frac{\delta_\text{R} L^{\Lambda_0}}{\delta \phi_K\vphantom{\delta \phi_K^\ddag}}, R^{\Lambda_0} \ast \frac{\delta_\text{L} L^{\Lambda_0, \Lambda_0}(\op_A(x))}{\delta \phi_K^\ddag} \right\rangle \\
&\qquad- \left\langle \frac{\delta_\text{R} L^{\Lambda_0}}{\delta \phi_K^\ddag}, R^{\Lambda_0} \ast \frac{\delta_\text{L} L^{\Lambda_0, \Lambda_0}(\op_A(x))}{\delta \phi_K\vphantom{\delta \phi_K^\ddag}} \right\rangle + \hbar \left\langle \frac{\delta_\text{L}}{\delta \phi_K}, R^{\Lambda_0} \ast \frac{\delta_\text{R}}{\delta \phi_K^\ddag} \right\rangle L^{\Lambda_0, \Lambda_0}(\op_A(x))
\end{splitequation}
using the definition of the regularised antibracket~\eqref{bvreg_def}, the regularised free action~\eqref{s0_lambda0_def,mixing_def} and the regularised BV Laplacian~\eqref{bv_laplace_def}. Overall translation invariance tells us that $N^{\Lambda, \Lambda_0}(\op_A)$ fulfils a shift property analogous to the one for functionals with one operator insertion (equation~\eqref{func_sop_shift} with $s=1$), such that we may restrict to $x = 0$. We then take some functional derivatives with respect to fields and antifields, perform a Fourier transform and take some momentum derivatives to obtain an irrelevant functional (where $m+n+\abs{\vec{w}} > [\op_A]+1$). Inserting the bounds~\eqref{bound_l0} and~\eqref{bound_l1} evaluated at $\Lambda = \Lambda_0$ for the functionals (again without the logarithms in momenta), the bounds on the regulator~\eqref{r_prop_bound}, the matrix $\mathsf{M}$~\eqref{mixing_est} and the inverse of the covariance~\eqref{propagator_est}, one obtains a bound for $\mathcal{N}^{\Lambda_0, \Lambda_0, l}$ similar to~\eqref{anomaly_0op_n}. The various terms can be estimated in the same way as in the last subsection, using the estimate~\eqref{anomaly_0op_est1}, changing vertices according to~\eqref{anomaly_0op_est2}, fusing the trees in the quadratic term according to the estimate~\eqref{gw_fused_2_est}, changing derivatives according to~\eqref{func_0op_estfuse_b}, performing the $p$ integral using Lemma~\ref{lemma_pint2} and amputating vertices using~\eqref{amputate}. We then obtain the bounds
\begin{splitequation}
\label{n_op_boundary}
\abs{\partial^\vec{w} \mathcal{N}^{\Lambda_0, \Lambda_0, l}_{\vec{K} \vec{L}^\ddag}\left( \op_A(0); \vec{q} \right)} &\leq \sup\left( 1, \frac{\abs{\vec{q}}}{\Lambda_0} \right)^{g^{(1)}([\op_A]+1,m+n+2l,\abs{\vec{w}})} \\
&\quad\times \sum_{T^* \in \mathcal{T}^*_{m+n}} \mathsf{G}^{T^*,\vec{w}}_{\vec{K} \vec{L}^\ddag; [\op_A]+1}(\vec{q}; \mu, \Lambda_0) \, \mathcal{P}\left( \ln_+ \frac{\Lambda_0}{\mu} \right) \eqend{.}
\end{splitequation}
The boundary conditions~\eqref{n_op_boundary} are then compatible with the bounds~\eqref{bound_l1} for functionals with one operator insertion of dimension $[\op_A]+1$, evaluated at $\Lambda = \Lambda_0$. Thus, $N^{\Lambda, \Lambda_0}\left( \op_A \right)$ fulfils a linear flow equation with vanishing boundary conditions for all relevant and marginal functionals, and boundary conditions for the irrelevant functionals which vanish in the limit $\Lambda_0 \to \infty$. We can thus apply Proposition~\ref{thm_l1_van} to $N^{\Lambda, \Lambda_0}\left( \op_A \right)$ (with the obvious change in notation), and obtain a bound of the form~\eqref{bound_l1_delta} for $N^{\Lambda, \Lambda_0}\left( \op_A \right)$. Since this bound contains an explicit factor of $\left( \sup(\mu, \Lambda) / \Lambda_0 \right)^\frac{\Delta}{2}$, and is otherwise independent of $\Lambda_0$, we conclude that $\lim_{\Lambda_0 \to \infty} N^{\Lambda, \Lambda_0}\left( \op_A \right) = 0$ as claimed.
\end{proof}

\begin{proof}[Proof of the properties of \texorpdfstring{$\mathsf{A}_1$}{A1}.]
It was already shown that $\mathsf{A}_1$ is a map as stated in Proposition~\ref{thm_anomward} of the right dimension, and it remains to show that $\mathsf{A}_1(\op_A)$ is of higher order in $\hbar$ than $\op_A$. Let us assume that the first non-vanishing contribution to $L^{\Lambda, \Lambda_0}(\op_A)$ is of order $\hbar^k$. Since $L^{\Lambda, \Lambda_0}\left( \mathsf{A}_1\left(\op_A\right) \right)$ satisfies a linear flow equation with vanishing boundary conditions for all relevant functionals and boundary conditions which are linear in $\op_A$, it is at least of order $\hbar^k$ as well. However, the marginal functionals at order $\hbar^k$ are independent of $\Lambda$ and equal to their value at $\Lambda_0$, which is given by the marginal part of $W^{\Lambda_0}\left( \op_A \right)$~\eqref{anom_1op_w_irrelevant}. In the limit $\Lambda_0 \to \infty$, the marginal part of $W^{\Lambda_0}(\op_A)$ vanishes at order $\hbar^k$, and then all irrelevant functionals $L^{\Lambda, \Lambda_0}\left( \mathsf{A}_1\left(\op_A\right) \right)$ vanish as well at that order, such that
\begin{equation}
\label{anom_1op_orderh}
L^{0, \infty}\left( \mathsf{A}_1\left(\op_A\right) \right) = \bigo{\hbar^{k+l}}
\end{equation}
with $l \geq 1$.
\end{proof}

\subsection{Proof of Proposition~\ref{thm_anomward}, Equation~\texorpdfstring{\eqref{anom_ward_sop}}{(\ref{anom_ward_sop})}}
\label{sec_brst_anomward3}

In complete analogy to the previous subsections, for finite cutoff $\Lambda_0$ we have
\begin{proposition}
\label{lemma_anomward3}
For a choice of boundary conditions for the functionals without insertions $L^{\Lambda, \Lambda_0}$ such that $\mathsf{A}_0 = 0$, the following holds:
\begin{enumerate}
\item There exist functionals $W^{\Lambda, \Lambda_0}\left( \bigotimes_{k=1}^s \op_{A_k} \right)$ obeying inhomogeneous flow equations such that
\begin{splitequation}
\label{anom_ward_sop_w}
\st_0^{\Lambda_0} L^{0, \Lambda_0}\left( \bigotimes_{k=1}^s \op_{A_k} \right) &= \sum_{1 \leq l < l' \leq s} L^{0, \Lambda_0}\left( \bigotimes_{k\in\{1,\ldots,s\}\setminus\{l,l'\}} \op_{A_k} \otimes \left( \op_{A_l},\op_{A_{l'}} \right) \right) \\
&\quad+ \sum_{l=1}^s L^{0, \Lambda_0}\left( \bigotimes_{k\in\{1,\ldots,s\}\setminus\{l\}} \op_{A_k} \otimes \st\op_{A_l} \right) + W^{0, \Lambda_0}\left( \bigotimes_{k=1}^s \op_{A_k} \right)
\end{splitequation}
with the single composite operator $(\op_{A_k}, \op_{A_l})$. This operator is defined by the classical expression which we decompose as
\begin{equation}
\label{op_bv_def}
\left( \op_{A_k}(x_k),\op_{A_l}(x_l) \right) = \sum_C \op_C(x_k) P^C_{AB}\left( \partial \right) \delta^4(x_k-x_l) \eqend{,}
\end{equation}
where $P^C_{AB}$ are homogeneous, $\mathrm{O}(4)$-covariant polynomials of order $[\op_A]+[\op_B]-[\op_C]-3$ which are uniquely determined by the left-hand side, together with counterterms $\delta^{\Lambda_0} (\op_{A_k}, \op_{A_l})$ given by
\begin{equation}
\delta^{\Lambda_0} \left( \op_{A_k}(x_k),\op_{A_l}(x_l) \right) = \sum_C \delta^{\Lambda_0} \op_C(x_k) P^C_{AB}\left( \partial \right) \delta^4(x_k-x_l)
\end{equation}
with the appropriate counterterms $\delta^{\Lambda_0} \op_C$ for the single operators $\op_C$.
\item There exists a choice of boundary conditions defining the functional with one insertion of the anomaly $\mathsf{A}_2$ (and thus the anomaly itself), such that the decomposition
\begin{splitequation}
\label{anom_sop_w_split}
W^{\Lambda, \Lambda_0}\left( \bigotimes_{k=1}^s \op_{A_k} \right) &= N^{\Lambda, \Lambda_0}\left( \bigotimes_{k=1}^s \op_{A_k} \right) + \sum_{l=1}^s L^{\Lambda, \Lambda_0}\left( \bigotimes_{k\in\{1,\ldots,s\}\setminus\{l\}} \op_{A_k} \otimes \mathsf{A}_1\left(\op_{A_l}\right) \right) \\
&\quad+ \sum_{1 \leq l < l' \leq s} L^{\Lambda, \Lambda_0}\left( \bigotimes_{k\in\{1,\ldots,s\}\setminus\{l,l'\}} \op_{A_k} \otimes \mathsf{A}_2\left( \op_{A_l} \otimes \op_{A_{l'}} \right) \right)
\end{splitequation}
holds, where the anomaly $\mathsf{A}_2\left( \op_{A_k} \otimes \op_{A_l} \right)$ is a composite operator of dimension $\leq [\op_{A_k}]+[\op_{A_l}]-3$ supported on the diagonal $x_k = x_l$, with yet other functionals $N^{\Lambda, \Lambda_0}\left( \bigotimes_{k=1}^s \op_{A_k}(x_k) \right)$ satisfying $N^{\Lambda, \infty}\left( \bigotimes_{k=1}^s \op_{A_k}(x_k) \right) = 0$.
\item The anomaly $\mathsf{A}_2$ defined in this way satisfies the conditions of Proposition~\ref{thm_anomward}.
\end{enumerate}
\end{proposition}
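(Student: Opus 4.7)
The plan is to prove Proposition~\ref{lemma_anomward3} by induction on the number of insertions $s \geq 2$, following the three-step strategy of Propositions~\ref{lemma_anomward1} and~\ref{lemma_anomward2}. As in those cases, I would start from the regularised Ward identity (Proposition~\ref{thm_anomward_reg}) applied to $B^{\Lambda_0} = L^{\Lambda_0} + \sum_{k=1}^s \langle \chi_k, \op_{A_k} + \delta^{\Lambda_0} \op_{A_k} \rangle$, and take $s$ variational derivatives with respect to the sources $\chi_k$ evaluated at $\chi = 0$. Because the antibracket and the BV Laplacian are at most bilinear in the $\chi_k$, the right-hand side of~\eqref{anom_ward_general} produces only three kinds of nontrivial contributions: (i)~terms with a single classical ST-action $\st \op_{A_l}$ on one insertion, which combine with the $s = 0, 1$ identities \eqref{anom_ward_0op_w} and~\eqref{anom_ward_1op_w} to give the terms in line two of~\eqref{anom_ward_sop_w}; (ii)~pairwise contact terms given by the regularised antibracket of two operator insertions (together with the $\hbar \laplace^{\Lambda_0}$ term acting on pairs), which in the unregularised limit reproduce the classical antibracket $(\op_{A_l},\op_{A_{l'}})$ defined in~\eqref{op_bv_def}; and (iii)~the residual inhomogeneous piece which I take to be $W^{\Lambda, \Lambda_0}\bigl( \bigotimes_k \op_{A_k} \bigr)$. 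Taking a $\Lambda$-derivative of the resulting convolution formula yields an inhomogeneous flow equation for $W^{\Lambda, \Lambda_0}\bigl( \bigotimes_k \op_{A_k} \bigr)$ of the same structure as~\eqref{l_sop_flow_hierarchy}, but with source terms involving $W^{\Lambda, \Lambda_0}\bigl( \bigotimes_{k \in \alpha} \op_{A_k} \bigr)$ for nonempty proper subsets $\alpha$, which are known by the inductive hypothesis.

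In the second step I would define $L^{\Lambda, \Lambda_0}\bigl( \mathsf{A}_2(\op_{A_l} \otimes \op_{A_{l'}}) \bigr)$ by the linear flow equation~\eqref{l_sop_flow} with $s = 1$, with vanishing boundary conditions at $\Lambda = \Lambda_0$ for the irrelevant functionals, and with boundary conditions at $\Lambda = 0$ (and non-exceptional momenta) for the relevant and marginal functionals read off from the marginal/relevant part of $W^{0, \Lambda_0}\bigl( \op_{A_l}(x_l) \otimes \op_{A_{l'}}(x_{l'}) \bigr)$, minus the classical pieces $(\op_{A_l}, \op_{A_{l'}})$ and their counterterms already accounted for in~\eqref{anom_ward_sop_w}. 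Because the regularised antibracket contains a factor $R^{\Lambda_0}(x_l - x_{l'})$ whose unregularised limit is $\delta^4(x_l - x_{l'})$, and because the operator counterterms are by definition local, the marginal/relevant boundary data so obtained is polynomial in the momenta conjugate to $x_l - x_{l'}$, which by Fourier inversion means that $\mathsf{A}_2(\op_{A_l} \otimes \op_{A_{l'}})$ is a composite operator supported on the diagonal $x_l = x_{l'}$ of the asserted dimension and ghost number. With these definitions in place, the functional $N^{\Lambda, \Lambda_0}\bigl( \bigotimes_k \op_{A_k} \bigr)$ determined by~\eqref{anom_sop_w_split} satisfies a flow equation of the form~\eqref{l_sop_flow_hierarchy} but with some occurrences of $L^{\Lambda, \Lambda_0}$ replaced by $N^{\Lambda, \Lambda_0}$, in complete analogy with the flow equation~\eqref{n_op_flow} of Subsection~\ref{sec_brst_anomward2}.

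In the third step I would show that $\lim_{\Lambda_0 \to \infty} N^{\Lambda, \Lambda_0}\bigl( \bigotimes_k \op_{A_k} \bigr) = 0$ by verifying that the boundary conditions for the irrelevant functionals of $N^{\Lambda_0,\Lambda_0}$ at $\Lambda = \Lambda_0$ are compatible with the bounds of Propositions~\ref{thm_lsa} and~\ref{thm_lsb}. These boundary data come from the explicit expression for $W^{\Lambda_0}\bigl( \bigotimes_k \op_{A_k} \bigr)$ in terms of the regularised antibracket and BV Laplacian applied to $L^{\Lambda_0, \Lambda_0}\bigl( \bigotimes_k \op_{A_k} \bigr)$, and can be estimated by the same combination of bounds on the regulator~\eqref{r_prop_bound}, the matrix $\mathsf{M}$~\eqref{mixing_est}, the inverse covariance~\eqref{propagator_est}, and the multi-insertion bounds~\eqref{bound_ls_lambdamu}--\eqref{bound_ks_lambdamu} evaluated at $\Lambda = \Lambda_0$, that was already used in Subsection~\ref{sec_brst_anomward2}; the extra positive powers of momenta can be absorbed into the particular weight factor of the associated trees. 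Combined with the inductive hypothesis that $N^{\Lambda, \infty}\bigl( \bigotimes_{k\in\alpha} \op_{A_k} \bigr) = 0$ for all nonempty proper subsets $\alpha$, Proposition~\ref{thm_ls_van} then yields a bound with the suppression factor $(\sup(\Lambda,\mu)/\Lambda_0)^{\Delta/2}$ that vanishes as $\Lambda_0 \to \infty$.

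The main obstacle will be the clean identification of $\mathsf{A}_2(\op_{A_l} \otimes \op_{A_{l'}})$ as a \emph{local} operator supported on the diagonal: unlike $\mathsf{A}_1$, which arises as a single-point insertion, $\mathsf{A}_2$ must be extracted by subtracting from the bilinear regularised antibracket its classical unregularised counterpart, and one has to show that the remainder is of strictly contact type. This requires carefully tracking how the regulator $R^{\Lambda_0}$ smears the contact $\delta$-function and how the finite renormalisation freedom in the marginal and relevant sectors can be used to collapse the smeared expression back onto a diagonal distribution of the asserted dimension. The order-in-$\hbar$ statement that $\mathsf{A}_2(\op_{A_l} \otimes \op_{A_{l'}})$ is of strictly higher order than $(\op_{A_l}, \op_{A_{l'}})$ then follows, exactly as in~\eqref{anom_0op_orderh} and~\eqref{anom_1op_orderh}, from the fact that the classical theory satisfies the master equation so that the relevant part of $W^{\Lambda_0}\bigl( \bigotimes_k \op_{A_k} \bigr)$ vanishes at the lowest classical order, and this vanishing propagates to all orders and all irrelevant functionals through the linear flow equation for $L^{\Lambda, \Lambda_0}\bigl( \mathsf{A}_2(\op \otimes \op') \bigr)$.
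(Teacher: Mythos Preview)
Your overall three-step strategy matches the paper, and your induction on $s$ with Proposition~\ref{thm_ls_van} as the vanishing mechanism is correct. However, there are two genuine gaps.

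\textbf{Definition of $\mathsf{A}_2$ (Step 2).} You propose to define $L^{\Lambda,\Lambda_0}\bigl(\mathsf{A}_2(\op_{A_l}\otimes\op_{A_{l'}})\bigr)$ by imposing relevant/marginal boundary data at $\Lambda=0$, read off from $W^{0,\Lambda_0}(\op_{A_l}\otimes\op_{A_{l'}})$, following the $\mathsf{A}_1$ template. This does not give locality on the diagonal: at $\Lambda=0$ the functional has been flowed and carries nonlocal contributions, so your argument that ``$R^{\Lambda_0}\to\delta$ hence polynomial in the momentum conjugate to $x_l-x_{l'}$'' does not apply there. The paper instead works at $\Lambda=\Lambda_0$, where $W^{\Lambda_0,\Lambda_0}(\op_{A_k}\otimes\op_{A_l})$ is explicit~\eqref{anom_2op_w_irrelevant}, and performs a finite Taylor expansion~\eqref{anom_sop_taylor} in the momentum $p$ conjugate to $x_k-x_l$ up to order $r=[\op_{A_k}]+[\op_{A_l}]-3$. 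The low-order Taylor coefficients are polynomial in $p$, hence supported on the diagonal, and \emph{these} define the boundary conditions for $\mathsf{A}_2$; the integral remainder defines $N^{\Lambda_0,\Lambda_0}$. Locality is then immediate, while finiteness of $L(\mathsf{A}_2)$ is inferred \emph{a posteriori} from the decomposition~\eqref{anom_sop_w_split} once $N\to 0$ is established. This is the key technical device you are missing.

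\textbf{Boundary data for $N$ (Step 3).} You say the irrelevant boundary data for $N^{\Lambda_0,\Lambda_0}\bigl(\bigotimes_k\op_{A_k}\bigr)$ come from the antibracket and BV Laplacian applied to $L^{\Lambda_0,\Lambda_0}\bigl(\bigotimes_k\op_{A_k}\bigr)$ and are estimated using the multi-insertion bounds~\eqref{bound_ls_lambdamu}--\eqref{bound_ks_lambdamu} at $\Lambda=\Lambda_0$. But $L^{\Lambda_0,\Lambda_0}\bigl(\bigotimes_k\op_{A_k}\bigr)=0$ for $s\geq 2$~\eqref{l_sop_lambda0}, so this contribution vanishes identically. (Also, the BV Laplacian is linear in $B^{\Lambda_0}$ and contributes nothing to the pairwise terms.) The actual situation is: for $s>2$ one has $W^{\Lambda_0,\Lambda_0}\bigl(\bigotimes_k\op_{A_k}\bigr)=0$ and hence $N^{\Lambda_0,\Lambda_0}=0$, so Proposition~\ref{thm_ls_van} applies directly via the source terms. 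For $s=2$, the boundary data is the regularised antibracket of the \emph{single}-insertion functionals $L^{\Lambda_0,\Lambda_0}(\op_{A_k})$~\eqref{anom_2op_w_irrelevant}, and after subtracting the low-order Taylor part (which went into $\mathsf{A}_2$) one is left with the Taylor remainder. Showing that this remainder is compatible with the bounds~\eqref{bound_ks_lambdamu} requires converting $p^w$ into $x$-derivatives and generating $D+3$ additional derivatives via the functions $\mathcal{E}_k$ (Lemma~\ref{lemma_expint}), exactly as in the source-term estimates of Subsection~\ref{sec_bounds_func_sop}; the single-insertion estimates of Subsection~\ref{sec_brst_anomward2} that you invoke are not sufficient here.
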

Equation~\eqref{anom_ward_sop} immediately follows from this proposition by taking the unregularised limit $\Lambda_0 \to \infty$, $\Lambda \to 0$. To prove it, we again follow the same steps as in the previous subsections.

\begin{proof}[Proof of Equation~\texorpdfstring{\eqref{anom_ward_sop_w}}{(\ref{anom_ward_sop_w})}.]
We now take $B^{\Lambda_0} = L^{\Lambda_0} + \sum_{k=1}^s \left\langle \chi_k, \op_{A_k} + \delta^{\Lambda_0} \op_{A_k} \right\rangle$ with $s > 1$ in the regulated anomalous Ward identity~\eqref{anom_ward_general}. Taking variational derivatives with respect to the $\chi_k$ and using equations~\eqref{anom_ward_0op_w} and~\eqref{anom_ward_1op_w}, equation~\eqref{anom_ward_general} reduces to equation~\eqref{anom_ward_sop_w} with the functional $W^{\Lambda, \Lambda_0}\left( \bigotimes_{k=1}^s \op_{A_k} \right)$ defined by
\begin{splitequation}
\label{anom_sop_w_def}
W^{\Lambda, \Lambda_0}\left( \bigotimes_{k=1}^s \op_{A_k} \right) &\equiv - \hbar \left( \prod_{k=1}^s \frac{\total}{\total \chi_k(x_k)} \right) \frac{\total}{\total t} \ln \Bigg[ \nu^{\Lambda, \Lambda_0} \conv \exp\Bigg( - \frac{1}{\hbar} L^{\Lambda_0} - \frac{t}{\hbar} W^{\Lambda_0} \\
&\qquad- \frac{t}{\hbar} \sum_{k=1}^s \left\langle \chi_k, W^{\Lambda_0}\left( \op_{A_k} \right) \right\rangle - \frac{1}{\hbar} \sum_{k=1}^s \left\langle \chi_k, \op_{A_k} + \delta^{\Lambda_0} \op_{A_k} \right\rangle \\
&\qquad- \frac{t}{\hbar} \sum_{1 \leq l < l' \leq s} \left\langle \chi_l, W^{\Lambda_0}\left( \op_{A_l} \otimes \op_{A_{l'}} \right) \ast \chi_{l'} \right\rangle \Bigg) \Bigg]_{t = \chi_k = 0} \eqend{,}
\end{splitequation}
with
\begin{splitequation}
\label{anom_2op_w_irrelevant}
W^{\Lambda_0}\left( \op_{A_k}(x_k) \otimes \op_{A_l}(x_l) \right) &\equiv \left( \op_{A_k}(x_k) + \delta^{\Lambda_0} \op_{A_k}(x_k), \op_{A_l}(x_l) + \delta^{\Lambda_0} \op_{A_l}(x_l) \right)^{\Lambda_0} \\
&\quad- \left( \op_{A_k}(x_k),\op_{A_l}(x_l) \right) - \delta^{\Lambda_0} \left( \op_{A_k}(x_k),\op_{A_l}(x_l) \right) \eqend{.}
\end{splitequation}
The flow equation for $W^{\Lambda, \Lambda_0}\left( \bigotimes_{k=1}^s \op_{A_k} \right)$ is obtained by taking a $\Lambda$ derivative of the definition~\eqref{anom_sop_w_def} and using the flow equation for $W^{\Lambda, \Lambda_0}\left( \op_{A_k} \right)$ as defined in the last subsection as well as the flow equations for $L^{\Lambda, \Lambda_0}\left( \op_{A_k} \right)$ (a linear flow equation) and $L^{\Lambda, \Lambda_0}$, equation~\eqref{l_0op_flow_hierarchy}. This then shows that $W^{\Lambda, \Lambda_0}\left( \bigotimes_{k=1}^s \op_{A_k} \right)$ fulfils a flow equation of the type~\eqref{l_sop_flow_hierarchy}, but with $s \to s+1$. The boundary conditions are given by evaluating the definition~\eqref{anom_sop_w_def} at $\Lambda = \Lambda_0$ and using that the measure $\nu^{\Lambda, \Lambda_0}$ gives a $\delta$ measure in this limit. This directly gives $W^{\Lambda_0, \Lambda_0}\left( \bigotimes_{k=1}^s \op_{A_k} \right) = 0$ for $s > 2$, and $W^{\Lambda_0, \Lambda_0}\left( \op_{A_k}(x_k) \otimes \op_{A_l}(x_l) \right) = W^{\Lambda_0}\left( \op_{A_k}(x_k) \otimes \op_{A_l}(x_l) \right)$~\eqref{anom_2op_w_irrelevant} for $s = 2$, for all functionals.
\end{proof}

\begin{proof}[Proof of Equation~\texorpdfstring{\eqref{anom_sop_w_split}}{(\ref{anom_sop_w_split})}.]
Again, we would like to define $N^{\Lambda, \Lambda_0}\left( \bigotimes_{k=1}^s \op_{A_k} \right)$ to be the difference between the functional $W^{\Lambda, \Lambda_0}\left( \bigotimes_{k=1}^s \op_{A_k} \right)$ and the remaining functionals on the right-hand side of equation~\eqref{anom_sop_w_split}, and then show that with this definition we have $N^{\Lambda, \infty}\left( \bigotimes_{k=1}^s \op_{A_k} \right) = 0$. Before we can do this, we first have to define the functionals with an insertion of the anomaly $\mathsf{A}_2\left( \op_k \otimes \op_l \right)$ by an appropriate flow equation and boundary conditions, which is a bit more complicated. $\mathsf{A}_2\left( \op_k(x_k) \otimes \op_l(x_k) \right)$ should be a composite operator of dimension $\leq [\op_{A_k}]+[\op_{A_l}]-3$ supported on the diagonal $x_k = x_l$ and depending bilinearly on $\op_{A_k}$ and $\op_{A_l}$. Thus, we only have to define the functionals with one insertion of $\mathsf{A}_2$ and no other composite operator by a linear flow equation and determine their boundary conditions. The functionals with an insertion of $\mathsf{A}_2$ and other composite operators are then automatically well-defined. These boundary conditions are obtained from an expansion of $W^{\Lambda_0, \Lambda_0}\left( \op_k \otimes \op_l \right)$, which we now perform in detail.

Since~\eqref{l_1op_lambda0}
\begin{equation}
\op_A + \delta^{\Lambda_0} \op_A = L^{\Lambda_0, \Lambda_0}(\op_A) \eqend{,}
\end{equation}
we obtain after taking some functional derivatives with respect to fields and antifields, performing a Fourier transform and using the shift property~\eqref{func_sop_shift} to bring the position of the operator insertions to $0$
\begin{splitequation}
\label{anom_w_expr}
\mathcal{W}^{\Lambda_0, \Lambda_0}_{\vec{K} \vec{L}^\ddag}\left( \op_{A_k}(x_k) \otimes \op_{A_l}(x_l); \vec{q} \right) &= \int \mathe^{-\mathi (x_k - x_l) p} F(p) \frac{\total^4 p}{(2\pi)^4} \\
&\quad- \sum_C \mathcal{L}^{\Lambda_0, \Lambda_0}_{\vec{K} \vec{L}^\ddag}\left( \op_C(x_k); \vec{q} \right) P^C_{A_k A_l}\left( \partial_{x_k} \right) \delta^4(x_k-x_l) \eqend{,}
\end{splitequation}
with
\begin{splitequation}
&F(p) \equiv \sum_{\subline{\sigma \cup \tau = \{1, \ldots, m\} \\ \rho \cup \varsigma = \{1, \ldots, n\}}} c_{\sigma\tau\rho\varsigma} \mathe^{-\mathi x_k k} \mathe^{-\mathi x_l k'} R^{\Lambda_0}(p) \sum_{l'=0}^l \bigg[ \mathcal{L}^{\Lambda_0, \Lambda_0, l'}_{\vec{K}_\sigma \vec{L}_\rho^\ddag M}\left( \op_{A_k}(0); \vec{q}_\sigma,\vec{q}_\rho,p \right) \\
&\times \mathcal{L}^{\Lambda_0, \Lambda_0, l-l'}_{\vec{K}_\tau \vec{L}_\varsigma^\ddag M^\ddag}\left( \op_{A_l}(0); \vec{q}_\tau,\vec{q}_\varsigma,-p \right) - \mathcal{L}^{\Lambda_0, \Lambda_0, l'}_{\vec{K}_\sigma \vec{L}_\rho^\ddag M^\ddag}\left( \op_{A_k}(0); \vec{q}_\sigma,\vec{q}_\rho,p \right) \mathcal{L}^{\Lambda_0, \Lambda_0, l-l'}_{\vec{K}_\tau \vec{L}_\varsigma^\ddag M}\left( \op_{A_l}(0); \vec{q}_\tau,\vec{q}_\varsigma,-p \right) \bigg] \eqend{,}
\end{splitequation}
the momentum $k$ defined in equation~\eqref{k_def}, and with
\begin{equation}
\label{ks_def}
k' \equiv \sum_{i \in \tau \cup \varsigma} q_i \eqend{.}
\end{equation}
The bounds~\eqref{bound_l1} on functionals with one operator insertion and the bounds~\eqref{r_prop_bound} imply that $F(p)$ is smooth, such that we can perform a Taylor expansion with remainder up to the finite order $r=[\op_{A_k}]+[\op_{A_l}]-3$
\begin{equation}
\label{anom_sop_taylor}
F(p) = \sum_{\abs{w} \leq r} \frac{p^w}{w!} \left[ \partial^w F(0) \right] + (r+1) \sum_{\abs{w} = r+1} \frac{p^w}{w!} \int_0^1 (1-t)^r \left[ \partial^w F(t p) \right] \total t \eqend{.}
\end{equation}
To obtain bounds for the various terms in this expansion, we take $w$ derivatives of $F(p)$ with respect to the momentum $p$ and use the bounds~\eqref{bound_l1} for the functionals with one operator insertion (noting that the polynomials in logarithms are absent for $\Lambda = \Lambda_0$) and the bounds~\eqref{r_prop_bound} for the regulator to obtain
\begin{splitequation}
&\abs{ \partial^w F(0) } \leq \sum_{\subline{\sigma \cup \tau = \{1, \ldots, m\} \\ \rho \cup \varsigma = \{1, \ldots, n\}}} \sum_{l'=0}^l \sum_{u+v \leq w} \Lambda_0^{-\abs{w}+\abs{u}+\abs{v}} \sup\left( 1, \frac{\abs{\vec{q}_\sigma,\vec{q}_\rho}}{\Lambda_0} \right)^{g^{(1)}([\op_{A_k}],\abs{\sigma}+\abs{\rho}+1+2l',\abs{u})} \\
&\quad\times \mathcal{P}\left( \ln_+ \frac{\Lambda_0}{\mu} \right) \sup\left( 1, \frac{\abs{\vec{q}_\tau,\vec{q}_\varsigma}}{\Lambda_0} \right)^{g^{(1)}([\op_{A_l}],\abs{\tau}+\abs{\varsigma}+1+2(l-l'),\abs{v})} \\
&\quad\times \sum_{T^* \in \mathcal{T}^*_{\abs{\sigma}+\abs{\rho}+1}} \mathsf{G}^{T^*,(\vec{0},u)}_{\vec{K}_\sigma \vec{L}_\rho^\ddag M; [\op_{A_k}]}(\vec{q}_\sigma,\vec{q}_\rho,0; \mu, \Lambda_0)  \sum_{T^* \in \mathcal{T}^*_{\abs{\tau}+\abs{\varsigma}+1}} \mathsf{G}^{T^*,(\vec{0},v)}_{\vec{K}_\tau \vec{L}_\varsigma^\ddag M^\ddag; [\op_{A_l}]}(\vec{q}_\tau,\vec{q}_\varsigma,0; \mu, \Lambda_0) \eqend{.}
\end{splitequation}
We then fuse the trees using the estimate~\eqref{gw_fused_1_est}, the large-momentum factors using the estimates~\eqref{func_2op_estlog} and~\eqref{func_sop_g1g1est}, remove the derivative weight factor corresponding to the $u+v$ derivatives that acted on the momentum $p$ from the tree, which gives a factor~\eqref{gw_def} $\Lambda_0^{-\abs{u}-\abs{v}}$, and amputate the external legs corresponding to $M$ and $M^\ddag$, which gives an additional factor of
\begin{equation}
\Lambda_0^{-[\phi_M]-[\phi_M^\ddag]} = \Lambda_0^{-3}
\end{equation}
according to the estimate~\eqref{amputate} and equation~\eqref{antifield_dim}. Finally, we change the particular dimension of the tree from $[\op_{A_k}]+[\op_{A_l}]$ to $[\op_{A_k}]+[\op_{A_l}]-3-\abs{w}$, which according to~\eqref{particular_weight} gives an extra factor
\begin{equation}
\sup(\abs{\vec{q}}, \Lambda_0)^{3+\abs{w}} = \Lambda_0^{3+\abs{w}} \sup\left( 1, \frac{\abs{\vec{q}}}{\Lambda_0} \right)^{3+\abs{w}} \eqend{,}
\end{equation}
and since $- ([\op_{A_k}]+[\op_{A_l}]+D)+3+\abs{w} \leq 0$ for all $\abs{w} \leq [\op_{A_k}]+[\op_{A_l}]-3$ we obtain
\begin{splitequation}
\label{anom_sop_f0_bound}
\abs{ \partial^w F(0) } &\leq \sup\left( 1, \frac{\abs{\vec{q}}}{\Lambda_0} \right)^{g^{(2)}([\op_{A_k}]+[\op_{A_l}],m+n+2l,0)} \\
&\quad\times \sum_{T^* \in \mathcal{T}^*_{m+n}} \mathsf{G}^{T^*,0}_{\vec{K} \vec{L}^\ddag; [\op_{A_k}]+[\op_{A_l}]-3-\abs{w}}(\vec{q}; \mu, \Lambda_0) \, \mathcal{P}\left( \ln_+ \frac{\Lambda_0}{\mu} \right) \eqend{.}
\end{splitequation}
Similarly, we obtain
\begin{splitequation}
\label{anom_sop_ftp_bound}
&\abs{ \partial^w F(tp) } \leq \sum_{u+v \leq w} \sup(t\abs{p},\Lambda_0)^{D+[\op_{A_k}]+[\op_{A_l}]-\abs{w}+\abs{u}+\abs{v}} \, \mathe^{- \frac{t^2 \abs{p}^2}{2 \Lambda_0}} \, \mathcal{P}\left( \ln_+ \frac{\Lambda_0}{\mu} \right) \\
&\times \sup\left( 1, \frac{\abs{\vec{q},tp,-tp}}{\Lambda_0} \right)^{g^{(2)}([\op_{A_k}]+[\op_{A_l}],m+n+2l,0)} \!\!\!\! \sum_{T^* \in \mathcal{T}^*_{m+n+2}} \mathsf{G}^{T^*,(\vec{0},u,v)}_{\vec{K} \vec{L}^\ddag M M^\ddag; -D}(\vec{q},tp,-tp; \mu, \Lambda_0) \eqend{.}
\end{splitequation}
This bound shows that the $p$ integral is absolutely convergent for the last term $\partial^w F(t p)$, and since the sums in the Taylor expansion~\eqref{anom_sop_taylor} are finite we can exchange summation and integrations to obtain (recall that $r = [\op_{A_k}]+[\op_{A_l}]-3$)
\begin{splitequation}
\label{anom_w_expr_2}
&\mathcal{W}^{\Lambda_0, \Lambda_0}_{\vec{K} \vec{L}^\ddag}\left( \op_{A_k}(x_k) \otimes \op_{A_l}(x_l); \vec{q} \right) \\
&\quad= \left[ \sum_{\abs{w} \leq r} \frac{\partial^w F(0)}{w!} \partial^w_{x_k} - \sum_C \mathcal{L}^{\Lambda_0, \Lambda_0}_{\vec{K} \vec{L}^\ddag}\left( \op_C(x_k); \vec{q} \right) P^C_{A_k A_l}\left( \partial_{x_k} \right) \right] \delta^4(x_k - x_l) \\
&\quad\qquad+ (r+1) \sum_{\abs{w} = r+1} \int_0^1 (1-t)^r \int \mathe^{-\mathi (x_k - x_l) p} \frac{p^w}{w!} \left[ \partial^w F(t p) \right] \frac{\total^4 p}{(2\pi)^4} \total t \eqend{.}
\end{splitequation}
The boundary conditions for the anomaly $\mathsf{A}_2\left( \op_{A_k} \otimes \op_{A_l} \right)$ are then given by the first line, and we see that it is supported on the diagonal $x_k = x_l$, and by construction depends bilinearly on $\op_{A_k}$ and $\op_{A_l}$, such that $\mathsf{A}_2$ is a map as stated in Proposition~\ref{thm_anomward}. Furthermore, the bounds~\eqref{anom_sop_f0_bound} show that these conditions are compatible with the bounds~\eqref{bound_l1} for the functionals with one operator insertion of dimension $[\op_{A_k}]+[\op_{A_l}]-3-\abs{w}$ evaluated at $\Lambda = \Lambda_0$, such that $\mathsf{A}_2\left( \op_{A_k} \otimes \op_{A_l} \right)$ is a composite operator of dimension $\leq [\op_{A_k}]+[\op_{A_l}]-3$. Note that while with these boundary conditions the functional with one insertion of $\mathsf{A}_2$ is well-defined for finite $\Lambda$ and $\Lambda_0$, they do not guarantee the existence of the unregularised limit $\Lambda \to 0$, $\Lambda_0 \to \infty$, since for the proofs we need boundary conditions for the relevant and marginal functionals at $\Lambda = \mu$ and vanishing momenta, or $\Lambda = 0$ and non-exceptional momenta. However, the existence of the functionals with insertions of $\mathsf{A}_2$ follows from the very decomposition~\eqref{anom_sop_w_split} once we have proven bounds for $N^{\Lambda, \Lambda_0}\left( \bigotimes_{k=1}^s \op_{A_k} \right)$, since all the other functionals appearing in the decomposition have already be proven to be finite in the unregularised limit.

As stated in the beginning, we now simply define
\begin{splitequation}
\label{anom_sop_w_split_op}
N^{\Lambda, \Lambda_0}\left( \bigotimes_{k=1}^s \op_{A_k}(x_k) \right) &\equiv W^{\Lambda, \Lambda_0}\left( \bigotimes_{k=1}^s \op_{A_k}(x_k) \right) \\
&\quad- \sum_{l=1}^s L^{\Lambda, \Lambda_0}\left( \bigotimes_{k\in\{1,\ldots,s\}\setminus\{l\}} \op_{A_k}(x_k) \otimes \mathsf{A}_1\left(\op_{A_l}\right)(x_l) \right) \\
&\quad- \sum_{1 \leq l < l' \leq s} L^{\Lambda, \Lambda_0}\left( \bigotimes_{k\in\{1,\ldots,s\}\setminus\{l,l'\}} \op_{A_k} \otimes \mathsf{A}_2\left( \op_{A_l} \otimes \op_{A_{l'}} \right) \right) \eqend{.}
\end{splitequation}
Taking a $\Lambda$ derivative of this definition and using the flow equations for $W^{\Lambda, \Lambda_0}\left( \bigotimes_{k=1}^s \op_{A_k} \right)$ and the functionals with operator insertions~\eqref{l_sop_flow_hierarchy}, we obtain a flow equation for $N^{\Lambda, \Lambda_0}\left( \bigotimes_{k=1}^s \op_{A_k} \right)$ similar to~\eqref{l_sop_flow}, which reads
\begin{splitequation}
\label{anom_sop_n_flow}
&\partial_\Lambda N^{\Lambda, \Lambda_0}\left( \bigotimes_{k=1}^s \op_{A_k} \right) = \frac{\hbar}{2} \left\langle \frac{\delta}{\delta \phi_K}, \left( \partial_\Lambda C^{\Lambda, \Lambda_0}_{KL} \right) \ast \frac{\delta}{\delta \phi_L} \right\rangle N^{\Lambda, \Lambda_0}\left( \bigotimes_{k=1}^s \op_{A_k} \right) \\
&\quad- \sum_{\alpha \cup \beta = \{1, \ldots, s\}} \left\langle \frac{\delta}{\delta \phi_K} L^{\Lambda, \Lambda_0}\left( \bigotimes_{k\in\alpha} \op_{A_k} \right), \left( \partial_\Lambda C^{\Lambda, \Lambda_0}_{KL} \right) \ast \frac{\delta}{\delta \phi_L} N^{\Lambda, \Lambda_0}\left( \bigotimes_{k\in\beta} \op_{A_k} \right) \right\rangle \eqend{,}
\end{splitequation}
with $N^{\Lambda, \Lambda_0}$ and $N^{\Lambda, \Lambda_0}(\op_{A_k})$ defined in equations~\eqref{anom_0op_w_split} and~\eqref{anom_1op_w_split}. For this flow equation to hold, it is important that $\mathsf{A}_0 = 0$, since otherwise the source terms would contain $W^{\Lambda, \Lambda_0}\left( \bigotimes_{k\in\beta} \op_{A_k} \right)$ instead of $N^{\Lambda, \Lambda_0}\left( \bigotimes_{k\in\beta} \op_{A_k} \right)$, and it would be impossible to prove that $N^{\Lambda, \infty}\left( \bigotimes_{k=1}^s \op_{A_k} \right) = 0$.

The boundary conditions can now be read off from the definition~\eqref{anom_sop_w_split_op} and the boundary conditions for the functionals with an insertion of $\mathsf{A}_2$ that we determined previously. First, all functionals $L^{\Lambda, \Lambda_0}\left( \bigotimes_{k\in\{1,\ldots,s\}\setminus\{l\}} \op_{A_k}(x_k) \otimes \mathsf{A}_1\left(\op_{A_l}\right)(x_l) \right)$ vanish at $\Lambda = \Lambda_0$. In the case $s = 2$, the functional $L^{\Lambda, \Lambda_0}\left( \mathsf{A}_2\left( \op_{A_k} \otimes \op_{A_l} \right) \right)$ has its boundary conditions defined at $\Lambda = \Lambda_0$ by the first part of the decomposition of $W^{\Lambda_0, \Lambda_0}\left( \op_{A_k} \otimes \op_{A_l} \right)$, the first line of equation~\eqref{anom_w_expr_2}, and the boundary conditions of $N^{\Lambda_0, \Lambda_0}\left( \op_{A_k} \otimes \op_{A_l} \right)$ are thus given by the second line of that equation. For $s > 2$, also the functionals $L^{\Lambda, \Lambda_0}\left( \bigotimes_{k\in\{1,\ldots,s\}\setminus\{l,l'\}} \op_{A_k} \otimes \mathsf{A}_2\left( \op_{A_l} \otimes \op_{A_{l'}} \right) \right)$ vanish at $\Lambda = \Lambda_0$, but since also $W^{\Lambda_0, \Lambda_0}\left( \bigotimes_{k=1}^s \op_{A_k} \right) = 0$ in this case, also $N^{\Lambda_0, \Lambda_0}\left( \bigotimes_{k=1}^s \op_{A_k} \right) = 0$ for $s > 2$.

It remains to prove that $N^{\Lambda, \infty}\left( \bigotimes_{k=1}^s \op_{A_k} \right) = 0$, which we do by induction in $s$, starting with $s = 2$. In this case, the boundary conditions are given by the second line of equation~\eqref{anom_w_expr_2}, and we convert the explicit factor of $p^w$ into $x_k$ derivatives according to
\begin{equation}
\mathe^{-\mathi (x_k - x_l) p} \frac{p^w}{w!} = \frac{\mathi^\abs{w}}{w!} \partial^w_{x_k} \mathe^{-\mathi (x_k - x_l) p} \eqend{.}
\end{equation}
We need to generate $D+3$ more $x_k$ derivatives using equation~\eqref{func_sop_genderivs} and integrate the resulting $D+3$ derivatives with respect to $p$ by parts, such that
\begin{equation}
\label{anom_sop_n_kernel}
\mathcal{N}^{\Lambda_0, \Lambda_0}_{\vec{K} \vec{L}^\ddag}\left( \op_{A_k}(x_k) \otimes \op_{A_l}(x_l); \vec{q} \right) = \sum_{\abs{w} = [\op_{A_k}]+[\op_{A_l}]-2} \partial^{w+u}_{x_k} \mathcal{K}^{\Lambda_0, \Lambda_0}_{\vec{K} \vec{L}^\ddag}\left( \op_{A_k}(x_k) \otimes \op_{A_l}(x_l); \vec{q} \right)
\end{equation}
with the kernel
\begin{splitequation}
\mathcal{K}^{\Lambda_0, \Lambda_0}_{\vec{K} \vec{L}^\ddag}\left( \op_{A_k}(x_k) \otimes \op_{A_l}(x_l); \vec{q} \right) &= ([\op_{A_k}]+[\op_{A_l}]-2) \frac{\mathi^\abs{w}}{w!} \int_0^1 (1-t)^{[\op_{A_k}]+[\op_{A_l}]-2} (\mathi t)^{D+3} \\
&\quad\times \int \mathcal{E}_{u^\alpha}\big[ \left( x_k - x_l \right)^\alpha p^\alpha \big] \left[ \partial^{w+u} F(t p) \right] \frac{\total^4 p}{(2\pi)^4} \total t \eqend{,}
\end{splitequation}
a direction $\alpha \in \{1,2,3,4\}$ such that $\abs{x_k^\alpha - x_l^\alpha} \geq \abs{x_k - x_l}/2$ and a multiindex $u = (u^1,u^2,u^3,u^4)$ with $\abs{u} = D+3$ and $u^\beta = \abs{u} \delta^\beta_\alpha$. To bound this kernel, we use the bounds~\eqref{anom_sop_ftp_bound} derived above,\linebreak the bounds~\eqref{expint_bounds} for the function $\mathcal{E}_k$ and the bounds~\eqref{func_sop_logest} for the appearing logarithm. Rescaling $p \to t p$ and setting afterwards $p = x \Lambda_0$, we can use Lemma~\ref{lemma_pint3} to perform the $p$ integral. The $t$ integral is then trivially done, and we obtain
\begin{splitequation}
&\abs{ \mathcal{K}^{\Lambda_0, \Lambda_0}_{\vec{K} \vec{L}^\ddag}\left( \op_{A_k}(x_k) \otimes \op_{A_l}(x_l); \vec{q} \right) } \leq \left( 1 + \ln_+ \frac{1}{\mu \abs{x_k - x_l}} \right) \sup\left( 1, \frac{\abs{\vec{q}}}{\Lambda_0} \right)^{g^{(2)}([\op_{A_k}]+[\op_{A_l}],m+n+2l,0)} \\
&\hspace{10em}\times \sum_{v \leq w+u} \Lambda_0^{3+\abs{v}} \sum_{T^* \in \mathcal{T}^*_{m+n+2}} \mathsf{G}^{T^*,(\vec{0},v)}_{\vec{K} \vec{L}^\ddag M M^\ddag; -D}(\vec{q},0,0; \mu, \Lambda_0) \, \mathcal{P}\left( \ln_+ \frac{\Lambda_0}{\mu} \right) \eqend{.}
\end{splitequation}
We now remove the derivative weight factor corresponding to the $v$ derivatives that acted on the momentum $p$ from the tree, which gives a factor~\eqref{gw_def} $\Lambda_0^{-\abs{v}}$, and amputate the external legs corresponding to $M$ and $M^\ddag$, which gives an additional factor of
\begin{equation}
\Lambda_0^{-[\phi_M]-[\phi_M^\ddag]} = \Lambda_0^{-3}
\end{equation}
according to the estimate~\eqref{amputate} and equation~\eqref{antifield_dim}. The resulting bound on $\mathcal{K}^{\Lambda_0, \Lambda_0}$ is compatible with the bound~\eqref{bound_ks_lambdamu} evaluated at $\Lambda = \Lambda_0$, the representation~\eqref{anom_sop_n_kernel} is also compatible with the representation~\eqref{bound_ls_lambdamu} for dimension $[\op_{A_k}]+[\op_{A_l}]+1$, and the flow equation~\eqref{anom_sop_n_flow} for $N^{\Lambda, \Lambda_0}\left( \op_{A_k} \otimes \op_{A_l} \right)$ contains only the functionals $N^{\Lambda, \Lambda_0}\left( \op_{A_k} \right)$ and $N^{\Lambda, \Lambda_0}$, which have been shown previously to vanish in the limit $\Lambda_0 \to \infty$. Thus, $N^{\Lambda, \Lambda_0}\left( \op_{A_k} \otimes \op_{A_l} \right)$ fulfils the premises of Proposition~\ref{thm_ls_van}, which gives a bound independent of $\Lambda_0$ except for an explicit factor of $\left( \sup(\mu, \Lambda) / \Lambda_0 \right)^\frac{\Delta}{2}$, such that $\lim_{\Lambda_0 \to \infty} N^{\Lambda, \Lambda_0}\left( \op_{A_k} \otimes \op_{A_l} \right) = 0$.

For the functionals $N^{\Lambda, \Lambda_0}\left( \bigotimes_{k=1}^s \op_{A_k} \right)$ with $s > 2$, we already have vanishing boundary conditions. The premises of Proposition~\ref{thm_ls_van} are thus directly fulfilled, such that also $N^{\Lambda, \Lambda_0}\left( \bigotimes_{k=1}^s \op_{A_k} \right) \to 0$ as $\Lambda_0 \to \infty$.
\end{proof}

\begin{proof}[Proof of the properties of \texorpdfstring{$\mathsf{A}_2$}{A2}.]
We have already shown that $\mathsf{A}_2$ is a map as stated in Proposition~\ref{thm_anomward} supported on the diagonal and with the appropriate dimension, and so we only have to show that it is of higher order in $\hbar$. This is done by the same arguments as before: the first non-vanishing contribution to the functionals with an insertion of $\mathsf{A}_2\left( \op_{A_l} \otimes \op_{A_{l'}} \right)$ is of order $\hbar^{k+k'+l}$ with $l \geq 1$ if the first non-vanishing contribution to the functionals with an insertion of $\op_{A_l}$ ($\op_{A_{l'}}$) is of order $\hbar^k$ ($\hbar^{k'}$), since the boundary conditions~\eqref{anom_2op_w_irrelevant} vanish at order $\hbar^{k+k'}$ as $\Lambda_0 \to \infty$ (the counterterms in these boundary conditions first appear at order $\hbar^{k+k'+1}$).
\end{proof}

\subsection{Proof of Propositions~\ref{thm_anomward} and~\ref{thm_brst}: consistency conditions}
\label{sec_brst_consistency}

To fully prove Proposition~\ref{thm_anomward} and from this Proposition~\ref{thm_brst}, it remains to derive consistency conditions on the three types of anomalies $\mathsf{A}_i$. This is done by applying $\st_0$ twice on functionals with and without operator insertions, and using its nilpotency $\st_0^2 = 0$. In the following three subsections, we distinguish the cases of no insertion (condition on $\mathsf{A}_0$), one insertion (condition on $\mathsf{A}_1$) and $s$ insertions (condition on $\mathsf{A}_2$). To shorten the notation, we will again use the abbreviation~\eqref{abbreviation}.

\begin{proof}[Consistency condition for \texorpdfstring{$\mathsf{A}_0$}{A0}.]
Applying $\st_0$ twice on the functionals without insertions, using the anomalous Ward identities~\eqref{anom_ward_0op} and~\eqref{anom_ward_1op_a} we obtain
\begin{equation}
\label{anom_0op_consistency}
0 = \st_0 L\left( \mathsf{A}_0 \right) = L\left( \st \mathsf{A}_0 \right) + L\left( \mathsf{A}_1\left( \mathsf{A}_0 \right) \right) \eqend{.}
\end{equation}
This is the Wess-Zumino consistency condition on the anomaly $\mathsf{A}_0$, which permits us to remove $\mathsf{A}_0$ by a suitable change in boundary conditions for the functionals without operator insertions~\cite{wesszumino1971,hollands2008}, as we now explain in detail.

Expanding the functionals on the right-hand side in a formal power series in $\hbar$, we know that $L\left( \mathsf{A}_0 \right)$ is of order $\hbar^k$ with $k \geq 1$~\eqref{anom_0op_orderh}, and thus also $L\left( \st \mathsf{A}_0 \right)$ is of order $\hbar^k$. Equation~\eqref{anom_1op_orderh} tells us that $L\left( \mathsf{A}_1\left( \mathsf{A}_0 \right) \right)$ is of order $\hbar^{k+l}$ with $l \geq 1$. To lowest non-vanishing order in $\hbar$, the first term on the right-hand side of the consistency condition~\eqref{anom_0op_consistency} thus vanishes, which means that we have
\begin{equation}
\st \mathsf{A}_0 = \bigo{\hbar^{k+1}} \eqend{.}
\end{equation}
Since $\mathsf{A}_0$ is an integrated operator of dimension $5$ and ghost number $1$, the relevant equivariant cohomology is $H^{1,4}_{\mathrm{E}(4)}(\st\vert\total)$ which is empty as explained in the introduction. Therefore, the only solution of this equation is
\begin{equation}
\mathsf{A}_0 = \st \mathsf{B}_0 + \bigo{\hbar^{k+1}}
\end{equation}
for some integrated composite operator $\mathsf{B}_0$ of dimension $4$ and ghost number $1$, which is also of order $\hbar^k$. We then go back to the regularised theory and perform the finite renormalisation
\begin{equation}
L^{\Lambda_0} \to \tilde{L}^{\Lambda_0} = L^{\Lambda_0} - \mathsf{B}_0 \eqend{.}
\end{equation}
This renormalisation changes the boundary conditions~\eqref{anom_0op_w_irrelevant} to
\begin{equation}
W^{\Lambda_0} \to \tilde{W}^{\Lambda_0} = W^{\Lambda_0} - \left( S_0^{\Lambda_0} + L^{\Lambda_0}, \mathsf{B}_0 \right)^{\Lambda_0} + \bigo{\hbar^{k+1}} = W^{\Lambda_0} - \st \mathsf{B}_0 - \delta^{\Lambda_0} \left( \st \mathsf{B}_0 \right) + \bigo{\hbar^{k+1}} \eqend{,}
\end{equation}
and since $\mathsf{B}_0$ is of dimension $4$, this is an allowed change of boundary conditions for marginal functionals. Furthermore, the flow equation for $L^{\Lambda, \Lambda_0}\left( \mathsf{A}_0 \right)$ (equation~\eqref{l_sop_flow_hierarchy} with $s=1$) is linear, and its right-hand side of the flow equation at order $\hbar^k$ only involves the functional $L^{\Lambda, \Lambda_0}$ at order $\hbar^0$ which is unchanged, such that we obtain
\begin{equation}
L\left( \mathsf{A}_0 \right) \to \tilde{L}\left( \tilde{\mathsf{A}}_0 \right) = L\left( \mathsf{A}_0 - \st \mathsf{B}_0 \right) + \bigo{\hbar^{k+1}} = \bigo{\hbar^{k+1}} \eqend{.}
\end{equation}
We have thus removed the anomaly $\mathsf{A}_0$ in order $\hbar^k$, and by repeating the procedure we can remove it to all orders in $\hbar$.
\end{proof}

\begin{proof}[Consistency condition for \texorpdfstring{$\mathsf{A}_1$}{A1}.]
Applying $\st_0$ twice on the functionals with one insertion, and using twice the anomalous Ward identity~\eqref{anom_ward_1op}, we obtain
\begin{equation}
\label{anom_1op_consistency}
0 = L\left( \left( \stq^2 \op_A \right)(x) \right) = L\left( \left( \st^2 \op_A \right)(x) \right) + L\left( \mathsf{A}_1\left( \st \op_A \right)(x) \right) + L\left( \st \mathsf{A}_1\left( \op_A \right)(x) \right) + L\left( \mathsf{A}_1\left( \mathsf{A}_1\left( \op_A \right) \right)(x) \right) \eqend{.}
\end{equation}
This shows that $\stq$ is nilpotent. We now prove that one can choose the boundary conditions for functionals with one insertion such that $\stq = \st$ for classically gauge-invariant operators. For this, assume that $\op_A$ is of this form: it is a local operator of ghost number $0$, form degree $p$ and fulfils $\st \op_A = 0$. Then only the last two functionals remain on the right-hand side of equation~\eqref{anom_1op_consistency}, and expanding in a formal power series in $\hbar$ we know that $\mathsf{A}_1\left( \op_A \right)$ is of order $\hbar^k$ with $k \geq 1$~\eqref{anom_1op_orderh}. Since $\mathsf{A}_1\left( \mathsf{A}_1\left( \op_A \right) \right)$ is then of order $\hbar^{k+l}$ with $l \geq 1$, to lowest non-vanishing order in $\hbar$ we must have
\begin{equation}
\st \mathsf{A}_1\left( \op_A \right) = \bigo{\hbar^{k+1}} \eqend{.}
\end{equation}
Since $\mathsf{A}_1$ augments the ghost number by $1$, the relevant cohomology is $H^{1,p}(\st)$, which is empty~\eqref{cohom_op}, and so the only solution of this equation is given by
\begin{equation}
\mathsf{A}_1\left( \op_A \right) = \st \mathsf{B}_1\left( \op_A \right) + \bigo{\hbar^{k+1}} \eqend{.}
\end{equation}
for some composite operator $\mathsf{B}_1\left( \op_A \right)$. Since the anomaly $\mathsf{A}_1\left( \op_A \right)$ is of dimension $[\mathsf{A}_1\left( \op_A \right)] = [\op_A]+1$ and $\st$ raises the dimension by $1$, it is of dimension $[\mathsf{B}_1\left( \op_A \right)] = [\op_A]$ and, since $\st$ raises also the ghost number by $1$, of the same ghost number as $\op_A$, depending linearly on $\op_A$. We then perform again a finite renormalisation (see equation~\eqref{l_1op_lambda0})
\begin{equation}
L^{\Lambda_0}(\op_A) = \op_A + \delta^{\Lambda_0} \op_A \to \tilde{L}^{\Lambda_0}(\op_A) = \op_A + \delta^{\Lambda_0} \op_A - \mathsf{B}_1\left( \op_A \right) \eqend{,}
\end{equation}
which in this case is a (finite) change of the counterterms of the composite operator $\op_A$ in the functional integral. This changes the boundary conditions~\eqref{anom_1op_w_irrelevant} to
\begin{splitequation}
W^{\Lambda_0}(\op_A) \to \tilde{W}^{\Lambda_0}(\op_A) &= W^{\Lambda_0}(\op_A) - \left( S_0^{\Lambda_0} + L^{\Lambda_0}, \mathsf{B}_1\left( \op_A \right) \right)^{\Lambda_0} + \bigo{\hbar^{k+1}} \\
&= W^{\Lambda_0} - \st \mathsf{B}_1\left( \op_A \right) + \bigo{\hbar^{k+1}} \eqend{,}
\end{splitequation}
and the dimension and ghost number of $\mathsf{B}_1\left( \op_A \right)$ are such that this change is a permissible change in boundary conditions for marginal functionals. This change entails
\begin{equation}
L\left( \mathsf{A}_1\left( \op_A \right) \right) \to \tilde{L}\left( \mathsf{A}_1\left( \tilde{\op}_A \right) \right) = L\Big( \mathsf{A}_1\left( \op_A \right) - \st \mathsf{B}_1\left( \op_A \right) \Big) + \bigo{\hbar^{k+1}} = \bigo{\hbar^{k+1}} \eqend{,}
\end{equation}
and we have removed the anomaly $\mathsf{A}_1$ in order $\hbar^k$. By repeating this procedure, we can thus remove the anomaly for all classically gauge-invariant operators of ghost number $0$, such that for those operators $\stq = \st$. In the general case, one proceeds similarly, but it may happen that the corresponding cohomology is not empty. In general, the BPHZ boundary conditions~\eqref{func_1op_bphz} we imposed for functionals with one operator insertion are thus modified in higher order in $\hbar$ by the removal of anomalies. However, it may be advantageous to stick with the BPHZ conditions, and use the differential $\stq$ even for classically gauge-invariant operators.
\end{proof}

\begin{proof}[Consistency condition for \texorpdfstring{$\mathsf{A}_2$}{A2}.]
Since we do not have any further freedom in changing boundary conditions, we cannot remove the anomaly $\mathsf{A}_2$, but the consistency condition for it gives us the properties of the quantum antibracket~\eqref{bvq_def} stated in Theorem~\ref{thm4}. The symmetry condition~\eqref{bvq_symm} of the quantum antibracket directly follows from the fact that the boundary conditions~\eqref{anom_2op_w_irrelevant} in the regulated theory, which involve the regulated antibracket, satisfy this symmetry. To show the compatibility condition~\eqref{bvq_stq_compat}, assume that $\op_A$ and $\op_B$ are two bosonic operators. For them, the Ward identity~\eqref{ward_sop} reads
\begin{equation}
\st_0 L\Big( \op_A \otimes \op_B \Big) = L\Big( \stq \op_A \otimes \op_B \Big) + L\Big( \op_A \otimes \stq \op_B \Big) + L\Big( \left( \op_A, \op_B \right)_\hbar \Big) \eqend{.}
\end{equation}
Since both $\stq$ and $\left( \cdot, \cdot \right)_\hbar$ are fermionic and the Ward identities~\eqref{ward_1op} and~\eqref{ward_sop} are valid in the given form for bosonic operators, we introduce an auxiliary constant fermion $\epsilon$ and obtain
\begin{equation}
\epsilon \st_0 L\Big( \op_A \otimes \op_B \Big) = L\Big( \epsilon \stq \op_A \otimes \op_B \Big) + L\Big( \op_A \otimes \epsilon \stq \op_B \Big) + L\Big( \epsilon \left( \op_A, \op_B \right)_\hbar \Big) \eqend{.}
\end{equation}
Now applying $\st_0$ another time on this equation and using that $\st_0^2 = 0$, that $\st_0$ and $\stq$ anticommute with $\epsilon$ and that $\stq^2 = 0$, it follows that
\begin{splitequation}
0 &= L\Big( \epsilon \stq \op_A \otimes \stq \op_B \Big) + L\Big( \stq \op_A \otimes \epsilon \stq \op_B \Big) \\
&\quad+ L\Big( \left( \epsilon \stq \op_A, \op_B \right)_\hbar \Big) + L\Big( \left( \op_A, \epsilon \stq \op_B \right)_\hbar \Big) - L\Big( \epsilon \stq \left( \op_A, \op_B \right)_\hbar \Big) \eqend{.}
\end{splitequation}
Since $\op_A$ was assumed to be bosonic, $\epsilon$ also anticommutes with $\stq \op_A$ and the two functionals in the first line cancel each other out. Since for bosonic operators the quantum antibracket is symmetric~\eqref{bvq_symm}, and constant factors can be taken out from the quantum antibracket in the first entry without additional signs (which again follows from the corresponding fact in the regulated theory), we get
\begin{equation}
0 = L\Big( \left( \stq \op_A, \op_B \right)_\hbar \Big) + L\Big( \left( \stq \op_B, \op_A \right)_\hbar \Big) - L\Big( \stq \left( \op_A, \op_B \right)_\hbar \Big) \eqend{,}
\end{equation}
and the compatibility condition~\eqref{bvq_stq_compat} follows by using the symmetry property~\eqref{bvq_symm} again to bring the antibracket in the second term in canonical order. Similar considerations apply if one or both of the operators are fermionic. Analogously, by acting twice with $\st_0$ on a functional with three insertions and using the nilpotency of $\stq$ and the compatibility condition~\eqref{bvq_stq_compat}, we obtain the graded Jacobi identity~\eqref{bvq_jacobi}. Acting twice with $\st_0$ on a functional with four or more insertions does not give any more conditions.
\end{proof}

\section{Discussion}

In this article, we have shown that Yang-Mills gauge theories based on compact semisimple Lie algebras can be consistently treated within the flow equation framework, and that stringent bounds can be obtained for the correlation functions of arbitrary fields and insertions of composite operators. To achieve this goal, we first derived bounds establishing the existence of the physical, unregularised limit $\Lambda \to 0$, $\Lambda_0 \to \infty$ (in perturbation theory) for correlation functions of arbitrary composite operators for an arbitrary, superficially renormalisable massless theory in four dimensions. (These bounds are also of interest in non-gauge theories, \eg, scalar-fermion theories with Yukawa couplings.) In a second step, we used the Batalin-Vilkovisky formalism for gauge-fixed theories to establish anomalous Ward identities for the correlation functions emerge in the unregularised limit. Based on cohomological methods, we showed that these anomalies can be removed by a finite change in the renormalisation conditions. The resulting, non-anomalous Ward identities are expressed in terms of a ``quantum differential'' $\stq$ and a ``quantum anti-bracket'' $( \cdot, \cdot )_\hbar$, which in general differ from the naive classical expressions by terms of order $\bigo{\hbar}$. We also identified subclasses of composite operators where these additional $\bigo{\hbar}$ terms can be made to vanish by another finite change in the renormalisation conditions, which especially includes all classically gauge-invariant operators. In any case, our Ward identities express the gauge invariance of the theory at the quantum level.

Our method of proof extends straightforwardly to other gauge theories which have a BV-extended action linear in the antifields, provided that one can remove the anomaly for the functionals without insertions. In our framework, as in other previous frameworks, this question is decided by the relevant equivariant cohomology of the corresponding classical BV differential $\st$ at dimension $4$ and ghost number $1$. If this cohomology is empty, then any anomaly can automatically be removed, but otherwise, a more refined analysis is needed. For instance, if chiral fermions are included the corresponding equivariant cohomology is not empty but includes the element $\mathcal{A}$~\eqref{cohom_parity}, often called ``gauge anomaly''. In such a case, our cohomological arguments do not work, and to show gauge invariance in this case one would have to invest additional work to trace the exact dependence of the anomaly on the boundary conditions, and show that the numerical coefficient in front of the anomaly vanishes for a certain field content and set of boundary conditions (as it does in the Standard Model~\cite{gengmarshak1989,minahanetal1990}). For theories where the action has a quadratic (or higher) dependence on antifields such as supergravity~\cite{vannieuwenhuizen1981}, some minor changes to our proof are necessary, but we believe that this does not constitute any problem since the proof works as long as one has the correct (naive) power-counting.

In a next step, we would like to extend the results on the Operator Product Expansion (OPE)~\cite{wilson1969,zimmermann1973} derived for scalar fields to gauge theories. The OPE is the statement that any product of local composite operators $\op_{A_1}, \ldots, \op_{A_n}$ can be expanded in the form
\begin{equation}
\op_{A_1}(x_1) \cdots \op_{A_n}(x_n) \sim \sum_B \mathcal{C}^B_{A_1 \cdots A_n}(x_1,\ldots,x_n) \op_B(x_n) \eqend{,}
\end{equation}
where the sum runs over all composite operators of the theory, indexed by the label $B$, and the coefficients $\mathcal{C}$ are distributions. This expansion is understood to hold in the weak sense, \ie, as an insertion into an arbitrary correlation function, and was conjectured to be asymptotic, \ie, the difference between the right-hand side and the left-hand side vanishes if all $x_i \to x_n$. The OPE has not only found important applications, \eg, in deep inelastic scattering~\cite{wilson1969}, but has also proven to be a valuable tool, especially in the analysis of conformal theories (see, \eg, Ref.~[\onlinecite{cft}]). Its main advantage, in our view, is that it is independent of any arbitrary choice of state (which is important, \eg, in curved spacetimes where no preferred vacuum state exists), and thus encodes the algebraic structure of the theory, while the only state-dependent information is contained in the one-point correlation functions. One can thus wonder whether it is possible to define the QFT by its OPE and the one-point correlation functions, similar to the bootstrap programme of two-dimensional conformal field theories where such a construction is possible~\cite{luescher1976,mack1976,belavinetal1984,fredenhagenjoerss1994}. There are two main obstacles to this approach in four dimensions: first, one would like to have a convergent expansion instead of an asymptotic one. Second, even if one imposes additional, natural conditions such as factorisation
\begin{equation}
\mathcal{C}^B_{A_1 \cdots A_n}(x_1,\ldots,x_n) \sim \sum_C \mathcal{C}^C_{A_1 \cdots A_m}(x_1,\ldots,x_m) \mathcal{C}^B_{C A_{m+1} \cdots A_n}(x_m,\ldots,x_n)
\end{equation}
(which formally results by applying the OPE twice, once for the first $m$ operators and then for the remaining ones) and associativity, \eg,
\begin{equation}
\sum_C \mathcal{C}^C_{A_1 A_2}(x_1, x_2) \mathcal{C}^B_{C A_3}(x_2, x_3) = \sum_C \mathcal{C}^C_{A_2 A_3}(x_2, x_3) \mathcal{C}^B_{A_1 C}(x_1, x_3)
\end{equation}
(which results by applying the OPE two times in two different orders), which give strong restrictions on the form of the OPE coefficients $\mathcal{C}$, it does not seem feasible to give a classification of solutions to these conditions in four dimensions. For two-dimensional models, such as the massless Thirring model (see Ref.~[\onlinecite{olbermann}] and references therein), and conformal field theories~\cite{mack1976,huangkong2005,pappadopuloetal2012} it has been proven that associativity holds and that the OPE is convergent. Both statements hold under certain restrictions on the positions of the operator insertions.

Nevertheless, it has been recently shown for Euclidean four-dimensional scalar field theory that a) an OPE exists and converges for arbitrary separations~\cite{hollandskopper2012,hollandetal2014}, b) factorisation and associativity hold as long as one performs the OPE first for the points which lie closest together~\cite{hollandhollands2015b}, and c) there exists an explicit, renormalised recursion formula for the OPE coefficients (written as a power series in the coupling constant), which does only involve the coefficients themselves, starting with the free theory~\cite{hollandhollands2015a}. These results have been derived in perturbation theory, and hold for an arbitrary, but fixed order (\ie, ``to all orders''), both for the massive and the massless scalar field. It would be very interesting to see if similar statements hold also for gauge theories. In such theories, it would be necessary, among other things, to check that the OPE closes on gauge-invariant operators, which means that if the product of operators on the left-hand side is annihilated by the quantum differential $\stq$ (up to contact terms), the expansion of the right-hand side should also only involve $\stq$-invariant operators. The analysis of the present paper provides a basis for a rigorous analysis of this problem.

\begin{acknowledgments}
The research leading to these results has received funding from the European Research Council under the European Union's Seventh Framework Programme (FP7/2007-2013) / ERC Starting Grant No. QC\&C 259562. M.~F. thanks M.~Taslimitehrani and T.~Jerabek for discussions. He also thanks Ch.~Kopper for discussions and the warm hospitality extended to him during a stay at the {\'E}cole polytechnique. The authors thank R.~Guida and Ch.~Kopper for making available results from the unpublished manuscript~\cite{guidakopper2015}, from which some Lemmas of the appendix have been adopted, and A.~Efremov for pointing out the indefiniteness of the free covariance in the standard $R_\xi$ gauges, and discussions on this important issue. We also thank A.~Efremov, R.~Guida and Ch.~Kopper for making available to us their forthcoming manuscript on Slavnov-Taylor identities for SU(2) gauge theory~\cite{efremovguidakopper2015}, and for discussions on many issues related to renormalisation.
\end{acknowledgments}

\appendix

\section{Lemmata}

\begin{lemma}[Supremum estimates]
\label{lemma_biggermomentum}
For $K \geq k \geq 0$, $c \geq 0$ and $a \geq b \geq 0$ we have
\begin{equation}
\left( \frac{\sup(a,k)}{\sup(b,k)} \right)^c \geq \left( \frac{\sup(a,K)}{\sup(b,K)} \right)^c \eqend{.}
\end{equation}
\end{lemma}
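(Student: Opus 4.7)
The plan is to reduce to the case $c=1$ and then establish monotonicity of the function
\[
f(t) \;\equiv\; \frac{\sup(a,t)}{\sup(b,t)}, \qquad t \in [0,\infty),
\]
in the variable $t$. Since $c \geq 0$, the map $x \mapsto x^c$ is non-decreasing on $[1,\infty)$, and the observation $a \geq b \geq 0$ forces $\sup(a,t) \geq \sup(b,t)$, so both ratios appearing in the claim are $\geq 1$. Hence it suffices to prove that $f(k) \geq f(K)$ whenever $0 \leq k \leq K$.

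To prove monotonicity of $f$, I would split $[0,\infty)$ into the three natural intervals determined by the thresholds $b$ and $a$. On $[0,b]$ one has $\sup(a,t)=a$ and $\sup(b,t)=b$, so $f(t) = a/b$ is constant. On $[b,a]$ one has $\sup(a,t)=a$ and $\sup(b,t)=t$, so $f(t) = a/t$ is non-increasing in $t$. On $[a,\infty)$ one has $\sup(a,t)=\sup(b,t)=t$, so $f(t) = 1$. Matching values at $t=b$ (where $a/b = a/b$) and at $t=a$ (where $a/a = 1$) shows $f$ is continuous and globally non-increasing on $[0,\infty)$.

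With $0 \leq k \leq K$ this yields $f(k) \geq f(K) \geq 1$, and raising to the power $c \geq 0$ preserves the inequality, giving the desired bound. There is essentially no obstacle here; the only mild subtlety is verifying the matching of the three pieces at the thresholds $t=b$ and $t=a$ so that $f$ is indeed globally non-increasing (in particular in the degenerate case $a=b$, where the interval $[b,a]$ collapses and $f$ jumps directly from the constant value $a/b = 1$ to $1$).
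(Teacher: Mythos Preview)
Your proof is correct. The paper's own argument is a direct case-by-case analysis on the relative ordering of all four quantities $a,b,K,k$: it first disposes of $K\geq a$ (where the right-hand side equals $1$) and then checks the three remaining orderings $a\geq b\geq K\geq k$, $a\geq K\geq b\geq k$, $a\geq K\geq k\geq b$ individually. Your approach instead fixes $a,b$ and shows that $f(t)=\sup(a,t)/\sup(b,t)$ is non-increasing in $t$ by splitting into the three intervals $[0,b]$, $[b,a]$, $[a,\infty)$; the claim then follows immediately from $k\leq K$. The two arguments are equivalent in content, but yours is organized more cleanly: reducing to a one-variable monotonicity statement avoids having to enumerate the joint orderings of $k$ and $K$ against $a$ and $b$, and the continuity check at the thresholds makes the global monotonicity transparent.
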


\begin{proof}
If $K \geq a,b$, the right-hand side is equal to $1$ and the inequality is obviously true. For the remaining cases, we make a case-by-case analysis:
\begin{itemize}
\item $a \geq b \geq K \geq k$: $\left( \frac{\sup(a,k)}{\sup(b,k)} \right)^c = \left( \frac{a}{b} \right)^c = \left( \frac{\sup(a,K)}{\sup(b,K)} \right)^c \eqend{,}$
\item $a \geq K \geq b \geq k$: $\left( \frac{\sup(a,k)}{\sup(b,k)} \right)^c = \left( \frac{a}{b} \right)^c \geq \left( \frac{a}{K} \right)^c = \left( \frac{\sup(a,K)}{\sup(b,K)} \right)^c \eqend{,}$
\item $a \geq K \geq k \geq b$: $\left( \frac{\sup(a,k)}{\sup(b,k)} \right)^c = \left( \frac{a}{k} \right)^c \geq \left( \frac{a}{K} \right)^c = \left( \frac{\sup(a,K)}{\sup(b,K)} \right)^c \eqend{.}$
\end{itemize}
\end{proof}

\begin{lemma}[\texorpdfstring{$p$}{p} integration]
\label{lemma_pint}
For any function $f(x) \geq 0$ such that
\begin{equation}
\int \mathe^{- \alpha \abs{x}^2} f(x) \total^4 x < \infty
\end{equation}
for all $\alpha > 0$, and for $\beta_i \geq 1$ we have
\begin{equation}
\int \mathe^{-\alpha \abs{x}^2} f(x) \prod_{i=1}^n \sup(\abs{x + a_i}, \beta_i)^{m_i} \total^4 x \leq c \prod_{i=1}^n \sup(\abs{a_i}, \beta_i)^{m_i}
\end{equation}
for some positive constant $c$.
\end{lemma}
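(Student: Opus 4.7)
My plan is to reduce the claim to a pointwise estimate that separates the $a_i$-dependence from the $x$-dependence of the integrand, and then to absorb the residual polynomial growth in $|x|$ with the Gaussian weight. Specifically, write $\gamma_i \equiv \sup(\abs{a_i},\beta_i)$; the goal is to prove the elementary inequality
\begin{equation}
\sup(\abs{x+a_i}, \beta_i)^{m_i} \,\leq\, \gamma_i^{m_i}\,(1+2\abs{x})^{\abs{m_i}} \qquad (i=1,\ldots,n),
\end{equation}
for any real $m_i$ and any $\beta_i\geq 1$. Once this is established, multiplying over $i$ gives
\begin{equation}
\prod_{i=1}^n \sup(\abs{x+a_i}, \beta_i)^{m_i} \,\leq\, \Big[\prod_{i=1}^n \gamma_i^{m_i}\Big]\,(1+2\abs{x})^{\sum_i \abs{m_i}},
\end{equation}
and the lemma follows because the remaining integral
\begin{equation}
\int \mathe^{-\alpha\abs{x}^2} f(x)\,(1+2\abs{x})^{\sum_i\abs{m_i}}\,\total^4 x
\end{equation}
is finite: bounding $(1+2\abs{x})^{M} \leq C_{M}\,\mathe^{\alpha\abs{x}^2/2}$ reduces it to $C_M \int \mathe^{-\alpha\abs{x}^2/2} f(x)\,\total^4 x$, which is finite by hypothesis.

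To prove the pointwise estimate, I would split on the sign of $m_i$. If $m_i\geq 0$, use the triangle inequality $\abs{x+a_i}\leq \abs{x}+\abs{a_i}\leq \abs{x}+\gamma_i$ together with $\beta_i\leq \gamma_i$ to get $\sup(\abs{x+a_i},\beta_i)\leq \gamma_i(1+\abs{x}/\gamma_i)\leq \gamma_i(1+\abs{x})$, since $\gamma_i\geq \beta_i\geq 1$; raising to $m_i$ yields the bound (even with $1+\abs{x}$ in place of $1+2\abs{x}$). If $m_i<0$, I need a lower bound on $\sup(\abs{x+a_i},\beta_i)$. The case $\abs{a_i}<\beta_i$ is immediate, since then $\gamma_i=\beta_i\leq \sup(\abs{x+a_i},\beta_i)$. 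For $\abs{a_i}\geq \beta_i$, I split further: when $\abs{x}\leq \abs{a_i}/2$ the reverse triangle inequality gives $\abs{x+a_i}\geq \abs{a_i}/2=\gamma_i/2$, and when $\abs{x}>\abs{a_i}/2$ one has $\sup(\abs{x+a_i},\beta_i)\geq \beta_i \geq \gamma_i/(1+2\abs{x}/\beta_i)\geq \gamma_i/(1+2\abs{x})$. In either subcase $\sup(\abs{x+a_i},\beta_i)\geq \gamma_i/(1+2\abs{x})$; inverting and raising to $-m_i=\abs{m_i}$ gives the claim.

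The only mildly delicate point is this case analysis for $m_i<0$, but it is elementary. The rest is routine: combine the factors, integrate, and use the absorbing property of the Gaussian against any polynomial in $\abs{x}$, invoking the hypothesis $\int \mathe^{-\alpha\abs{x}^2} f(x)\,\total^4 x<\infty$ for every $\alpha>0$ (applied here at $\alpha/2$). The constant $c$ then depends only on $\alpha$, on the exponents $m_i$, and on $f$ through this integral, but not on the shifts $a_i$ or the parameters $\beta_i$.
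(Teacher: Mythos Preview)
Your approach is correct and in fact more direct than the paper's. You establish the pointwise inequality $\sup(\abs{x+a_i},\beta_i)^{m_i}\leq \gamma_i^{m_i}(1+2\abs{x})^{\abs{m_i}}$ uniformly in $i$ and then absorb the resulting polynomial growth into the Gaussian via the hypothesis at $\alpha/2$. One minor slip: in the subcase $\abs{a_i}\geq\beta_i$, $\abs{x}\leq\abs{a_i}/2$, the bound $\abs{x+a_i}\geq\gamma_i/2$ only yields $\geq\gamma_i/(1+2\abs{x})$ when $\abs{x}\geq 1/2$. For $\abs{x}<1/2$ you should either use the sharper $\abs{x+a_i}\geq\abs{a_i}-\abs{x}$ together with $\abs{a_i}\geq\beta_i\geq 1$ (a short computation then gives what you need), or simply replace $1+2\abs{x}$ by $2(1+\abs{x})$ throughout, which costs only an overall constant.

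The paper instead proceeds by induction on $n$: the case $n=1$ is handled by a case analysis similar in spirit to yours (triangle inequality for $m\geq 0$, a split at $\abs{x}=\abs{a}/2$ for $m<0$), and the step from $N-1$ to $N$ factors is done by Cauchy--Schwarz, splitting $\mathe^{-\alpha\abs{x}^2}$ into two halves and grouping $N-1$ factors with $f$ and the last factor separately. Your route avoids both the induction and the Cauchy--Schwarz step entirely, at the price of a slightly cruder pointwise estimate; since the statement only asks for \emph{some} constant $c$, nothing is lost. The paper's method has the structural advantage that the $n=1$ analysis is done once and then bootstrapped, but your uniform pointwise bound achieves the same effect in a single stroke. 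Overall your argument is shorter and arguably cleaner for this particular lemma.
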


\begin{proof}
We proceed by induction on the number of factors $n$. The result is obvious for $n = 0$. If the result has been proven for $n = 1$ and $n = N-1$, we calculate
\begin{splitequation}
&\left[ \int \mathe^{- \alpha \abs{x}^2} f(x) \prod_{i=1}^N \sup(\abs{x + a_i}, \beta_i)^{m_i} \total^4 x \right]^2 \\
&\quad= \left[ \int \left( \mathe^{-\frac{\alpha}{2} \abs{x}^2} f(x) \prod_{i=1}^{N-1} \sup(\abs{x + a_i}, \beta_i)^{m_i} \right) \left( \mathe^{-\frac{\alpha}{2} \abs{x}^2} \sup(\abs{x + a_N}, \beta_N)^{m_N} \right) \total x^4 \right]^2 \\
&\leq \int \left( \mathe^{-\frac{\alpha}{2} \abs{x}^2} f(x) \prod_{i=1}^{N-1} \sup(\abs{x + a_i}, \beta_i)^{m_i} \right)^2 \total^4 x \int \left( \mathe^{-\frac{\alpha}{2} \abs{x}^2} \sup(\abs{x + a_N}, \beta_N)^{m_N} \right)^2 \total^4 x \\
&= \int \mathe^{-\alpha \abs{x}^2} f^2(x) \prod_{i=1}^{N-1} \sup(\abs{x + a_i}, \beta_i)^{2m_i} \total^4 x \int \mathe^{-\alpha \abs{x}^2} \sup(\abs{x + a_N}, \beta_N)^{2m_N} \total^4 x \\
&\leq \left( c \prod_{i=1}^{N-1} \sup(\abs{a_i}, \beta_i)^{2m_i} \right) \left( c \sup(\abs{a_N}, \beta_N)^{2m_N} \right) \eqend{,}
\end{splitequation}
where we used the Cauchy-Schwarz inequality in the second step, and the result follows for $n = N$ by taking the square root.

To show the case $n = 1$, start with positive $m$. Then we have
\begin{splitequation}
&\int \mathe^{-\alpha \abs{x}^2} f(x) \sup(\abs{x + a}, \beta)^m \total^4 x \leq \int \mathe^{-\alpha \abs{x}^2} f(x) \sup(\abs{x} + \sup(\abs{a},\beta), \beta)^m \total^4 x \\
&\hspace{8em}= \sup(\abs{a},\beta)^m \int \mathe^{-\alpha \abs{x}^2} f(x) \left( \frac{\abs{x}}{\sup(\abs{a},\beta)} + 1 \right)^m \total^4 x \\
&\hspace{8em}\leq \sup(\abs{a},\beta)^m \int \mathe^{-\alpha \abs{x}^2} f(x) \left( \abs{x} + 1 \right)^m \total^4 x \leq c \sup(\abs{a},\beta)^m \eqend{.}
\end{splitequation}
For negative powers, we first write
\begin{equation}
\sup(\abs{x+a},\beta)^{-m} = \sup(\abs{x},1)^m \left[ \sup(\abs{x},1) \sup(\abs{x+a},\beta) \right]^{-m} \eqend{.}
\end{equation}
The last $\sup$ can be estimated against $\beta^{-m}$, and for the second one we have
\begin{equation}
\sup(\abs{x}, 1) \geq \sup\left( \frac{1}{2} \abs{a}, 1 \right) = \frac{1}{2} \sup\left( \abs{a}, 2 \right) \geq \frac{1}{2} \sup\left( \abs{a}, 1 \right)
\end{equation}
for $\abs{x} \geq \frac{1}{2} \abs{a}$, such that
\begin{equation}
\sup(\abs{x+a},\beta)^{-m} \leq \beta^{-m} \sup(\abs{x},1)^m \left[ \frac{1}{2} \sup(\abs{a},1) \right]^{-m}
\end{equation}
in this case. Thus it follows that
\begin{splitequation}
&\int_{\abs{x} \geq \frac{1}{2} \abs{a}} \mathe^{-\alpha \abs{x}^2} f(x) \sup(\abs{x + a}, \beta)^{-m} \total^4 x \\
&\qquad\leq \left[ \frac{\beta}{2} \sup(\abs{a},1) \right]^{-m} \int_{\abs{x} \geq \frac{1}{2} \abs{a}} \mathe^{-\alpha \abs{x}^2} f(x) \sup(\abs{x},1)^m \total^4 x \\
&\qquad\leq c \beta^{-m} \sup(\abs{a},1)^{-m} \leq c \sup(\abs{a},\beta)^{-m} \eqend{.}
\end{splitequation}
For $\abs{x} < \frac{1}{2} \abs{a}$, we have
\begin{splitequation}
\sup(\abs{x+a},\beta)^{-m} &\leq \sup(\abs{a}-\abs{x},\beta)^{-m} \leq \sup\left( \frac{1}{2} \abs{a}, \beta \right)^{-m} \\
&\leq 2^m \sup\left( \abs{a}, 2 \beta \right)^{-m} \leq c \sup\left( \abs{a}, \beta \right)^{-m}
\end{splitequation}
and thus
\begin{equation}
\int_{\abs{x} < \frac{1}{2} \abs{a}} \mathe^{-\alpha \abs{x}^2} f(x) \sup(\abs{x + a}, \beta)^{-m} \total x \leq c \sup\left( \abs{a}, \beta \right)^{-m} \eqend{.}
\end{equation}
By summing the two results the Lemma follows.
\end{proof}

\begin{lemma}[\texorpdfstring{$p$}{p} integration, part 2]
\label{lemma_pint2}
For $\delta_i > 0$, $\gamma_i \geq 1$ and with the other assumptions as in Lemma~\ref{lemma_pint}, we have
\begin{splitequation}
\label{pint2_hyp}
&\int \mathe^{-\alpha \abs{x}^2} f(x) \prod_{k=1}^s \sup\left( \abs{\vec{b}_k, x, -x}, \gamma_k \right)^{\delta_k} \prod_{j=1}^t \sup\left( \eta_{d_j}(\vec{d},x,-x), \gamma_{s+j} \right)^{-\delta_{s+j}} \prod_{i=1}^n \sup(\abs{x + a_i}, \beta_i)^{m_i} \total^4 x \\
&\qquad\leq c \prod_{k=1}^s \sup\left( \abs{\vec{b}_k}, \gamma_k \right)^{\delta_k} \prod_{j=1}^t \sup\left( \eta_{d_j}(\vec{d}), \gamma_{s+j} \right)^{-\delta_{s+j}} \prod_{i=1}^n \sup(\abs{a_i}, \beta_i)^{m_i} \eqend{,}
\end{splitequation}
where $\eta_j$ is defined in equation~\eqref{eta_i_def}. The same estimate is valid if we replace $\eta_j$ by $\bar{\eta}_j$, defined in equation~\eqref{bareta_i_def}.
\end{lemma}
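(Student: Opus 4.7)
The plan is to reduce Lemma~\ref{lemma_pint2} to Lemma~\ref{lemma_pint} by showing that each new $x$-dependent factor in the integrand can be bounded pointwise by its $x=0$ counterpart times a polynomial weight $(1+\abs{x})^{\delta}$, which is then absorbed into the test function $f$.

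First I establish the arithmetic inequality: for $a,b\geq 0$ and $\gamma\geq 1$,
\[
\sup(a+b,\gamma)\leq \sup(a,\gamma)+b\leq \sup(a,\gamma)\,(1+b),
\]
where the first step is a short case distinction on whether $a\geq\gamma$ and the second uses $\sup(a,\gamma)\geq\gamma\geq 1$. For the positive-power factors, note that in the definition~\eqref{h_def} any subset of $\{\vec{b}_k,x,-x\}$ containing both $x$ and $-x$ reduces to a subset of $\vec{b}_k$, so $\abs{\vec{b}_k,x,-x}\leq\abs{\vec{b}_k}+\abs{x}$. Applying the arithmetic inequality with $a=\abs{\vec{b}_k}$, $b=\abs{x}$ and raising to $\delta_k$ gives
\[
\sup(\abs{\vec{b}_k,x,-x},\gamma_k)^{\delta_k}\leq \sup(\abs{\vec{b}_k},\gamma_k)^{\delta_k}\,(1+\abs{x})^{\delta_k}.
\]

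For the negative-power factors the key pointwise bound is
\[
\eta_{d_j}(\vec{d},x,-x)\geq \eta_{d_j}(\vec{d})-\abs{x}.
\]
To see this, split the subsets $Q\subseteq\{\vec{d}_{\setminus j},x,-x\}$ appearing in the defining infimum~\eqref{eta_i_def} by their content in $\{\pm x\}$: if $Q$ contains neither or both of $\pm x$, the corresponding sum gives at least $\eta_{d_j}(\vec{d})$; if it contains exactly one of them, the reverse triangle inequality yields $\abs{d_j+\sum_{q\in Q'}q\pm x}\geq \eta_{d_j}(\vec{d})-\abs{x}$. Applying the arithmetic inequality above with $a=\eta_{d_j}(\vec{d},x,-x)$ and $b=\abs{x}$ gives $\sup(\eta_{d_j}(\vec{d}),\gamma_{s+j})\leq \sup(\eta_{d_j}(\vec{d},x,-x),\gamma_{s+j})\,(1+\abs{x})$, and raising to the power $-\delta_{s+j}$ produces the companion bound
\[
\sup(\eta_{d_j}(\vec{d},x,-x),\gamma_{s+j})^{-\delta_{s+j}}\leq \sup(\eta_{d_j}(\vec{d}),\gamma_{s+j})^{-\delta_{s+j}}\,(1+\abs{x})^{\delta_{s+j}}.
\]

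Inserting both estimates factors out $\prod_k\sup(\abs{\vec{b}_k},\gamma_k)^{\delta_k}\prod_j\sup(\eta_{d_j}(\vec{d}),\gamma_{s+j})^{-\delta_{s+j}}$ from the integral and leaves an integrand of the form $\mathe^{-\alpha\abs{x}^2}\tilde f(x)\prod_i\sup(\abs{x+a_i},\beta_i)^{m_i}$ with $\tilde f(x)\equiv f(x)(1+\abs{x})^{\sum_k\delta_k+\sum_j\delta_{s+j}}$. Since the polynomial factor is absorbed by replacing $\alpha$ by $\alpha/2$ in the Gaussian, $\tilde f$ still satisfies the hypothesis of Lemma~\ref{lemma_pint}, which supplies the bound for the remaining product. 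The $\bar\eta_j$-version is identical: the only difference between~\eqref{bareta_i_def} and~\eqref{eta_i_def} is whether $d_n$ is included in the range of sub-sums, which plays no role in the pointwise argument above. I do not anticipate any serious obstacle; the only non-routine ingredient is the pointwise bound $\eta_{d_j}(\vec{d},x,-x)\geq\eta_{d_j}(\vec{d})-\abs{x}$, which follows from the reverse triangle inequality once the sub-sums are organised by their $\pm x$ content.
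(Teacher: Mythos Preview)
Your proof is correct and takes a genuinely different route from the paper's. The paper proceeds by induction on $s+t$: for each additional factor it bounds the outer $\sup$ (resp.\ $\sup^{-1}$) of an extremum over subsets by the finite \emph{sum} over all subsets, e.g.\ $\sup(\inf_S\abs{\cdot},\gamma)^{-\delta}\le\sum_S\sup(\abs{\cdot},\gamma)^{-\delta}$, and then applies the induction hypothesis (ultimately Lemma~\ref{lemma_pint}) to each summand, treating the extra $\pm x$ as just another shift $a_i$; after the integration sets $x\to0$ the sum collapses back to the target factor. You instead exploit the specific pair $x,-x$ directly via the pointwise triangle-inequality bounds $\abs{\vec b_k,x,-x}\le\abs{\vec b_k}+\abs{x}$ and $\eta_{d_j}(\vec d,x,-x)\ge\eta_{d_j}(\vec d)-\abs{x}$, which factor out the desired constants at the cost of a single polynomial weight $(1+\abs{x})^{\sum_i\delta_i}$ that is absorbed into the Gaussian, so a single application of Lemma~\ref{lemma_pint} suffices. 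Your argument is more elementary (no induction) and produces a constant depending only on $\sum_i\delta_i$ rather than growing exponentially with the cardinalities of $\vec b_k$ and $\vec d$; the paper's subset-sum trick, on the other hand, does not rely on the cancellation between $x$ and $-x$ and would extend verbatim to appending arbitrary additional momenta.
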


\begin{proof}
We do induction on the number of factors $s+t$, starting with $t = 0$. For $s = 0$, the bound is proven by Lemma~\ref{lemma_pint}. Assume thus that it has been shown for $(s-1)$ factors. Denote the number of elements in $\vec{b}_s$ by $r_s$, define $b_{s,r_s+1} \equiv x$, $b_{s,r_s+2} \equiv -x$ and estimate
\begin{equation}
\sup\left( \abs{\vec{b}_s, x, -x}, \gamma_s \right)^{\delta_s} = \sup_{S \subseteq \{1,\ldots,r_s+2\}} \sup\left( \abs{\sum_{i \in S} b_{s,i}}, \gamma_s \right)^{\delta_s} \leq \sum_{S \subseteq \{1,\ldots,r_s+2\}} \sup\left( \abs{\sum_{i \in S} b_{s,i}}, \gamma_s \right)^{\delta_s} \eqend{.}
\end{equation}
Applying the induction hypothesis~\eqref{pint2_hyp} to each of the summands gives
\begin{splitequation}
&\int \mathe^{-\alpha \abs{x}^2} f(x) \prod_{j=1}^s \sup\left( \abs{\vec{b}_j, x, -x}, \gamma_j \right)^{\delta_j} \prod_{i=1}^n \sup(\abs{x + a_i}, \beta_i)^{m_i} \total^4 x \\
&\qquad\leq \sum_{S \subseteq \{1,\ldots,r_s+2\}} c_S \sup\left( \abs{\sum_{i \in S} b_{s,i}}, \gamma_s \right)^{\delta_s} \prod_{j=1}^{s-1} \sup\left( \abs{\vec{b}_j}, \gamma_j \right)^{\delta_j} \prod_{i=1}^n \sup(\abs{a_i}, \beta_i)^{m_i} \eqend{,}
\end{splitequation}
where now $b_{s,r_s+1} = b_{s,r_s+2} = 0$. Thus the sum can be restricted to $S \subseteq \{1,\ldots,r_s\}$, and the Lemma follows with $c = \sum_{S \subseteq \{1,\ldots,r_s+2\}} c_S$. For $t > 0$, we again do induction on the number of factors $t$, and can assume that the Lemma has been shown for $(t-1)$ factors. We then denote the number of elements in $\vec{d}$ by $r$, define $d_{r+1} \equiv x$, $d_{r+2} \equiv -x$ and estimate
\begin{equation}
\begin{split}
\sup\left( \eta_{d_t}(\vec{d},x,-x), \gamma_{s+t} \right)^{-\delta_{s+t}} &= \sup\left( \inf_{S \subseteq \{1,\ldots,r+2\}\setminus\{t\}} \abs{d_t + \sum_{i \in S} d_i}, \gamma_{s+t} \right)^{-\delta_{s+t}} \\
&\leq \sum_{S \subseteq \{1,\ldots,r+2\}\setminus\{t\}} \sup\left( \abs{d_t + \sum_{i \in S} d_i}, \gamma_{s+t} \right)^{-\delta_{s+t}} \eqend{.}
\end{split}
\end{equation}
Applying the induction hypothesis~\eqref{pint2_hyp} to each of the summands gives
\begin{splitequation}
&\int \mathe^{-\alpha \abs{x}^2} f(x) \prod_{k=1}^s \sup\left( \abs{\vec{b}_k, x, -x}, \gamma_k \right)^{\delta_k} \prod_{j=1}^t \sup\left( \eta_{d_j}(\vec{d},x,-x), \gamma_{s+j} \right)^{-\delta_{s+j}} \\
&\quad\times \prod_{i=1}^n \sup(\abs{x + a_i}, \beta_i)^{m_i} \total^4 x \leq \sum_{S \subseteq \{1,\ldots,r+2\}\setminus\{t\}} c_S \sup\left( \abs{d_t + \sum_{i \in S} d_i}, \gamma_{s+t} \right)^{-\delta_{s+t}} \\
&\hspace{8em}\times \prod_{k=1}^s \sup\left( \abs{\vec{b}_k}, \gamma_k \right)^{\delta_k} \prod_{j=1}^{t-1} \sup\left( \eta_{d_j}(\vec{d}), \gamma_{s+j} \right)^{-\delta_{s+j}} \prod_{i=1}^n \sup(\abs{a_i}, \beta_i)^{m_i}\eqend{,}
\end{splitequation}
where now $d_{r+1} = d_{r+2} = 0$. Thus the sum can be restricted to $S \subseteq \{1,\ldots,r\}\setminus\{t\}$, and since
\begin{equation}
\sup\left( \abs{d_t + \sum_{i \in S} d_i}, \gamma_{s+t} \right)^{-\delta_{s+t}} \leq \sup\left( \eta_t(\vec{d}), \gamma_{s+t} \right)^{-\delta_{s+t}}
\end{equation}
the Lemma follows with $c = \sum_{S \subseteq \{1,\ldots,r+2\}\setminus\{t\}} c_S$.
\end{proof}

\begin{lemma}[\texorpdfstring{$p$}{p} integration, part 3]
\label{lemma_pint3}
For any $y$, any $A^i \geq 1$, any function $f(x) \geq 0$ such that
\begin{equation}
\int \mathe^{- \alpha \abs{x}^2} f^2(x) \total^4 x < \infty
\end{equation}
for all $\alpha > 0$, and with the other assumptions as in Lemma~\ref{lemma_pint2}, we have
\begin{splitequation}
\label{pint3_hyp}
&\int \mathe^{-\alpha \abs{x}^2} f(x) \prod_{a=1}^4 \left( A^i + \ln_+ \frac{1}{\abs{x^i+y^i}} \right) \prod_{k=1}^s \sup\left( \abs{\vec{b}_k, x, -x}, \gamma_k \right)^{\delta_k} \\
&\qquad\qquad\times \prod_{j=1}^t \sup\left( \eta_{d_j}(\vec{d},x,-x), \gamma_{s+j} \right)^{-\delta_{s+j}} \prod_{i=1}^n \sup(\abs{x + a_i}, \beta_i)^{m_i} \total^4 x \\
&\quad\leq c A^1 A^2 A^3 A^4 \prod_{k=1}^s \sup\left( \abs{\vec{b}_k}, \gamma_k \right)^{\delta_k} \prod_{j=1}^t \sup\left( \eta_{d_j}(\vec{d}), \gamma_{s+j} \right)^{-\delta_{s+j}} \prod_{i=1}^n \sup(\abs{a_i}, \beta_i)^{m_i} \eqend{.}
\end{splitequation}
\end{lemma}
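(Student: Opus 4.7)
The plan is to separate the new logarithmic factors $\prod_{i=1}^4 (A^i + \ln_+ 1/|x^i+y^i|)$ from the remaining integrand by a Cauchy--Schwarz splitting, and then to handle each resulting piece by a previously established estimate or a one-dimensional calculation. Concretely, write $\mathe^{-\alpha|x|^2} = \mathe^{-\alpha|x|^2/2}\cdot\mathe^{-\alpha|x|^2/2}$ and group the first factor with $f(x)$ and all the sup-powers, and the second factor with the four logarithmic terms. Cauchy--Schwarz then bounds the integral on the left-hand side of \eqref{pint3_hyp} by the product $\sqrt{I_1\,I_2}$ with
\begin{equation*}
I_1 \equiv \int \mathe^{-\alpha\abs{x}^2} f^2(x)\prod_{k=1}^s \sup(\abs{\vec{b}_k,x,-x},\gamma_k)^{2\delta_k}\prod_{j=1}^t \sup(\eta_{d_j}(\vec{d},x,-x),\gamma_{s+j})^{-2\delta_{s+j}}\prod_{i=1}^n \sup(\abs{x+a_i},\beta_i)^{2m_i}\,\total^4 x
\end{equation*}
and
\begin{equation*}
I_2 \equiv \int \mathe^{-\alpha\abs{x}^2}\prod_{i=1}^4 \left(A^i+\ln_+\frac{1}{\abs{x^i+y^i}}\right)^{2}\,\total^4 x.
\end{equation*}

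The bound on $I_1$ is immediate from Lemma~\ref{lemma_pint2} applied with $f^2$ in place of $f$ (the hypothesis $\int\mathe^{-\alpha|x|^2}f^2<\infty$ is exactly what is assumed here) and with the exponents $\delta_i,m_i$ doubled; this produces the squared right-hand side of \eqref{pint2_hyp} with the sup-factors raised to twice the powers appearing in \eqref{pint3_hyp}. For $I_2$, the crucial observation is that both the Gaussian weight $\mathe^{-\alpha|x|^2} = \prod_{i=1}^4 \mathe^{-\alpha (x^i)^2}$ and the logarithmic factors split along coordinate directions, so $I_2$ factorises as a product of four one-dimensional integrals
\begin{equation*}
I_2 = \prod_{i=1}^4 \int_{\mathbb{R}} \mathe^{-\alpha (x^i)^2}\left(A^i+\ln_+\frac{1}{\abs{x^i+y^i}}\right)^2 \total x^i.
\end{equation*}
Each one-dimensional factor is bounded by $c(A^i)^2$ uniformly in $y^i$: using $(A+L)^2\le 2A^2+2L^2$, splitting the domain into $\abs{x^i+y^i}\le 1$ (where $\ln_+^2(1/\abs{x^i+y^i})$ is integrable with a bound independent of $y^i$) and its complement (where $\ln_+$ vanishes), and using $A^i\ge 1$ to absorb the additive constant. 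Thus $I_2\le c\,(A^1 A^2 A^3 A^4)^2$.

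Taking square roots of $I_1 I_2$ yields the desired bound \eqref{pint3_hyp}. The only step that requires any care is verifying that Lemma~\ref{lemma_pint2} still applies once one has doubled the exponents and replaced $f$ by $f^2$; this is automatic because the constraints $\delta_j>0$, $\gamma_j\ge 1$, $\beta_i\ge 1$ are preserved under doubling, and the integrability hypothesis with $f^2$ is given as part of the present statement. No algebraic obstruction arises in either piece, so the main substantive ingredient is the elementary uniform-in-$y$ integrability of $\ln_+^2(1/\abs{z})$ against a Gaussian in one variable; nothing further is required.
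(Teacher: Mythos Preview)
Your proof is correct and follows essentially the same approach as the paper: a Cauchy--Schwarz splitting that isolates the logarithmic factors in one integral (which factorises into four one-dimensional integrals, each bounded by $c(A^i)^2$ uniformly in $y^i$) and the sup-powers with $f^2$ in the other (handled by Lemma~\ref{lemma_pint2}). The only cosmetic difference is that the paper computes the one-dimensional integrals explicitly rather than invoking $(A+L)^2\le 2A^2+2L^2$, but the substance is identical.
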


\begin{proof}
We again use the Cauchy-Schwarz inequality, which gives
\begin{splitequation}
&\int \mathe^{-\alpha \abs{x}^2} f(x) \prod_{a=1}^4 \left( A^i + \ln_+ \frac{1}{\abs{x^i+y^i}} \right) \prod_{k=1}^s \sup\left( \abs{\vec{b}_k, x, -x}, \gamma_k \right)^{\delta_k} \\
&\qquad\quad\times \prod_{j=1}^t \sup\left( \eta_{d_j}(\vec{d},x,-x), \gamma_{s+j} \right)^{-\delta_{s+j}} \prod_{i=1}^n \sup(\abs{x + a_i}, \beta_i)^{m_i} \total^4 x \\
&\quad\leq \left[ \int \mathe^{-\alpha \abs{x}^2} \prod_{a=1}^4 \left( A^i + \ln_+ \frac{1}{\abs{x^i+y^i}} \right)^2 \total^4 x \right]^\frac{1}{2} \Bigg[ \int \mathe^{-\alpha \abs{x}^2} f^2(x) \prod_{k=1}^s \sup\left( \abs{\vec{b}_k, x, -x}, \gamma_k \right)^{2\delta_k} \\
&\qquad\quad\times \prod_{j=1}^t \sup\left( \eta_{d_j}(\vec{d},x,-x), \gamma_{s+j} \right)^{-2\delta_{s+j}} \prod_{i=1}^n \sup(\abs{x + a_i}, \beta_i)^{2m_i} \total^4 x \Bigg]^\frac{1}{2} \eqend{.}
\end{splitequation}
The second integral can be done using Lemma~\ref{lemma_pint2} (noting that when $f(x)$ is square integrable for all $\alpha > 0$, it is integrable for all $\alpha > 0$). The first integral can be done in the same manner for all four components, and we calculate for the component $i$
\begin{splitequation}
\int \mathe^{-\alpha (x^i)^2} \left( A^i + \ln_+ \frac{1}{\abs{x^i+y^i}} \right)^2 \total x^i &\leq \sqrt{\frac{\pi}{\alpha}} \left( A^i \right)^2 + 2 A^i \int \ln_+ \frac{1}{\abs{x^i+y^i}} \total x^i \\
&\hspace{8em}+ \int \left( \ln_+ \frac{1}{\abs{x^i+y^i}} \right)^2 \total x^i \\
&= \sqrt{\frac{\pi}{\alpha}} \left( A^i \right)^2 + 2 A^i + 2 \leq \left( \sqrt{\frac{\pi}{\alpha}} + 4 \right) \left( A^i \right)^2 \eqend{,}
\end{splitequation}
where we shifted the integration variable $x^i \to x^i - y^i$ in the second step, and used the fact that $A^i \geq 1$ to obtain the last inequality. Multiplying both results and taking the square root the Lemma follows.
\end{proof}

\begin{lemma}[\texorpdfstring{$\Lambda$}{Λ} integration]
\label{lemma_lambdaint}
For $a_0, A, K, L \geq 0$, $k,l \in \mathbb{N}_0$ and $m < -1$, we have
\begin{equation}
\int_{a_0}^{a_1} \sup(A, x)^m \ln_+^k \frac{K}{x} \ln_+^l \frac{x}{L} \total x \leq \sup(A, a_0)^{m+1} \, \mathcal{P}\left( \ln_+ \frac{\sup(K, A)}{\sup(a_0, \inf(A,L))}, \ln_+ \frac{a_0}{L} \right) \eqend{.}
\end{equation}
\end{lemma}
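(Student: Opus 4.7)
The plan is to split the integration interval at $x = A$ whenever $A \in (a_0, a_1)$, handling separately the \emph{flat} region $x \in [a_0, A]$, on which $\sup(A, x) = A$, and the \emph{decay} region $x \in [A, a_1]$, on which $\sup(A, x) = x$ and the assumption $m < -1$ renders $x^m$ integrable. If $A \leq a_0$ only the decay regime is present; if $A \geq a_1$ only the flat regime. The overall prefactor $\sup(A,a_0)^{m+1}$ will emerge either as $A^{m+1}$ (flat case, when $A \geq a_0$) or as $a_0^{m+1}$ (pure decay case, when $A \leq a_0$).

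On the flat piece $\sup(A,x)^m = A^m$ is constant, $\ln_+(K/x)$ is non-increasing and bounded by $\ln_+(K/a_0)$, and $\ln_+(x/L)$ is non-decreasing and bounded by $\ln_+(A/L)$; since the interval length is at most $A$, the contribution is bounded by $A^{m+1}\, \mathcal{P}\bigl(\ln_+(K/a_0), \ln_+(A/L)\bigr)$. On the decay piece, substitute $x = \max(A,a_0)\, y$ to get a factor $\sup(A,a_0)^{m+1}\int_1^{\cdot} y^m \ln_+^k(\cdots/y)\ln_+^l(\cdots y)\total y$. Expanding the logarithms via $\ln_+(K/(\sup(A,a_0)\,y)) \leq \ln_+(K/\sup(A,a_0)) + |\ln y|$ and similarly for $\ln_+(\cdot y/L)$, and using that $\int_1^\infty y^m |\ln y|^j \total y$ is a finite constant depending only on $j$ and $m<-1$, the decay piece is bounded by $\sup(A,a_0)^{m+1}\, \mathcal{P}\bigl(\ln_+(K/\sup(A,a_0)), \ln_+(\sup(A,a_0)/L)\bigr)$.

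To merge the two pieces into the stated form, one distinguishes two sub-cases according to the ordering of $L$ and $A$. When $L < A$: $\sup(a_0,\inf(A,L)) = \sup(a_0,L)$, and $\ln_+(A/L) \leq \ln_+(\sup(K,A)/\sup(a_0,L)) + \ln_+(a_0/L)$, distributing the offending logarithm onto the two arguments of $\mathcal{P}$ as allowed. When $L \geq A$: $\sup(a_0,\inf(A,L)) = \sup(a_0,A)$, while $\ln_+(A/L)=0$ makes the flat-region bound trivial in its second argument, and the remaining $\ln_+(K/\sup(A,a_0))$ is exactly $\ln_+(\sup(K,A)/\sup(a_0,\inf(A,L)))$. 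The elementary identity $\ln_+(K/a_0) \leq \ln_+(\sup(K,A)/\sup(a_0,L)) + \ln_+(a_0/L)$, essentially a repeated application of Lemma~\ref{lemma_largerlog}, is the only rearrangement needed.

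The main obstacle is the bookkeeping in this final step: the $\inf(A,L)$ appearing in the denominator reflects that $\ln_+(x/L)$ is active only for $x>L$, so when $L$ exceeds $A$ the flat region contributes no $L$-dependent logarithm and the effective lower cutoff shifts from $\max(a_0,L)$ up to $\max(a_0,A)$. Verifying that every case yields a bound of exactly the form stated, while keeping the auxiliary $\ln_+(a_0/L)$ factor available to absorb a possible divergence of $\ln_+(x/L)$ at $x=a_0$ when $a_0>L$, is the only delicate point; everything else is elementary one-dimensional integration.
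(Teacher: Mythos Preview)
Your overall plan---split at $x=\sup(A,a_0)$ into a flat region and a decay region, then merge---is exactly the paper's strategy, and your treatment of the decay region via the substitution $x=\sup(A,a_0)\,y$ is correct and matches the paper. The problem is your flat-region estimate.

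On the flat region $[a_0,A]$ you bound $\ln_+(K/x)\leq\ln_+(K/a_0)$ pointwise and multiply by the interval length, obtaining $A^{m+1}\,\mathcal{P}\bigl(\ln_+(K/a_0),\ln_+(A/L)\bigr)$. This is a valid upper bound on the integral, but it is \emph{too coarse} to fit into the stated final form. Take for instance $a_0=\epsilon\to 0$, $A=K=1$, $L=10$, $l=0$: then the flat-region integral $\int_\epsilon^1 \ln_+^k(1/x)\,\total x$ is bounded by $k!$, but your bound reads $\ln^k(1/\epsilon)\to\infty$, while the target $\mathcal{P}\bigl(\ln_+(\sup(K,A)/\sup(a_0,\inf(A,L))),\ln_+(a_0/L)\bigr)=\mathcal{P}(0,0)$ is a constant. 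Correspondingly, the ``elementary identity'' you invoke at the end,
\[
\ln_+\frac{K}{a_0}\;\leq\;\ln_+\frac{\sup(K,A)}{\sup(a_0,L)}+\ln_+\frac{a_0}{L}\,,
\]
is simply false (try $a_0=1$, $L=2$, $A=3$, $K=10$: the left side is $\ln 10$, the right side $\ln 5$), and Lemma~\ref{lemma_largerlog} does not supply it.

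The paper avoids this by \emph{not} using the crude pointwise bound: instead one writes $\ln_+(K/x)\leq\ln_+(K/\sup(A,a_0))+\ln_+(\sup(A,a_0)/x)$ and integrates the second term explicitly via $\int_0^B\ln^j(B/x)\,\total x=j!\,B$, giving $\sup(A,a_0)^{m+1}\,\mathcal{P}\bigl(\ln_+(K/\sup(A,a_0))\bigr)$ for the flat piece. With this tighter bound both pieces carry only $\ln_+(K/\sup(A,a_0))$ and $\ln_+(\sup(A,a_0)/L)$, and the merging into the final form goes through via genuine inequalities of $\ln_+$'s, without any case distinction on $L$ versus $A$.
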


\begin{proof}
First note that since $x \geq a_0$ we can write $\sup(A, x) = \sup(A, a_0, x)$. We then split the integral at $x = \sup(A, a_0)$, and obtain for the first part
\begin{splitequation}
&\int_{a_0}^{\sup(A, a_0)} \sup(A, a_0, x)^m \ln_+^k \frac{K}{x} \ln_+^l \frac{x}{L} \total x = \sup(A, a_0)^m \int_{a_0}^{\sup(A, a_0)} \ln_+^k \frac{K}{x} \ln_+^l \frac{x}{L} \total x \\
&\qquad\leq \sup(A, a_0)^m \ln_+^l \frac{\sup(A, a_0)}{L} \int_0^{\sup(A, a_0)} \left( \ln_+ \frac{K}{\sup(A, a_0)} + \ln_+ \frac{\sup(A, a_0)}{x} \right)^k \total x \\
&\qquad\leq \sup(A, a_0)^{m+1} \ln_+^l \frac{\sup(A, a_0)}{L} \, \mathcal{P}\left( \ln_+ \frac{K}{\sup(A, a_0)} \right) \\
&\qquad= \sup(A, a_0)^{m+1} \, \mathcal{P}\left( \ln_+ \frac{K}{\sup(A, a_0)}, \ln_+ \frac{\sup(A, a_0)}{L} \right) \eqend{.}
\end{splitequation}
For the second part, we have similarly
\begin{splitequation}
&\int_{\sup(A, a_0)}^{a_1} \sup(A, a_0, x)^m \ln_+^k \frac{K}{x} \ln_+^l \frac{x}{L} \total x = \int_{\sup(A, a_0)}^{a_1} x^m \ln_+^k \frac{K}{x} \ln_+^l \frac{x}{L} \total x \\
&\qquad\leq \ln_+^k \frac{K}{\sup(A, a_0)} \int_{\sup(A, a_0)}^{\infty} x^m \left( \ln_+ \frac{\sup(A, a_0)}{L} + \ln_+ \frac{x}{\sup(A, a_0)} \right)^l \total x \\
&\qquad\leq \ln_+^k \frac{K}{\sup(A, a_0)} \sup(A, a_0)^{m+1} \, \mathcal{P}\left( \ln_+ \frac{\sup(A, a_0)}{L} \right) \\
&\qquad = \sup(A, a_0)^{m+1} \, \mathcal{P}\left( \ln_+ \frac{K}{\sup(A, a_0)}, \ln_+ \frac{\sup(A, a_0)}{L} \right)
\end{splitequation}
(if $a_1 < \sup(A,a_0)$ this second part vanishes, and trivially fulfils the bound). The subsequent estimates
\begin{equations}
\ln_+ \frac{\sup(A, a_0)}{L} &\leq \ln_+ \frac{\sup(A, a_0)}{\sup(a_0, L)} + \ln_+ \frac{\sup(a_0, L)}{L} \leq \ln_+ \frac{A}{\sup(a_0, L)} + \ln_+ \frac{a_0}{L} \eqend{,} \\
\ln_+ \frac{A}{\sup(a_0, L)} &\leq \ln_+ \frac{\sup(K, A)}{\sup(a_0, \inf(A,L))} \eqend{,} \\
\ln_+ \frac{K}{\sup(A, a_0)} &\leq \ln_+ \frac{\sup(K, A)}{\sup(a_0, \inf(A,L))}
\end{equations}
then give the desired result.
\end{proof}

\begin{lemma}[Estimating logarithms]
\label{lemma_largerlog}
For all $y \geq x \geq 0$, $K \geq 0$, $k \in \mathbb{N}_0$ and $\alpha > 0$ we have
\begin{equation}
x^\alpha \ln_+^k \frac{K}{x} \leq y^\alpha \, \mathcal{P}\left( \ln_+ \frac{K}{y} \right) \eqend{,}
\end{equation}
where the coefficients in the polynomial only depend on $\alpha$.
\end{lemma}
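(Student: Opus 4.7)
The plan is to reduce the proof to two elementary facts: that the function $t \mapsto t^\alpha \ln^k(C/t)$ attains a bounded maximum on $(0,C]$ depending only on $\alpha$ and $k$, and that logarithms split additively. I would split into cases according to the ordering of $x$, $y$, and $K$, noting that only the regime $K > x$ is nontrivial, since otherwise $\ln_+(K/x)=0$ (when $k\geq 1$) and the estimate collapses to the trivial $x^\alpha \leq y^\alpha$.

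First, I would handle the case $x \leq K \leq y$. Here $\ln_+(K/y) = 0$, so the polynomial on the right is just a constant depending on $\alpha$, and I have to bound $x^\alpha \ln^k(K/x)$ by $C_{k,\alpha}\, y^\alpha$. By a single-variable calculus argument the function $t \mapsto t^\alpha \ln^k(K/t)$ on $(0,K]$ attains its maximum at $t_\ast = K e^{-k/\alpha}$, with value $K^\alpha (k/\alpha)^k e^{-k}$. Since $K \leq y$ in this case, the bound $x^\alpha \ln^k(K/x) \leq C_{k,\alpha}\, K^\alpha \leq C_{k,\alpha}\, y^\alpha$ follows immediately.

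Second, in the remaining case $x \leq y < K$, I would write
\begin{equation*}
\ln \frac{K}{x} \;=\; \ln \frac{K}{y} + \ln \frac{y}{x},
\end{equation*}
both summands being non-negative. Using $(a+b)^k \leq 2^{k-1}(a^k+b^k)$ for $k\geq 1$, I obtain
\begin{equation*}
x^\alpha \ln^k\frac{K}{x} \;\leq\; 2^{k-1}\, x^\alpha \ln^k\frac{K}{y} \;+\; 2^{k-1}\, x^\alpha \ln^k\frac{y}{x}.
\end{equation*}
The first term is bounded by $2^{k-1}\, y^\alpha \ln^k(K/y)$ since $x \leq y$, which fits into $y^\alpha \mathcal{P}(\ln_+(K/y))$. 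For the second term I apply the same single-variable maximisation as above, now on $(0,y]$: the supremum of $x \mapsto x^\alpha \ln^k(y/x)$ equals $y^\alpha (k/\alpha)^k e^{-k}$, hence this term is $\leq C'_{k,\alpha}\, y^\alpha$. Assembling both cases and the trivial case $K \leq x$ yields the stated bound with a polynomial whose coefficients depend only on $\alpha$ and $k$ (and hence, since $k$ is fixed, effectively only on $\alpha$ for the given exponent).

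There is essentially no obstacle; the only subtle point is to resist the temptation of trying to control $x^\alpha \ln^k(K/x)$ directly by $y^\alpha \ln^k(K/y)$, which fails precisely when $x$ lies near the interior maximiser $Ke^{-k/\alpha}$ while $y$ is already past it. The additive splitting of the logarithm neatly localises this danger into the $\ln^k(y/x)$ piece, which is then tamed by the elementary bound on $t^\alpha \ln^k(y/t)$. This gives a short, case-based proof that requires no further machinery beyond basic calculus.
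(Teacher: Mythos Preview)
Your proof is correct but takes a genuinely different route from the paper. The paper proceeds by induction on $k$: it establishes the case $k=1$ via the substitution $A=K/x$, $a=K/y$ and a somewhat intricate case analysis (including writing $A=a+\delta$ and optimising in $\delta$), and then reduces $k>1$ to lower $k$ by the multiplicative splitting $x^\alpha \ln_+^k(K/x) = \bigl(x^{\alpha/2}\ln_+^{k-1}(K/x)\bigr)\bigl(x^{\alpha/2}\ln_+(K/x)\bigr)$. You instead treat general $k$ directly, splitting according to the ordering of $x,y,K$ and using the additive decomposition $\ln(K/x)=\ln(K/y)+\ln(y/x)$ together with $(a+b)^k\le 2^{k-1}(a^k+b^k)$; the single calculus fact you need is the explicit maximum of $t\mapsto t^\alpha\ln^k(C/t)$ on $(0,C]$. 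Your approach is shorter, gives explicit constants $(k/\alpha)^k e^{-k}$ and $2^{k-1}$, and yields a polynomial with only a top-degree and a constant term, whereas the paper's inductive product produces a full polynomial of degree $k$. The paper's route has the minor conceptual advantage that only the $k=1$ maximisation is ever needed, but at the cost of a more delicate base case.
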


\begin{proof}
We use induction. For $k = 0$ the statement is obvious. For $k = 1$ we want to show that
\begin{equation}
x^\alpha \ln_+ \frac{K}{x} \leq y^\alpha \left( c + \ln_+ \frac{K}{y} \right) \eqend{,}
\end{equation}
which defining $A \equiv K/x$ and $a \equiv K/y$ (such that $A \geq a$) can be rewritten as
\begin{equation}
\left( \frac{a}{A} \right)^\alpha \ln_+ A - \ln_+ a \leq c \eqend{.}
\end{equation}
If $A,a \leq 1$ the logarithms vanish and the statement is true. If $A \geq 1 \geq a$, we estimate
\begin{equation}
\left( \frac{a}{A} \right)^\alpha \ln_+ A - \ln_+ a \leq A^{-\alpha} \ln A \eqend{.}
\end{equation}
The right-hand side vanishes for $A = 1$ and $A \to \infty$ and has one local maximum at $A = \exp(1/\alpha)$ where it evaluates to $1/(\alpha \mathe)$, such that the statement is true in this case as well. In the last case where $a,A \geq 1$ we write $A = a + \delta$ and view
\begin{equation}
\left( \frac{a}{a + \delta} \right)^\alpha \ln (a + \delta) - \ln a
\end{equation}
as a function of $\delta$, which vanishes for $\delta = 0$ and takes the value $- \ln a$ as $\delta \to \infty$. If $a \geq \exp(1/\alpha)$, the first derivative at $\delta = 0$ is negative and no local extrema occur for $\delta \geq 0$, such that we may bound the function by $0$. If $a < \exp(1/\alpha)$, the function has a local maximum at $\delta = \exp(1/\alpha) - a$ where it takes the value
\begin{equation}
\frac{a^\alpha}{\alpha \mathe} - \ln a \leq \frac{a^\alpha}{\alpha \mathe} \leq \frac{1}{\alpha} \eqend{,}
\end{equation}
and so the statement is true also in this case. For $k > 1$, we write
\begin{equation}
x^\alpha \ln_+^k \frac{K}{x} = \left( x^\frac{\alpha}{2} \ln_+^{k-1} \frac{K}{x} \right) \left( x^\frac{\alpha}{2} \ln_+ \frac{K}{x} \right)
\end{equation}
and apply the induction hypothesis to each of the two terms, thus proving the Lemma.
\end{proof}

\begin{lemma}[\texorpdfstring{$\Lambda$}{Λ} integration, part 2]
\label{lemma_lambdaint2}
For $m > -1$, $a_1 \geq a_0 \geq 0$, $b,K,L \geq 0$ and $k,l \in \mathbb{N}_0$ we have
\begin{equation}
\int_{a_0}^{a_1} \sup(x,b)^m \ln_+^k \frac{K}{x} \ln_+^l \frac{x}{L} \total x \leq \sup(c,a_1)^{m+1} \, \mathcal{P}\left( \ln_+ \frac{K}{\sup(c,a_1)}, \ln_+ \frac{a_1}{L} \right) \eqend{.}
\end{equation}
for any $c \geq b$.
\end{lemma}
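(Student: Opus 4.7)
The strategy mirrors that of Lemma~\ref{lemma_lambdaint}, but with the roles of upper and lower endpoint interchanged: since $m>-1$, the integral $\int x^m\,\total x$ is controlled by its upper endpoint, so the answer should scale like $\sup(c,a_1)^{m+1}$. I would set $A\equiv\sup(c,a_1)$ and split the integration range at $\beta\equiv\min(b,a_1)$, so that $\sup(x,b)=b$ on $[a_0,\beta]$ and $\sup(x,b)=x$ on $[\beta,a_1]$ (with one of the two subranges possibly empty in edge cases). In both subintegrals I would immediately bound the $L$-dependent logarithm by $\ln_+^l(x/L)\leq \ln_+^l(a_1/L)$, using $x\leq a_1$, and factor this out of the integral.

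For the upper part $\int_{\beta}^{a_1} x^m\ln_+^k(K/x)\,\total x$, a substitution $y=x/K$ together with integration by parts in the variable $y$ yields the standard estimate
\begin{equation}
\int_0^{a_1} x^m \ln_+^k(K/x)\,\total x \;\leq\; a_1^{m+1}\,\mathcal{P}_k\!\left(\ln_+ (K/a_1)\right) \eqend{,}
\end{equation}
valid for every $m>-1$ (the integrand vanishes for $x>K$, and for $x\leq K$ the iterated integration by parts produces a polynomial of degree $k$ in $-\ln(a_1/K)$ with non-negative coefficients). Since $a_1\leq A$ and $m+1>0$, Lemma~\ref{lemma_largerlog} applied with $\alpha=m+1$ converts this to $A^{m+1}\,\mathcal{P}(\ln_+(K/A))$, as required.

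For the lower part $b^m\int_{a_0}^{\beta}\ln_+^k(K/x)\,\total x$ the special case $m=0$ of the above yields $\int_0^{\beta}\ln_+^k(K/x)\,\total x\leq \beta\,\mathcal{P}_k(\ln_+(K/\beta))$, producing the prefactor $b^m\beta$ which must be absorbed into $A^{m+1}$. A case distinction on the sign of $m$ is required here: for $m\geq 0$ one has simply $b^m\beta\leq b^{m+1}\leq A^{m+1}$ (using $b\leq c\leq A$), while for $-1<m<0$ one exploits $\beta\leq b$ to write $b^m\leq\beta^m$, whence $b^m\beta\leq\beta^{m+1}\leq A^{m+1}$ (the last inequality using $\beta\leq a_1\leq A$ together with $m+1>0$). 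A further application of Lemma~\ref{lemma_largerlog}, with $\alpha=1$ in the first sub-case and $\alpha=m+1$ in the second, then converts $\ln_+^k(K/\beta)$ into a polynomial in $\ln_+(K/A)$, finishing the estimate.

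The main obstacle is precisely the sub-case $-1<m<0$ for the lower part: the naive bound $b^m\beta\leq b^{m+1}$ that works for $m\geq 0$ fails because $b^m$ grows as $b$ shrinks, so the large-$b$ factor must be traded for a $\beta$-factor using $b^m\leq\beta^m$, an inequality valid precisely because the lower subregion satisfies $x\leq b$ and so $\beta\leq b$. The hypothesis $c\geq b$ enters only at the final step, where either $b\leq c\leq A$ (for $m\geq 0$) or the bound is routed through $\beta\leq A$ and Lemma~\ref{lemma_largerlog} (for $-1<m<0$).
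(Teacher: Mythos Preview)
Your argument is correct and follows essentially the paper's strategy: factor out $\ln_+^l(x/L)\leq\ln_+^l(a_1/L)$, split the range according to whether $\sup(x,b)$ equals $b$ or $x$, bound each piece by an elementary integral, and close with Lemma~\ref{lemma_largerlog}. The paper, however, avoids your case distinction on the sign of $m$ by a small trick: it first enlarges the integration range to $[0,\sup(c,a_1)]$ (the integrand is non-negative) and then splits at $b$ itself rather than at $\min(b,a_1)$. The lower piece is then exactly $b^m\int_0^b\ln_+^k(K/x)\,\total x\leq b^{m+1}\,\mathcal{P}(\ln_+(K/b))$, and a single application of Lemma~\ref{lemma_largerlog} with $\alpha=m+1$, using $b\leq c\leq\sup(c,a_1)$, finishes it uniformly in $m$. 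Note also that your exposition for the $m\geq 0$ sub-case is slightly tangled: you first absorb the full prefactor $b^m\beta$ into $A^{m+1}$ and then invoke Lemma~\ref{lemma_largerlog} with $\alpha=1$ to convert $\ln_+^k(K/\beta)$, but that second step requires keeping the factor $\beta$ with the logarithm; the clean order is to apply the lemma first to $\beta\,\mathcal{P}_k(\ln_+(K/\beta))\leq A\,\mathcal{P}(\ln_+(K/A))$ and only then use $b^m\leq A^m$.
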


\begin{proof}
We first extract the second logarithm using that $\ln_+ x/L \leq \ln_+ a_1/L$, extend the integration range to $[0,\sup(c,a_1)]$ and split the integral into two parts
\begin{equation}
\int_{a_0}^{a_1} \sup(x,b)^m \ln_+^k \frac{K}{x} \ln_+^l \frac{x}{L} \total x \leq b^m \ln_+^l \frac{a_1}{L} \int_0^b \ln_+^k \frac{K}{x} \total x + \ln_+^l \frac{a_1}{L} \int_b^{\sup(c,a_1)} x^m \ln_+^k \frac{K}{x} \total x \eqend{.}
\end{equation}
For the first integral we obtain
\begin{equation}
b^m \int_0^b \ln_+^k \frac{K}{x} \total x \leq b^{m+1} \, \mathcal{P}\left( \ln_+ \frac{K}{b} \right) \leq \sup(c,a_1)^{m+1} \, \mathcal{P}\left( \ln_+ \frac{K}{\sup(c,a_1)} \right) \eqend{,}
\end{equation}
where the second inequality follows from Lemma~\ref{lemma_largerlog} because of $m+1 > 0$ and $c \geq b$, and for the second integral we get
\begin{equation}
\int_b^{\sup(c,a_1)} x^m \ln_+^k \frac{K}{x} \total x \leq \int_0^{\sup(c,a_1)} x^m \ln_+^k \frac{K}{x} \total x \leq \sup(c,a_1)^{m+1} \, \mathcal{P}\left( \ln_+ \frac{K}{\sup(c,a_1)} \right) \eqend{.}
\end{equation}
Combining both results we obtain the inequality.
\end{proof}

\begin{lemma}[Taylor expansion]
\label{lemma_taylor}
For $m > -1$, $a_1 \geq a_0 \geq 0$, $K,L \geq 0$ and $k \in \mathbb{N}_0$ we have
\begin{equation}
\int_{a_0}^{a_1} \sup(x,L)^m \ln_+^k \frac{\sup(x, K)}{\sup(\inf(K, x), L)} \total x \leq \sup(a_1, L)^{m+1} \, \mathcal{P}\left( \ln_+ \frac{\sup(a_1, K)}{\sup(\inf(K, a_1), L)} \right) \eqend{.}
\end{equation}
\end{lemma}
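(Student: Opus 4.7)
The plan is to exploit the piecewise structure of the logarithm by splitting the integration domain at $x = K$ and treating each region with the tools already developed. Observe that for $x \leq K$ we have $\sup(x,K) = K$ and $\inf(K,x) = x$, so the log factor equals $\ln_+ K/\sup(x,L)$, while for $x \geq K$ it equals $\ln_+ x/\sup(K,L)$. Write $\int_{a_0}^{a_1} = \int_{a_0}^{a_1 \wedge K} + \int_{a_0 \vee K}^{a_1}$ (with the obvious convention that an empty range contributes zero), so that the integrand collapses to the simpler expression in each piece. After bounding both pieces we will verify that the estimates obtained are bounded above by the target $\sup(a_1,L)^{m+1}\,\mathcal{P}(\ln_+ \sup(a_1,K)/\sup(\inf(K,a_1),L))$.

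For the lower piece, since $\sup(x,L) \geq x$ and $\ln_+$ is monotone, I would bound $\ln_+^k K/\sup(x,L) \leq \ln_+^k K/x$. Then Lemma~\ref{lemma_lambdaint2} (with $b = L$, $l = 0$, $c = L$) gives
\begin{equation*}
\int_{a_0}^{a_1 \wedge K} \sup(x,L)^m \ln_+^k \frac{K}{x}\total x \leq \sup(L, a_1 \wedge K)^{m+1}\,\mathcal{P}\!\left(\ln_+ \frac{K}{\sup(L, a_1 \wedge K)}\right).
\end{equation*}
Since $a_1 \wedge K \leq a_1$ and $m+1 > 0$, the prefactor is dominated by $\sup(a_1,L)^{m+1}$, and the inequalities $K \leq \sup(a_1,K)$ and $\sup(L, a_1 \wedge K) = \sup(\inf(K,a_1),L)$ show that the logarithm inside $\mathcal{P}$ is dominated by $\ln_+ \sup(a_1,K)/\sup(\inf(K,a_1),L)$, as required.

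For the upper piece, the change of variables $y = x/\sup(K,L)$ reduces the integral to $\sup(K,L)^{m+1}\int_{(a_0 \vee K)/\sup(K,L)}^{a_1/\sup(K,L)} y^m \ln_+^k y\,\total y$, whose integrand vanishes for $y \leq 1$; a direct computation yields the bound $c\,a_1^{m+1}\,\mathcal{P}(\ln_+ a_1/\sup(K,L))$. Since the upper piece only contributes when $a_1 \geq K$, in that case $\sup(a_1,K) = a_1$, $\inf(K,a_1) = K$, and $\sup(a_1,L) \geq a_1$, so this bound is again dominated by the right-hand side of the lemma. Combining both pieces completes the proof. The only non-trivial point, which I regard as the main obstacle, is the bookkeeping required to show that the various sub-cases ($K \lessgtr L$, $K \lessgtr a_0$, $K \lessgtr a_1$) all produce estimates compatible with the single uniform RHS of the lemma; Lemma~\ref{lemma_largerlog} and the monotonicity of $\sup$ are the technical tools that allow this case merging.
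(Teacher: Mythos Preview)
Your proof is correct and rests on the same key observation as the paper's---the piecewise structure of the logarithm at $x=K$. The paper's argument is marginally more compact: rather than splitting the integral at $K$ and treating the upper piece by a direct computation, it bounds the logarithm uniformly on $[a_0,a_1]$ by $\ln_+ \frac{\sup(a_1,K)}{x} + \ln_+ \frac{\sup(a_1,K)}{\sup(\inf(K,a_1),L)}$ and then applies Lemma~\ref{lemma_lambdaint2} once on the full range.
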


\begin{proof}
We first estimate
\begin{equation}
\ln_+ \frac{\sup(x, K)}{\sup(\inf(K, x), L)} = \begin{cases} \ln_+ \dfrac{K}{\sup(x,L)} \leq \ln_+ \dfrac{\sup(a_1,K)}{x} & x \leq K \\ \ln_+ \dfrac{x}{\sup(K,L)} \leq \ln_+ \dfrac{a_1}{\sup(K,L)} \leq \ln_+ \dfrac{\sup(a_1, K)}{\sup(\inf(K, a_1),L)} & x \geq K \end{cases}
\end{equation}
and get
\begin{splitequation}
&\int_{a_0}^{a_1} \sup(x,L)^m \ln_+^k \frac{\sup(x, K)}{\sup(\inf(K, x), L)} \total x \\
&\qquad\leq \int_{a_0}^{a_1} \sup(x,L)^m \left( \ln_+ \frac{\sup(a_1,K)}{x} + \ln_+ \frac{\sup(a_1, K)}{\sup(\inf(K, a_1),L)} \right)^k \total x \eqend{.}
\end{splitequation}
An application of Lemma~\ref{lemma_lambdaint2} with $L = a_1$ and $c = L$ and the subsequent estimate
\begin{equation}
\ln_+ \frac{\sup(a_1,K)}{\sup(L,a_1)} \leq \ln_+ \frac{\sup(a_1,K)}{\sup(\inf(K, a_1),L)}
\end{equation}
yield the required bound.
\end{proof}

\begin{lemma}[\texorpdfstring{$\Lambda$}{Λ} integration, part 3]
\label{lemma_lambdaint3}
For $b \geq a \geq 0$, $c,K \geq 0$ and $k \in \mathbb{N}_0$ we have
\begin{equation}
\int_a^b \frac{1}{\sup(c,x)} \ln_+^k \frac{K}{x} \total x \leq \mathcal{P}\left( \ln_+ \frac{\sup(K, b)}{\sup(c,a)} \right) \eqend{.}
\end{equation}
\end{lemma}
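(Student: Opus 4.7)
The plan is a direct estimate after case analysis on the positions of $a, b$ relative to $c$ and $K$. First observe that $\ln_+(K/x) = 0$ whenever $x \geq K$, so the integrand vanishes there and we may replace the upper limit $b$ by $\min(b,K)$; in particular, if $a \geq K$ the integral is zero and the bound is trivial. We may therefore assume $a < K$. Moreover, write $B \equiv \min(b,K)$ so that the integrand is nonzero only on $[a,B]$, and note that the argument of the polynomial on the right-hand side satisfies $\ln_+\sup(K,b)/\sup(c,a) \geq \ln_+ K/\sup(c,a)$.

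Next I would split the interval $[a,B]$ at $x = c$ (using only whichever of the two subintervals is nonempty). On the lower piece $[a,\min(c,B)]$ we have $\sup(c,x) = c$, and the contribution is
\begin{equation*}
\frac{1}{c}\int_a^{\min(c,B)} \ln_+^k \frac{K}{x}\,\total x \leq \frac{1}{c}\int_0^{\min(c,K)} \ln^k\!\frac{K}{x}\,\total x,
\end{equation*}
which by the substitution $u = \ln(K/x)$ equals $\frac{1}{c}\int_{\ln_+(K/\min(c,K))}^\infty u^k K e^{-u}\total u \leq \frac{\min(c,K)}{c}\,\mathcal{P}\bigl(\ln_+ K/\min(c,K)\bigr)$ for a suitable polynomial, via an elementary bound on the incomplete Gamma function. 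Since $\min(c,K)/c \leq 1$ and $\ln_+ K/\min(c,K) = \ln_+(K/c)$ when $c \leq K$ (and vanishes otherwise), this is bounded by $\mathcal{P}(\ln_+ K/c) \leq \mathcal{P}\bigl(\ln_+ \sup(K,b)/\sup(c,a)\bigr)$, using that $c \leq \sup(c,a)$ implies the wrong direction---so actually I use $c \geq a$ on this branch (which is forced, since otherwise the lower piece is empty) to get $\sup(c,a) = c$, giving the correct estimate.

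On the upper piece $[\max(c,a),B]$ (nonempty only when $c < B$) we have $\sup(c,x) = x$, and the substitution $u = \ln(K/x)$ gives
\begin{equation*}
\int_{\max(c,a)}^{B} \frac{1}{x}\ln^k\!\frac{K}{x}\,\total x = \int_{\ln_+(K/B)}^{\ln(K/\max(c,a))} u^k \total u \leq \frac{1}{k+1}\Bigl(\ln_+\frac{K}{\max(c,a)}\Bigr)^{k+1}.
\end{equation*}
Since $\max(c,a) = \sup(c,a)$ and $K \leq \sup(K,b)$, the right-hand side is bounded by $\mathcal{P}\bigl(\ln_+\sup(K,b)/\sup(c,a)\bigr)$. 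Adding the two pieces yields the lemma.

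The only mildly delicate point is keeping track of which of the four orderings of $a,c,K,b$ actually contributes to each piece and verifying, in each, that the argument of the resulting polynomial is dominated by $\ln_+\sup(K,b)/\sup(c,a)$; this is the essence of the proof but is routine once the split at $x = c$ and the truncation at $x = K$ are made explicit.
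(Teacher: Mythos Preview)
Your proof is correct and follows essentially the same idea as the paper's: split the integral at the point where $\sup(c,x)$ changes from $c$ to $x$. The paper's execution is slightly slicker in two respects. First, it replaces $c$ by $\tilde c \equiv \sup(c,a)$ at the very outset (valid since $x\geq a$ on the domain), which immediately gives $\sup(c,a)$ in the final bound and eliminates the awkward moment you flag about ``$c\leq\sup(c,a)$ implies the wrong direction''. Second, rather than truncating at $K$ and invoking an incomplete-Gamma estimate on the lower piece, the paper uses the inequality $\ln_+(K/x)\leq \ln_+(K/\tilde c)+\ln_+(\tilde c/x)$; this makes the lower piece $\int_0^1(\ln_+(K/\tilde c)+\ln(1/t))^k\,\total t$ after rescaling $t=x/\tilde c$, and on the upper piece the term $\ln_+(\tilde c/x)$ vanishes so that only $\ln_+^k(K/\tilde c)\int_{\tilde c}^b x^{-1}\total x$ remains. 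Your route via truncation at $K$ and direct substitution is equally valid and arguably more elementary on the upper piece, but the paper's log-splitting avoids the ``mildly delicate'' case tracking you mention at the end.
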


\begin{proof}
We use as usual that $\sup(c,x) = \sup(\tilde{c},x)$ with $\tilde{c} \equiv \sup(c,a)$. Now using
\begin{equation}
\ln_+ \frac{K}{x} \leq \ln_+ \frac{K}{\tilde{c}} + \ln_+ \frac{\tilde{c}}{x}
\end{equation}
we obtain
\begin{equation}
\int_a^b \frac{1}{\sup(c,x)} \ln_+^k \frac{K}{x} \total x \leq \int_a^b \frac{1}{\sup(\tilde{c},x)} \left( \ln_+ \frac{K}{\tilde{c}} + \ln_+ \frac{\tilde{c}}{x} \right)^k \total x \eqend{.}
\end{equation}
We split the remaining integral in two; in the first part we change variables to $t = x / \tilde{c}$ to obtain
\begin{equation}
\int_a^{\tilde{c}} \frac{1}{\sup(\tilde{c},x)} \left( \ln_+ \frac{K}{\tilde{c}} + \ln_+ \frac{\tilde{c}}{x} \right)^k \total x \leq \int_0^1 \left( \ln_+ \frac{K}{\tilde{c}} + \ln \frac{1}{t} \right)^k \total t = \mathcal{P}\left( \ln_+ \frac{K}{\tilde{c}} \right) \eqend{.}
\end{equation}
In the second part, we have
\begin{equation}
\int_{\tilde{c}}^b \frac{1}{\sup(\tilde{c},x)} \left( \ln_+ \frac{K}{\tilde{c}} + \ln_+ \frac{\tilde{c}}{x} \right)^k \total x = \ln_+^k \frac{K}{\tilde{c}} \int_{\tilde{c}}^b \frac{1}{x} \total x = \ln_+^k \frac{K}{\tilde{c}} \ln_+ \frac{b}{\tilde{c}} \leq \mathcal{P}\left( \ln_+ \frac{\sup(K,b)}{\tilde{c}} \right) \eqend{,}
\end{equation}
and the Lemma follows by combining both parts.
\end{proof}

\begin{definition}[Exponential integrals]
Let us define the sequence of functions $\mathcal{E}_k(z)$ with $z \in \mathbb{R}$ and $k \geq 0$ by
\begin{equations}[expint_func_def]
\mathcal{E}_0(z) &= \mathe^{-\mathi z} \eqend{,} \\
\mathcal{E}_k(z) &= \frac{\mathe^{-\mathi z}}{\Gamma^2(k)} \int_{\frac{1}{2} - \mathi \infty}^{\frac{1}{2} + \mathi \infty} \Gamma^2(s) \Gamma(k-s) (\mathi z)^{-s} \frac{\total s}{2\pi\mathi} \eqend{,}
\end{equations}
where the integration contour is a straight line parallel to the imaginary axis, and the integral is absolutely convergent because of the exponential decay of the $\Gamma$ functions in imaginary directions.
\end{definition}

\begin{lemma}[Exponential integral estimates]
\label{lemma_expint}
The functions $\mathcal{E}_k(z)$ defined by equation~\eqref{expint_func_def} are bounded by
\begin{equation}
\label{expint_bounds}
\abs{\mathcal{E}_k(z)} \leq c \left( 1 + \ln_+ \abs{z}^{-1} \right)
\end{equation}
for some constant $c$ (depending on $k$).
\end{lemma}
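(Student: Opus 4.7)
The case $k=0$ is immediate since $\abs{\mathcal{E}_0(z)}=\abs{\mathe^{-\mathi z}}=1$. For $k\geq 1$, since $\abs{\mathe^{-\mathi z}}=1$, it suffices to bound the Mellin--Barnes integral
\begin{equation*}
I_k(z)\equiv\int_{\frac{1}{2}-\mathi\infty}^{\frac{1}{2}+\mathi\infty}\Gamma^2(s)\,\Gamma(k-s)\,(\mathi z)^{-s}\,\frac{\total s}{2\pi\mathi}
\end{equation*}
by $c\,(1+\ln_+\abs{z}^{-1})$. The plan is to split the argument according to the size of $\abs{z}$: for $\abs{z}\geq 1$ a direct estimate on the line $\mathrm{Re}(s)=1/2$ suffices, while for $\abs{z}\leq 1$ the contour must be shifted past the double pole of $\Gamma^2(s)$ at $s=0$ to extract the logarithmic behaviour.

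For $\abs{z}\geq 1$, Stirling's formula $\abs{\Gamma(\sigma+\mathi t)}\sim\sqrt{2\pi}\,\abs{t}^{\sigma-1/2}\,\mathe^{-\pi\abs{t}/2}$ (for large $\abs{t}$) yields on $\mathrm{Re}(s)=1/2$ an integrand of modulus bounded by $C\,\abs{t}^{k-1}\,\mathe^{-\pi\abs{t}}\,\abs{z}^{-1/2}$, where the phase $\mathe^{\pm\pi t/2}$ arising from $(\mathi z)^{-s}=\abs{z}^{-1/2}\mathe^{\pm\pi t/2}$ is absorbed into the $\mathe^{-\pi\abs{t}}$ decay of $\Gamma^2$. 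Hence $\abs{I_k(z)}\leq C\abs{z}^{-1/2}\leq C$ in this range. For $\abs{z}\leq 1$, I would shift the contour from $\mathrm{Re}(s)=1/2$ to $\mathrm{Re}(s)=-\epsilon$ for some fixed $0<\epsilon<1$. The same Stirling decay kills the horizontal closing segments as $T\to\infty$, so by the residue theorem $I_k(z)=I_k^{-\epsilon}(z)+\mathrm{Res}_{s=0}$, the only singularity crossed being the double pole of $\Gamma^2(s)$. On the shifted line, $\abs{(\mathi z)^{-s}}=\abs{z}^\epsilon\leq 1$ and the remaining Gamma factors are integrable by Stirling, so $\abs{I_k^{-\epsilon}(z)}\leq C$. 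The residue is extracted from the Laurent expansions $s\,\Gamma(s)=1-\gamma s+\bigo{s^2}$, $\Gamma(k-s)=\Gamma(k)\,(1-\psi(k)\,s+\bigo{s^2})$, and $(\mathi z)^{-s}=1-s\,\log(\mathi z)+\bigo{s^2}$, producing $\mathrm{Res}_{s=0}=-\Gamma(k)\,(2\gamma+\psi(k)+\log(\mathi z))$, whose modulus is at most $C(1+\abs{\log(\mathi z)})\leq C(1+\ln_+\abs{z}^{-1})$ since $\abs{\log(\mathi z)}=\sqrt{\ln^2\abs{z}+\pi^2/4}$. Combining the two ranges yields the claim.

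The main technical point, and the one to handle with care, is the residue computation at the double pole at $s=0$: it is precisely the linear Taylor coefficient $-s\log(\mathi z)$ of $(\mathi z)^{-s}$ which, when paired with the $s^{-2}$ singularity of $\Gamma^2(s)$, produces the logarithmic contribution $-\log(\mathi z)$ to the residue, and this is the unique mechanism supplying the $\ln_+\abs{z}^{-1}$ factor in the bound. Without the contour shift, the direct estimate on $\mathrm{Re}(s)=1/2$ would yield only the strictly weaker singularity $\abs{z}^{-1/2}$, so the deformation is essential; the accompanying Stirling verification that the horizontal closing segments at $\pm\mathi T$ vanish as $T\to\infty$ is routine.
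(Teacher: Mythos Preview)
Your argument is correct. Both you and the paper split according to $\abs{z}\lessgtr 1$ and use contour deformation in the Mellin--Barnes integral, but the specific moves differ. For $\abs{z}\geq 1$ the paper shifts the contour to the \emph{right} past the poles of $\Gamma(k-s)$ at $s=k,k+1,\ldots$ and obtains the sharper decay $\abs{\mathcal{E}_k(z)}\leq c\abs{z}^{-k}$; your direct estimate on $\mathrm{Re}(s)=\tfrac12$ gives only $\abs{z}^{-1/2}$, which is enough for the stated bound but loses this information. For $\abs{z}\leq 1$ the paper closes the contour entirely to the left, summing \emph{all} residues at $s=0,-1,-2,\ldots$ to obtain a convergent series representation, from which the logarithmic bound is read off; you shift only past the single double pole at $s=0$, which is more economical for the bound itself. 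The paper's route has the advantage that the series representation immediately yields the recursion $\partial_z(z\,\partial_z\mathcal{E}_{k+1})=\mathi\,\mathcal{E}_k$, which is used elsewhere in the paper; your route is more direct if only the inequality~\eqref{expint_bounds} is needed.
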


\begin{proof}
Closing the contour to the left we obtain the convergent sum (with the Polygamma function $\psi$)
\begin{equation}
\mathcal{E}_k(z) = \frac{\mathe^{-\mathi z}}{\Gamma^2(k)} \sum_{m=0}^\infty \frac{\Gamma(k+m)}{(m!)^2} \left( 2 \psi(m+1) - \psi(m+k) - \ln(\mathi z) \right) (\mathi z)^m \eqend{.}
\end{equation}
The sum is also absolutely convergent and has a well-defined limit as $k \to 0$ (in which only the $m = 0$ term makes a contribution, leaving only the exponential). By manipulation of the summand and using the recursion relations for $\psi$, one easily obtains the important recursion relation
\begin{equation}
\label{expint_abl}
\partial_z \left( z \partial_z \mathcal{E}_{k+1}(z) \right) = \mathi \mathcal{E}_k(z) \eqend{.}
\end{equation}
For $\abs{z} \leq 1$, from the sum we directly obtain the bound
\begin{equation}
\abs{\mathcal{E}_k(z)} \leq c + c' \ln \abs{z}^{-1}
\end{equation}
for some positive constants $c$ and $c'$ depending on $k$. For $\abs{z} \geq 1$, we shift the integration contour over the first $n$ poles to the right to obtain
\begin{equation}
\mathcal{E}_k(z) = \frac{\mathe^{-\mathi z}}{\Gamma^2(k)} \left[ (-1)^k \sum_{m=1}^n \frac{\Gamma^2(m)}{\Gamma(m-k+1)} (-\mathi z)^{-m} + \int_{n+\frac{1}{2} - \mathi \infty}^{n+\frac{1}{2} + \mathi \infty} \Gamma^2(s) \Gamma(k-s) (\mathi z)^{-s} \frac{\total s}{2\pi\mathi} \right] \eqend{.}
\end{equation}
For all $m < k$, the sum vanishes because of the poles of the $\Gamma$ function in the numerator, such that the first non-vanishing term is obtained for $n = k$. For $\abs{z} \geq 1$, we can thus estimate
\begin{equation}
\abs{\mathcal{E}_k(z)} \leq \abs{z}^{-k} + \frac{\abs{z}^{-k-\frac{1}{2}}}{\Gamma^2(k)} \int_{k+\frac{1}{2} - \mathi \infty}^{k+\frac{1}{2} + \mathi \infty} \abs{\Gamma^2(s) \Gamma(k-s)} \mathe^{\pm \frac{\pi}{2} \Im s} \frac{\total s}{2\pi\mathi} \leq c'' \abs{z}^{-k}
\end{equation}
for some constant $c''$ depending on $k$, such that in total we obtain the bound~\eqref{expint_bounds}.
\end{proof}

\begin{lemma}[Slaloms]
\label{lemma_slalom}
For $a \geq 0$ and $u,v \in \mathbb{R}^4$, we have
\begin{equation}
\int_0^1 \frac{\abs{v}}{a + \abs{u + t v}} \total t \leq \mathcal{P}\left( \ln_+ \frac{\abs{v}}{a} \right) \eqend{.}
\end{equation}
\end{lemma}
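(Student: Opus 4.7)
The idea is to reduce the four-dimensional integral to a one-dimensional one via a rotation, and then to bound the resulting elementary integral by a logarithm. First I would choose orthonormal coordinates on $\mathbb{R}^4$ aligning $v$ with the first axis, so that $v = (\abs{v}, 0, 0, 0)$, and such that the plane spanned by $u$ and $v$ is contained in the first two coordinates, i.e., $u = (u_1, u_2, 0, 0)$ with $u_2 \geq 0$. Then the substitution $x = u_1 + t \abs{v}$, $\total x = \abs{v} \total t$, transforms the integral into
\begin{equation*}
I = \int_{u_1}^{u_1 + \abs{v}} \frac{\total x}{a + \sqrt{x^2 + u_2^2}} \eqend{.}
\end{equation*}

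Next I would use the elementary inequality $(a + b)^2 \geq a^2 + b^2$ for $a, b \geq 0$, applied with $b = \sqrt{x^2 + u_2^2}$, to drop the dependence on $u_2$ at the cost of trading the $+a$ outside the square root for $+a^2$ inside, yielding
\begin{equation*}
I \leq \int_{u_1}^{u_1 + \abs{v}} \frac{\total x}{\sqrt{a^2 + x^2}} \eqend{.}
\end{equation*}
This integral has the closed-form primitive $\sinh^{-1}(x/a)$ (or $\ln\abs{x}$ if $a=0$, in which case the claimed bound is trivial since $\ln_+(\abs{v}/a)$ is infinite).

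Finally, I would carry out a short case analysis on the signs of the endpoints $L = u_1$ and $R = u_1 + \abs{v}$ (and use symmetry $x \to -x$ to reduce to $L \geq 0$ or $L < 0 < R$). In each case one obtains
\begin{equation*}
\int_L^R \frac{\total x}{\sqrt{a^2 + x^2}} \leq C \left(1 + \ln_+ \frac{\abs{v}}{a}\right)
\end{equation*}
by splitting the interval of integration at $\pm a$ and using the elementary bounds $1/\sqrt{a^2+x^2} \leq 1/a$ on $\abs{x} \leq a$ and $\leq 1/\abs{x}$ on $\abs{x} \geq a$, together with the fact that the length $R - L = \abs{v}$. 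The only very mild subtlety is bookkeeping in the case $L < 0 < R$, where the integral splits into two pieces each bounded by $\sinh^{-1}(\abs{v}/a)$; no step requires more than a one-line estimate. The final result has the form $\mathcal{P}(\ln_+(\abs{v}/a))$ as required.
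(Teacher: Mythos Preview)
Your argument is correct, but it takes a genuinely different route from the paper's proof. The paper never chooses coordinates: it works directly with the norms $\abs{u}$, $\abs{v}$ and splits into the two cases $\abs{u}\geq\abs{v}$ and $\abs{u}\leq\abs{v}$. In the first case the triangle inequality $\abs{u+tv}\geq\abs{u}-t\abs{v}$ reduces the integrand to $\abs{v}/(a+\abs{u}-t\abs{v})$, which integrates explicitly to $\ln(1+\abs{v}/(a+\abs{u}-\abs{v}))\leq\ln(1+\abs{v}/a)$. In the second case the paper splits the $t$-integral at $t=\abs{u}/\abs{v}$ and applies the triangle inequality with the opposite sign on each piece, again obtaining two explicit logarithms bounded by $2\ln(1+\abs{v}/a)$.

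Your approach instead rotates to make the problem one-dimensional and uses the neat inequality $a+\sqrt{x^2+u_2^2}\geq\sqrt{a^2+x^2}$ to eliminate both $a$ from outside the root and $u_2$ entirely, arriving at an integral with antiderivative $\sinh^{-1}(x/a)$. The case split then happens on the signs of the endpoints rather than on $\abs{u}$ versus $\abs{v}$. Both methods yield a bound linear in $\ln_+(\abs{v}/a)$; the paper's version is slightly more elementary (only the triangle inequality and $\ln$), while yours makes the geometry more transparent by reducing to a translation-invariant one-dimensional problem on an interval of length~$\abs{v}$.
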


\begin{proof}
If $\abs{u} \geq \abs{v}$, we use the triangle inequality
\begin{equation}
\abs{u + t v} \geq \abs{u} - t \abs{v}
\end{equation}
and estimate
\begin{splitequation}
\int_0^1 \frac{\abs{v}}{a + \abs{u + t v}} \total t &\leq \int_0^1 \frac{\abs{v}}{a + \abs{u} - t \abs{v}} \total t = \ln\left( 1 + \frac{\abs{v}}{a + \abs{u} - \abs{v}} \right) \\
&\leq \ln\left( 1 + \frac{\abs{v}}{a} \right) \leq \mathcal{P}\left( \ln_+ \frac{\abs{v}}{a} \right) \eqend{,}
\end{splitequation}
while for $\abs{u} \leq \abs{v}$ we split the integral and get (again with the triangle inequality)
\begin{splitequation}
\int_0^1 \frac{\abs{v}}{a + \abs{u + t v}} \total t &\leq \int_0^\frac{\abs{u}}{\abs{v}} \frac{\abs{v}}{a + \abs{u} - t \abs{v}} \total t + \int_\frac{\abs{u}}{\abs{v}}^1 \frac{\abs{v}}{a + t \abs{v} - \abs{u}} \total t \\
&= \ln\left( 1 + \frac{\abs{u}}{a} \right) + \ln\left( 1 + \frac{\abs{v}-\abs{u}}{a} \right) \leq 2 \ln\left( 1 + \frac{\abs{v}}{a} \right) \leq \mathcal{P}\left( \ln_+ \frac{\abs{v}}{a} \right) \eqend{.}
\end{splitequation}
\end{proof}

\begin{lemma}[Fractional derivatives]
\label{lemma_frac}
For rapidly decreasing $f(p)$ (such that we can perform integration by parts without boundary terms) with $\abs{\partial^w f(p)} \leq M^{-\abs{w}} \abs{f(p)}$, for $k \in \mathbb{N}_0$, $0 < \epsilon < 1$ and an arbitrary direction $\alpha \in \{1,2,3,4\}$, we have
\begin{equation}
\abs{\int \mathe^{-\mathi x p} f(p) \total^4 p} \leq ( \abs{x^\alpha} M )^{-k} \int \abs{f(p)} \total^4 p
\end{equation}
and
\begin{equation}
\abs{\int \mathe^{-\mathi x p} f(p) \total^4 p} \leq \frac{4}{1-\epsilon} ( \abs{x^\alpha} M )^{-k+\epsilon} \int \left( \frac{\abs{p^\alpha}}{M} \right)^{-1+\epsilon} \left( 1 + \frac{\abs{p^\alpha}}{M} \right) \abs{f(p)} \total^4 p \eqend{.}
\end{equation}
\end{lemma}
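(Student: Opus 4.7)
The first bound follows from $k$-fold integration by parts along the $p^\alpha$-direction. Rewriting $e^{-\mathi x p} = (-\mathi x^\alpha)^{-k}\partial_{p^\alpha}^k e^{-\mathi x p}$ and integrating by parts $k$ times (all boundary terms vanish because $f$ is rapidly decreasing), one obtains
$$\bigl|\int e^{-\mathi x p} f(p)\total^4 p\bigr| = |x^\alpha|^{-k}\bigl|\int e^{-\mathi x p}\,\partial_{p^\alpha}^k f(p)\total^4 p\bigr|\leq (|x^\alpha|M)^{-k}\int|f(p)|\total^4 p,$$
where the last step uses the pointwise hypothesis $|\partial_{p^\alpha}^k f|\leq M^{-k}|f|$ together with $|e^{-\mathi x p}|=1$.

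For the second bound I would proceed in two regimes according to the size of $|x^\alpha|M$. In the large-coordinate regime $|x^\alpha|M\geq 1$, I would first perform the same $k$-fold integration by parts and then split the remaining $p^\alpha$-integration at the scale $|p^\alpha|=M$. On the interior region $|p^\alpha|\leq M$ the trivial bound $|e^{-\mathi x p}|=1$ combined with $(|p^\alpha|/M)^{-1+\epsilon}\geq 1$ controls the integral by $\int (|p^\alpha|/M)^{-1+\epsilon}|\partial_{p^\alpha}^k f|\total^4 p$, while on the exterior region $|p^\alpha|>M$ the trivial bound combined with $(|p^\alpha|/M)^\epsilon\geq 1$ controls it by $\int(|p^\alpha|/M)^\epsilon|\partial_{p^\alpha}^k f|\total^4 p$. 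Summing these two contributions and invoking the algebraic identity $(|p^\alpha|/M)^{-1+\epsilon}+(|p^\alpha|/M)^\epsilon = (|p^\alpha|/M)^{-1+\epsilon}(1+|p^\alpha|/M)$, together with $|\partial_{p^\alpha}^k f|\leq M^{-k}|f|$ and the IBP-generated factor $|x^\alpha|^{-k}$, produces the weighted right-hand side of the lemma times a prefactor $(|x^\alpha|M)^{-k}$, which in this regime is bounded by the claimed $(|x^\alpha|M)^{-k+\epsilon}$ since $\epsilon>0$ and the base is at least unity.

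In the complementary small-coordinate regime $|x^\alpha|M\leq 1$, and assuming $k\geq 1$ (which is always the case in the application of this lemma in Subsection~\ref{sec_bounds_func_sop}, where $k=|a|\geq[\op_{A_1}]+[\op_{A_2}]+D\geq 1$), the estimate is immediate from the trivial bound alone: the weight $w(p^\alpha):=(|p^\alpha|/M)^{-1+\epsilon}(1+|p^\alpha|/M)$ satisfies $w\geq 1$ everywhere (via the singular factor for $|p^\alpha|\leq M$ and the linear factor for $|p^\alpha|>M$), hence $|\int e^{-\mathi x p}f\total^4 p|\leq \int|f|\total^4 p\leq \int w|f|\total^4 p$, while $(|x^\alpha|M)^{-k+\epsilon}\geq 1$ in this regime. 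The main care in writing out the full proof lies in stitching the two regimes together with a single universal constant and in verifying that the factor $\frac{4}{1-\epsilon}$ in the statement suffices; the above analysis in fact produces a constant of order one, so this factor should be viewed as a convenient overestimate. The particular shape $1/(1-\epsilon)$ emerges naturally if one instead uses the $|x^\alpha|$-dependent split scale $A=1/|x^\alpha|$ and replaces the crude pointwise estimate on the interior region by the sharper integral bound $\int_0^{A}(|p^\alpha|/M)^{-\epsilon}\total p^\alpha\sim A(A/M)^{-\epsilon}/(1-\epsilon)$.
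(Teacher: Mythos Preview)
Your argument for the first inequality is correct and identical to the paper's. For the second inequality your approach is correct (under the restriction $k\geq 1$, which you rightly note is the only case needed and is also implicitly assumed in the paper's own proof), but it is genuinely different from the paper's route. The paper does not split into regimes in $|x^\alpha|M$; instead it performs only $k-1$ integrations by parts, then makes the nonlinear change of variable $\tilde p^\alpha=\operatorname{sgn}p^\alpha\,|p^\alpha|^{1/\nu}$ with $\nu=1/(1-\epsilon)$, writes the resulting oscillatory factor as the derivative of $\int_0^{\tilde p^\alpha}\exp(-\mathi x^\alpha\operatorname{sgn}\tau\,|\tau|^\nu)\total\tau$, integrates by parts once more in $\tilde p^\alpha$, and bounds the oscillatory integral uniformly by the van~Corput lemma. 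This produces the exponent $-k+\epsilon$ directly and the constant $4\nu=4/(1-\epsilon)$ as an exact artefact of the method---hence the name ``fractional derivatives''. Your regime-splitting argument is more elementary (no change of variables, no van~Corput) and actually yields a sharper constant (essentially $1$), but it obtains the fractional exponent by interpolation between the $k$-fold IBP bound and the trivial bound rather than by a single unified estimate; your closing remark about the $|x^\alpha|$-dependent split scale points toward how one could reproduce the paper's constant within your framework.
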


\begin{proof} We first introduce $k-1$ additional derivatives by
\begin{equation}
\mathe^{- \mathi x p} = \frac{1}{(-\mathi)^{k-1} (x^\alpha)^{k-1}} \partial^{k-1}_{p^\alpha} \mathe^{- \mathi x p} \eqend{,}
\end{equation}
and integrate these derivatives by parts such that
\begin{equation}
\int \mathe^{-\mathi x p} f(p) \total^4 p = \frac{\mathi}{(x^\alpha)^{k-1}} \int \mathe^{- \mathi x p} \partial^{k-1}_{p^\alpha} f(p) \total^4 p \eqend{.}
\end{equation}
Taking the absolute value, the first inequality follows. To show the second one, let us define for $\nu > 1$
\begin{equation}
\tilde{p}^\alpha \equiv \sgn p^\alpha \, \abs{p^\alpha}^\frac{1}{\nu} \eqend{,} \qquad \tilde{p}^k \equiv p^k \quad (k \in \{1,2,3,4\}\setminus\{\alpha\}) \eqend{,}
\end{equation}
such that
\begin{splitequation}
\int \mathe^{-\mathi x p} \partial^{k-1}_{p^\alpha} f(p) \total^4 p &= \nu \int \mathe^{-\mathi x^\alpha \sgn \tilde{p}^\alpha \, \abs{\tilde{p}^\alpha}^\nu} \partial^{k-1}_{p^\alpha} f(p) \abs{\tilde{p}^\alpha}^{\nu-1} \total^4 \tilde{p} \\
&= \nu \int \left( \partial_{\tilde{p}^\alpha} \int_0^{\tilde{p}^\alpha} \mathe^{-\mathi x^\alpha \sgn \tau \, \abs{\tau}^\nu} \total \tau \right) \partial^{k-1}_{p^\alpha} f(p) \abs{\tilde{p}^\alpha}^{\nu-1} \total^4 \tilde{p} \\
&= - \nu \int \int_0^{\tilde{p}^\alpha} \mathe^{-\mathi x^\alpha \sgn \tau \, \abs{\tau}^\nu} \total \tau \, \partial_{\tilde{p}^\alpha} \left( \partial^{k-1}_{p^\alpha} f(p) \abs{\tilde{p}^\alpha}^{\nu-1} \right) \total^4 \tilde{p} \\
&= - \nu^2 \int \int_0^{\tilde{p}^\alpha} \mathe^{-\mathi x^\alpha \sgn \tau \, \abs{\tau}^\nu} \total \tau \abs{\tilde{p}^\alpha}^{\nu-1} \partial^k_{p^\alpha} f(p) \abs{\tilde{p}^\alpha}^{\nu-1} \total^4 \tilde{p} \\
&\quad- \nu (\nu-1) \int \int_0^{\tilde{p}^\alpha} \mathe^{-\mathi x^\alpha \sgn \tau \, \abs{\tau}^\nu} \total \tau (\tilde{p}^\alpha)^{-1} \partial^{k-1}_{p^\alpha} f(p) \abs{\tilde{p}^\alpha}^{\nu-1} \total^4 \tilde{p} \eqend{.}
\end{splitequation}
We then calculate
\begin{equation}
\int_0^{\tilde{p}^\alpha} \mathe^{-\mathi x^\alpha \sgn \tau \, \abs{\tau}^\nu} \total \tau = (x^\alpha)^{-\frac{1}{\nu}} \int_0^{(x^\alpha)^\frac{1}{\nu} \tilde{p}^\alpha} \mathe^{-\mathi \sgn t \, \abs{t}^\nu} \total t \eqend{.}
\end{equation}
If $\abs{(x^\alpha)^\frac{1}{\nu} \tilde{p}^\alpha} \leq 1$, we can bound the right-hand side by $\abs{x^\alpha}^{-\frac{1}{\nu}}$. If $\abs{(x^\alpha)^\frac{1}{\nu} \tilde{p}^\alpha} \geq 1$, we use the van Corput Lemma
\begin{equation}
\abs{\int_a^b \mathe^{- \mathi \phi(t)} \total t} \leq \frac{3}{\inf_{[a,b]} \phi'(t)}
\end{equation}
if $\phi''(t) > 0$ (or $\phi''(t) < 0$) on $[a,b]$, and thus have in any case (since $\nu > 1$)
\begin{equation}
\abs{\int_0^{\tilde{p}^\alpha} \mathe^{-\mathi x^\alpha \sgn \tau \, \abs{\tau}^\nu} \total \tau} \leq \abs{x^\alpha}^{-\frac{1}{\nu}} \left( 1 + \frac{3}{\nu} \right) \leq 4 \abs{x^\alpha}^{-\frac{1}{\nu}} \eqend{.}
\end{equation}
Thus it follows that
\begin{splitequation}
\abs{\int \mathe^{-\mathi x p} f(p) \total^4 p} &\leq 4 \abs{x^\alpha}^{-(k-1)-\frac{1}{\nu}} \nu^2 \int \abs{\tilde{p}^\alpha}^{\nu-1} M^{-k} \abs{f(p)} \abs{\tilde{p}^\alpha}^{\nu-1} \total^4 \tilde{p} \\
&\quad+ 4 \abs{x^\alpha}^{-(k-1)-\frac{1}{\nu}} \nu^2 \int \abs{(\tilde{p}^\alpha)^{-1}} M^{-(k-1)} \abs{f(p)} \abs{\tilde{p}^\alpha}^{\nu-1} \total^4 \tilde{p} \\
&\leq 4 \nu ( \abs{x^\alpha} M )^{-(k-1)-\frac{1}{\nu}} \int \left( \frac{\abs{p^\alpha}}{M} \right)^{-\frac{1}{\nu}} \left( 1 + \frac{\abs{p^\alpha}}{M} \right) \abs{f(p)} \total^4 p \eqend{.}
\end{splitequation}
which gives the Lemma with $\epsilon = 1 - 1/\nu$.
\end{proof}

\begin{lemma}[Smearings]
\label{lemma_smearing}
With the Schwartz norms defined in equation~\eqref{schwartz_norm}, we have for any multiindex $w$
\begin{equation}
\int \Big[ 1 - \ln \inf(1, \norm{x}) \Big] \abs{ \partial^w_x f(x) } \total^4 x \leq 2^{12} \norm{f}_{\abs{w}} \eqend{.}
\end{equation}
\end{lemma}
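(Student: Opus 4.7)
The plan is to apply the defining property of the Schwartz norm~\eqref{schwartz_norm}, which immediately gives the pointwise bound
\begin{equation*}
\abs{ \partial^w_x f(x) } \leq (1+x^2)^{-4} \, \norm{f}_{\abs{w}} \eqend{,}
\end{equation*}
so the lemma reduces to proving that
\begin{equation*}
I \equiv \int_{\mathbb{R}^4} \Big[ 1 - \ln \inf(1, \abs{x}) \Big] (1+x^2)^{-4} \total^4 x \leq 2^{12} \eqend{.}
\end{equation*}
This is a purely numerical estimate with no field-theoretic content.

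To evaluate $I$, I would pass to spherical coordinates on $\mathbb{R}^4$ (with volume of $S^3$ equal to $2\pi^2$) and split the radial integration at $r = 1$, noting that $1 - \ln\inf(1, r)$ equals $1-\ln r$ for $0 \leq r \leq 1$ and equals $1$ for $r \geq 1$. Then I would bound
\begin{equation*}
\int_0^1 (1-\ln r)(1+r^2)^{-4} r^3 \total r \leq \int_0^1 (1-\ln r) r^3 \total r = \tfrac{1}{4} + \tfrac{1}{16} = \tfrac{5}{16} \eqend{,}
\end{equation*}
using $(1+r^2)^{-4} \leq 1$ and elementary integration by parts for $\int_0^1 r^3 \ln r\, \total r = -1/16$. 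For the tail, the substitution $u = 1+r^2$ gives
\begin{equation*}
\int_1^\infty (1+r^2)^{-4} r^3 \total r = \tfrac{1}{2} \int_2^\infty (u^{-3} - u^{-4}) \total u = \tfrac{1}{24} \eqend{.}
\end{equation*}

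Assembling these,
\begin{equation*}
I \leq 2\pi^2 \Big( \tfrac{5}{16} + \tfrac{1}{24} \Big) = 2\pi^2 \cdot \tfrac{17}{48} < 2^{12} \eqend{,}
\end{equation*}
which is what was required. There is no genuine obstacle here: the only ``hard'' point is being careful that the polynomial weight $(1+x^2)^{-4}$ in the Schwartz norm dominates both the near-origin logarithmic singularity and the large-$x$ decay with ample margin, so that the stated constant $2^{12}$ is not tight and any crude estimate will suffice. (One could equally avoid spherical coordinates altogether by using $(1+x^2)^{-4} \leq \prod_{i=1}^4 (1+(x^i)^2)^{-1}$ and factoring the integral, at the cost of a slightly less sharp numerical constant, still well under $2^{12}$.)
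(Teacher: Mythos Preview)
Your proof is correct but follows a different route from the paper. You apply the Schwartz-norm bound $|\partial^w f(x)|\leq(1+x^2)^{-4}\|f\|_{|w|}$ immediately and then evaluate the resulting rotationally invariant integral in spherical coordinates, obtaining a constant of order~$7$, far below the stated $2^{12}$. The paper instead first replaces the single logarithm $1-\ln\inf(1,\|x\|)$ by the product $2^4\prod_{\alpha=1}^4[1-\ln\inf(1,|x^\alpha|)]$ and then estimates iteratively, one coordinate at a time, splitting each $x^\alpha$-integral at $|x^\alpha|=1$; this yields the factor $2^{12}=2^4\cdot(4)^4$. Your spherical-coordinate argument is shorter and gives a much sharper constant; the paper's component-wise factorisation is cruder but mirrors the style of other estimates in the appendix (such as Lemma~\ref{lemma_pint3}) where logarithms in individual components appear and no rotational symmetry is available. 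Interestingly, you already sketch the paper's alternative in your closing parenthetical remark.
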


\begin{proof} We first estimate
\begin{equation}
1 - \ln \inf(1, \norm{x}) \leq 2^4 \prod_{\alpha=1}^4 \Big[ 1 - \ln \inf(1, \abs{x^\alpha}) \Big] \eqend{.}
\end{equation}
Defining $g(x) \equiv \partial^w f(x)$, the above inequality then follows from
\begin{equation}
\int \prod_{\alpha=1}^4 \Big[ 1 - \ln \inf(1, \abs{x^\alpha}) \Big] \abs{ g(x) } \total^4 x \leq 2^8 \sup_{x \in \mathbb{R}^4} \abs{ (1+x^2)^4 g(x) } \eqend{.}
\end{equation}
We thus set $h(x) \equiv \prod_{\alpha=1}^3 \Big[ 1 - \ln \inf(1, \abs{x^\alpha}) \Big] \abs{ g(x) }$, and estimate
\begin{splitequation}
&\int \prod_{\alpha=1}^4 \Big[ 1 - \ln \inf(1, \abs{x^\alpha}) \Big] \abs{ g(x) } \total^4 x = \int_{\abs{x^4} \leq 1} \Big[ 1 - \ln \abs{x^4} \Big] h(x) \total^4 x + \int_{\abs{x^4} > 1} h(x) \total^4 x \\
&\hspace{8em}\leq \int \left[ \sup_{x^4 \in \mathbb{R}} h(x) \int_{\abs{x^4} \leq 1} \Big[ 1 - \ln \abs{x^4} \Big] \total x^4 + \int h(x) \total x^4 \right] \total^3 x \\
&\hspace{8em}\leq \int \left[ 4 \sup_{x^4 \in \mathbb{R}} h(x) + \sup_{x^4 \in \mathbb{R}} \left[ \left( 1+(x^4)^2 \right) h(x) \right] \int \frac{1}{1+(x^4)^2} \total x^4 \right] \total^3 x \\
&\hspace{8em}\leq 4 \int \sup_{x^4 \in \mathbb{R}} \left[ \left( 1+(x^4)^2 \right) h(x) \right] \total^3 x \eqend{.}
\end{splitequation}
Performing the same estimates for the integrals over $x^1$ to $x^3$, the Lemma follows.
\end{proof}

\section*{Bibliography}

\providecommand{\href}[2]{#2}\begingroup\raggedright\endgroup


\end{document}